\documentclass[11pt]{article}
\usepackage[mathlines]{lineno}
\usepackage[letterpaper,margin=1in]{geometry}
\usepackage{amsmath,amsthm,amssymb,amsfonts,thmtools}
\usepackage[ruled,linesnumbered,noend]{algorithm2e}
\usepackage[dvipsnames]{xcolor}
\usepackage{appendix}
\usepackage{babel}
\usepackage{bm}
\usepackage{bbm}
\usepackage{mathtools}
\usepackage{mathrsfs}
\usepackage{comment}
\usepackage{graphicx}
\usepackage[inline]{enumitem}
\usepackage{thm-restate}

\theoremstyle{plain}
\newtheorem{theorem}{Theorem}[section]
\newtheorem{lemma}[theorem]{Lemma}

\newtheorem{corollary}[theorem]{Corollary}

\newtheorem{proposition}[theorem]{Proposition}

\newtheorem{observation}[theorem]{Observation}

\theoremstyle{definition}
\newtheorem{definition}[theorem]{Definition}

\theoremstyle{remark}

\newtheorem*{remark*}{Remark}

\usepackage[pdfusetitle]{hyperref}
\usepackage[capitalise,noabbrev]{cleveref}
\hypersetup{colorlinks=true,allcolors=blue}

\makeatletter
\AtBeginDocument{%
  \@ifundefined{theHtheorem}{%
  }{%
  }%
  \@ifundefined{theHlemma}{%
  }{%
  }%
  \@ifundefined{theHclaim}{%
  }{%
  }%
  \@ifundefined{theHcorollary}{%
  }{%
  }%
  \@ifundefined{theHquestion}{%
  }{%
  }%
  \@ifundefined{theHproposition}{%
  }{%
  }%
  \@ifundefined{theHfact}{%
  }{%
  }%
  \@ifundefined{theHobservation}{%
  }{%
  }%
  \@ifundefined{theHproperty}{%
  }{%
  }%
  \@ifundefined{theHdefinition}{%
  }{%
  }%
  \@ifundefined{theHexample}{%
  }{%
  }%
  \@ifundefined{theHproblem}{%
  }{%
  }%
  \@ifundefined{theHremark}{%
  }{%
  }%
}
\makeatother

\allowdisplaybreaks[4]

\crefname{AlgoLine}{line}{lines}
\Crefname{AlgoLine}{Line}{Lines}
\crefname{problem}{problem}{problems}
\Crefname{problem}{Problem}{Problems}
\crefname{claim}{claim}{claims}
\Crefname{claim}{Claim}{Claims}

\let\epsilon\varepsilon

\DeclareMathOperator*{\E}{\mathbb{E}}
\DeclareMathOperator{\ddim}{\operatorname{ddim}}
\DeclareMathOperator{\dist}{\mathbf{d}}
\DeclareMathOperator{\diam}{\operatorname{diam}}
\DeclareMathOperator{\cost}{\operatorname{cost}}
\DeclareMathOperator{\floc}{\cost_{\mathrm{fl}}}
\DeclareMathOperator*{\argmin}{argmin}

\DeclareMathOperator{\poly}{\operatorname{poly}}

\DeclareMathOperator{\ccostport}{\operatorname{ccost-port}}
\DeclareMathOperator{\ocost}{\operatorname{ocost}}
\DeclareMathOperator{\opt}{\operatorname{opt}}
\DeclareMathOperator{\Bad}{\operatorname{Bad}}
\DeclareMathOperator{\child}{\operatorname{Child}}
\DeclareMathOperator{\Nchild}{\operatorname{N-Child}}
\DeclareMathOperator{\Schild}{\operatorname{O-Child}}
\def \vectorset{\mathrm{set}}
\DeclareMathOperator{\feas}{D}
\DeclareMathOperator{\hmin}{a}
\DeclareMathOperator{\hmax}{b}

\newcommand{\decom}{\mathcal{H}}
\newcommand{\modifydecom}{\mathcal{P}}
\newcommand{\cT}{\mathcal{T}}
\newcommand{\cP}{\mathcal{P}}
\newcommand{\cC}{\mathcal{C}}
\newcommand{\cS}{\mathcal{S}}
\newcommand{\bX}{\mathbb{X}}
\newcommand{\NN}{\mathbb{N}}
\newcommand{\RR}{\mathbb{R}}
\newcommand{\braket}[1]{\langle #1 \rangle}
\newcommand{\veca}{\mathbf{a}}
\newcommand{\vecb}{\mathbf{b}}
\newcommand{\comp}{z}
\newcommand{\optkl}{\opt_{k,\ell}}
\newcommand{\optFL}{\opt_{\mathrm{fl}}}
\newcommand{\optkmed}{\opt_k}
\newcommand{\median}{\cost_{k}}
\newcommand{\epsddim}{\log(\ddim/\epsilon)}
\newcommand{\expepsddim}{\frac{\ddim}{\epsilon}}
\newcommand{\proj}[2]{\pi_{#1}(#2)}

\newcommand{\dportal}{\dist_{\mathrm{port}}}
\newcommand{\dportP}{\dportal^\modifydecom}
\newcommand{\dportH}{\dportal^\decom}
\newcommand{\dporthatH}{\widehat{\dist}_{\mathrm{port}}^\decom}
\def \eps{\epsilon}

\title{Near Linear Time Approximation Schemes for Clustering of Partially Doubling Metrics}

\author{
Anne Driemel\thanks{University of Bonn}
\and
Jan H\"ockendorff \thanks{University of Cologne} \thanks{Funded by the Deutsche Forschungsgemeinschaft (DFG, German Research Foundation) – Project Number 459420781}
\and
Ioannis Psarros \thanks{University of Athens}
\and
Christian Sohler \thanks{University of Cologne}
\and
Di Yue \thanks{University of Toronto. The work was done while visiting University of Cologne.}
}
\date{}

\begin{document}

\begin{titlepage}
\maketitle
\begin{abstract}
In the metric $k$-median problem we are given a finite metric space $(X\cup Y, \mathbf{d})$ and the objective is to compute a set of $k$ centers $C\subseteq Y$ that minimizes $\sum_{p\in X} \min_{c\in C} \mathbf{d}(p,c)$. In general metric spaces, the 
best polynomial time algorithm, which is due to Cohen-Addad, Grandoni, Lee, Schwiegelshohn, and Svensson \cite{CLSS25}, computes a $(2+\epsilon)$-approximation for arbitrary constant $\epsilon>0$. However, if the metric space has bounded doubling dimension, a near linear time $(1+\epsilon)$-approximation algorithm is known due to the work of Cohen-Addad, Feldmann, and Saulpic \cite{Cohen-AddadFS21}. 

In this paper, we show that the $(1+\epsilon)$-approximation algorithm can be generalized to the case when either $X$ or $Y$ has bounded doubling dimension (but the other set not).  
The case when $X$ has bounded doubling dimension is motivated by the assumption that even though $X$ is part of a high-dimensional space, it may be that it is close to a low-dimensional structure. 
The case when $Y$ has bounded doubling dimension is perhaps more natural. It is motivated by specific clustering problems where the centers are low-dimensional. Specifically, our work in this setting implies the first near linear time approximation algorithm for the  $(k,\ell)$-median problem under discrete Fr\'echet distance when $\ell$ is constant. The latter problem is a version of the $k$-median problem under Fr\'echet distance when the input consists of time series of $z$ reals and where the centers are time series of $\ell$ reals \cite{DKS16}.
Previously, for this problem no $(1+\epsilon)$-approximation algorithm with running time polynomial in $k$ was known. 
We also introduce a novel complexity reduction for time series of real values that leads to a similar result for the case of discrete Fr\'echet distance.

In order to solve the case when $Y$ has a bounded doubling dimension, we introduce a form of dimension reduction that replaces points from $X$ by sets of points in $Y$. 
To solve the case when $X$ has a bounded doubling dimension, we generalize Talwar's decomposition \cite{Talwar04} of doubling metrics to our setting. The running time of our algorithms is $2^{2^t} \tilde O(n+m)$ where $t=O(\mathrm{ddim} \log \frac{\mathrm{ddim}}{\epsilon})$ and where $\mathrm{ddim}$ is the doubling dimension of $X$ (resp.\ $Y$). 
The results
also extend to the metric (uncapacitated) facility location problem.
We believe that our techniques are likely applicable to other problems. 
\end{abstract}
\newpage
\tableofcontents
\newpage
\end{titlepage}

\section{Introduction}
\label{sec:intro}

Partitioning of data sets according to data characteristics is one of the most fundamental problems
in data analysis and optimization. Depending on the underlying problem there are 
many different variants of partitioning problems. In this paper, 
we consider the metric $k$-median problem and the (closely related) metric facility location problem. The former problem
belongs to the area of unsupervised learning and is formulated as follows. We are given a metric space $(X\cup Y, \dist)$, where the set $X$ consists of the data points and the set $Y$ of allowed center locations. The goal is to choose a set $C$ of $k$ centers from $Y$ such that the $\median(X,C) = \sum_{x\in X} \min_{c\in C} \dist(x,c)$ is minimized. The set $C$ induces a partition of $X$ into $k$ sets by assigning each point to its nearest center. The resulting partitioning of $X$ is also called a \emph{clustering}. The facility location problem is closely related even though it originates from a different setting. In facility location we have a set of clients $X$ and a set of possible facilities $Y$ both from a metric space $(X\cup Y, \dist)$ and with each $f\in Y$ there is an opening cost $\ocost(f)$ associated. Every client has to be assigned to an open facility and will pay the distance to the facility as connection cost. 
The objective is to find a set of facilities $F\subset Y$ such that the sum of opening and connection costs is minimized, that is, we want to minimize $\sum_{x\in X} \min_{f\in F} \dist(x,f) + \sum_{f\in F} \ocost(f)$. Thus, the main difference between the two problems is, that in $k$-median clustering
the number of centers is restricted to $k$, while in facility location we may open as many centers as we want, but we need to pay for it. Otherwise, both objective functions minimize the sum of distances of points from $X$ to their nearest centers.

Neither problem admits 
a polynomial time $(1+\epsilon)$-approximation algorithm 
for arbitrary small constant $\epsilon>0$
under standard complexity theoretical assumptions. In fact, the $k$-median problem cannot be approximated better than $1+2/e$
\cite{JMS02} and the facility location problem not better than $1.463$ \cite{GK99,JMS02}.
At the same time, both problems can be fairly well approximated. For the $k$-median problem the best possible approximation algorithm achieves a factor $2+\epsilon$ approximation \cite{CLSS25} and for the facility location problem, the best known approximation factor is 1.488 \cite{Li11}. Interestingly,
both problems can be approximated up to a factor of $(1+\epsilon)$, if the underlying metric space has bounded doubling dimension \cite{Cohen-AddadFS21}.

In this paper, we raise the question whether a $(1+\epsilon)$-approximation can also be achieved, if only one of the sets has bounded doubling dimension, but the other set is high-dimensional. This setting has been studied before in the context of nearest neighbor search \cite{IN07, Har-PeledK13} and Euclidean facility location \cite{HJKY25}. In this paper, we give the first near-linear time $(1+\epsilon)$-approximation algorithms for facility location and $k$-median in these settings.

We then observe that other known clustering problems have centers coming from a  space that is low-dimensional: An example is the 
 $(k,\ell)$-median clustering problem \cite{DKS16} of time series data
under the discrete Fréchet distance. The Fréchet distance is a standard distance measure for polygonal curves.
These curves can be represented as a time series of points. The Fréchet distance is especially suited for comparing  series of different complexity (number of points in the time series). We will present in Section \ref{sec:discreteFrechet} how one can solve the above problem using our algorithm for doubling metrics. In the $(k,\ell)$-median problem the length of the center time series is restricted to a constant $\ell$, which essentially means that the space of center time series is doubling (albeit it does not imply that the space of clients is doubling). 

We then apply our algorithm to get the first $(1+\epsilon)$-approximation algorithm for this problem with a near linear running time (for constant $\ell$) while all previous algorithms were exponential
in $k$~\cite{buchin2019hardness,nath2021k}. 
For the case of one-dimensional ambient space we also give a more direct algorithm that is based on a new complexity reduction method for the discrete Fr\'echet distance
that reduces the problem to the case of bounded doubling dimension.
  We further investigate the reverse setting when the data points are low-dimensional and the candidate center set is high dimensional. 
Such a setting may arise when high-dimensional data is located on or near a low-dimensional structure, which is 
a common assumption, for example, in the field of manifold learning. We show that also in the reverse setting we obtain an almost linear time approximation scheme.

This paper is a direct follow up to the work of Driemel et al. \cite{driemel2025nearlineartimeapproximationscheme} that introduced a complexity reduction for time series and presented a near linear time $(1+\varepsilon)$-approximation algorithm for $(k,\ell)$-median for time series. This paper strictly improves both results by giving an explicit bound to the complexity reduction and generalizing the $(1+\varepsilon)$-approximation to polygonal curves of arbitrary ambient dimension. 

\subsection{Problem Definitions}

In this section we introduce the problems we consider. 
We start by defining the metric facility location problem.
Let $(X \cup Y, \dist)$ be a metric space.
Input to the facility location problem is a set of $n$ data points (or clients) $X$ and a set of $m$ candidate facilities $Y$. Every facility $f \in Y$ is assigned with an opening cost $\ocost(f) > 0$.
The goal is to output a set of facilities $F \subseteq Y$, such that
\begin{equation*}
    \floc(X, F) := \sum_{x \in X} \dist(x, F) + \sum_{f \in F} \ocost(f)
\end{equation*}
is minimized, where $\dist(x, F) := \min_{f\in F} \dist(x,f)$.
The optimal facility location value is denoted by $\optFL(X, Y) := \min_{F \subseteq Y} \floc(X, F)$.

In the metric $k$-median problem, our input is also a set of $n$ data points $X$ and a set of $m$ candidate centers $Y$. The objective is to find a subset $F\subseteq Y$ of centers such that 
\begin{equation*}
    \median(X, F) := \sum_{x \in X} \dist(x,F)
\end{equation*}
is minimized over all sets $F\subseteq Y$
of cardinality $k$.
The optimal $k$-median value is denoted by $\optkmed(X, Y) := \min_{F \subseteq Y, |F| \leq k} \median(X, F)$.

Throughout this paper, unless stated otherwise, we assume that the distances in $X \cup Y$ can be computed in constant time.
We remark that in a setting where this is not the case and we require time $T$ to compute a distance between two points in $X \cup Y$ then we get an additional factor of $T$ in the running time.
It is known that
if the metric space $(X\cup Y,\dist)$ has bounded doubling dimension, we have a near linear time approximation algorithm \cite{Cohen-AddadFS21}. 
In this paper we study the problem variant, where just one of the two sets $X$ and $Y$ has bounded  doubling dimension. Perhaps surprisingly, we show that in these cases we can still obtain a $(1+\epsilon)$-approximation algorithm. 
Since one can $O(1)$-approximate the doubling dimension of a metric in $2^{O(\ddim)}n \log n$ time \cite{Har-PeledM06}, we 
will assume that our algorithms are given an upper bound $\ddim$ on the doubling dimension.

\subsubsection{Application: The $(k,\ell)$-Median Problem under Discrete Fr\'echet Distance}

We would like to apply our results to a geometric variant of $k$-median clustering that we will define in the following. 
We denote a polygonal curve $\pi$ with vertices $p_1,\ldots, p_\comp$ in $\RR^d$ as $\pi = \langle p_1, p_2, \dots, p_\comp\rangle$. The \emph{complexity} of a polygonal curve is the number of points of the sequence defining it. The set of all polygonal curves of complexity $\comp$ with vertices in $\RR^d$ is denoted by $\bX_{\comp}^d$. 
A \emph{traversal} $T$ between a polygonal curve of complexity $\comp$ and a polygonal curve of complexity $\ell$ is a sequence of index pairs
$T =(i_1, j_1), (i_2, j_2), \dots, (i_t, j_t)$ 
such that the following conditions hold:
\begin{enumerate*}[label=\roman*)]
\item $(i_1, j_1) = (1, 1)$, 
\item $(i_t, j_t) = (\comp, \ell)$, and for each  $1 \le r < t$, 
\item $i_{r+1} - i_r \in \{0, 1\}$, 
\item $j_{r+1} - j_r \in \{0, 1\}$, 
\item $(i_{r+1} - i_r) + (j_{r+1} - j_r) \ge 1$.
\end{enumerate*}
 Let $\cT_{\comp,\ell}$ be the set of all traversals between polygonal curves of complexity $\comp$ and polygonal curves of complexity $\ell$. 
\begin{definition}
The \emph{discrete Fréchet distance} between $\pi=\langle p_1,\ldots,p_\comp \rangle$ and $\tau =\langle q_1,\ldots,q_{\ell} \rangle $ is defined as:
$
\dist_{dF}(\pi, \tau) = \min_{T\in \cT_{\comp,\ell}} \max_{(i, j) \in T} \|p_i - q_j\|_2.
$
\end{definition}

We consider the $(k,\ell)$-median problem for clustering under the discrete Fr\'echet distance that has been introduced in \cite{DKS16} in the context of the continuous Fr\'echet distance. 
The problem is a variant of the $k$-median problem under Fr\'echet distance, where the complexity of the center polygonal curves is restricted to be at most $\ell$. 

\begin{definition}[($k,\ell)$-median clustering problem]
   Given a set of polygonal curves $\Pi\subset \bX_{\comp}^d$ and parameters $k, \ell \in \NN$, compute a set $\cC \subset \bX_{\ell}^d$, $|\cC|=k$,  
   that minimizes
    $\sum_{\pi \in \Pi}\min_{\tau \in \cC} \dist_{dF}(\pi,\tau).$ 
\end{definition}
We  remark that one may invariantly define the centers to have complexity at most $\ell$, instead of exactly $\ell$. This does not change the problem, since every polygonal curve with complexity fewer than $\ell$ can be extended to a polygonal curve with $\ell$ vertices by repeating the first element of the polygonal curve without affecting the discrete Fr\'echet distance.

\subsection{Our Results}

In the following section, we present our results. We first show that for the $k$-median and the facility location 
problem there are near-linear time $(1+\epsilon)$-approximation algorithms when the set of facilities $Y$ has bounded doubling dimension. This extends previous results that were restricted to the case that $X\cup Y$ has bounded doubling dimension.
The theorem below summarizes Theorems \ref{thm:FL_bounded_centers} and \ref{thm:kmedian_bounded_centers}  and is proven in the corresponding sections.
\begin{theorem}\label{thm:bounded_centers}
    There are randomized algorithms that, given as input $\epsilon \in (0, \tfrac{1}{2})$, $n, m, k \in \NN$ and  $(X \cup Y, \dist)$ with $|X| = n, |Y| = m, \ddim(Y) \leq \ddim$, compute a $(1 + \epsilon)$-approximation of the $k$-median and facility location problem in time 
    $2^{2^t} \cdot \tilde{O}(n + m)$
    with constant success probability, where
    \begin{equation*}
        t \in O\left(\ddim \log\frac{\ddim}{\epsilon}\right).
    \end{equation*}
\end{theorem}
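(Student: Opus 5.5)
The plan is to reduce to the setting of Cohen-Addad, Feldmann and Saulpic \cite{Cohen-AddadFS21}, where the whole metric is doubling, by replacing each client $x\in X$ with a small weighted set of points of $Y$. This is the ``dimension reduction'' announced in the abstract.

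\textbf{Step 1: bounded-ratio clustering.} Compute a constant-factor approximation (or a suitable bicriteria solution) in near-linear time. For $k$-median one can use, for example, a sampling-based algorithm combined with nearest-neighbor search in the doubling set $Y$. This gives, for each $x$, a scale $r_x=\dist(x,Y)$ up to constant factors, together with a nearest candidate $y_x\in Y$ up to a constant factor.

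\textbf{Step 2: the reduction.} Define $\mathcal{N}_x\subseteq Y$ as an $\epsilon r_x$-net of the ball $B_Y(y_x, R)$ at geometrically increasing radii, so a ring-structured net. Its size is $(\ddim/\epsilon)^{O(\ddim)}$ per relevant ring. Replace $x$ by a new point $x'$ whose distance to every $y\in Y$ is redefined as $\min_{p\in \mathcal{N}_x}\left(\dist(x,p)+\dist(p,y)\right)$. By the triangle inequality this distance is at least $\dist(x,y)$. It is also at most $(1+O(\epsilon))\dist(x,y)+O(\epsilon) r_x$, because the net point near the projection of $y$ into the appropriate ring approximates the path.

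The key claim to prove is that the metric spanned by $Y$ together with these virtual clients has doubling dimension $O(\ddim)$, up to the additive $\epsilon r_x$ terms that are charged to $\opt$, since $\sum_x r_x\le \opt$. The idea is that each virtual client is ``attached'' to $Y$ through $\mathcal{N}_x$, so it behaves like a weighted Steiner point hanging off $Y$. I expect this step to be the main obstacle. A naive argument fails because many clients sharing the same net can be far apart, yet all must fit into balls in the new metric.

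To handle this, I would first try to snap each client to a canonical representative. Concretely, two clients with the same net and offsets $\dist(x,p)$ rounded to powers of $1+\epsilon$ become the same weighted point. The clients attached at scale $r$ near a region of $Y$ are then in bijection with bounded-length vectors of rounded offsets over a bounded-size net. Then I would verify the doubling property ball by ball: a ball of radius $2\rho$ contains clients of scale $\lesssim\rho$ near a $\rho$-region of $Y$, which can be covered by $2^{O(\ddim\log(\ddim/\epsilon))}$ balls of radius $\rho$.

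\textbf{Step 3: run the doubling algorithm.} On the modified instance, run the near-linear-time scheme from the \cite{Cohen-AddadFS21} framework: split-tree / Talwar decomposition, badly-cut handling, portal dynamic program. The running time will be $2^{2^{t}}\tilde O(n+m)$, where $t$ is the logarithm of the effective number of portals per cell, namely $O(\ddim\log(\ddim/\epsilon))$. The double exponential comes from the DP table over portal distance profiles.

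\textbf{Step 4: transfer back.} Map the solution back to $X$ and apply the distortion bound from Step 2. Every solution cost is preserved within $(1\pm O(\epsilon))$ plus $O(\epsilon)\opt$, so rescaling $\epsilon$ gives a $(1+\epsilon)$-approximation. Constant success probability comes from the randomized decomposition via Markov's inequality. Facility location is handled identically, since opening costs live on $Y$ and are unaffected by the reduction.
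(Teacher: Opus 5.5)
\textbf{The gap is the step you flagged, and it cannot be closed.} No matter what distances you assign between virtual clients, the metric on $Y$ plus the (snapped) virtual clients does not have doubling dimension $O(\ddim)$.

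Here is a counterexample. Let $Y$ be an $\epsilon r$-grid in $[-r,r]^{\ddim}$ with the $\ell_\infty$ metric and $y_0=0$. Fix a large constant $C$ and a $C\epsilon r$-separated set $Q\subseteq Y$ at distance at least $C\epsilon r$ from $y_0$, with $N:=|Q|=(1/\epsilon)^{\Omega(\ddim)}$. For each $g\in\{0,1\}^Q$ add a client $x_g$ with $\dist(x_g,y_0)=r$ and $\dist(x_g,p)=r(1+C\epsilon\, g(p))$ for $p\in Q$. These values are consistent with the triangle inequality on $Q\cup\{y_0\}$. Extend them by $\dist(x_g,y)=\min_{q\in Q\cup\{y_0\}}\dist(x_g,q)+\dist(q,y)$, and define client--client distances by shortest paths through $Y$. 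This is a legitimate instance with $\ddim(Y)=O(\ddim)$ and $\dist(x_g,Y)=r$ for every $g$.

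Now let $\widehat{\dist}$ be any metric satisfying your Step~2 guarantee $\dist(x_g,y)\le\widehat{\dist}(x_g',y)\le(1+O(\epsilon))\dist(x_g,y)+O(\epsilon)r$. If $g(p)=0\neq h(p)$, the triangle inequality at $p$ gives $\widehat{\dist}(x_g',x_h')\ge\widehat{\dist}(x_h',p)-\widehat{\dist}(x_g',p)=\Omega(\epsilon r)$, once $C$ exceeds the hidden constant. At the same time, every $x_g'$ lies within $(1+O(\epsilon))r$ of $y_0$.

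Consequences:
\begin{itemize}
\item No snapping that preserves your guarantee can identify two of these clients.
\item By \Cref{lemma:packing}, $2^N\le O(1/\epsilon)^{O(\ddim')}$, so the augmented space has doubling dimension $\ddim'=\Omega(N/\log(1/\epsilon))=(1/\epsilon)^{\Omega(\ddim)}$.
\item Your estimate of $2^{O(\ddim\log(\ddim/\epsilon))}$ covering balls per scale is therefore false. At its own scale, a client genuinely encodes about $(1/\epsilon)^{\ddim}$ independent distances.
\item Running \cite{Cohen-AddadFS21} as a black box on such an instance gives $t=(1/\epsilon)^{O(\ddim)}$, a triply exponential bound, not $t=O(\ddim\log(\ddim/\epsilon))$.
\end{itemize}

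\textbf{The missing idea is to never place clients in the decomposed space.} The paper builds Talwar's decomposition $\decom$ on $Y$ alone. With $S$ a rough solution, each client $x$ gets proxies $N_x$, an $\epsilon\dist(x,S)$-net of $B_Y(\proj{S}{x},\dist(x,S)/\epsilon)$. The client is \emph{revealed} at the lowest cluster $C(x)$ containing that ball. There $N_x\subseteq P_{C(x)}$, so $\dporthatH(x,F)=\min_{u\in N_x}\dist(x,u)+\dportH(u,F)$ can be read off the portal configuration $(\veca_C,\vecb_C)$, which is already part of the DP state. Clients thus contribute only additively, with $\epsilon^{-O(\ddim)}$ work each, and never enlarge the state space. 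That state space has size $(\epsilon\rho)^{-2|P_C|}$ with $|P_C|=\rho^{-O(\ddim)}$, which yields $t=O(\ddim\log(\ddim/\epsilon))$.

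In place of your doubling claim, the paper does an error analysis on $\decom$:
\begin{itemize}
\item Clients whose proxy ball is badly cut are moved onto $\proj{S}{x}\in Y$.
\item The detour of $\dporthatH$ is bounded in terms of the cut levels of the proxy ball and of $\{\proj{S}{x},\proj{F}{x}\}$.
\item The latter is controlled through bad facilities of $S$. For facility location it uses $F^*\cup\Bad_S$; for $k$-median it uses the \cite{Cohen-AddadFS21}-style exchange.
\item For facility location, a bootstrap handles the $2^{O(\ddim)}$-approximate starting solution.
\end{itemize}
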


We then show a similar result for the case when the set of clients $X$ has bounded doubling dimension. We keep the theorems separated since the underlying techniques are different. The following theorem summarizes Theorems \ref{thm:FL_bounded_clients} and \ref{thm:kmedian_bounded_clients} and is proven in the corresponding sections.

\begin{theorem}\label{thm:bounded_clients}
    There are randomized algorithms that, given as input $\epsilon \in (0, \tfrac{1}{2})$, $n, m, k \in \NN$ and  $(X \cup Y, \dist)$ with $|X| = n, |Y| = m, \ddim(X) \leq \ddim$, compute a $(1 + \epsilon)$-approximation of the $k$-median and facility location problem in time 
    $2^{2^t} \cdot \tilde{O}(n + m)$
    with constant success probability, where
    \begin{equation*}
        t \in O\left(\ddim \log\frac{\ddim}{\epsilon}\right).
    \end{equation*}
\end{theorem}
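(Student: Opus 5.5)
The plan is to follow the framework of Cohen-Addad, Feldmann and Saulpic \cite{Cohen-AddadFS21}: a randomized hierarchical decomposition with portals, a structure theorem showing that some near-optimal solution is ``portal-respecting'' in expectation, and a dynamic program over the decomposition. Since only $X$ is doubling, the idea is to build the decomposition on $X$ alone, generalizing Talwar's decomposition \cite{Talwar04}. Candidate centers in $Y$ are then attached to the decomposition through the clients they can serve.

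\textbf{Step 1: reductions.} First compute an $O(1)$-approximation (or use a constant-factor bicriteria solution) to bound the aspect ratio, so that only $O(\log n)$ relevant scales remain. For $k$-median, also replace $X$ by a coreset-like weighted subset, or use the standard badly-cut machinery. This controls $\tilde O(n+m)$.

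\textbf{Step 2: decomposition on $X$.} Build a randomly shifted hierarchical net decomposition of $X$. Each cluster at level $i$ has diameter $O(2^i)$ and carries $(\ddim/\epsilon)^{O(\ddim)}$ portals chosen from a net of $X$. Map each facility $y\in Y$ to its nearest client $x_y\in X$. I believe $\dist(x,y)\le \dist(x,x_y)+\dist(x_y,y)\le 2\dist(x,y)+\dots$ loses a constant factor, which is too much. So instead, treat $y$ as located at $x_y$ only at scales above $\epsilon^{-1}\dist(y,x_y)$, and handle $y$ as a ``leaf with offset'' $\dist(y,x_y)$ at lower scales. Portal distances are then measured as $\dist(x,p)+\dist(p,x_y)+\dist(x_y,y)$.

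\textbf{Step 3: structure theorem.} Show that $\E[\text{portal cost}]\le(1+\epsilon)\opt$ after moving a small set of ``badly cut'' clients to their nearest center in a reference solution, as in \cite{Cohen-AddadFS21}. The key estimate is that a pair $(x,c)$ whose connecting path at scale $2^i$ crosses a boundary pays an additive $\epsilon 2^i$. This happens with probability $O(\ddim\,\dist(x,x_c)/2^i)$. Summing over levels gives $O(\epsilon)\dist(x,c)$ times $\ddim\log$ factors absorbed by the portal density. \textbf{This is the main obstacle:} the guarantee depends on $c$ having a proxy in $X$ at distance at most $\dist(x,c)$, since $\dist(c,x_c)\le\dist(c,x)$. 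Here I would use the triangle inequality $\dist(x,x_c)\le 2\dist(x,c)$ only inside the cutting probability, never in the cost.

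\textbf{Step 4: dynamic program and running time.} For each cluster, the DP table is indexed by rounded portal distances to the nearest open center inside and outside the cluster, giving $2^{2^t}$ states. Centers are opened at leaves. For $k$-median, also index by the number of centers, using the Lagrangian or standard $\tilde O(k)$ techniques from \cite{Cohen-AddadFS21}. This yields time $2^{2^t}\tilde O(n+m)$ and succeeds with constant probability by Markov's inequality.

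Facility location is handled identically, with opening costs charged at the leaves.
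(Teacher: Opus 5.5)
Your outline has the right skeleton: Talwar's decomposition built on $X$ only, each $y\in Y$ hung near $\pi_X(y)$ at a level governed by $\dist(y,X)$, and cutting probabilities applied only to $\{x,\pi_X(c)\}$. But it is missing the idea that makes the dynamic program near-linear. Once facilities are leaves attached at scale-dependent levels, as your Step~2 suggests, a single cluster can have up to $m$ facility children. Step~4 tacitly assumes the $2^{O(\ddim)}$ fan-out of Talwar's tree. Enumerating joint child configurations is exponential in the actual fan-out, and so is the auxiliary DP of \cite{Cohen-AddadFS21}, whose state stores the configurations of all previously processed children. The paper enumerates configurations only for the $2^{O(\ddim)}$ non-ornament children. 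It puts the portals of $C$ and of these children whose recorded distances are not realized through the non-ornament part into a universe $U$ with $|U|\le\rho^{-O(\ddim)}$. It then decides which ornaments to open by an exact weighted (for $k$-median, unweighted) set cover over $U$, solved by a subset DP in $O(2^{|U|}\,|\Schild(C)|)$ time; since $\sum_C|\Schild(C)|\le m$, this stays within $2^{2^t}\tilde O(n+m)$. Likewise, indexing the $k$-median DP by the number of centers costs a factor $k$, and Lagrangian relaxation does not by itself return exactly $k$ centers. The paper follows \cite{Cohen-AddadFS21} in indexing by a cost budget $\Gamma$ rounded to powers of $1+\epsilon/(2^{O(\ddim)}\log\Delta)$ and storing the minimum number of centers. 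Nor is a near-linear $O(1)$-approximation available off the shelf when $Y$ is not doubling. The paper obtains one via the embedding $\dist'(x,y)=|\dist(x,\bar x)-\dist(y,\bar y)|+\dist(\bar x,\bar y)$ of \cite{Har-PeledK13}. For facility location it only gets a $2^{O(\ddim)}$-approximation, which is why it needs the bootstrapping step.

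Your Step~3 also does not deliver the required bound. Summing $\epsilon 2^i\cdot O(\ddim\,\dist(x,x_c)/2^i)$ over all levels loses a $\log\Delta$ factor. Absorbing it into the portal density yields $(\ddim\log n/\epsilon)^{O(\ddim)}$ portals per cluster, i.e.\ quasi-polynomial time. Moreover, for the modified solution $F$ the center $c=\pi_F(x)$ depends on the decomposition, so per-pair cutting probabilities do not apply. Following \cite{Cohen-AddadFS21}, the paper instead bounds cut levels deterministically for points that are not bad, with $\rho=\epsilon^{10}/\ddim^2$. This requires three adaptations absent from your plan:
\begin{itemize}
\item Badly cut is only controlled for subsets of $X$, so $y\in S$ is a bad facility when $B_X(\pi_X(y),100\dist(\pi_X(y),F^*)+100\dist(\pi_X(y),S))$ is badly cut.
\item Bad clients are moved to $\pi_X(\pi_S(x))$ rather than to $\pi_S(x)\in Y$, which would put clients outside the doubling set your decomposition lives on.
\item A bad client whose $\pi_S(x)$ is a bad facility is then no longer co-located with an open facility. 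Its cost is bounded separately by $2\dist(x,S)$ using the ornament level of $\pi_S(x)$, and charged to the $O(\epsilon)$ probability of this event.
\end{itemize}
Finally, the attachment scale $\epsilon^{-1}\dist(y,X)$ is miscalibrated. Suppose $y$ is a leaf at level $h$ with $2^h=\dist(y,X)/\delta$. Every pair $\{x,y\}$ is cut at level at least $h$, which costs $O(\rho2^h)=O(\rho/\delta)\,\dist(y,X)$. Above level $h$ the point $y$ is only about $\delta2^h$-close to a portal, which costs $O(\delta)2^\ell$, where $\ell$ is the highest level cutting $\{x,\pi_X(y)\}$. The badly-cut bounds give only $2^\ell\lesssim(\ddim/\epsilon^2)(\dist(x,S)+\dist(x,F^*))$. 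So one needs roughly $\rho/\epsilon\lesssim\delta\lesssim\epsilon^3/\ddim$. Your choice $\delta=\epsilon$ violates the upper bound, and with your portal spacing $\epsilon2^i$ it violates the lower one as well; this is exactly the failed first attempt in the paper's overview. The paper takes $\delta=\sqrt\rho$.
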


We apply our result to the $(k,\ell)$-median problem under discrete Fr\'echet distance that is the first near linear time $(1+\epsilon)$-approximation algorithm for this problem when $\ell$ and $d$ (the dimension of the ambient space) are constant. All prior algorithms were exponential in $k$.

\begin{restatable}{theorem}{thmklmedian}
\label{thm: klmedian_arbitrary_d}
    There is a randomized algorithm that, given as input $\epsilon \in (0, \tfrac{1}{2})$, $n, \comp, k ,\ell \in \NN$, $P \subset \bX_{\comp}^d$ with $|P|=n$
    computes a $(1 + \epsilon)$-approximate solution to the $(k,\ell)$-median problem in time $2^{2^t} \cdot \tilde{O}(d\ell nz)$ with constant success probability, where
    \[t \in O\left(d \ell \log\frac{d\ell}{\epsilon}\right).\]
\end{restatable}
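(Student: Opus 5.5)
The plan is to reduce \cref{thm: klmedian_arbitrary_d} to \cref{thm:bounded_centers}. We take $X=P$ with $\dist=\dist_{dF}$, and we build a finite candidate set $Y\subset\bX_\ell^d$. The set $Y$ must have three properties: its size is $2^{O(t)}\cdot\tilde O(n)$, its doubling dimension is $O(d\ell\log(d\ell))$, and it contains a $(1+\epsilon)$-approximate solution. We then run the $k$-median algorithm of \cref{thm:bounded_centers} with $\epsilon/3$. Each distance evaluation between a curve of complexity $z$ and one of complexity $\ell$ costs $O(d\ell z)$ time with the standard dynamic program. This accounts for the factor $d\ell z$ in the running time, as the remark in the problem definitions section allows.

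\emph{Doubling dimension.} Let $\tau\in\bX_\ell^d$ and let $\sigma\in\bX_\ell^d$ with $\dist_{dF}(\sigma,\tau)\le r$. Every vertex of $\sigma$ is matched to some vertex of $\tau$, so it lies in one of the $\ell$ Euclidean balls $B(q_j,r)$. Cover each such ball by $2^{O(d)}$ balls of radius $r/2$ with centers $c_1,c_2,\dots$. Now snap every vertex of $\sigma$ to a cover center within distance $r/2$, using the identity traversal between $\sigma$ and its snapped version. This shows that the ball of radius $r$ around $\tau$ is covered by at most $(\ell 2^{O(d)})^\ell$ balls of radius $r/2$, up to a constant-factor loss in the radius when we restrict centers to $Y$. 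Hence $\ddim(\bX_\ell^d)$, and therefore $\ddim(Y)$ up to a factor $2$, is $O(\ell(d+\log\ell))$. This gives $t\in O(d\ell\log(d\ell/\epsilon))$.

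\emph{Candidate set.} For each $\pi\in P$ we first compute an approximate $\ell$-simplification $s_\pi$ in $\tilde O(d\ell z)$ time. This is a curve of complexity $\ell$ with $\dist_{dF}(\pi,s_\pi)\le\alpha\, r_\pi$ for a constant $\alpha$, where $r_\pi$ is the optimal simplification error. We also compute a constant-factor approximation $A$ of the optimal cost by standard means, which bounds the relevant scales. For each scale $\delta=2^i r_\pi$, with $i$ ranging over $O(\log(n/\epsilon))$ values between $r_\pi$ and $O(A)$, we proceed as follows. Place a grid of side $\epsilon\delta/\sqrt d$ inside the ball of radius $(1+\alpha)\delta$ around each vertex of $s_\pi$. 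Add to $Y$ all curves whose $\ell$ vertices are chosen from these grids. This contributes $(\ell(\sqrt d/\epsilon)^d)^{\ell}=2^{O(t)}$ curves per scale. In total $|Y|=2^{O(t)}\tilde O(n)$, and the enumeration time is absorbed into $2^{2^t}$.

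\emph{Approximation argument (main obstacle).} Fix an optimal cluster with center $c$, and let $\pi$ be the member of the cluster closest to $c$, with $\delta^*=\dist_{dF}(\pi,c)$. Then $\delta^*$ is at most the average cost of the cluster. Moreover $r_\pi\le\delta^*$, because $c$ is itself a complexity-$\ell$ curve within $\delta^*$ of $\pi$. Hence $\dist_{dF}(c,s_\pi)\le(1+\alpha)\delta^*$, and every vertex of $c$ lies within $(1+\alpha)\delta^*$ of some vertex of $s_\pi$. Choose the scale $\delta\in[\delta^*,2\delta^*]$ and snap each vertex of $c$ to the grid. This yields $c'\in Y$ with $\dist_{dF}(c,c')\le 2\epsilon\delta^*$, so the cost of the cluster grows by at most $2\epsilon\,|\text{cluster}|\,\delta^*\le2\epsilon\cdot$(cluster cost). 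Summing over clusters gives the $(1+O(\epsilon))$ factor.

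The delicate points are the edge cases of the scale range. When $\delta^*$ is tiny, we need $r_\pi>0$ or a fallback: if $r_\pi=0$, then $\pi$ itself has complexity at most $\ell$ and can be added to $Y$ directly. When $\delta^*$ is huge, the bound $\delta^*\le\mathrm{cost}\le O(A)$ caps the scales. A further concern is ensuring the simplification step stays within $\tilde O(d\ell z)$ time per curve; I would try a greedy approximate simplification with a binary search over the error.
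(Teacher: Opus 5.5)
Your overall reduction matches the paper's: take $X=P$, build a grid-based candidate set $Y\subset\bX_\ell^d$ around complexity-$\ell$ approximations of the input curves at dyadic scales, snap each optimal center using the cluster member closest to it, and run the bounded-centers $k$-median algorithm with $O(d\ell\comp)$-time distance evaluations. Two steps do not go through as written.

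\textbf{Scales (the real gap).} You anchor the scales at $r_\pi$, taking $\delta=2^ir_\pi$ up to $O(A)$. That gives $\log(A/r_\pi)$ scales for $\pi$, and $A/r_\pi$ is not bounded by any function of $n$ and $\epsilon$: an input curve can be arbitrarily close to complexity $\ell$ while the optimum is large. So either $|Y|$ is not $2^{O(t)}\tilde O(n)$, or, if you keep only the $O(\log(n/\epsilon))$ scales you claim, a scale $\delta\in[\delta^*,2\delta^*]$ need not exist.

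The $r_\pi=0$ fallback does not repair this. With $\pi$ itself as the only candidate from $\pi$, a cluster member $\pi''$ pays $\dist_{dF}(\pi'',\pi)\le\dist_{dF}(\pi'',c)+\delta^*\le 2\,\dist_{dF}(\pi'',c)$, which is a factor $2$.

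The missing idea is to anchor the scales globally and to handle small $\delta^*$ by an \emph{additive} error:
\begin{itemize}
\item Take $r_i=2^iA/(\alpha n)$ for $i=0,\dots,\lceil\log(\alpha n)\rceil$, where $\optkl\le A\le\alpha\optkl$.
\item If $\delta^*\le r_0$, snapping at scale $r_0$ costs at most $\epsilon r_0\le\epsilon\optkl/n$ per client, hence at most $\epsilon\optkl$ in total. This is the case $i^*=0$ in \Cref{lemma:mediancandidates}.
\end{itemize}
Your fixed $s_\pi$ would still work at these scales, because only scales $\ge\delta^*\ge r_\pi$ are ever used. But then you do not need $r_\pi$ at all. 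The paper, for each $\pi$ and each $r_i$, calls the decision procedure of Filtser et al.\ in time $O(d\comp\log\comp)$. It either returns a complexity-$\ell$ curve within $2r_i$ of $\pi$, or certifies that none exists within $r_i$ (so that scale is never the relevant one). The paper then grids around the returned curve with radius $(3+\epsilon)r_i$. This also removes your binary-search concern.

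Relatedly, $A$ itself needs an argument, since no candidate set exists yet. The paper obtains it from minimum-error $\ell$-simplifications plus a constant-factor $k$-median in the doubling space $\bX_\ell^d$.

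\textbf{Doubling dimension (quantitative).} Your bound $\ddim(\bX_\ell^d)=O(\ell(d+\log\ell))$ is correct, but it does not yield $t\in O(d\ell\log(d\ell/\epsilon))$. For $d=O(1)$ it gives $t=\Theta(\ell\log\ell\cdot\log(\ell/\epsilon))$, a $\log\ell$ factor too large inside a double exponent.

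The loss comes from letting each vertex of $\sigma$ pick any of the $\ell$ balls. Let $i_j$ be the first vertex of $\tau$ matched to the $j$-th vertex of $\sigma$ in an optimal traversal. Then $j\mapsto i_j$ is non-decreasing, so there are at most $\binom{2\ell-1}{\ell}\le 4^\ell$ such maps. For each map, vertex $j$ only needs the $2^{O(d)}$ cover centers of $B(q_{i_j},r)$. This gives $2^{O(d\ell)}$ balls of radius $r/2$ and hence $\ddim=O(d\ell)$, as in \Cref{prop:ddim}.
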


\subsection{Technical Overview}

In the following we give an overview of the main ideas of our results. 
There are two conceptual ideas, one related to the case of low-dimensional centers, the other one related to low-dimensional clients. 
In the setting of low-dimensional centers, we develop a form of dimension reduction which represents every high-dimensional client by a set of points in the low-dimensional space.
In the setting of low-dimensional clients, our main contribution is a new hierarchical decomposition that generalizes Talwar's decomposition~\cite{Talwar04} to partially doubling metric spaces.
Both ideas are independent of the concrete problems we study and we therefore believe that there is high potential to apply them to other problems. Once we have established these main ideas, there is the technical challenge to integrate them into the dynamic programming approach from \cite{Cohen-AddadFS21}. This requires overcoming several technical problems to deal with our setting.

We start by describing the case of low-dimensional centers. We illustrate our ideas on the facility location problem. 
For simplicity, we will assume uniform opening costs.
The approach to $k$-median is similar. 
Finally, we discuss an application to the $(k,\ell)$-median clustering problem under discrete Fr\'echet distance.

\subsubsection{Low-dimensional Centers}
We consider the case where the point set $X$ is in an arbitrary metric space and $Y$ has bounded doubling dimension $\ddim$. A similar setting has been considered before in \cite{GK13} where the authors develop approximation algorithms for finding the smallest set to be removed to get a set of bounded doubling dimension and where they show that one can compute an approximate minimum spanning tree and other proximity structures when $O(\sqrt{n})$ points are not doubling.
A simple illustrative example is to have a point set $X$ in $\mathbb R^d$ with Euclidean distance and a fixed $2$-dimensional plane $Y$ that is supposed to contain the centers. 
In this example, the set $Y$ will be unbounded (one can usually discretize such a space to obtain a set of candidates that contains a $(1+\epsilon)$-approximation). 

\paragraph{Dimension reduction.}
A simple idea to solve this special case is to project all points from $X$ to the plane $Y$ and solve the resulting low-dimensional problem. Such an approach will result in a constant approximation as the distance to $Y$ as argued in the following. For the analysis we can think of the projection as moving all points to $Y$. By the triangle inequality, this will change the cost of any solution by at most the sum of distances the points have been moved. Since this sum is also a lower bound on the connection cost of any solution to the high-dimensional input, we get that any solution on the projection is a constant approximation. 
For similar reasons approximating the distances from $x$ to $y\in Y$ by $\dist(x,\pi_Y(x)) + \dist(\pi_Y(x), y)$, where $\pi_Y(x)$ denotes the point from $Y$ closest to $x$, only gives a constant approximation. Instead of replacing $x$ by a single point, our new idea is to replace $x$ by a set of points $N_x$ (which we call \emph{proxies}), which is an $\epsilon \dist(x,\pi_Y(x))$-net of a ball $B_Y(\pi_Y(x), \dist(x,\pi_Y(x))/\epsilon) \subseteq Y$ around $\pi_Y(x)$ of radius $\dist(x,\pi_Y(x))/\epsilon$. 
(Formally, a $\rho$-net of a set $B$ is a subset $N \subseteq B$, such that 1) the interpoint distances in $N$ are at least $\rho$, and 
2) every point in $B$ has a nearby point in $N$ within distance $\rho$.
)
Taking $\min_{u\in N_x} \dist(x,u) + \dist(u,y)$ for $y\in Y$ we approximate all distances within a factor of 
$(1+\epsilon)$ as summarized in the following lemma.
\begin{restatable}{lemma}{LemmaDimReduction}
\label{lemma:dim-reduction-simple}
Let $\epsilon \in (0,\frac{1}{2})$. For every $x\in X$ let 
  $N_x$ be an $\epsilon \cdot \dist(x,Y)$-net of $B_Y(\pi_Y(x), \dist(x,Y)/\epsilon)$ with $\pi_Y(x)\in N_x$.
For all $x\in X$ and $y\in Y$ define 
  \begin{equation}\label{eqn:dim_reduction}
    \widehat{\dist}(x,y) = \min_{u\in N_x} \dist(x,u) + \dist(u,y).
  \end{equation}
  Then we have
  \[
 \dist(x,y) \le \widehat{\dist}(x,y) \le (1+4 \epsilon) \cdot \dist(x,y). 
  \]
\end{restatable}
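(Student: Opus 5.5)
The lower bound $\dist(x,y)\le\widehat{\dist}(x,y)$ follows from the triangle inequality: every term $\dist(x,u)+\dist(u,y)$ in the minimum is at least $\dist(x,y)$. All the work is in the upper bound. Write $r=\dist(x,Y)=\dist(x,\pi_Y(x))$ and fix $y\in Y$. Note that $\dist(x,y)\ge r$. If $r=0$, the net degenerates: with $\pi_Y(x)=x\in N_x$ we get $\widehat{\dist}(x,y)\le \dist(x,x)+\dist(x,y)$, so assume $r>0$. I split into two cases according to whether $y$ lies in the ball $B_Y(\pi_Y(x),r/\epsilon)$.

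\emph{Case 1: $\dist(\pi_Y(x),y)\le r/\epsilon$.} By the covering property of the net there is $u\in N_x$ with $\dist(u,y)\le\epsilon r$. Then
\[
\widehat{\dist}(x,y)\le \dist(x,u)+\dist(u,y)\le \dist(x,y)+2\dist(u,y)\le \dist(x,y)+2\epsilon r\le(1+2\epsilon)\dist(x,y),
\]
using $\dist(x,u)\le\dist(x,y)+\dist(y,u)$ and $r\le\dist(x,y)$.

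\emph{Case 2: $\dist(\pi_Y(x),y)> r/\epsilon$.} Here I use $u=\pi_Y(x)$, which lies in $N_x$ by assumption. This gives $\widehat{\dist}(x,y)\le r+\dist(\pi_Y(x),y)\le r+(r+\dist(x,y))=2r+\dist(x,y)$. It remains to bound $r$ by $O(\epsilon)\dist(x,y)$. From
\[
\dist(x,y)\ge\dist(\pi_Y(x),y)-r> r/\epsilon-r=r(1-\epsilon)/\epsilon,
\]
we get $r<\frac{\epsilon}{1-\epsilon}\dist(x,y)\le 2\epsilon\,\dist(x,y)$, since $\epsilon<1/2$. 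Hence $\widehat{\dist}(x,y)\le(1+4\epsilon)\dist(x,y)$.

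The only delicate point is Case 2: one has to verify that the far regime forces $r$ to be an $\epsilon$-fraction of $\dist(x,y)$. The constant $4$ comes from $2\cdot\frac{\epsilon}{1-\epsilon}\le 4\epsilon$ for $\epsilon<1/2$. Taking the worse of the two cases gives the claimed factor $1+4\epsilon$.
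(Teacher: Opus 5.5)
Your proof is correct and follows the paper's argument essentially step for step. It uses the triangle-inequality lower bound, the same split on whether $y\in B_Y(\pi_Y(x),\dist(x,Y)/\epsilon)$, the net's covering property in the near case, and routing through $\pi_Y(x)\in N_x$ with $\dist(x,Y)\le\frac{\epsilon}{1-\epsilon}\dist(x,y)$ in the far case. Your far-case bound $1+\frac{2\epsilon}{1-\epsilon}\le 1+4\epsilon$ is slightly cleaner than the paper's stated intermediate $\frac{1+2\epsilon}{1-\epsilon}$, which only falls below $1+4\epsilon$ when $\epsilon\le 1/4$, whereas yours holds for the full range $\epsilon<1/2$.
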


We combine the above dimension reduction idea with 
the algorithm of \cite{Cohen-AddadFS21}.
We start from a standard hierarchical decomposition for doubling metrics by Talwar~\cite{Talwar04}, which is an analogue to the randomly-shifted quadtree in doubling metrics.
The decomposition $\decom$ is constructed on top of $Y$, and has $L = O(\log m)$ levels.
For each $0 \leq \ell \leq L$, level $\ell$ of $\decom$, denoted by $\decom_\ell$, is a partition of $Y$ into clusters of diameter at most $2^\ell$,
and $\decom_{\ell - 1}$ is a refinement of $\decom_\ell$. 
$\decom$ can be represented as a tree, where each node corresponds to a cluster and has $2^{O(\ddim)}$ children.
In~\cite{Cohen-AddadFS21}, a dynamic program is run on $\decom$ to compute a $(1 + \epsilon)$-approximation for facility location.
In our setting, one immediate issue is that clients in $X$ are not directly defined on $\decom$, but are replaced by a proxy set $N_x \subseteq Y$.
Therefore, it is not immediately clear that one can follow a similar approach as~\cite{Cohen-AddadFS21}.

For the sake of presentation, 
let us first have a brief review of 
the argument of \cite{Cohen-AddadFS21}, for the case $X = Y$.
They use the notion of \emph{portals}, which originates from~\cite{Arora98}.
Roughly speaking, the portal set $P_C$ for a cluster $C \in \decom_\ell$ is an $\epsilon 2^\ell$-net of $C$.
The level $\ell$ portals are the union of $P_C$ over $C \in \decom_\ell$. 
For a pair of points $x, y \in Y$, the actual distance between $x$ and $y$ will be replaced by the \emph{portal-respecting distance} $\dportH(x, y)$, which is the length of the \emph{portal-respecting path} between $x$ and $y$.
Specifically, let $\ell$ be the highest level where $\{x, y\}$ is \emph{cut} w.r.t. $\decom$, i.e., the highest level where $x$ and $y$ fall into different clusters.
The portal-respecting path between $x$ and $y$ starts from $x$ (a portal at level $0$), each step connecting the current portal to the closest portal at one level up, until reaching a portal $p$ at level $\ell$.
Portal $p$ is then connected to another portal $q$ at the same level, and the path goes all the way down from $q$ to $y$.
It can be shown that the portal-respecting distance $\dportH(x, y)$ is upper bounded by $\dist(x, y) + O(\epsilon) 2^\ell$.
A number of techniques are developed in~\cite{Cohen-AddadFS21} to bound $\ell$, the level where $\{x, y\}$ is \emph{cut}.
Finally, a dynamic program is run on $\decom$ w.r.t. portal-respecting distance $\dportH$, where each entry of the DP table is encoded by a cluster $C$ of $\decom$, and a configuration indicating how $C$ interacts with other clusters via its portals $P_C$.

\paragraph{Integrating the dimension reduction with portal-respecting distance.}
In our setting,
if we would like to combine our approach with the algorithm of~\cite{Cohen-AddadFS21},
the first step is to combine our dimension reduction idea with the portal-respecting distance.
This is done in a rather straightforward way, by modifying \eqref{eqn:dim_reduction} to
\[
\widehat{\dist}_{\mathrm{port}}^\decom(x, y) = \min_{u\in N_x} \dist(x,u) + \dportH(u,y).
\]
Now, a main challenge would be bounding the error incurred by the portal-respecting distance, namely, $\dporthatH(x, y) - \dist(x, y)$.
In~\cite{Cohen-AddadFS21}, this can be done by bounding the highest cutting level for a single pair of points $\{x, y\}$.
However, since $x$ is now represented by $N_x$, we have to bound the cutting level of $\{u, y\}$ for every $u \in N_x$, which becomes more involved.
Nevertheless, we argue that the above error can still be effectively bounded, by only considering the cutting level for two sets $B_Y(\proj{Y}{x}, \dist(x, Y) /\epsilon)$ and $\{\proj{Y}{x}, y\}$.
Specifically, if $j$ is the highest level where $B_Y(\proj{Y}{x}, \dist(x, Y) /\epsilon)$ is cut, and $\ell$ is the highest level where $\{\proj{Y}{x}, y\}$ is cut, then we show that
\[
\dporthatH(x, y) - \dist(x, y) \leq O(\epsilon) 2^{\max\{j, \ell\}}.
\]

\paragraph{Dynamic program.}
We further integrate the dimension reduction with the dynamic program framework in~\cite{Cohen-AddadFS21}.
The algorithm is run on $\decom$.
Whenever we compute the connection cost from a client $x \in X$ to some set of facilities $F \subseteq Y$, we replace $\dist(x, F)$ with $\dporthatH(x, F)$.
Two challenges come with such a replacement.
First, recall that $\dporthatH(x, F)$ depends on $\dportH(u, F)$ for all proxies $u \in N_x$;
therefore, to compute it we have to enumerate all $u \in N_x$, which can be done only when $N_x$ is entirely contained in some cluster $C$.
This is different from~\cite{Cohen-AddadFS21}, where the connection cost of $x$ can be trivially computed at the leaf node $\{x\}$.
Second, even if we successfully define such a cluster $C$ which contains $N_x$, it is unclear how we access the distance $\dportH(u, F)$.

To resolve the first issue, we find for every $x \in X$ a suitable cluster $C(x)$ that entirely contains $N_x$ in the preprocessing stage, and ``defer'' the computation of $\dporthatH(x, F)$ to cluster $C(x)$.
More concretely, for a client $x \in X$ and a cluster $C$ on $\decom$, we say $x$ is \emph{revealed} in $C$ if $B_Y(\proj{Y}{x}, \dist(x, Y)/\epsilon) \subseteq C$, and say $x$ is \emph{newly revealed} in $C$, denoted $C(x) = C$, if $C$ is the lowest level cluster where $x$ is revealed.
Then each DP table entry $(C, \veca_C)$ is defined to be the minimum \emph{revealed facility location cost} of $C$, given the configuration $\veca_C$, i.e., 
\[
g(C, \veca_C) := \min_{\substack{
    F \colon \text{$F$ is consistent} \\
    \text{with } \veca_C}
    }
    \left\{\sum_{x \in X \colon x \text{ is revealed in } C}
    \dporthatH(x, F) + \ocost(F \cap C)\right\}.
\]
Since all clients are revealed in the largest cluster (root node) $Y$, the root node stores the optimal facility location cost.

For the second issue, 
let us consider how the DP table is updated.
The revealed cost of $C$ can be decomposed into two parts --- 
(1) the total revealed cost of $C$'s child clusters, and 
(2) the \emph{newly revealed cost} of $C$, i.e., $\sum_{x \colon C(x) = C} \dporthatH(x, F)$.
The first part can be obtained from the corresponding table entries of $C$'s children.
For the second part, the key observation is that if $x$ is newly revealed in $C$ ($C(x) = C$), then 
the proxy set of $x$ is entirely contained in the portal set of $C$, i.e., $N_x \subseteq P_{C}$. 
Therefore, for each $u \in N_x$, $\dportH(u, F)$ can be directly obtained from the configuration $\veca_C$ which has the information of the connection of $P_{C}$, and thus $\dporthatH(x, F)$ can be effectively computed.

Our final algorithm has a running time of
$2^{2^t} \tilde O(n+m)$ where $t=O(\mathrm{ddim} \log \frac{\mathrm{ddim}}{\epsilon})$.
We remark that this is faster than the algorithm of \cite{Cohen-AddadFS21}, which has $t^2$ instead of $t$ in the (double) exponent. This improvement comes from the fact that one can improve the analysis of their hierarchical composition \cite{Cohen-AddadFS21}. 
The fact, that such an improvement is possible had been observed in \cite{ChanHJ18} referring to the paper \cite{AbrahamBN06}. 
In the present paper, we provide a self-contained proof of this fact in Lemma \ref{lemma:hierarchical_decomposition}.

\subsubsection{Low-dimensional Clients}
Next we consider the case when $X$ has bounded doubling dimension $\ddim$ and $Y$ not. 
One may wonder if the same
approach for the case when $Y$ is low-dimensional applies here. 
However, due to the asymmetric nature of the problem it is unclear how we could benefit from the dimension reduction in this case. Intuitively, this can be seen by the fact that ``moving'' a facility/center can be much more costly than moving a client.
Indeed, if we move a point in $X$ by a distance $z$ we change the cost of any solution by at most $z$. Moving a point in $Y$ can change the cost of a solution by as much as $|X| \cdot z$, since we could potentially assign every point in $X$ to the same point in $Y$. 

Thus, it is unclear how to apply the previous approach. 

\paragraph{Our new decomposition.}
For the reason above, we extend Talwar's hierarchical decomposition to the case that only a subset of the metric space is doubling. 
We start from constructing Talwar's decomposition on top of $X$; denote the resulting decomposition by $\decom$.
It then remains to decide how to add the points in $Y$ to $\decom$.
Intuitively, we want that every point $y \in Y$ always lies in the same cluster as $\proj{X}{y}$, the nearest neighbor of $y$ in $X$. 
More concretely, our plan is to assign to every $y \in Y$ a suitable level $0 \leq h(y) \leq L$, and add $y$ to $\decom$ in such a way that 
\begin{enumerate}[label=(\alph*)]
    \item $\{y\}$ is a leaf node at level $h(y)$; and 
    \item $y$ and $\proj{X}{y}$ are in the same cluster at levels higher than (or equal to) $h(y) + 1$.
\end{enumerate}
We call such cluster $\{y\}$ an \emph{ornament} at level $h(y)$.
After adding all $y \in Y$ to $\decom$, we obtain a hierarchical decomposition for $X \cup Y$, denoted by $\modifydecom$.

It could be tricky to define for every ornament $\{y\}$ the level $h(y)$ to which it should be attached as a leaf.
On the one hand, we want $h(y)$ to be sufficiently large, so that $y$ can be covered by some level $h(y)$ portal (i.e., $\dist(y, P_C) \leq \epsilon 2^{h(y)}$ for the cluster $C \ni y$).
Therefore, we can extend the definition of portal-respecting path to $\modifydecom$, and define portal-respecting distance $\dportP(x, y)$ on $\modifydecom$ the same way as $\dportH(x, y)$.
On the other hand, we want $h(y)$ to be sufficiently small, so that the error incurred by $\dportP(x, y)$, namely $\dportP(x, y) - \dist(x, y) = O(\epsilon) 2^{h(y)}$, is negligible.

As a first attempt, consider choosing $h(y) = \log(\dist(y, X)/\epsilon)$.
Then we have $\dist(y, \proj{X}{y}) \leq \epsilon 2^{h(y)}$, 
and we can further argue that $y$ is within distance $\epsilon 2^{h(y) + 1}$ to some level $h(y) + 1$ portal.
Therefore, the portal-respecting distance $\dportP(x, y)$ can be defined the same way as $\dportH(x, y)$.
However, this choice of $h(y)$ becomes problematic in terms of the error $\dportP(x, y) - \dist(x, y)$.
Consider an arbitrary point $x \in X$, and assume $h(y)$ is the highest level where $\{x, y\}$ is cut.
Then the error is $\dportP(x, y) - \dist(x, y) = O(\epsilon) 2^{h(y)} = O(\dist(y, X))$, which is too large to afford.

To resolve the issue, we choose $h(y)$ as a slightly smaller value $\log(\dist(y, X)/\sqrt{\epsilon})$.
For this choice, the error becomes $\dportP(x, y) - \dist(x, y) = O(\epsilon) 2^{h(y)} = O(\sqrt{\epsilon}) \dist(y, X)$,
which is at most $O(\sqrt{\epsilon}) \dist(x, y)$ and thus can be charged to $\dist(x, y)$.
The tradeoff is that $y$ is no longer $\epsilon 2^{h(y) + 1}$-covered by the level $h(y) + 1$ portal set.
Instead, the covering radius becomes $\sqrt{\epsilon} 2^{h(y) + 1}$.
Nonetheless, we can still define $\dportP(x, y)$ similarly under this weaker covering property, and it does not change the $O(\sqrt{\epsilon}) \dist(x, y)$ error bound above when the cutting level is exactly $h(y)$.
We obtain a weaker error bound of $\dportP(x, y) - \dist(x, y) = O(\sqrt{\epsilon}) 2^{\ell}$ only if the 
cutting level $\ell$ is strictly greater than $h(y)$.
In this case, the question reduces to finding the highest cutting level of $\{x, \proj{X}{y}\}$, which can be answered fairly well using techniques in~\cite{Cohen-AddadFS21} and the previous section, since both points are in $X$.

\paragraph{Dynamic program.}
Our new hierarchical decomposition $\modifydecom$ and the portal-respecting distance $\dportP(x, y)$ can then be combined with the dynamic program of Cohen-Addad et al. \cite{Cohen-AddadFS21} to get our result in the case when the clients are low-dimensional.
This seems difficult at first, because each cluster on $\modifydecom$ now has an unbounded number of child clusters, mainly due to the newly added ornaments.
A naive enumeration of the configurations of these ornament children would blow up the time complexity of the DP.

Perhaps surprisingly, we show that we can avoid doing the enumeration for ornaments, and it suffices to only enumerate the configurations of non-ornament children, the number of which is bounded by $2^{O(\ddim)}$.
Our key observation is that ornaments must be candidate facilities, and thus only the opening cost needs to be computed.
Therefore, we can obtain from portals of non-ornament children the information which ornaments are potentially required to be opened as a facility.
Once we have this information, we simply select the smallest set of ornaments that serve all the unserved portals.
This can be done via solving a set cover problem with a bounded universe.

\subsubsection{Applications to the $(k,\ell)$-Median Problem under Discrete Fr\'echet Distance}

We next discuss how to apply our results to the $(k,\ell)$-median problem under discrete Fr\'echet distance.
It is a folklore result that the doubling dimension of the metric space $(\mathbb X_z^d,\dist_{dF})$, i.e. 
the space of polygonal curves of $z$ points in $d$-dimensional space equipped with the discrete Fr\'echet distance\footnote{Technically, we consider equivalence classes of curves of pairwise discrete Fr\'echet distance $0$ to obtain a proper metric space.}, is $\Theta(dz)$. This implies that for the $(k,\ell)$-median problem the space of center candidates has bounded doubling dimension (for constant $\ell$ and $d$). However, it also has an infinite number of points. Thus, we have to compute a discrete subset of that space that contains a $(1+\epsilon)$-approximation and is not too large.  In order to do so we make use of a result by Filtser et al. \cite{FFK23} from the context of nearest neighbor search. Applying \Cref{thm:bounded_centers}, we obtain the following result.

\thmklmedian *

\paragraph{Complexity reduction.}
For the case of the ambient space being one-dimensional we develop a new complexity reduction for the discrete Fr\'echet distance that can be summarized as follows. 
This provides an alternative approach that allows to apply the work of Cohen-Addad et al. \cite{Cohen-AddadFS21} in a more direct way for this special case.

We believe that this complexity reduction is of independent interest. For example, it allows us to get an improved coreset construction for clustering under the discrete Fr\'echet distance.

\begin{restatable}{theorem}{thmdimreduction}\label{theorem:dimreduction}
Let $\varepsilon \in (0,1)$ and $\ell \in \NN$ be constants. Given an input time series $x \in \RR^m$ Algorithm \ref{alg: ComplexityReduction} computes in time $O(\comp \ell \log^2 \comp) + O(\ell/\eps)^{2 \comp'}$ with $\comp' \in O(2^{O(\ell/\eps)^{(2\ell+2)}})$   
a time series $z \in \RR^{\comp'})$ s.t. for all $y \in \RR^\ell$,
\begin{equation*}
    (1-\varepsilon) \dist_{dF}(x,y) \leq \dist_{dF}(z,y) \leq (1 + \varepsilon)  \dist_{dF}(x,y).
\end{equation*}
\end{restatable}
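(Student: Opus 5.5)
The idea is to reduce the real-valued problem to a finite-alphabet one, and then replace $x$ by a shortest word that behaves identically against every $y\in\RR^\ell$.

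\textbf{Structure of $\dist_{dF}$ in one dimension.} Fix $y=\langle y_1,\dots,y_\ell\rangle$. A traversal between $x$ and $y$ can be assumed, up to monotone reparametrisation, to split $x$ into $\ell$ consecutive nonempty blocks $B_1,\dots,B_\ell$ with $B_j$ matched to $y_j$. Then
\[
\dist_{dF}(x,y)=\min_{B_1,\dots,B_\ell}\ \max_j \max_{i\in B_j}|x_i-y_j|,
\]
which depends only on the \emph{value sets} $S_j=\{x_i:i\in B_j\}$. In fact only $\min S_j$ and $\max S_j$ matter. Hence $\dist_{dF}(\cdot,y)$ is determined, for every $y$, by the family
\[
\mathcal F(x)=\{(S_1,\dots,S_\ell)\ \text{realizable by a contiguous partition of } x\}.
\]
Any $z$ with $\mathcal F(z)=\mathcal F(x)$ therefore satisfies $\dist_{dF}(z,y)=\dist_{dF}(x,y)$ for all $y$.

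\textbf{Step 1: discretisation.} Let $\delta=\min_{y\in\RR^\ell}\dist_{dF}(x,y)$ be the optimal $\ell$-simplification error. We compute a constant-factor approximation $\tilde\delta$ of $\delta$ in $O(\comp\ell\log^2\comp)$ time, via a parametric search or binary search over candidate values $|x_i-x_{i'}|/2$ combined with a greedy linear-time decision procedure. This step accounts for the first term of the running time.

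\begin{itemize}
\item An optimal $y^*$ covers all values of $x$ by at most $\ell$ intervals of length $2\delta$.
\item Round each $x_i$ to the grid of width $\epsilon\tilde\delta$ inside these intervals, obtaining $x'$.
\item Every $y$ has $\dist_{dF}(x,y)\ge\delta$, so $|\dist_{dF}(x',y)-\dist_{dF}(x,y)|\le\epsilon\tilde\delta=O(\epsilon)\,\dist_{dF}(x,y)$. Rescaling $\epsilon$ gives the factor $(1\pm\epsilon)$.
\end{itemize}

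Now $x'$ is a word over an alphabet $\Sigma$ of size $s=O(\ell/\epsilon)$.

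\textbf{Step 2: a finite automaton.} We build a nondeterministic automaton reading $x'$. Its states are tuples $(j,S_1,\dots,S_j)$ with $j\le\ell$ and $S_i\subseteq\Sigma$. Reading a letter either adds it to $S_j$ or opens block $j+1$. Then $\mathcal F(x')$ is exactly the set of accepting states reached, so it is a function of the Myhill--Nerode-type class of $x'$ with respect to this automaton. After determinisation (subset construction), the number of such classes is at most $2^{N}$ with $N\le \ell\,2^{s\ell}=2^{O(\ell/\epsilon)\ell}$ states.

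A standard pumping argument then applies. In a shortest word $z$ with the same reachable-state set, no deterministic state repeats along the run, so $|z|\le 2^{N}$. After adjusting exponents this gives $\comp'\in 2^{O(\ell/\epsilon)^{2\ell+2}}$; I would not try to optimise this exponent.

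\textbf{Step 3: computing $z$.} Run the deterministic automaton on $x'$ to get its final state $q$. Then enumerate all words of length at most $\comp'$ over $\Sigma$ in order of length, and output the first one that reaches $q$. This costs $s^{\comp'}\cdot\comp'\cdot\poly(N)\le O(\ell/\epsilon)^{2\comp'}$, which matches the second term.

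\textbf{Main obstacle.} The delicate part is Step 1, for two reasons.
\begin{itemize}
\item The rounding error must be relative for \emph{all} $y$. This rests entirely on the lower bound $\dist_{dF}(x,y)\ge\delta$, so $\delta$ must be computed up to a constant factor within the stated time bound.
\item The alphabet must stay of size $O(\ell/\epsilon)$ even though the value range of $x$ can be huge. Restricting the grid to the $\ell$ covering intervals resolves this, and I would verify carefully that rounding within those intervals preserves the block-value-set structure up to $\epsilon\tilde\delta$.
\end{itemize}
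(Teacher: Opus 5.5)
Your plan follows the paper's architecture. You quantize to $O(\ell/\epsilon)$ values using the optimal $\ell$-simplification error, and encode $x$ by the extremes of its traversal sectors. You then bound the length of a shortest equivalent series by pumping on a finite transition system over prefixes; your automaton plays the role of the paper's transition function $\delta_x$ on sets of prefix profiles. Finally you find that series by brute force. Step~1 is not the delicate part. The paper just computes an exact minimum-error $\ell$-simplification and rounds every entry up to a multiple of $\epsilon\delta$. A global grid already gives $O(\ell/\epsilon)$ values, since all entries lie in $\ell$ intervals of length $2\delta$, and your greedy/binary-search variant also works. The real problems are in Step~2.

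\textbf{The length bound does not follow.} Your states store the full value sets $S_i\subseteq\Sigma$, so you get $N\le \ell\,2^{s\ell}=2^{O(\ell^2/\epsilon)}$, and pumping only yields $\comp'\le 2^{N}=2^{2^{O(\ell^2/\epsilon)}}$. Since $(\ell/\epsilon)^{2\ell+2}=2^{O(\ell\log(\ell/\epsilon))}$, this does not give $\comp'\in 2^{O(\ell/\epsilon)^{2\ell+2}}$, already for fixed $\ell$ and $\epsilon\to 0$. No adjustment of exponents bridges a $2^{\ell^2/\epsilon}$ versus $(\ell/\epsilon)^{O(\ell)}$ gap, and the second running-time term inherits the blow-up. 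The fix is to build the automaton on the observation you already made. Use states $(j,(m_1,M_1),\dots,(m_j,M_j))$ with $m_i\le M_i$ in $\Sigma$; reading $a$ either replaces $(m_j,M_j)$ by $(\min(m_j,a),\max(M_j,a))$ or appends $(a,a)$. Then $N=\sum_{j\le\ell}s^{2j}=O(s^{2\ell})$, and the subset state after a prefix plays the role of the paper's $D_{x(t)}$. The shortest word with the same subset state has length at most $2^{N}=2^{O(\ell/\epsilon)^{2\ell}}$, which is the content of Lemma~\ref{lem: existence matching profiles}.

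\textbf{The block formula is false.} A traversal may match one $x_i$ to several consecutive $y_j$, so sectors need not be disjoint. For $x=(0,0,10)$ and $y=(0,10,10)$, the traversal $(1,1),(2,1),(3,2),(3,3)$ has cost $0$. The only partition of three indices into three nonempty blocks gives $10$. So your derivation of ``$\mathcal F(z)=\mathcal F(x)$ implies equal distances'' has no basis. For $\mathcal F$ alone it even fails once lengths drop below $\ell$: with $\ell=3$, $x=(0,10)$ and $w=(5)$ both have $\mathcal F=\emptyset$, but their distances to $(0,0,10)$ are $0$ and $5$.

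You can repair this in two ways. First, define sectors as the paper does, $S_j^{(x,T)}=\{x_i:(i,j)\in T\}$, and add transitions that append further copies of $(a,a)$ after reading $a$. Second, keep your invariant and prove a normalization. Replacing a horizontal-then-vertical or vertical-then-horizontal corner of a traversal by a diagonal step only deletes matched pairs. So some optimal traversal matches consecutive disjoint $x$-blocks to consecutive disjoint $y$-runs, where a run of length $>1$ is matched only to a single index. Conversely, any constant block can be matched to any run. Hence the distance is determined by the partition families into $j$ blocks for all $j\le\ell$, which your subset state does record. Note that the paper's DecideProfile, Algorithm~\ref{alg: Partition of Profile}, also only builds non-overlapping sectors, so the same care is needed there.
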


Our idea for the complexity reduction can be described as follows.
As a first (simple) step, we show that one can reduce the number of distinct values appearing in a time series of complexity $\comp$ to $O(\ell/\eps)$ while maintaining the distance to any time series of complexity $\ell$ up to a factor of $(1+\eps)$. 

Then we observe that the discrete Fr\'echet distance between two time series $x$ and $y$, each of fixed complexity, can be written as a minimum over all traversals. Our goal is to describe the function minimizing over all traversals with a function minimizing over a much smaller set.
During each traversal, every value $y_i$ of $y$ is matched to a subsequence of $x$. In order to determine the Fr\'echet distance, it suffices to know the minimum and maximum value of $x$ matched to $y_i$. 
As it turns out, each traversal can equivalently (but not uniquely!) be described by remembering a sequence of constraints that consist of the minimum and maximum value matched to each $y_i$. Furthermore, the function minimizing over the set of all possible traversals can likewise be described by minimizing over all possible ordered constraint sets. 
Since we have reduced the number of different values of the time series to $O(\ell/\eps)$, the number of different constraint sets is small.
The set of constraints 
will be called an $\ell$-profile (see Figure \ref{fig: example profile} for an example).
It is important to note that the set of all $\ell$-profiles
of a time series $x$ with values from a fixed set $X$
completely determines the Fr\'echet distance
to any time series of complexity $\ell$.
Since there are only a constant number of different sets of $\ell$-profiles (where the constant depends on $\ell$ and $\eps$) for any time series $x$ with $O(\ell/\eps)$ distinct values, we can replace $x$ by the shortest time series $x'$ over the same set of values that has the same set of $\ell$-profiles. 
The length of the shortest such time series is a constant that depends on the set of profiles and the number of distinct values of the time series. 
Since the number of profiles is also a constant depending on $\eps$ and $\ell$, the maximum length of these shortest time series is constant as well. We can compute such a time series using a dynamic programming approach. The resulting time series have length $O(2^{O(\ell/\eps)^{(2\ell+2)}})$.

\subsection{Further Related Work}

The $k$-median problem in metric spaces is known to be hard to approximate with a factor better than $1+2/e$ unless set cover can be approximated within a factor $c \ln n$ for $c<1$ \cite{JMS02}.
 A number of different constant factor polynomial time approximation algorithms are known \cite{cowen2003constant,jain2001approximation,arya2004local,gupta2008simpler,cohen2022improved,mettu2003online,charikar2012dependent,thorup2005}, and the currently best approximation ratio is $(2+\eps)$ \cite{CLSS25}.
In the Euclidean plane, the problem is NP-hard \cite{megiddo1984complexity}.
The first polynomial time approximation scheme for $k$-median in the Euclidean plane has been developed by Arora et al. \cite{arora1998approximation} and later improved to near-linear time in $\mathbb R^d$ when $d$ is constant \cite{kolliopoulos2007nearly}. This result has been generalized to metric spaces of bounded doubling dimension \cite{thorup2005}
         and later to a near-linear approximation scheme  \cite{Cohen-AddadFS21}.

The metric (uncapacitated) facility location problem can be approximated within a constant factor \cite{STA97,CG99,jain2001approximation, mettu2003online,thorup2005,MYZ06,B10}, and the currently best polynomial time approximation algorithm achieves an approximation guarantee of 1.488 \cite{Li11}. At the same time, there is a conditional lower bound of $1.463$ on the best possible approximation \cite{GK99,JMS02}. 
In the constant-dimensional Euclidean setting, there are similar results as for the $k$-median problem \cite{kolliopoulos2007nearly,Talwar04,Cohen-AddadFS21}.
   
 Driemel et al. \cite{DKS16} defined the $(k,\ell)$-clustering problem for time series as follows: Given a set $P$ of $n$ time series of complexity $m$ and parameters $k,\ell \in \NN$ find $k$ center time series of complexity $\ell$, such that (a) the maximum distance of an element in $P$ to its closest center time series or (b) the sum of these distances is minimized. Variant (a) is referred to  as $(k,\ell)$-center and (b) as $(k,\ell)$-median. Under the continuous Fréchet distance, they developed near-linear time $(1+\eps)$-approximation algorithms for both clustering variants, assuming $\eps,k$, and $\ell$ are constants. They complement these algorithmic results with hardness results, showing that both $(k,\ell)$-median and $(k,\ell)$-center are NP-hard under continuous Fréchet distance. Approximating $(k,\ell)$-median for polygonal curves in arbitrary dimensions was recently studied in \cite{BDR23}. Cheng and Huang give the first $(1+\eps)$-approximation algorithm for $(k,\ell)$-median under continuous Fréchet distance in $d > 1$~\cite{CH23}. Both clustering problems are also NP-hard under the discrete Fréchet distance and even for the case $k = 1$ \cite{BDGHKLS19} \cite{buchin2019hardness}. 
 Buchin et al. developed the first $(1+\eps)$-approximation algorithm for $(k,\ell)$-median under discrete Fréchet distance, which runs in $\tilde{O}((1/\varepsilon k \ell)^{k\ell} \cdot k \ell nm^{(k\ell+1)})$ time  \cite{buchin2019hardness}. Nath and Taylor \cite{nath2021k} improved this to $\tilde{O}(nm2^{O(k/\varepsilon \log(k/\varepsilon))} \cdot (\ell/\varepsilon^2)^{O(kl)})$. 
Buchin and Rohde \cite{buchin2022coresets} designed the first coreset construction for $(k,\ell)$-median under both variants of the Fréchet distance, where the size of the coreset has logarithmic dependence on the number of input curves.
Recently, Cohen-Addad et al. introduced a coreset construction for $(k,\ell)$-median under discrete Fr\'echet distance that has size independent of the number of input curves~\cite{Cohen-Addad2025Coreset}.

Related to our dimension reduction are some data structures for approximate nearest neighbor search under discrete Fr\'echet distance \cite{driemel2019sublinear, filtser2023static,FFK23}. In the asymmetric setting where the query time series has complexity $\ell$, the data structures cited above replace each input time series by a set of lower dimensional time series. This is fundamentally different from our dimension reduction, which replaces each time series with exactly one lower dimensional time series.

\section{Preliminaries}
\label{sec:prelim}

Consider a metric space $(X \cup Y, \dist)$.
For a point $x \in X \cup Y$ and $r > 0$, define the \emph{ball} centered at $x$ with radius $r$ to be $B(x, r) := \{y \in X \cup Y \colon \dist(x, y) \leq r \}$.
The \emph{$r$-neighborhood} of a subset $C \subseteq X \cup Y$ is defined as $B(C, r) := \bigcup_{x \in C} B(x, r)$. 
For a subset $S \subseteq X \cup Y$, denote $B_S(x, r) := B(x, r) \cap S$.
Denote the \emph{diameter} of $S$ to be $\diam(S) := \max_{x, y \in S} \dist(x, y)$.
The \emph{aspect ratio} of $S$ is defined as the ratio between the largest and smallest inter-point distances of $S$, denoted as $\Delta(S) := \frac{\max_{x, y \in S}\dist(x, y)}{\min_{x, y \in S}\dist(x, y)}$.
For a set $S \subseteq X \cup Y$ and a point $x \in X \cup Y$ let $\proj{S}{x}$ be a point in $S$ that is closest to $x$, i.e. $\proj{S}{x} \in \argmin_{y \in S} \dist(x, y)$.
Denote $\dist(x, S) := \dist(x, \proj{S}{x})$ as the \emph{distance} from $x$ to point set $S$.

\begin{definition}[Doubling dimension~\cite{GuptaKL03}]
\label{def:ddim}
The \emph{doubling dimension} of a metric space $(X \cup Y, \dist)$ 
is the smallest $t\ge0$ such that every metric ball can be covered by at most $2^t$ balls of half the radius.
The doubling dimension of a point set $S\subseteq X \cup Y$ is 
the doubling dimension of the metric space $(S, \dist)$,
and is denoted $\ddim(S)$.
\end{definition}

\begin{definition}[Packing, covering and net]
    Consider a metric space $(X \cup Y, \dist)$ and a subset $S \subseteq X \cup Y$.
    For $\rho > 0$, $S$ is \emph{$\rho$-packing} if $\forall u, v \in S$, $\dist(u, v) \geq \rho$.
    $S$ is $\rho$-covering for $X \cup Y$ if for every $x \in X \cup Y$, there exists $u \in S$ such that $\dist(x, u) \leq \rho$.
    $S$ is called a $\rho$-net of $X \cup Y$ if it is both $\rho$-packing and $\rho$-covering for $X \cup Y$.
\end{definition}

\begin{lemma}[Packing Property~\cite{GuptaKL03}]
    \label{lemma:packing}
    If $S$ is $\rho$-packing, then $|S| \leq (\diam(S) / \rho)^{O(\ddim(S))}$.
\end{lemma}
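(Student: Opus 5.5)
The plan is the standard recursive-covering argument, used together with the fact that an $r$-ball contains few points whose pairwise distances exceed a constant fraction of $r$. Let $d=\ddim(S)$, $D=\diam(S)$, and assume $\rho\le D$. Otherwise $|S|\le 1$: two points would be at distance at least $\rho>D$, which contradicts the definition of the diameter. Fix any $s_0\in S$. Then $S\subseteq B_S(s_0,D)$, and this ball is a ball of the metric space $(S,\dist)$.

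We apply the doubling property iteratively. By \Cref{def:ddim}, the ball $B_S(s_0,D)$ is covered by at most $2^d$ balls of radius $D/2$ in $(S,\dist)$. Each of these is covered by $2^d$ balls of radius $D/4$, and so on. After $i$ rounds, $S$ is covered by at most $2^{di}$ balls of radius $D/2^i$. We choose $i=\lceil\log_2(D/\rho)\rceil+1$, which gives radius $D/2^i\le\rho/2<\rho$. Each such ball has diameter at most $2\cdot D/2^i$.

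The counting step needs care, because a ball of radius strictly below $\rho/2$ has diameter strictly below $\rho$. We therefore take one more halving if necessary, so that the radius satisfies $r<\rho/2$. Any two points in such a ball are then at distance at most $2r<\rho$. Since $S$ is $\rho$-packing, each ball contains at most one point of $S$. Hence
\[
|S|\le 2^{d(\log_2(D/\rho)+3)}=2^{3d}\,(D/\rho)^{d}.
\]
The last step absorbs the constant factor. When $D/\rho\ge 2$ we have $2^{3d}\le (D/\rho)^{3d}$, so $|S|\le (D/\rho)^{4d}=(D/\rho)^{O(d)}$. When $1\le D/\rho<2$, any two distinct points of $S$ lie in one ball of radius $D\le 2\rho$, and a direct two-step covering bounds $|S|$ by $2^{O(d)}$. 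That bound matches the claimed form under the usual convention $(D/\rho)^{O(d)}$ for $D/\rho\ge 2$, and we treat $D/\rho\in[1,2)$ by reading the bound as $\max(2,D/\rho)^{O(d)}$.

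The main point requiring care is exactly this boundary regime and the strict-versus-non-strict inequality in the packing definition. Everything else is a routine iteration of \Cref{def:ddim} inside the submetric $(S,\dist)$. No ambient points are used, so it is $\ddim(S)$ and not the doubling dimension of $X\cup Y$ that appears.
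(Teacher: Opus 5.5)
Your argument is correct and is the standard one. The paper gives no proof of its own and simply cites \cite{GuptaKL03}, whose argument is exactly yours: iterate the doubling cover inside $(S,\dist)$ down to radius below $\rho/2$ and note that each final ball contains at most one point of the packing. This yields $|S|\le 2^{O(\ddim(S))}(\diam(S)/\rho)^{\ddim(S)}$.

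Your caveat about $\diam(S)/\rho\in[1,2)$ is genuinely needed. Take $2^{d}$ points at pairwise distance exactly $\rho$: they form a $\rho$-packing with $\ddim(S)=d$, yet $(\diam(S)/\rho)^{O(d)}=1$, so the statement as written fails. It must be read as $\max(2,\diam(S)/\rho)^{O(\ddim(S))}$, which is what you prove. That corrected form is also all the paper's applications require, since they are bounds of the form $\rho^{-O(\ddim)}$ or $2^{O(\ddim)}$.
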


\subsection{Hierarchical Decomposition of Doubling Metrics}

In this section, we review the hierarchical decomposition for doubling metrics, 
first introduced by~\cite{Talwar04}.
Let $(X, \dist)$ be a metric space with doubling dimension $\ddim$.
Without loss of generality, assume the minimum interpoint distance of $X$ is $1$ and that the diameter of $X$ is $\Delta $.
Let $L = \lceil \log \Delta \rceil + 4 = O(\log \Delta)$.
Construct a sequence of \emph{nested nets} on $X$:
\[
X = N_0 \supseteq N_1 \dots \supseteq N_L, 
\]
such that for $0 \leq \ell \leq L$, $N_\ell$ is a $2^{\ell - 2}$-net of $X$.
Specifically, $N_0 = X$ and $N_L$ contains only one point in $X$.
The hierarchical decomposition $\decom$ is constructed in \Cref{alg:talwar_decomposition}.

\begin{algorithm}[!ht]
\caption{Hierarchical decomposition $\decom$~\cite{Talwar04}}
\DontPrintSemicolon
\label{alg:talwar_decomposition}
    \KwIn{finite metric space $(X, \dist)$ with $\ddim(X) \leq \ddim$, parameter $\rho \in (0, \frac{1}{2})$}
    construct the nested net $X = N_0 \supseteq N_1 \dots \supseteq N_L$, such that for $0 \leq \ell \leq L$, 
    $N_\ell$ is a $2^{\ell - 2}$-net of $X$ \;
    sample $\alpha \sim U(\frac{1}{2}, 1)$ \;
    sample a random permutation $\sigma$ over $X$ \;
    let $\decom_L \gets \{X\}$ \;
    \For{$\ell = L - 1, L - 2, \dots, 0$}{
        let $\decom_\ell \gets \emptyset$ \;
        \For{$C \in \decom_{\ell + 1}$ with $|C| \geq 1$}{
            \For{$u \in N_\ell$}{
                let $C_u := (C \cap B_X(u, \alpha 2^\ell) ) \setminus \bigcup_{v \in N_\ell \colon \sigma(v) < \sigma(u)} B_X(v, \alpha 2^\ell)$ 
                be a child cluster of $C$ 
                \label{line:metric_ball} \;
                let $\decom_\ell \gets \decom_\ell \cup \{C_u\}$ \;
            }
        }
    }
    \For{$0 \leq \ell \leq L$}{ 
    \For{$C \in \decom_\ell$}{
        construct a $\rho 2^\ell$-net $P_C$ as the portal set of $C \cap X$ \;
    }
}
    \Return $\decom := \{\decom_0, \decom_1, \dots, \decom_L\}$
\end{algorithm}

\begin{definition}\label{def:cut}
    For a set $T \subseteq X$,
    say $T$ is \emph{cut} at level $\ell$ w.r.t. $\decom$, if there 
    exists a cluster $C \in \decom_\ell$, such that $T \cap C \neq \emptyset$ and $T \setminus C \neq \emptyset$.
\end{definition}

We summarize the properties of $\decom$ below.

\begin{lemma}[Hierarchical decomposition~\cite{Talwar04,Cohen-AddadFS21}]\label{lemma:hierarchical_decomposition}
    Consider a metric space $(X, \dist)$ with $\ddim(X) \leq \ddim$.
    Let $\decom = \{\decom_0, \decom_1, \dots, \decom_L\}$ be the hierarchical decomposition returned by \Cref{alg:talwar_decomposition} when given as input $X$.
    Then $\decom$ satisfies the following properties:
    \begin{enumerate}[label=(\arabic*)]
        \item \label{item:new_decomposition_prop_1} 
        For $0 \leq \ell \leq L$ and $C \in \decom_\ell$, $\diam(C) \leq 2^{\ell + 1}$.
        \item \label{item:new_decomposition_prop_2} 
        Every cluster $C$ has at most $2^{O(\ddim)}$ child clusters.
        \item \label{item:new_decomposition_prop_3} 
        There exists a universal constant $c > 0$, such that for every point set $T \subseteq X$,
        \begin{equation}\label{eqn:cutting_probability}
        \Pr[T \text{ is cut at level } \ell \text{ w.r.t. } \decom] \leq \frac{c \cdot \ddim \cdot \diam(T)}{2^\ell}.    
        \end{equation}
        \item \label{item:portals}
        Portals: For $0 \leq \ell \leq L$, every cluster $C \in \decom_\ell$ comes with a portal set $P_C \subseteq X$, which satisfies
        \begin{enumerate}[label=(\alph*)]
            \item Bounded size: $|P_C| \leq \rho^{-O(\ddim)}$.
            \item Covering: Every $x \in C$ has $\dist(x, P_C) \leq \rho 2^\ell$.
            \item Nested: If $p \in P_{C'} \cap C$ for some $C' \in \modifydecom_{\ell+1}$, then $p \in P_C$.
        \end{enumerate}
    \end{enumerate}

    Furthermore, $\decom$ can be computed in time $\rho^{-O(\ddim)} O(n) \log \Delta$.
\end{lemma}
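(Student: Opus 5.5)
We prove the four properties of \Cref{lemma:hierarchical_decomposition} one at a time, spending most effort on the cutting probability \ref{item:new_decomposition_prop_3}, since it is the one improved over \cite{Cohen-AddadFS21}.

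\textbf{Properties (1) and (2).} By \cref{line:metric_ball}, every $C_u \in \decom_\ell$ lies in $B_X(u, \alpha 2^\ell)$ with $\alpha \le 1$, so $\diam(C_u) \le 2\alpha 2^\ell \le 2^{\ell+1}$. For the number of children, a child $C_u$ of $C \in \decom_{\ell+1}$ is nonempty only if $B_X(u, 2^\ell)$ meets $C$. Since $\diam(C) \le 2^{\ell+2}$, all such $u$ lie in a ball of radius $O(2^\ell)$. These $u$ form a $2^{\ell-2}$-packing, so \Cref{lemma:packing} bounds their number by $2^{O(\ddim)}$. Every point is covered at each level because $N_\ell$ is $2^{\ell-2}$-covering and $\alpha 2^\ell \ge 2^{\ell-1}$.

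\textbf{Property (3).} Write $D = \diam(T)$. We may assume $D \le 2^{\ell}/(c\,\ddim)$, since otherwise the bound is trivial. The key reduction is that $T$ is cut at level $\ell$ exactly when $T$ is split by the level-$j$ ball carving for some $j \ge \ell$. Cutting at a level $j$ propagates to every level below it, so we sum the ``split at level $j$'' probabilities over $j \ge \ell$. This gives a geometric series, provided the level-$j$ bound has the form $O(\ddim\, D/2^j)$.

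To get that form, fix $j$ and a point $t \in T$. Only centers $u \in N_j$ with $\dist(u,t) \le 2^j + D$ can matter, and by packing there are $2^{O(\ddim)}$ of them. Order them by $\dist(u,t)$ as $u_1, u_2, \dots$. $T$ is split at level $j$ only if some $u_i$ is the first center in $\sigma$-order whose ball meets $T$ while its ball does not contain all of $T$. Two events are needed for this. First, $\alpha 2^j$ must fall in the interval $[\dist(u_i,t)-D, \dist(u_i,t)+D]$, which has probability at most $2D/2^{j-1}$. Second, $u_i$ must precede $u_1,\dots,u_{i-1}$ in $\sigma$, because those centers are closer and their balls would capture $T$ first; this has probability $1/i$. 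The naive analysis gives $2^{O(\ddim)} D/2^j$. The harmonic sum instead gives $O(D/2^j)\cdot H_{2^{O(\ddim)}} = O(\ddim\, D/2^j)$, which is the step producing the single factor of $\ddim$.

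The main obstacle is the first-arrival argument. The $\sigma$-order condition and the $\alpha$-window condition must be combined correctly. Points of $T$ at slightly different distances make ``closer centers capture first'' hold only up to $D$. We handle this by conditioning on $\alpha$ and declaring $u_i$ ``relevant'' when $\alpha 2^j \ge \dist(u_i,t)-D$. Then $u_i$ splits $T$ only if it is the first relevant center in $\sigma$ and $\alpha$ lands in its window. Counting only relevant centers, ranked by distance, recovers the $1/i$ factor. Summing over $j \ge \ell$ gives $O(\ddim\,D/2^\ell)$.

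\textbf{Property (4) and running time.} $P_C$ is a $\rho 2^\ell$-net of $C$, so covering holds by definition. Size follows from \Cref{lemma:packing} with $\diam(C) \le 2^{\ell+1}$, giving $\rho^{-O(\ddim)}$. For nestedness, we build nets greedily top-down: $P_C$ is started from $P_{C'} \cap C$ and then extended. This is valid because a $\rho 2^{\ell+1}$-packing set is also a $\rho 2^\ell$-packing set. For the running time, nested nets and neighbor lists can be computed in $2^{O(\ddim)} n\log\Delta$ by \cite{Har-PeledM06}. Each level then costs $2^{O(\ddim)} n$ for carving and $\rho^{-O(\ddim)} n$ for portals, over $L = O(\log\Delta)$ levels.
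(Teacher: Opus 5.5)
Your arguments for properties (1), (2), (4), the running time, and the reduction of (3) to a sum over $j\ge\ell$ of the probability that the level-$j$ carving splits $T$ are all fine. The gap is in the first-arrival step. This is exactly the step that is supposed to turn $2^{O(\ddim)}$ into $O(\ddim)$.

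Ranking centers by $\dist(u,t)$ for a fixed $t\in T$ does not make closer centers capture $T$ first, and your ``relevant'' patch does not repair this. The condition $\dist(u,t)\le\alpha 2^j+D$ is only \emph{necessary} for the ball of $u$ to meet $T$. So the relevant centers form a \emph{superset} of the centers whose ball meets $T$. Splitting requires $u_i$ to be $\sigma$-first among the meeting centers, and that does not imply being $\sigma$-first among the relevant ones; the implication runs the wrong way.

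For instance, on a line let $T=\{0,D\}$ and $t=0$. Let the two $\sigma$-first centers be $-r$ and then $r+\delta$ with $0<\delta<D$, and let $\alpha 2^j=r-\eta$ with $\delta+\eta\le D$. The center at $-r$ is closer to $t$ and relevant, but its ball misses $T$. So the center at $r+\delta$ splits $T$, although it is not the first relevant center.

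Suppose you instead use a subset of the meeting centers, say those with $\dist(u,t)\le\alpha 2^j$. Then the necessary condition becomes valid. But now $u_i$ only has to precede the centers with $\dist(u,t)\le\dist(u_i,t)-D$, which may be far fewer than $i-1$. When many net points lie at nearly the same distance from $t$, the union bound falls back to $2^{O(\ddim)}D/2^j$.

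The fix, which is what the paper does, is to rank the points of $N_j\cap B_X(T,2^j)$ by their distance to the \emph{set} $T$, namely $\dist(u,T)=\min_{x\in T}\dist(u,x)$. If the ball of $u_i$ meets $T$, then every $u_{i'}$ with $i'<i$ satisfies $\dist(u_{i'},T)\le\dist(u_i,T)\le\alpha 2^j$, so its ball meets $T$ as well. Hence $u_i$ can split $T$ only if both of the following hold:
\begin{itemize}
\item $u_i$ precedes $u_1,\dots,u_{i-1}$ in $\sigma$, which has probability $1/i$, independent of $\alpha$;
\item $\alpha 2^j\in[\dist(u_i,T),\max_{y\in T}\dist(u_i,y))$, an interval of length at most $D$, which has probability at most $2D/2^j$.
\end{itemize}
No conditioning on $\alpha$ and no notion of relevance is needed. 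Summing over $i\le s=2^{O(\ddim)}$ gives $O(D/2^j)\sum_{i\le s}1/i=O(\ddim\,D/2^j)$, after which your geometric sum over $j\ge\ell$ goes through.
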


\begin{remark*}
    In~\cite{Talwar04}, the cutting probability bound is proposed for a pair of points.
    Indeed, it can be generalized to arbitrary subsets of $X$ as \ref{eqn:cutting_probability}.
    Similar results can also be found in e.g.~\cite{AbrahamBN06}.

    \cite{Cohen-AddadFS21} also considered the cutting probability for arbitrary subsets, but
    their bound in~\cite[Lemma 9]{Cohen-AddadFS21} has an exponential dependence in $\ddim$.
    We note that this dependence can be improved to linear.
    This will slightly improve the running time of the final algorithm.

    In our final algorithm, we will set parameter $\rho$ to be $\rho = \epsilon^{10}/\ddim^2$.
\end{remark*}

\begin{proof}
    Properties \ref{item:new_decomposition_prop_1}, \ref{item:new_decomposition_prop_2} and \ref{item:portals} are stated in~\cite[Section 3]{Talwar04}, and the time complexity is given in~\cite[Lemma 9]{Cohen-AddadFS21},
    thus we omit the proof here.
   
    For property \ref{item:new_decomposition_prop_3}, let us first consider the probability that $T$ is cut at level $\ell$ but not at level $\ell + 1$.
    If $\diam(T) > 2^{\ell + 2}$, then by property \ref{item:new_decomposition_prop_1}, $T$ must be cut at level $\ell + 1$, and this probability is $0$.
    We assume $\diam(T) \leq 2^{\ell + 2}$ below.

    Say $T$ is cut by $v \in N_\ell$, if among all net points with $\min_{x \in T} \dist(u, x) \leq \alpha 2^\ell$ and $\max_{y \in T} \dist(u, y) > \alpha 2^\ell$, $v$ is the one with the minimum index $\sigma(v)$.
    Sort all points in $N_\ell \cap B_X(T, 2^\ell)$ as $u_1, u_2, \dots, u_s$, such that $\dist(u_1, T) \leq \dist(u_2, T) \leq \dots \leq \dist(u_s, T)$.
    Note that for $j \in [s]$, 
    $T$ is cut by $u_j$ only if $\min_{x \in T} \dist(u, x) \leq \alpha 2^\ell$ and $\max_{y \in T} \dist(u, y) > \alpha 2^\ell$, and $\forall i < j, \sigma(i) > \sigma(j)$.
    Since $\alpha$ and $\sigma$ are independent, we have
    \begin{align*}
        \Pr[T \text{ is cut by } u_j]
        &\leq \Pr\left[\min_{x \in T} \dist(u, x) \leq \alpha 2^\ell \text{ and } \max_{y \in T} \dist(u, y) > \alpha 2^\ell\right]
        \cdot \Pr[\sigma(i) > \sigma(j), \ \forall i < j] \\
        & \leq \Pr\left[
            \frac{\min_{x \in T} \dist(u, x)}{2^\ell}
            \leq \alpha < 
            \frac{\max_{y \in T} \dist(u, y)}{2^\ell}
        \right]
        \cdot \Pr[\sigma(i) > \sigma(j), \ \forall i < j] \\
        & \leq \frac{\max_{y \in T} \dist(u, y) - \min_{x \in T} \dist(u, x)}{2^{\ell - 1}} \cdot \frac{1}{j} \\
        & \leq \frac{\diam(T)}{2^{\ell - 1}} \cdot \frac{1}{j}.
    \end{align*}
    Therefore, \begin{align*}
        &\qquad \Pr[T \text{ is cut at level } \ell \text{ but not at level } \ell + 1] \\
        &= \sum_{j = 1}^s \Pr[T \text{ is cut by } u_j] 
        \leq \frac{\diam(T)}{2^{\ell - 1}} \cdot \sum_{j = 1}^s \frac{1}{j} 
        \leq \frac{O(\log s) \cdot \diam(T)}{2^\ell} 
        \leq \frac{O(\ddim) \cdot \diam(T)}{2^\ell},
    \end{align*}
    where the last inequality is because $\diam(T) \leq 2^{\ell + 2}$ and thus $s = |N_\ell \cap B_X(T, 2^\ell)| \leq 2^{O(\ddim)}$.
    Finally, we have 
    \begin{align*}
        \Pr[T \text{ is cut at level } \ell]
        &\leq \sum_{i = \ell}^{L - 1}
        \Pr[T \text{ is cut at level } \ell \text{ but not at level } \ell + 1] \\
        &\leq O(\ddim) \cdot \diam(T) \sum_{i = \ell}^{L - 1} 2^{-i}  \\
        &\leq \frac{O(\ddim) \cdot \diam(T)}{2^\ell}.
    \end{align*}
\end{proof}

\subsubsection{Portal-respecting Paths}
\label{sec:portal_respecting_paths_original}
Following~\cite{Cohen-AddadFS21}, we use the notion of \emph{portal-respecting paths}.
For a pair of points $x, y \in X$, the portal-respecting path between $x$ and $y$ is a collection of segments, each of which connects some level $i$ portal to its closest level $i + 1$ portal.
Specifically, let $\ell$ be the highest level where $\{x, y\}$ is cut w.r.t. $\decom$. 
Consider a sequence of clusters $C_0(x), \dots, C_\ell(x)$ and a sequence of portals $p_0(x), \dots, p_\ell(x)$, where $C_i(x) \in \decom_i$ is the level $i$ cluster containing $x$, and $p_i(x) \in P_{C_i(x)}$ is the portal closest to $p_{i-1}(x)$.
In particular, $p_0(x) = x$.
Define $C_i(y)$ and $p_i(y)$ analogously.
The portal-respecting path between $x$ and $y$ is then the sequence of portals $(p_0(x), p_1(x), \dots, p_\ell(x), p_\ell(y), \dots, p_0(y))$.
The portal-respecting distance between $x$ and $y$ is defined as the length of the path:
\begin{equation*}
    \dportH(x, y) := \dist(p_{\ell}(x), p_\ell(y))
    + \sum_{i = 0}^{\ell - 1} \dist(p_i(x), p_{i + 1}(x))
    + \sum_{i = 0}^{\ell - 1} \dist(p_i(y), p_{i + 1}(y)).
\end{equation*}

\begin{lemma}\label{lemma:detour}
    For $x, y \in X$, the portal-respecting distance between $x$ and $y$ satisfies 
    \begin{equation*}
        \dist(x, y) \leq 
        \dportH(x, y) \leq \dist(x, y) + O(\rho 2^\ell),
    \end{equation*}
    where $\ell$ is the highest level where $\{x, y\}$ is cut w.r.t. $\decom$.
\end{lemma}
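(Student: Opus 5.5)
The lower bound $\dist(x,y) \le \dportH(x,y)$ is immediate from the triangle inequality. The portal-respecting path $(p_0(x),\dots,p_\ell(x),p_\ell(y),\dots,p_0(y))$ is a walk in the metric from $p_0(x)=x$ to $p_0(y)=y$. Its length $\dportH(x,y)$ is therefore at least $\dist(x,y)$.

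For the upper bound, I first bound each ascending sequence. By definition $p_{i+1}(x)$ is the portal of $P_{C_{i+1}(x)}$ closest to $p_i(x)$. I need $p_i(x)\in C_{i+1}(x)$ so that the covering property of \Cref{lemma:hierarchical_decomposition}\ref{item:portals} applies. For $i\ge 1$ this holds because $p_i(x)\in P_{C_i(x)}\subseteq C_i(x)\subseteq C_{i+1}(x)$, using nestedness of clusters. For $i=0$ it holds because $p_0(x)=x$. Covering then gives $\dist(p_i(x),p_{i+1}(x)) \le \rho 2^{i+1}$. Summing the geometric series, $\sum_{i=0}^{\ell-1}\dist(p_i(x),p_{i+1}(x)) \le \rho 2^{\ell+1}$. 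The same argument gives the same bound for $y$. As a consequence, $\dist(x,p_\ell(x))\le \rho 2^{\ell+1}$ and $\dist(y,p_\ell(y))\le\rho 2^{\ell+1}$.

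Next I bound the middle segment by the triangle inequality:
\[
\dist(p_\ell(x),p_\ell(y)) \le \dist(p_\ell(x),x)+\dist(x,y)+\dist(y,p_\ell(y)) \le \dist(x,y)+\rho 2^{\ell+2}.
\]
Adding the three contributions gives $\dportH(x,y)\le \dist(x,y)+4\rho 2^{\ell+1}=\dist(x,y)+O(\rho 2^\ell)$, which is the claim.

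The only step needing care is the covering step. One must check that each $p_i(x)$ actually lies in $C_{i+1}(x)$, rather than merely being close to it. Otherwise covering, which is stated only for points of the cluster, would not apply directly. This follows because portal sets are subsets of their own clusters (each $P_C$ is a net of $C$) and clusters are nested. If portals were instead allowed outside their clusters, I would fall back on the cruder bound $\dist(p_i(x),P_{C_{i+1}(x)})\le \dist(p_i(x),x)+\rho 2^{i+1}$. Carrying the accumulated offset through the induction still gives a geometric sum, and hence the same $O(\rho 2^\ell)$ bound up to a constant.
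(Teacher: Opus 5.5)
Your proof is correct and is essentially the paper's argument. You bound each ascending step by covering ($\rho 2^{i+1}$), handle the middle segment $\dist(p_\ell(x),p_\ell(y))$ by the triangle inequality through $x$ and $y$, and sum the geometric series; your explicit check $p_i(x)\in P_{C_i(x)}\subseteq C_i(x)\subseteq C_{i+1}(x)$ fills in a step the paper leaves implicit.

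Your closing fallback remark is wrong as stated, though it is moot because $P_C\subseteq C$ does hold. The recursion $\dist(p_{i+1}(x),x)\le 2\,\dist(p_i(x),x)+\rho 2^{i+1}$ only yields $\dist(p_\ell(x),x)\le \ell\rho 2^{\ell}$, which is not a geometric bound.
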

\begin{proof}
    The lower bound for $\dportH(x, y)$ is straightforward.
    For the upper bound, by definition we have 
    \begin{align*}
        \dportH(x, y) 
        &= \dist(p_{\ell}(x), p_\ell(y))
    + \sum_{i = 0}^{\ell - 1} \dist(p_i(x), p_{i + 1}(x))
    + \sum_{i = 0}^{\ell - 1} \dist(p_i(y), p_{i + 1}(y)) \\
        &\leq \dist(x, y) + 
        2\sum_{i = 0}^{\ell - 1} \dist(p_i(x), p_{i + 1}(x))
    + 2\sum_{i = 0}^{\ell - 1} \dist(p_i(y), p_{i + 1}(y)) \\
    & \leq \dist(x, y) + 4 \sum_{i = 0}^{\ell - 1} O(\rho) 2^i \\
    & \leq \dist(x, y) + O(\rho 2^\ell).
    \end{align*}

\end{proof}

\subsection{Nearest Neighbor Search in Partially Doubling Metrics}
\label{sec:ANN}

In this section we give known results on approximate nearest neighbor search in partially doubling metrics.
Consider a metric space $(X \cup Y, \dist)$, where $X$ is a dataset of size $n$, and the query points come from $Y$.
In \Cref{sec:ANN_low_dim_data}, we consider the setting where only $X$ is doubling.
In \Cref{sec:ANN_low_dim_queries}, we consider the setting where only $Y$ is doubling.

\subsubsection{Low-dimensional Data, High-dimensional Queries}
\label{sec:ANN_low_dim_data}

The first lemma is due to Krauthgamer and Lee \cite{KrauthgamerL04} and gives a result for nearest neighbor search when $X$ is doubling. We state it slightly differently from the original paper so that it fits better to our setting.

\begin{lemma}
\label{lemma:ANN_doubling_data}
    Let $\epsilon \in (0, \frac{1}{2})$ and 
   $(X \cup Y, \dist)$ be a metric space with $|X| = n$ and $\ddim(X) \leq \ddim$ and let the aspect ratio of $X$ be $\Delta$. Then there is an algorithm that
   builds in $\epsilon^{-O(\ddim)} O(n \log \Delta)$ time a data structure that,
   given a query point $y \in Y$, returns a $(1 + \epsilon)$-ANN of $y$ in $X$.
   The query time is $\epsilon^{-O(\ddim)} O(\log \Delta)$.
\end{lemma}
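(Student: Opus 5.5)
We will derive the statement directly from the Krauthgamer--Lee navigating-net data structure \cite{KrauthgamerL04}. Its construction and analysis touch only distances among data points, except at query time, where the query is compared to net points through the triangle inequality. Nothing in the analysis needs the query point to lie in a doubling space.

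\textbf{Preprocessing.} Assume the minimum interpoint distance of $X$ is $1$ and $\diam(X)=\Delta$. Build the nested nets $X=N_0\supseteq N_1\supseteq\dots\supseteq N_L$, where $N_\ell$ is a $2^{\ell}$-net of $X$ and $L=O(\log\Delta)$. For every $u\in N_\ell$, store its navigation list: the points of $N_{\ell-1}$ within distance $\gamma 2^\ell$ of $u$, for a suitable constant $\gamma$. By \Cref{lemma:packing}, applied inside $X$, each list has size $2^{O(\ddim)}$. The whole structure can be built in $2^{O(\ddim)}O(n\log\Delta)$ time, either by the standard net-tree construction (as in the preprocessing time of \Cref{lemma:hierarchical_decomposition}) or by a top-down greedy construction using lists at the previous level. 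This step involves only $X$, so $\ddim(X)\le\ddim$ is all that is used.

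\textbf{Query for $y\in Y$.} Descend from level $L$, maintaining a set $Z_\ell\subseteq N_\ell$ with the invariant $\dist(y,\proj{X}{y})\ge \min_{z\in Z_\ell}\dist(y,z) - 2^{\ell+1}$. Concretely, $Z_\ell$ contains the net point covering $\proj{X}{y}$, namely an ancestor of the nearest neighbour. Update by
\[
Z_{\ell-1}=\Bigl\{w\in \textstyle\bigcup_{z\in Z_\ell}\text{list}(z) : \dist(y,w)\le \dist(y,Z_\ell)+2^{\ell}\Bigr\}.
\]
The invariant follows from the triangle inequality through $\proj{X}{y}$ and its net ancestors. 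The crucial observation concerns the size of $Z_\ell$. All points of $Z_\ell$ lie within distance $\dist(y,Z_\ell)+2^{\ell+1}$ of $y$, so they are pairwise within $2(\dist(y,Z_\ell)+2^{\ell+1})$ of each other. They also form a $2^\ell$-packing in $X$. Hence the packing bound in $X$ gives $|Z_\ell|\le 2^{O(\ddim)}$ as long as $\dist(y,Z_\ell)\le O(2^\ell/\epsilon)$. This is the one place where $y$ enters, and it enters only through the diameter estimate for $Z_\ell$, which is itself a subset of $X$.

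\textbf{Stopping rule.} Stop at the first level where $2^{\ell+1}\le \epsilon\,\dist(y,Z_\ell)/2$. At that point the best $z\in Z_\ell$ is a $(1+\epsilon)$-ANN by the invariant. Before stopping, $\dist(y,Z_\ell)< 4\cdot 2^{\ell}/\epsilon$, so each level costs $\epsilon^{-O(\ddim)}$ distance evaluations. With $L=O(\log\Delta)$ levels, the query time is $\epsilon^{-O(\ddim)}O(\log\Delta)$. If the procedure reaches level $0$, then $Z_0$ contains the exact nearest neighbour.

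\textbf{Main obstacle.} The delicate step is verifying that every counting argument of \cite{KrauthgamerL04} bounds a packing of data points only. The bound must be via their diameter, never via balls around $y$ in $X\cup Y$. The diameter argument above handles this. The preprocessing time, with its $\log\Delta$ rather than $\log n$, then follows from building level by level.
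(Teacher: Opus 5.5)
Your proposal is correct and follows the paper's route: the paper simply invokes the Krauthgamer--Lee navigating nets, and your observation that the query $y$ enters only through the triangle inequality, with every packing bound applied to subsets of $X$, is the justification the paper leaves implicit for restating that result with queries from $Y$. Two small slips do not affect your final bounds: when $\dist(y,Z_\ell)=O(2^\ell/\epsilon)$ the packing bound gives $|Z_\ell|\le\epsilon^{-O(\ddim)}$ rather than $2^{O(\ddim)}$, and the update rule only places $Z_\ell$ within $\dist(y,Z_{\ell+1})+2^{\ell+1}\le\dist(y,Z_\ell)+O(2^\ell)$ of $y$, not within $\dist(y,Z_\ell)+2^{\ell+1}$.
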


\subsubsection{High-dimensional Data, Low-dimensional Queries}
\label{sec:ANN_low_dim_queries}

For the case where query points come from a subset $Y$ with bounded doubling dimension $\ddim$, we adapt the following ANN data structure proposed by~\cite{Har-PeledK13}, which assumes that one can compute an exact nearest neighbor that lies in sets of bounded doubling dimension in constant time.

\begin{lemma}[{\cite[Theorem 4.2]{Har-PeledK13}}]\label{lemma:ANN_doubling_queries_original}
    Given a metric space $(X \cup Y, \dist)$ with $|X| = n$ and $\ddim(Y) \leq \ddim$ and assume that for $x \in X$ it nearest neighbor in $Y$ can be computed in constant time. Then there exists an algorithm that 
    builds in $\epsilon^{-O(\ddim)} {O}(n \log n)$ time a data structure that,
    given a query point $y \in Y$, returns a $(1 + \epsilon)$-ANN of $y$ in $X$.
    The query time is $\epsilon^{-O(\ddim)} O(\log n)$.
\end{lemma}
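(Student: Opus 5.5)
Since this lemma is quoted from \cite{Har-PeledK13}, I would reprove it only in outline, reducing the problem entirely to the doubling set $Y$. The idea is the same as in \Cref{lemma:dim-reduction-simple}. In preprocessing, for every $x\in X$ I compute $\proj{Y}{x}$ and $r_x:=\dist(x,Y)$; by assumption each takes constant time. I then split the query distance $\dist(y,x)$ for $y\in Y$ into a \emph{far} regime and a \emph{near} regime, defined relative to the ball $B_Y(\proj{Y}{x}, r_x/\epsilon)$.

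\emph{Far regime.} Suppose $D:=\dist(y,\proj{Y}{x})\ge r_x/\epsilon$. By the triangle inequality,
\[
D-r_x\le \dist(y,x)\le D+r_x .
\]
Hence $D+r_x$ is a $(1+O(\epsilon))$-approximation of $\dist(y,x)$, because $(D+r_x)/(D-r_x)\le (1+\epsilon)/(1-\epsilon)$. So in this regime every $x$ can be represented by the point $\proj{Y}{x}\in Y$ carrying the additive weight $r_x$.

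\emph{Near regime.} Suppose $D<r_x/\epsilon$. I replace $x$ by the proxy set $N_x$, an $\epsilon r_x$-net of $B_Y(\proj{Y}{x},r_x/\epsilon)$. Each proxy $u\in N_x$ gets weight $\dist(x,u)$. By \Cref{lemma:packing}, $|N_x|\le \epsilon^{-O(\ddim)}$, since the ball has radius $r_x/\epsilon$ and the net spacing is $\epsilon r_x$. \Cref{lemma:dim-reduction-simple} then gives $\min_{u\in N_x}\dist(x,u)+\dist(u,y)\in[\dist(x,y),(1+4\epsilon)\dist(x,y)]$. The nets can be built by a greedy net construction inside $Y$; this uses the doubling property of $Y$ and the constant-time nearest-neighbor access.

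\emph{Reduction.} Both regimes turn the task into an \emph{additively weighted} approximate nearest neighbor problem on a set $S\subseteq Y$ of $n\epsilon^{-O(\ddim)}$ weighted points, all inside a metric of doubling dimension $\ddim$. The query is to $(1+O(\epsilon))$-approximate $\min_{s\in S} w_s+\dist(s,y)$ and return the owning $x$. Rescaling $\epsilon$ then gives the lemma. The weights are well behaved in a way I would exploit. Every weight is comparable to the scale at which it matters: a proxy $u\in N_x$ has $w_u\in[r_x,(1+1/\epsilon)r_x]$, and it is only relevant for queries within distance $O(r_x/\epsilon)$ of $u$.

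\emph{Data structure.} I would build a net-tree or ring-separator structure on $S$, computable in $\epsilon^{-O(\ddim)}O(n\log n)$ time. At each node at scale $2^i$, I store only the points whose weights are $O(2^i/\epsilon)$; points with much larger weight are handled at coarser levels. A query descends $O(\log n)$ levels and inspects $\epsilon^{-O(\ddim)}$ candidates per level. This is essentially the approximate Voronoi diagram construction of Har-Peled and Kumar, run on $Y$.

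\emph{Main obstacle.} The hardest step is this weighted search. With arbitrary additive weights, the nearest weighted point need not be near $y$ in the metric, so the usual net-tree pruning fails. My first attempt would be to exploit that the weights are tied to the scales $r_x$. Concretely, I would bucket the points of $S$ by $\lfloor\log r_x\rfloor$. I would then argue that for a query $y$ only $O(\log(1/\epsilon))$ buckets around the scale $\dist(y,S)$ can contain the minimizer, and that inside one bucket the weights differ by at most a $1/\epsilon$ factor. That makes standard doubling ANN with a $(1+\epsilon)$ slack work. The remaining issue is to bound the number of relevant buckets by $O(\log n)$ after the usual aspect-ratio reduction.
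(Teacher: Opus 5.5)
The paper gives no proof of this lemma; it is quoted from \cite{Har-PeledK13}. Your outline therefore has to stand on its own, and it stops exactly at the step that carries the content. The reduction to additively weighted search over $n\epsilon^{-O(\ddim)}$ proxies is sound. The far/near split is even redundant: once $\proj{Y}{x}\in N_x$, \Cref{lemma:dim-reduction-simple} already covers every $y\in Y$. But you never give a data structure for $\min_u \big(w_u+\dist(u,y)\big)$, and the bucketing you propose fails.

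\emph{The bucketing claim is false.} The minimizer's bucket is not tied to the scale $\dist(y,S)$. A proxy of some $x'$ with huge $r_{x'}$ may coincide with $y$, so $\dist(y,S)=0$. Meanwhile the optimal $x^*$ can have any $r_{x^*}\in[0,\delta]$, where $\delta$ is the optimal value (e.g.\ $x^*\in Y$ gives $r_{x^*}=0$). If you measure against $\delta$ instead, you can only discard buckets with $r_x>\delta$. The buckets with $r_x<\epsilon\delta$ still need an unweighted search on the projections, and that presupposes a constant-factor estimate of $\delta$. Neither is in the proposal, and you leave the bucket count open. A bounded weight ratio inside one bucket also does not make an unweighted doubling ANN structure applicable, because weights up to $r_x/\epsilon$ can dominate the distances that matter.

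\emph{The missing idea is to turn the weight into a coordinate.} Send each proxy $u$ of $x$ to $(u,w_u)$ in $Y\times\RR$, with metric $\dist(u,v)+|s-t|$. This space has doubling dimension $O(\ddim)$, and the query $(y,0)$ is at distance exactly $w_u+\dist(u,y)$ from $(u,w_u)$. Now run any standard $(1+\epsilon)$-ANN structure for doubling metrics on these $N=n\epsilon^{-O(\ddim)}$ points. A net-tree, for instance, has preprocessing $2^{O(\ddim)}N\log N$ and query time $2^{O(\ddim)}\log N+\epsilon^{-O(\ddim)}$. This gives the stated bounds with no scale analysis at all. It is the same lifting that underlies the embedding $\dist'$ from \cite{Har-PeledK13}, which the paper uses in its constant-approximation appendix.

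\emph{Separately, building $N_x$ is not justified by the hypothesis.} A constant-time oracle for $\proj{Y}{x}$ gives you no way to enumerate the points of $Y$ in $B_Y(\proj{Y}{x},r_x/\epsilon)$, so a greedy net cannot be run from it. You need explicit access to $Y$; for finite $Y$ this could be a net-tree on $Y$. Its cost depends on $|Y|$ and is not covered by the $\epsilon^{-O(\ddim)}O(n\log n)$ preprocessing bound, so you must either state it as an additional assumption or account for it.
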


We slightly modify Lemma \ref{lemma:ANN_doubling_queries_original}
 by using Lemma \ref{lemma:ANN_doubling_data} to compute an $(1+\eps)$-approximate nearest neighbor in a doubling space instead of an exact one. This only requires minor adjustments in the proof and introduces an additional factor $O(\log \Delta)$ to the construction and query time, which is stated in the following Lemma.  
 \begin{lemma}\label{lemma:ANN_doubling_queries}
    There exists an algorithm that, 
    given a metric space $(X \cup Y, \dist)$ with $|X| = n$ and $\ddim(Y) \leq \ddim$, 
    builds in $\epsilon^{-O(\ddim)} {O}(n \log n \log \Delta)$ time a data structure that,
    given a query point $y \in Y$, returns a $(1 + \epsilon)$-ANN of $y$ in $X$.
    The query time is $\epsilon^{-O(\ddim)} O(\log n \log \Delta)$, where $\Delta$ is the aspect ratio of $Y$.
\end{lemma}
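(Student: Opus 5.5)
We plan to follow the structure of the Har-Peled--Kumar data structure behind \Cref{lemma:ANN_doubling_queries_original} and isolate exactly where it uses exact nearest neighbors in the doubling set $Y$. Each such call is replaced by the approximate nearest neighbor structure of \Cref{lemma:ANN_doubling_data}. That structure is built over the doubling point set, with queries coming from the possibly non-doubling side, which matches the setting of \Cref{lemma:ANN_doubling_data} after swapping the roles of $X$ and $Y$. Concretely, we first build the structure of \Cref{lemma:ANN_doubling_data} on $Y$, with an accuracy parameter $\epsilon' = \Theta(\epsilon)$. This costs $\epsilon^{-O(\ddim)}O(m\log\Delta)$ preprocessing, or only on the relevant part of $Y$ if that is what the original construction queries. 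Each exact query $\pi_Y(x)$ for $x\in X$ then costs $\epsilon^{-O(\ddim)}O(\log\Delta)$.

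Next we go through the original construction. As far as we recall it, the exact nearest neighbors $\pi_Y(x)$ are used for two purposes. First, each data point $x$ is represented by $\pi_Y(x)$ together with the value $\dist(x,Y)$. Second, the preprocessing stores, near $\pi_Y(x)$, a net of $Y$ at scales comparable to $\dist(x,Y)$, so that a query $y$ can find the candidates $x$ whose ``shadow'' in $Y$ is close to $y$. We will substitute an approximate $\tilde\pi(x)$ with $\dist(x,\tilde\pi(x))\le(1+\epsilon')\dist(x,Y)$. For every $y\in Y$ the triangle inequality gives
\[
\bigl|\dist(x,\tilde\pi(x))+\dist(\tilde\pi(x),y)-\dist(x,y)\bigr|\le O(1)\,\dist(x,Y).
\]
The argument in \cite{Har-PeledK13} tolerates exactly such additive slack proportional to $\dist(x,Y)$: it only needs each point to be well represented by its projection up to an $O(\dist(x,Y))$ error, which is then absorbed by working at scale $\dist(x,Y)/\epsilon$. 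Hence each inequality in their correctness proof survives with constants inflated by $(1+\epsilon')$. Rescaling $\epsilon'$ by a constant then gives a $(1+\epsilon)$-ANN, and the $\epsilon^{-O(\ddim)}$ factors absorb the change.

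For the running time, the construction makes $O(n)$ nearest-neighbor calls, or $\epsilon^{-O(\ddim)}O(n\log n)$ calls if it makes them per level. Each call now costs $\epsilon^{-O(\ddim)}O(\log\Delta)$, which gives the claimed $\epsilon^{-O(\ddim)}O(n\log n\log\Delta)$ total. Similarly, the query procedure makes at most $\epsilon^{-O(\ddim)}O(\log n)$ such calls, giving query time $\epsilon^{-O(\ddim)}O(\log n\log\Delta)$. If the aspect ratio in \Cref{lemma:ANN_doubling_data} is that of $Y$, the stated $\Delta$ matches.

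The main obstacle is the correctness step. We must verify that approximate projections do not break any combinatorial invariant of the original structure, such as uniqueness of the assignment of $x$ to a cell, or a packing bound that relies on exact minimality of $\pi_Y(x)$. We would check each use of $\pi_Y$ in \cite[Theorem 4.2]{Har-PeledK13}. Wherever exact minimality is used, we would replace it by the relaxed property $\dist(x,\tilde\pi(x))\le(1+\epsilon')\dist(x,Y)$ and enlarge the relevant radii by a factor of $2$, which affects only constants.
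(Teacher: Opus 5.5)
Your proposal is essentially the paper's argument: the paper also takes the Har-Peled--Kumar structure of \Cref{lemma:ANN_doubling_queries_original} and replaces each exact nearest-neighbor computation in $Y$ by a call to the Krauthgamer--Lee structure of \Cref{lemma:ANN_doubling_data} built on $Y$ (with the roles of the two sets swapped), paying an extra $O(\log\Delta)$ factor per call. The paper only asserts that the correctness proof needs minor adjustments, and your observation that an approximate projection perturbs $\dist(x,\tilde\pi(x))+\dist(\tilde\pi(x),y)$ by at most an additive $O(\dist(x,Y))$ is the reason those adjustments go through. You are also right that building the structure on $Y$ costs $\epsilon^{-O(\ddim)}O(|Y|\log\Delta)$; the stated construction bound omits this term, but it is absorbed in the $\tilde{O}(n+m)$ bounds wherever the paper applies the lemma.
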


\section{Dimension Reduction Using Proxy Sets}
\label{sec:dim_reduction}

In this section, we develop our dimension reduction techniques for the setting where the candidate center set $Y$ has bounded doubling dimension.
We first explain 
a simple form of dimension reduction that replaces a point in $X$ by a constant-size set $N_x \subseteq Y$ called \emph{proxies} of $x$, such that the distance $\dist(x,y)$ from $x$ to any $y\in Y$ is approximated by $\min_{u\in N_x} \dist(x,u) + \dist(u,y)$. We later argue how this idea can be modified so that it can be incorporated in the algorithm of Cohen-Addad et al. \cite{Cohen-AddadFS21}.
The set of proxies $N_x$ will be defined as 
an $\epsilon\cdot  \dist(x,Y)$-net of a ball $B_Y(\pi_Y(x),  \dist(x,Y) /\epsilon)$, where $\pi_Y(x)$ is the nearest neighbor of $x$ in $Y$. For simplicity of exposition we assume here that 
$\pi_Y(x) \in N_x$. We later present a more general statement that also does not require this condition.

\LemmaDimReduction*

\begin{proof}
    By the triangle inequality we have $\dist(x,y) \le \dist(x,u) + \dist(u,y)$ for every $u\in Y$. Thus, it follows that $\dist(x,y) \le  \min_{u\in N_x} \dist(x,u) + \dist(u,y) = \widehat{\dist}(x,y)$.
    Now consider the case that $y\in B_Y(\pi_Y(x), \dist(x,Y)/\epsilon)$. 
    Define $\hat u \in N_x$ to be the closest point in $N_x$ to $y$.
    In this case, $\dist (\hat u, y) \le \epsilon \cdot \dist(x,Y)
    $ and so
    \[\widehat{\dist}(x,y)= \min_{u\in N_x} d(x,u) + d(u,y) \le \dist(x,\hat u) +\dist(\hat u, y) 
    \le \dist(x,y) + \dist(\hat u, y) + \dist(\hat u,y) \le (1+2\epsilon)\cdot \dist(x,y) . 
    \]
    If $y\notin  B_Y(\pi_Y(x), \dist(x,Y)/\epsilon)$ then we have
\[
    \dist(x,y) \ge \dist(\pi_Y(x), y) - \dist(\pi_Y(x), x) \ge \frac{1-\epsilon}{\epsilon} \cdot \dist( x,Y) 
\]
    and so $\dist(x,Y) \le \epsilon/(1-\epsilon) \cdot \dist(x,y)
    $.
    It follows that
    \[
    \widehat{\dist}(x,y) \le \dist(x,\pi_Y(x)) + \dist(\pi_Y(x),y)
    \le 2\dist(x,\pi_Y(x)) + \dist(x,y) 
    = 2 \cdot \dist(x,Y) + \dist(x,y)
    \le \frac{1+2\epsilon}{1-\epsilon} \cdot \dist(x,y).
    \]
The lemma follows with $\epsilon \in (0,\frac12 )$.
\end{proof}

We would like to incorporate the idea of Lemma \ref{lemma:dim-reduction-simple} into the algorithm of Cohen-Addad et al \cite{Cohen-AddadFS21}. For this purpose, we need a more flexible version of Lemma \ref{lemma:dim-reduction-simple} that allows us to replace $x\in X$ by a net around the nearest points from an arbitrary set $S\subseteq Y$. While this will potentially result in larger errors, it allows us to charge the error to a constant approximation of the facility location problem. Note that 
we do not require $\pi_S(x)$ to be in $N_x$ in the following definition.

\begin{definition}\label{def:dim_reduction_advanced}
 Let $\epsilon \in (0,\frac12 )$ and $S\subseteq Y$. Let
  $N_x$ be an $\epsilon \cdot \dist(x, S)$-net of $B_Y(\pi_S(x),  \dist(x,S)/\epsilon)$, for $x\in X$. 
  For all $x\in X$ and $y\in Y$ we define
  \[
    \widehat{\dist}_S(x,y) = \min_{u\in N_x} \dist(x,u) + \dist(u,y).
  \]
  For a set $F\subseteq Y$ we define
  $
  \widehat{\dist}_S(x,F) = \min_{y\in F} \widehat{\dist}_S(x,y).
  $
  For sets $C\subseteq X$ and $F\subseteq Y$ we define
  $ \widehat{\dist}_S(C,F) = \sum_{x\in C} \widehat{\dist}_S(x,F)$.
\end{definition}

We remark that, strictly speaking, $\widehat{\dist}_S$ depends on the choice of $S$ and the $N_x$. However, the following arguments -- unless stated otherwise -- work for arbitrary choice of $N_x$. Therefore and for readability we ignore this dependence. 
We will generalize our previous lemma to the setting of the above definition.

\begin{lemma}
\label{lemma:dim-reduction-advanced}
Let $\epsilon \in (0,\frac{1}{2})$, $S \subseteq Y$.
  Then we have
  \[
 \dist(x,y) \le \widehat{\dist}_S(x,y) \le (1+4\epsilon) \cdot \dist(x,y) + 2 \epsilon \cdot \dist(x,S). 
  \]
\end{lemma}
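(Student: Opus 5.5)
The argument mirrors the proof of \Cref{lemma:dim-reduction-simple}, with $\pi_S(x)$ in place of $\pi_Y(x)$ and $r := \dist(x,S)$ in place of $\dist(x,Y)$. Since $\pi_S(x)$ need not lie in $N_x$, we route through a nearby net point, which costs the additive term $2\epsilon\, r$. If $r = 0$ we take $N_x = \{x\}$ by convention, and the claim is trivial. So assume $r > 0$.

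For the lower bound, the triangle inequality gives $\dist(x,y) \le \dist(x,u) + \dist(u,y)$ for every $u \in N_x$. Taking the minimum over $u$ yields $\dist(x,y) \le \widehat{\dist}_S(x,y)$.

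For the upper bound, split on whether $y \in B_Y(\pi_S(x), r/\epsilon)$.

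\emph{Case 1: $y$ lies in the ball.} By the covering property of the net, there is $\hat u \in N_x$ with $\dist(\hat u, y) \le \epsilon r$. Then
\[
\widehat{\dist}_S(x,y) \le \dist(x,\hat u) + \dist(\hat u,y) \le \dist(x,y) + 2\dist(\hat u,y) \le \dist(x,y) + 2\epsilon r,
\]
which is within the claimed bound.

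\emph{Case 2: $y$ lies outside the ball.} Since $\pi_S(x)$ itself is in the ball, covering gives $u_0 \in N_x$ with $\dist(u_0, \pi_S(x)) \le \epsilon r$. Then
\[
\widehat{\dist}_S(x,u_0) \text{-style bound: } \dist(x,u_0) + \dist(u_0,y) \le \dist(x,\pi_S(x)) + 2\dist(u_0,\pi_S(x)) + \dist(\pi_S(x),y) \le r + 2\epsilon r + \dist(\pi_S(x),y).
\]
Next, $\dist(\pi_S(x),y) \le r + \dist(x,y)$, so $\widehat{\dist}_S(x,y) \le \dist(x,y) + 2r + 2\epsilon r$.

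It remains to charge $2r$ to $\dist(x,y)$. From $\dist(\pi_S(x),y) > r/\epsilon$ we get $\dist(x,y) \ge r/\epsilon - r = \frac{1-\epsilon}{\epsilon}\, r$. Hence $r \le \frac{\epsilon}{1-\epsilon}\dist(x,y) \le 2\epsilon\,\dist(x,y)$, using $\epsilon < \tfrac12$. Therefore $2r \le 4\epsilon\,\dist(x,y)$, and
\[
\widehat{\dist}_S(x,y) \le (1+4\epsilon)\dist(x,y) + 2\epsilon r.
\]

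The only delicate step is Case 2, where the bound must be written as $(1+4\epsilon)\dist(x,y) + 2\epsilon r$. This requires keeping the $2\epsilon r$ term from the missing center $\pi_S(x)$ separate from the $2r$ term, which is the one charged to $\dist(x,y)$.
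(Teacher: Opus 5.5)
Your proof is correct and follows the paper's argument step for step. It uses the same two cases on whether $y \in B_Y(\pi_S(x), \dist(x,S)/\epsilon)$, routes through the net point nearest $\pi_S(x)$ in the second case, and charges $2\dist(x,S) \le \frac{2\epsilon}{1-\epsilon}\dist(x,y) \le 4\epsilon\,\dist(x,y)$. The only blemish is the stray ``$\widehat{\dist}_S(x,u_0)$-style bound'' label in your Case 2 display; it should read $\widehat{\dist}_S(x,y) \le \dist(x,u_0)+\dist(u_0,y) \le \dots$.
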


\begin{proof}
 Our proof follows the ideas and structure of the proof of Lemma \ref{lemma:dim-reduction-simple}.
 By the triangle inequality we have $\dist(x,y) \le \dist(x,u) + \dist(u,y)$ for every $u\in Y$. Thus, it follows that $\dist(x,y) \le  \min_{u\in N_x} \dist(x,u) + \dist(u,y) = \widehat{\dist}_S(x,y)$.
     Now consider the case that $y\in B_Y(\pi_S(x), \dist(x,S)/\epsilon)$. 
     Let us define $\hat u \in N_x$ to be the
     closest point of $N_x$ to $y$.
     In this case, $\dist(\hat u, y) \le \epsilon \cdot \dist(x,S)
    $ and so
    \[\widehat{\dist}_S(x,y) \le \dist(x,\hat u) + \dist(\hat u, y) 
    \le \dist(x,y) + \dist(\hat u, y) + \dist(\hat u,y) \le  \dist(x,y) + 2\epsilon \dist(x,S). 
    \]
    If $y\notin  B_Y(\pi_S(x), \dist(x,S)/\epsilon)$ then 
    we have
\[
    \dist(x,y) \ge 
    \dist(\pi_S(x), y) - \dist(\pi_S(x), x) \ge \frac{1-\epsilon}{\epsilon} \cdot \dist( x,S) 
\]
    and so $\dist(x,S) \le \epsilon/(1-\epsilon) \cdot \dist(x,y)
    $.
    Now let $\bar \pi_S(x)$ be the closest point in $N_x$ to $\pi_S(x)$.
    It follows that
    \begin{eqnarray*}
    \widehat{\dist}_S(x,y) & \le & \dist(x, \bar \pi_S(x)) + 
    \dist(\bar \pi_S(x),y) \\
    & \le & 
    \dist(x,\pi_S(x)) +
    2\cdot \dist(\bar \pi_S(x), \pi_S(x))+  
    \dist(\pi_S(x),y)\\
    & \le & 2\dist(x,\pi_S(x)) + \dist(x,y)  + 2 \epsilon \dist(x,S)\\\ 
    & = & 2 \cdot \dist(x,S) + \dist(x,y) + 2 \epsilon \dist(x,S)\\
    &\le & (1+ \frac{2\epsilon }{1-\epsilon}) \cdot \dist(x,y) + 2 \epsilon \dist(x,S).
    \end{eqnarray*}
The lemma follows with $\epsilon \in (0,\frac12 )$.
\end{proof}

To see how we can use Lemma \ref{lemma:dim-reduction-advanced} for the clustering problems we consider, let $S\subseteq Y$ be a constant approximation for the facility location problem. 
Then we get the following corollary.

\begin{corollary}
    Let $\epsilon \in (0,\frac 12 )$ and let $S\subseteq Y$ be a $c$-approximation to the facility location problem on instance $(X\cup Y,d)$. Then 
    for any set $F\subseteq Y$ we have
    \[
    \widehat{\dist}_S(X,F) + \sum_{f\in F} \ocost(f) \le (1+6c \epsilon) \cdot \floc(X,F).
    \]
\end{corollary}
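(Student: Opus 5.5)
The plan is to apply \Cref{lemma:dim-reduction-advanced} pointwise to every client and then sum, charging the additive error terms to the cost of the approximate solution $S$. Fix $F \subseteq Y$ and a client $x \in X$. Let $y^* \in F$ be the facility of $F$ closest to $x$, so that $\dist(x,F) = \dist(x,y^*)$. By definition $\widehat{\dist}_S(x,F) \le \widehat{\dist}_S(x,y^*)$. Applying \Cref{lemma:dim-reduction-advanced} to $y^*$ then gives
\[
\widehat{\dist}_S(x,F) \le (1+4\epsilon)\,\dist(x,F) + 2\epsilon\,\dist(x,S).
\]

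Summing over $x \in X$ and adding $\sum_{f\in F}\ocost(f)$ to both sides yields
\[
\widehat{\dist}_S(X,F) + \sum_{f\in F}\ocost(f) \le (1+4\epsilon)\,\floc(X,F) + 2\epsilon \sum_{x\in X}\dist(x,S).
\]
Here we used that the opening cost appears with coefficient $1 \le 1+4\epsilon$, which is harmless since opening costs are nonnegative.

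It remains to bound the error term. Since $\ocost \ge 0$, we have
\[
\sum_{x\in X}\dist(x,S) \le \floc(X,S) \le c\cdot \optFL(X,Y) \le c\cdot \floc(X,F),
\]
where the middle inequality is the $c$-approximation guarantee of $S$ and the last holds because $F$ is a feasible solution. The total is therefore at most $(1+4\epsilon+2c\epsilon)\,\floc(X,F)$.

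Since $S$ is a $c$-approximation we have $c \ge 1$, so $4\epsilon \le 4c\epsilon$ and the bound is at most $(1+6c\epsilon)\,\floc(X,F)$, as claimed.

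No step is a real obstacle. The only subtlety is that the additive term $2\epsilon\,\dist(x,S)$ must be charged globally, via the cost of $S$ against the optimum, rather than per client; this is exactly why the corollary assumes $S$ is an approximate solution.
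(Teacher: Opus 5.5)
Your proof is correct and follows the same route as the paper. You apply \Cref{lemma:dim-reduction-advanced} per client, sum over $X$, and charge $2\epsilon\sum_{x}\dist(x,S)\le 2c\epsilon\,\optFL(X,Y)\le 2c\epsilon\,\floc(X,F)$ to the approximation guarantee of $S$. Your final bookkeeping, which keeps the opening cost with coefficient $1$ and absorbs $4\epsilon$ into $4c\epsilon$ via $c\ge 1$, is cleaner than the paper's last displayed line.
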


\begin{proof}
    We have 
    \begin{eqnarray*}
    \widehat{\dist}_S(X,F)
    & = & \sum_{x\in X} \min_{f\in F}
    \widehat{\dist}_S(x,f) \\
    &\le & \sum_{x\in X} \min_{f\in F}
    (1+4\epsilon) \cdot \dist(x,f) + 2 \epsilon \dist(x,S) \\
    &=& 2\epsilon \sum_{x\in X} \dist(x,S) + (1+4\epsilon) \cdot \sum_{x\in X} \min_{f\in F} \dist(x,f) \\
    &\le & 2c\epsilon \floc(X,F) + (1+4\epsilon) \cdot \sum_{x\in X}
     \dist(x,F) \\
     & \le & (1+6c\epsilon) \cdot \sum_{x\in X} \dist(x,F) + 2c\epsilon \floc(X,F).
     \end{eqnarray*}
Now the corollary follows from the fact that $\floc(X,F) = \sum_{x\in X} \dist(x,F) +\sum_{x\in F} \ocost(f)$.
\end{proof}

\subsection{Integrating Proxy Sets with Portal-respecting Paths}

Next, we show how to integrate our dimension reduction techniques with portal-respecting distance introduced in \Cref{sec:portal_respecting_paths_original}.
Let $\decom$ be the hierarchical decomposition of $Y$ defined in \Cref{lemma:hierarchical_decomposition}.
Recall that $\dportH$ is the portal-respecting distance w.r.t. $\decom$.
We have the following definition that combines \Cref{def:dim_reduction_advanced} with $\dportH$.

\begin{definition}\label{def:connection_cost_portal}
Let $\epsilon \in (0,\frac 12 )$ and $S\subseteq Y$.
For all $x\in X$
Let $N_x$ be an $\epsilon \dist(x,S)$-net of 
$B_Y(\pi_S(x),d(x,S)/\epsilon)$. For $y\in Y$ we define
\[
    \dporthatH(x, y) := 
    \min_{u\in N_x} \dist(x, u) + \dportH(u, y).
\]
For a set $F\subseteq Y$ we define
  $
  \dporthatH(x,F) = \min_{y\in F} \dporthatH(x,y).
  $
  For sets $C\subseteq X$ and $F\subseteq Y$ we define
  $ \dporthatH(C,F) = \sum_{x\in C} \dporthatH(x,F)$.
\end{definition}

Strictly speaking, $\dporthatH$ also depends on a subset $S \subseteq Y$.
Since $S$ is already clear in the context, we omit it and directly write $\dporthatH$ instead of $\widehat{\dist}_{\mathrm{port}}^{\decom, S}$ for ease of notation.

The following lemma bounds $\dporthatH$.

\begin{lemma}\label{lemma:connection_cost_portal}
    Consider a metric space $(X \cup Y, \dist)$ with $\ddim(Y) \leq \ddim$.
    For $x \in X$ and  $y \in Y$, assume that $\ell$ is the highest level where $\{\proj{S}{x}, y\}$ is cut, 
    and that $j$ is the highest level where 
    $B_Y(\proj{S}{x}, \dist(x, S)/\epsilon)$ is cut, then
    \begin{align*}
        \dist(x, y)
        \leq \dporthatH(x, y)
        \leq (1 + 10\epsilon) \dist(x, y) + 2 \epsilon \dist(x, S) + O(\rho) (2^{\ell} + 2^{j}). 
    \end{align*}
\end{lemma}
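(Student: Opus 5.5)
The lower bound is immediate. For every $u \in N_x$, the portal-respecting distance dominates the true distance, $\dportH(u,y) \ge \dist(u,y)$ (\Cref{lemma:detour}), and the triangle inequality then gives $\dist(x,u)+\dportH(u,y) \ge \dist(x,y)$. For the upper bound I plan to follow the case split in the proof of \Cref{lemma:dim-reduction-advanced}. In each case I pick one explicit proxy $u \in N_x$ and bound
\[
\dporthatH(x,y) \le \dist(x,u) + \dportH(u,y) \le \dist(x,u) + \dist(u,y) + O(\rho)\, 2^{\ell_u},
\]
where $\ell_u$ is the highest level at which $\{u,y\}$ is cut; the second inequality is \Cref{lemma:detour}. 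The term $\dist(x,u)+\dist(u,y)$ is already bounded by the computation in \Cref{lemma:dim-reduction-advanced}, namely $(1+4\epsilon)\dist(x,y)+2\epsilon\dist(x,S)$, which is within the claimed $(1+10\epsilon)$ slack. So everything reduces to showing $2^{\ell_u} \le 2^{\ell}+2^{j}$, up to a constant.

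The key step is a cut-level inequality. Write $B := B_Y(\proj{S}{x},\dist(x,S)/\epsilon)$, so every proxy satisfies $u \in N_x \subseteq B$, and $\proj{S}{x} \in S \subseteq Y$. If $\{u,y\}$ is cut at some level $i$, then $u$ and $y$ lie in different level-$i$ clusters. The point $\proj{S}{x}$ lies in exactly one of these clusters, so $\{u,y\}$ being cut at level $i$ forces either $\{u,\proj{S}{x}\}$ or $\{\proj{S}{x},y\}$ to be cut at level $i$.
\begin{itemize}
\item In the first case $B$ is cut at level $i$, since both $u$ and $\proj{S}{x}$ lie in $B$; hence $i \le j$.
\item In the second case $i \le \ell$ by definition of $\ell$.
\end{itemize}
Therefore $\ell_u \le \max\{j,\ell\}$ and $2^{\ell_u} \le 2^{\ell}+2^{j}$ for every $u \in N_x$, independent of which proxy we choose. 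One small point to check is that cuts are nested across levels, so that "highest cut level" behaves monotonically. This holds because $\decom_{i-1}$ refines $\decom_i$: a set cut at level $i$ is cut at every lower level.

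With this uniform bound in hand, the two cases are as follows.
\begin{itemize}
\item If $y \in B$, take $u=\hat u$ to be the proxy nearest to $y$. This gives $\dist(x,y)+2\epsilon\dist(x,S)+O(\rho)(2^\ell+2^j)$.
\item Otherwise, take $u=\bar\pi_S(x)$, the proxy nearest to $\proj{S}{x}$. This gives $(1+\tfrac{2\epsilon}{1-\epsilon})\dist(x,y)+2\epsilon\dist(x,S)+O(\rho)(2^\ell+2^j)$.
\end{itemize}
Since $\frac{2\epsilon}{1-\epsilon} \le 4\epsilon \le 10\epsilon$ for $\epsilon<\frac12$, both cases fall under the stated bound.

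The main obstacle is the cut-level reduction. The statement controls only the cut levels of $B$ and of $\{\proj{S}{x},y\}$, not of $\{u,y\}$ for each individual proxy, and the argument above handles this by routing through $\proj{S}{x}$. That step relies on $\proj{S}{x}\in B\subseteq Y$, so that it is a point of the decomposition lying in both relevant sets; this holds because $S\subseteq Y$ and $\proj{S}{x}$ is the centre of $B$.
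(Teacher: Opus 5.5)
Your proof is correct and is essentially the paper's argument. It uses the same case split on whether $y \in B_Y(\proj{S}{x},\dist(x,S)/\epsilon)$, the same choice of proxy (nearest to $y$, resp.\ nearest to $\proj{S}{x}$), and the same routing through $\proj{S}{x}$ to bound the cut level of $\{u,y\}$ by $\max\{j,\ell\}$ before applying \Cref{lemma:detour}. The differences are cosmetic: you apply the routing bound in both cases, whereas in the first case the paper uses the sharper level $j$ directly since $u,y$ both lie in the ball, and you reuse the arithmetic of \Cref{lemma:dim-reduction-advanced}. One small imprecision: $\proj{S}{x}$ need not lie in either of the two clusters separating $u$ and $y$, but the dichotomy you need still holds.
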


\begin{proof}
    Consider the following two cases.

    \paragraph{Case 1: $y \in B_Y(\proj{S}{x}, \dist(x, S) /\epsilon)$.}

    Then there exists $u \in N_x$, such that $\dist(u, y) \leq \epsilon \dist(x, S)$.
    Since $u, y \in B_Y(\proj{S}{x}, \dist(x, S)/\epsilon)$,
    $\{u, y\}$ is cut at level at most $j$.
    By \Cref{lemma:detour}, 
    \begin{align*}
        \dportH(u, y) 
        \leq \dist(u, y) + O(\rho) \cdot 2^{j}
        \leq \epsilon \dist(x, S) + O(\rho) \cdot 2^{j}.
    \end{align*}
    Therefore, \begin{align*}
        \dporthatH(x, y) 
        &\leq \dist(x, u) + \dportH(u, y) \\
        &\leq \dist(x, y) + \dist(y, u) + \dportH(u, y)\\
        &\leq \dist(x, y) + 2 \epsilon \dist(x, S) + O(\rho) \cdot 2^{j}.
    \end{align*}

    \paragraph{Case 2: $y \notin B_Y(\proj{S}{x}, \dist(x, S) /\epsilon)$.}

    There exists $v \in N_x$ such that $\dist(v, \proj{S}{x}) \leq \epsilon \dist(x, S)$.
    Hence,
    \begin{align*}
        \dportH(x, y) 
        \leq \dist(x, v) + \dportH(v, y)
        \leq (1 + \epsilon) \dist(x, S) + \dportH(v, y).
    \end{align*}
    
    To upper bound $\dportH(v, y)$, we consider the level where $\{v, y\}$ is cut. 
    $\{v, \proj{S}{x}\}$ is cut at level at most $j$, since $v, \proj{S}{x} \in B_Y(\proj{S}{x}, \dist(x, S)/\epsilon)$.
    On the other hand, $\{\proj{S}{x},y\}$ is cut at level at most $\ell$.
    Therefore, $\{v, y\}$ is cut at level at most $\max\{j, \ell\}$.
    By \Cref{lemma:detour}, 
    \begin{align*}
        \dportH(v, y) 
        &\leq \dist(v, y) + O(\rho) 2^{\max\{j, \ell\}} \\
        &\leq \dist(v, y) + O(\rho) (2^{\ell} + 2^{j})   \\
        & \leq \dist(x, y) + \dist(x, \proj{S}{x}) + \dist(\proj{S}{x}, v) + O(\rho) (2^{\ell} + 2^{j}) \\
        & \leq \dist(x, y) + (1 + \epsilon) \dist(x, S) + O(\rho) (2^{\ell} + 2^{j})
    \end{align*}
    Therefore, 
    \begin{align*}
        \dporthatH(x, y) \leq \dist(x, y) + (2 + 2 \epsilon) \dist(x, S) + O(\rho) (2^{\ell} + 2^{j}).
    \end{align*}

    Finally, recall that 
    \begin{align*}
        \dist(x, S) \leq \epsilon \dist(y, \proj{S}{x}) 
        \leq \epsilon \dist(x, S) + \epsilon \dist(x, y).
    \end{align*}
    Thus, $\dist(x, S) \leq \frac{\epsilon}{1 - \epsilon} \dist(x, y)$.
    We have 
    \begin{align*}
        \dporthatH(x, y) \leq \left(
            1 + \epsilon \frac{2 + 2 \epsilon}{1 - \epsilon}
        \right)\dist(x, y) + O(\rho) (2^{\ell} + 2^{j}).
    \end{align*}

    \medskip Combining the two cases completes the proof.
\end{proof}

\section{A New Metric Decomposition}
\label{sec:metric_decomposition}

In this section, we turn to the other setting where the client set $X$ has bounded doubling dimension while the facility set $Y$ does not.
At first, it may be tempting to apply the dimension reduction techniques in \Cref{sec:dim_reduction} to $Y$, and define a proxy set of every facility $y \in Y$.
Unfortunately, this idea does not work in this setting, mainly due to the asymmetric nature between $X$ and $Y$.
Algorithmically, the solution (facility set) $F \subseteq Y$ is unknown to us in advance.
Thus it is difficult to decide the ``active'' proxy sets which should be used to compute the connection cost.
Moreover, even if we are able to maintain such active proxy sets, it can be tricky to integrate proxy sets for facilities with portal-respecting paths.
Therefore, some new ideas are needed.

We use a different approach in this setting.
Instead of first reducing $Y$ to $X$ and then applying Talwar's metric decomposition on $X$,
we directly construct a hierarchical decomposition for the entire metric space $X \cup Y$.
Compared to Talwar's decomposition for doubling metrics~\cite{Talwar04}, our decomposition has the new feature of handling points in the high-dimensional ambient space $Y$.
We first summarize its key properties in \Cref{lemma:new_decomposition_for_general_metric}, and show how to efficiently construct it based on Talwar's decomposition.
In \Cref{sec:portal_respecting_new}, 
we further discuss how to define portal-respecting paths and portal-respecting distances w.r.t. our new decomposition.

\begin{restatable}{lemma}{LemmaNewDecomposition}
\label{lemma:new_decomposition_for_general_metric}
    Given a metric $(X \cup Y, \dist)$ with $\ddim(X) \leq \ddim$ and parameter $\rho \in (0, 1/2)$, one can compute in $\rho^{-O(\ddim)} O(n + m) \log \Delta$ time a random hierarchical decomposition $\modifydecom = \{\modifydecom_0, \modifydecom_1, \dots, \modifydecom_{L = O(\log \Delta)}\}$ for $X \cup Y$, satisfying
    \begin{enumerate}[label=(\arabic*)]
        \item \label{it:bounded_diam}
        Bounded diameter: For $0 \leq \ell \leq L$, every level $\ell$ cluster $C \in \modifydecom_\ell$ has diameter $O(2^\ell)$.
        \item \label{it:ornaments}
        Ornaments: $\{y\}$ is a leaf node at level $\min\{L - 1, \lceil \log(\dist(y, X)/\sqrt{\rho}) \rceil\}$, for every point $y \in Y \setminus X$.
        Moreover, $\proj{X}{y}$ is contained in one of the siblings of $\{y\}$.
        Cluster $\{y\}$ is called an \emph{ornament}.
        \item \label{it:nested}
        Nested: For every cluster $C \in \modifydecom_\ell$, 
        let $\Schild(C)$ and $\Nchild(C)$ be the set of $C$'s ornament child clusters and non-ornament child clusters, respectively.
        Then \[
        C = \bigcup_{D \in \Nchild(C)} D \cup \bigcup_{\{y\} \in \Schild(C)}\{y\},
        \]
        i.e., $C$ is the union of its child clusters.
        Furthermore, $|\Nchild(C)| \leq 2^{O(\ddim)}$.
        \item \label{it:cutting}
        Cutting probability: There exists a universal constant $c > 0$ s.t. for every subset $T \subseteq X$, 
        \[
            \Pr[T \text{ is cut at level } \ell] \leq \frac{c\cdot \ddim \cdot \diam(T)}{2^\ell}.
        \]

        \item \label{it:portals}
        Portals: For $0 \leq \ell \leq L$, every non-ornament cluster $C \in \modifydecom_\ell$ comes with a portal set $P_C \subseteq X$, which satisfies
        \begin{enumerate}[label=(\alph*)]
            \item Bounded size: $|P_C| \leq \rho^{-O(\ddim)}$.
            \item Covering: When $\ell \leq L - 1$, every $x \in C \cap X$ has $\dist(x, P_C) \leq \rho 2^\ell$; every $y \in C \setminus X$ has $\dist(y, P_C) \leq 2 \sqrt{\rho} 2^\ell$.
            \item Nested: If $p \in P_{C'} \cap C$ for some $C' \in \modifydecom_{\ell+1}$, then $p \in P_C$.
        \end{enumerate}
    \end{enumerate}
\end{restatable}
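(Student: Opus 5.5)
We build $\modifydecom$ on top of Talwar's decomposition. First run \Cref{alg:talwar_decomposition} on $X$ alone with parameter $\rho$, obtaining $\decom=\{\decom_0,\dots,\decom_L\}$ with the portal sets $P_C\subseteq X$. By \Cref{lemma:hierarchical_decomposition}, this takes $\rho^{-O(\ddim)}O(n)\log\Delta$ time. We then compute, for every $y\in Y\setminus X$, an approximate nearest neighbor $\proj{X}{y}$ using \Cref{lemma:ANN_doubling_data}, at cost $\epsilon^{-O(\ddim)}O(\log\Delta)$ per query. A constant-factor ANN suffices if we absorb the constant into the definition of the ornament level, so we will treat $\proj{X}{y}$ as exact. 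Set $h(y)=\min\{L-1,\lceil\log(\dist(y,X)/\sqrt\rho)\rceil\}$.

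To insert $y$, take the cluster $C\in\decom_{h(y)+1}$ that contains $\proj{X}{y}$, and add $y$ both to $C$ and to every ancestor of $C$. The singleton $\{y\}$ becomes a child of $C$ at level $h(y)$. The non-ornament clusters of $\modifydecom$ are therefore exactly the clusters of $\decom$ augmented by the ornaments inserted below them, and their portal sets are unchanged.

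The properties then follow in turn.

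\emph{Nested property \ref{it:nested}.} Every inserted $y$ lies in exactly one child at each level, so $C$ is the union of its children. The non-ornament children are the original children of $C$ in $\decom$, so \Cref{lemma:hierarchical_decomposition}\ref{item:new_decomposition_prop_2} gives $|\Nchild(C)|\le 2^{O(\ddim)}$.

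\emph{Ornament property \ref{it:ornaments}.} This holds by construction: $\proj{X}{y}$ lies in a non-ornament child of $C$, which is a sibling of $\{y\}$.

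\emph{Cutting probability \ref{it:cutting}.} For $T\subseteq X$, the restriction of $\modifydecom$ to $X$ is exactly $\decom$, so the bound is inherited from \Cref{eqn:cutting_probability}.

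\emph{Portals \ref{it:portals}.} Bounded size and nestedness are inherited from \Cref{lemma:hierarchical_decomposition}, and so is covering for points of $X$. For $y\in C\setminus X$ with $C\in\modifydecom_\ell$, we have $\ell\ge h(y)+1$ and $\proj{X}{y}\in C$. Hence
\[
\dist(y,P_C)\le \dist(y,X)+\rho 2^\ell .
\]
When $h(y)$ is not capped at $L-1$, we have $\dist(y,X)\le\sqrt\rho\, 2^{h(y)}\le\sqrt\rho\,2^{\ell}$, so $\dist(y,P_C)\le 2\sqrt\rho\,2^\ell$ because $\rho\le\sqrt\rho$. When $h(y)=L-1$, only the level-$L$ clusters matter, and the covering statement is restricted to $\ell\le L-1$ precisely for this case.

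\emph{Bounded diameter \ref{it:bounded_diam}.} Take $C\in\modifydecom_\ell$ and a point $y\in C\setminus X$. Then $\proj{X}{y}\in C\cap X$, and $\dist(y,X)\le\sqrt\rho\,2^{h(y)}\le 2^\ell$ whenever $h(y)<L-1$. Combined with $\diam(C\cap X)\le 2^{\ell+1}$, this gives $\diam(C)\le 2^{\ell+1}+2\cdot2^{\ell}=O(2^\ell)$.

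\textbf{Main obstacle.} The delicate case is the cap $h(y)=L-1$, i.e.\ when $y$ is far from $X$ compared with $\diam(X)$. Then $y$ sits at the top levels and $\dist(y,X)$ can exceed $2^L$, which would break the diameter bound at level $L$. To handle this, we would choose $L$ relative to the aspect ratio $\Delta$ of all of $X\cup Y$ rather than of $X$ alone; equivalently, we redefine $\Delta$ so that $2^L\ge\max_y \dist(y,X)/\sqrt\rho$, which makes the cap vacuous. This only changes $L=O(\log\Delta)$.

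\emph{Running time.} The cost is $\rho^{-O(\ddim)}O(n)\log\Delta$ for building $\decom$, plus $O(\log\Delta)$ per $y$ for the ANN query and the insertion (adding $y$ to each ancestor, or storing only the attachment point). This totals $\rho^{-O(\ddim)}O(n+m)\log\Delta$.
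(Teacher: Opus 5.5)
Your proposal is correct and follows the paper's proof essentially step for step. It uses the same construction: each ornament $\{y\}$ becomes a child of the level-$(h(y)+1)$ cluster containing $\proj{X}{y}$ (the paper phrases this as a sibling of the level-$h(y)$ cluster), properties \ref{it:ornaments}--\ref{it:cutting} and portal size/nesting are inherited from \Cref{lemma:hierarchical_decomposition}, and covering and diameter both come from $\proj{X}{y}\in C$ together with $\dist(y,X)\le\sqrt{\rho}\,2^{h(y)}$. Your handling of the capped case, taking $\Delta$ (hence $L$) relative to $X\cup Y$ so the root has diameter at most $2^L$, is exactly the paper's normalization, and replacing the exact $\proj{X}{y}$ by an ANN for the running time matches the paper's own remark.
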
       

Every cluster $C$ on our new decomposition $\modifydecom$ is a subset of $X \cup Y$.
Properties \ref{it:bounded_diam}, \ref{it:nested} and \ref{it:cutting} are inherited from Talwar's original decomposition (\Cref{lemma:hierarchical_decomposition}).
Property \ref{it:ornaments} is related to points in the high-dimensional ambient space $Y$, and becomes the key feature of our decomposition.
Specifically, every point $y \in Y \setminus X$ will be attached to the tree as a leaf node at a level that depends on $\dist(y, X)$.
This is different from Talwar's decomposition, where every point in $X$ becomes a leaf node at level $0$.
Denote $h(y)$ as the level where $y$ is a leaf node on $\modifydecom$.
Then \begin{align}
    \forall y \in X \cup Y, \qquad 
    h(y) = 
        \min\{L - 1, \lceil \log(\dist(y, X)/\sqrt{\rho}) \rceil\}.
        \label{eqn:leaf_level}
\end{align}
Note that $h(y) = 0$ when $y \in X$, which is consistent with Talwar's decomposition.

Property \ref{it:portals} defines portals for each $C \in \modifydecom_\ell$, which are useful for us to further define portal-respecting paths on $\modifydecom$.
Roughly speaking, the portal set $P_C$ mimics the $\rho 2^\ell$-net on $C$.
It is $\rho 2^\ell$-packing for $C$, $\rho 2^\ell$-covering for $C \cap X$, but only $2 \sqrt{\rho} 2^\ell$-covering for $C \setminus X$.
This weaker covering property turns out sufficient for bounding the error of portal-respecting distances in \Cref{sec:portal_respecting_new}.
See \Cref{fig:new_decomposition} for an illustration of our decomposition.

\begin{figure}[!ht]
    \centering
    \includegraphics[width=\textwidth]{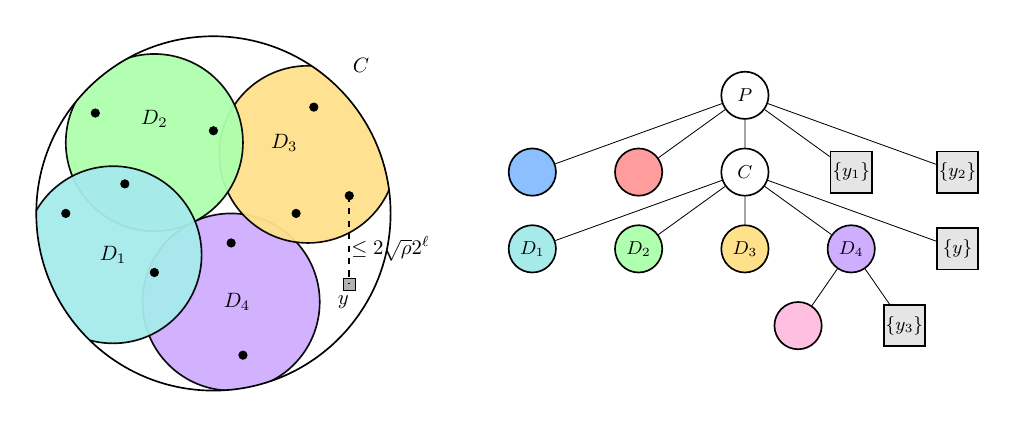}
    \caption{An illustration of the new decomposition $\modifydecom$.
    $C$ is a level $\ell$ cluster on $\modifydecom$, which has non-ornament child clusters $D_1, D_2, D_3, D_4$ and an ornament child $\{y\}$, where $y \in Y \setminus X$.
    The left figure shows how $C$ is partitioned.
    Black points are portals of $C$, and $\dist(y, P_C) \leq 2 \sqrt{\rho} 2^\ell$.
    The right figure shows part of the decomposition $\modifydecom$, where circles are non-ornament nodes, and squares are ornaments.
    }
    \label{fig:new_decomposition}
\end{figure}

\paragraph{Construction of $\modifydecom$.}
Assume wlog the minimum interpoint distance of $X \cup Y$ is $1$ and that $\diam(X \cup Y) = \Delta$.
The construction of our new decomposition $\modifydecom$ is formally given in \Cref{alg:modified_decomposition}.

\begin{algorithm}
\caption{New decomposition $\modifydecom$}
\label{alg:modified_decomposition}
\DontPrintSemicolon
\KwIn{finite metric space $(X \cup Y, \dist)$, 
with $|X| = n, |Y| = m$ and $\ddim(X) \leq \ddim$, 
and 
a parameter $\rho \in (0, 1/2)$}
run \Cref{alg:talwar_decomposition} on $(X, \dist)$ to obtain the hierarchical decomposition $\decom = \{\decom_0, \decom_1, \dots, \decom_L\}$ \label{line:compute_H}\;
let $\modifydecom_\ell \gets \decom_\ell$ for $0 \leq \ell \leq L$ \label{line:init_P} \;
\For{$y \in Y \setminus X$}{ \label{line:handle_ambient_points}
    compute 
    $\proj{X}{y} \in X$ \label{line:compute_proj_for_y}\;
    let $h(y) \gets \min\{L - 1, \lceil \log(\dist(y, \proj{X}{y})/\sqrt{\rho}) \rceil\}$ \label{line:h_y}\;
    $\modifydecom_{h(y)} \gets \modifydecom_{h(y)} \cup \{\{y\}\}$ \label{line:hang}\;
    find the level $h(y)$ cluster $C$ that contains $\proj{X}{y}$, let $\{y\}$ be a sibling of $C$, 
    and add $y$ to the ancestors of $C$ (excluding $C$) \label{line:add_y_to_ancestor} \;
}
\For{$0 \leq \ell \leq L$}{ \label{line:construct_portal}
    \For{$C \in \modifydecom_\ell$}{
        $C$ inherits the portal set $P_C$ from $\decom$
        \label{line:construct_net} \;
    }
}
\Return $\modifydecom := \{\modifydecom_0, \modifydecom_1, \dots, \modifydecom_L\}$
\end{algorithm}

\Cref{alg:modified_decomposition} has three stages.
In stage 1 (Lines \ref{line:compute_H}-\ref{line:init_P}), it runs \Cref{alg:talwar_decomposition} to compute the hierarchical decomposition $\decom$ for $X$, and initializes $\modifydecom$ as a copy of $\decom$.
In stage 2 (Lines \ref{line:handle_ambient_points}-\ref{line:add_y_to_ancestor}), it adds points in the ambient space $Y \setminus X$ to the decomposition.
Specifically, for every $y \in Y \setminus X$, it first finds the point $\proj{X}{y} \in X$ closest to $y$ (Line \ref{line:compute_proj_for_y}), 
and decides the level $h(y)$ where the ornament $\{y\}$ should be a leaf node, based on the distance $\dist(y, \proj{X}{y})$ (Line \ref{line:h_y}).
Ornament $\{y\}$ is then attached to level $h(y)$ as a sibling of the cluster that contains $\proj{X}{y}$ (Lines \ref{line:hang}-\ref{line:add_y_to_ancestor}).
Finally, it adds $y$ to all ancestors (Line \ref{line:add_y_to_ancestor}), which guarantees that the new decomposition $\modifydecom$ is still nested.
In stage 3 (Lines \ref{line:construct_portal}-\ref{line:construct_net}), our algorithm constructs the portal set for each cluster $C$ by directly inheriting $P_C$ from $\decom$.
By construction, $\modifydecom$ is a random decomposition, whose randomness solely comes from $\decom$.

\begin{remark*}
    Line \ref{line:compute_proj_for_y} computes the nearest neighbor of $y$ in $X$, which requires $O(n)$ time for each $y$.
    In our implementation, the time complexity can be reduced via $(1 + \epsilon)$-ANN search.
    Moreover, replacing $\proj{X}{y}$ with a $(1 + \epsilon)$-ANN in \Cref{lemma:new_decomposition_for_general_metric,alg:modified_decomposition} will not affect the correctness of our subsequent analysis, 
    and only introduces an extra $(1 + \epsilon)$ factor to our final approximation ratio.
\end{remark*}

The notion of being cut (\Cref{def:cut}) can be adapted to $\modifydecom$ as well.
Formally, say a set $T \subseteq X \cup Y$ is \emph{cut} at level $\ell$ w.r.t. $\modifydecom$, if there 
exists a cluster $C \in \modifydecom_\ell$, such that $T \cap C \neq \emptyset$ and $T \setminus C \neq \emptyset$.
In fact, we will mainly use this notion of cut w.r.t. $\modifydecom$ for $T \subseteq X$ in our analysis.

We next show that the decomposition $\modifydecom$ computed by \Cref{alg:modified_decomposition} satisfies the properties in \Cref{lemma:new_decomposition_for_general_metric}.

\begin{proof}[Proof of \Cref{lemma:new_decomposition_for_general_metric}]

Properties \ref{it:ornaments}, \ref{it:nested} and \ref{it:cutting} follow  from the construction and \Cref{lemma:hierarchical_decomposition}.
We focus on properties \ref{it:bounded_diam} and \ref{it:portals} below.

\paragraph{Property \ref{it:bounded_diam}: bounded diameter.}
The bound is trivial for ornaments.
Consider a non-ornament cluster $C \in \modifydecom_\ell$.
By construction, there exists a corresponding cluster $\widehat{C} \in \decom_\ell$, such that $C \supseteq \widehat{C}$ and $C \setminus \widehat{C} \subseteq Y \setminus X$.
For every $y \in C \setminus \widehat{C}$, by Line \ref{line:add_y_to_ancestor} of \Cref{alg:modified_decomposition} we have $\ell > h(y) = \min\{L - 1, \lceil \log(\dist(y, \proj{X}{y})/\sqrt{\rho}) \rceil\}$.

If $\ell > L - 1$ then $\ell = L$.
We have $\diam(C) = \diam(X \cup Y) \leq 2^L$.

If $\ell > \lceil \log(\dist(y, \proj{X}{y})/\sqrt{\rho}) \rceil$, then $\dist(y, \proj{X}{y}) \leq \sqrt{\rho} 2^\ell$.
By Line \ref{line:add_y_to_ancestor}, we have $\proj{X}{y} \in C$ as well.
Hence, for every $y, y' \in C$, 
\begin{align*}
    \dist(y, y') 
    &\leq \dist(y, \proj{X}{y}) + \dist(\proj{X}{y}, \proj{X}{y'}) + \dist(y', \proj{X}{y'}) \\
    &\leq \sqrt{\rho} 2^\ell + \diam(\widehat{C}) + \sqrt{\rho} 2^\ell &&\text{Since $\proj{X}{y}, \proj{X}{y'} \in \widehat{C}$}\\
    &\leq \sqrt{\rho} 2^{\ell+1} + O(2^\ell) &&\text{By \Cref{lemma:hierarchical_decomposition}} \\
    &\leq O(2^\ell).
\end{align*}
Therefore, $\diam(C) \leq O(2^\ell)$.

\paragraph{Property \ref{it:portals}: portals.}
By Line \ref{line:construct_net} of \Cref{alg:modified_decomposition}, $P_C$ is inherited from $\decom$.
Therefore, the size bound and nested property go through.
We focus on the covering property below.

For $x \in C \cap X$, the bound follows from the standard property of nets.
For $y \in C \setminus X$, by construction we have 
\[
L - 1 \geq \ell > h(y) = \min\{L - 1, \lceil \log(\dist(y, \proj{X}{y})/\sqrt{\rho}) \rceil\}.
\]
Thus, $\dist(y, \proj{X}{y}) \leq \sqrt{\rho} 2^\ell$.
Furthermore, $\proj{X}{y} \in C \cap X$.
Therefore, $\dist(\proj{X}{y}, P_C) \leq \rho 2^\ell$.
We thus have 
\[
    \dist(y, P_C) \leq \dist(y, \proj{X}{y}) + \dist(\proj{X}{y}, P_C) \leq \sqrt{\rho} 2^\ell + \rho 2^\ell
    \leq 2 \sqrt{\rho} 2^\ell.
\]

\paragraph{Time complexity.}
For stage 1 of \Cref{alg:modified_decomposition}, by \Cref{lemma:hierarchical_decomposition}, the decomposition $\decom$ can be computed in $\rho^{-O(\ddim)} O(n \log \Delta)$ time on top of $X$.

For stage 2, as discussed above, $\proj{X}{y}$ will be replaced by a $(1 + \epsilon)$-ANN of $y$ in $X$ (Line \ref{line:compute_proj_for_y}), which can be found in $\epsilon^{-O(\ddim)} O(\log \Delta)$ time by \Cref{lemma:ANN_doubling_data}.
The algorithm then calculates $h(y)$ and attaches $\{y\}$ to level $h(y)$, both of which can be done in constant time.
To add $y$ to its ancestor clusters on $\modifydecom$,
note that $\{y\}$ has at most $L = O(\log \Delta)$ ancestors, so this step can be done in $O(\log \Delta)$ time.
Therefore, stage 2 has complexity $\epsilon^{-O(\ddim)} O(\log \Delta)$ for a single $y \in Y \setminus X$, and a total of $\epsilon^{-O(\ddim)} O(m \log \Delta)$.

For stage 3, the portal set $P_C$ for each cluster $C$ is inherited from $H$.
Since $|P_C| \leq \rho^{-O(\ddim)}$, 
the total complexity of stage 3 is $\sum_{\ell=0}^L \sum_{C \in \modifydecom_\ell} \rho^{-O(\ddim)} = \rho^{-O(\ddim)} O(n \log \Delta)$.

In conclusion, the time complexity of \Cref{alg:modified_decomposition} is $\rho^{-O(\ddim)} O(n + m) \log \Delta$.

\end{proof}

\subsection{Portal-respecting Paths on $\modifydecom$}
\label{sec:portal_respecting_new}
Consider metric space $(X \cup Y, \dist)$ with $\ddim(X) \leq \ddim$.
Let $\modifydecom$ be the decomposition computed by \Cref{alg:modified_decomposition}, with parameter $\rho = \epsilon^{10}/\ddim^2$.

Consider a pair of points $x \in X$ and $y \in Y$.
It will be helpful to think of $x$ as a client and $y$ as the facility to which $x$ is assigned in a certain solution.
Let $0 \leq \ell \leq L - 1$ be the highest level where the set $\{x, y\}$ is cut w.r.t. $\modifydecom$.
Consider a sequence of clusters $C_0(x), \dots, C_\ell(x)$ and a sequence of portals $p_0(x), p_1(x), \dots, p_\ell(x)$, such that $C_i(x) \in \modifydecom_i$ is the level $i$ cluster containing $x$, and $p_i(x) \in P_{C_i(x)}$ is the portal closest to $p_{i - 1}(x)$.
In particular, $p_0(x) = x$.
The clusters and portals for $y$ are defined in an analogous but slightly different way.
Let $j = h(y) \leq \ell$ be the level where $y$ is a leaf node.
(If $y \in X$, then $j = 0$.)
Let $D_{j + 1}(y), \dots, D_\ell(y)$ be a sequence of clusters and $p_{j + 1}(y), \dots, p_\ell(y)$ be a sequence of portals, such that 
$D_i(y) \in \modifydecom_i$ is the level $i$ cluster containing $y$, and $p_i(y) \in P_{D_i(y)}$ is the portal closest to $p_{i - 1}(y)$.
Define $D_j(y) = \{y\}$ and $p_j(y) = y$.
The portal-respecting path between $x$ and $y$ is thus defined as \[
\Big(x = p_0(x), p_1(x), \dots, p_\ell(x), p_\ell(y), \dots, p_j(y) = y \Big), 
\]
and the portal-respecting distance between $x$ and $y$ is defined as the length of the path:
\begin{align*}
    \dportP(x, y) := \dist(p_{\ell}(x), p_\ell(y))
    + \sum_{i = 0}^{\ell - 1} \dist(p_i(x), p_{i + 1}(x))
    + \sum_{i = j}^{\ell - 1} \dist(p_i(y), p_{i + 1}(y)).
\end{align*}

We note that above definition of portal-respecting path is analogous to that in \Cref{sec:portal_respecting_paths_original} and~\cite{Cohen-AddadFS21}, with the only difference that the path does not necessarily start from the $0$-th level.
This misalignment mainly comes from the construction of our modified decomposition $\modifydecom$, where a point $y$ in the ambient space can be a leaf node at a high level $j > 0$.
Also, all points except one endpoint of the path are net points (portals) in $X$.
We have the following bound on $\dportP(x, y)$, which depends on both the highest level where $\{x, y\}$ 
is cut and the level where $y$ is a leaf node.

\begin{lemma}\label{lemma:detour_complicated}
    For $x \in X$ and $y \in Y$,
    the portal-respecting distance between $x$ and $y$ satisfies 
    \begin{equation*}
        \dist(x, y) \leq 
        \dportP(x, y) \leq 
        \begin{cases}
            \dist(x, y) + O(\sqrt{\rho}) 2^\ell, &\text{if } h(y) < \ell; \\
            \dist(x, y) + O(\rho) 2^\ell, & \text{if } h(y) = \ell,
        \end{cases}
    \end{equation*}
    where $0 \leq \ell \leq L - 1$ is the highest level where $\{x, y\}$ is cut w.r.t. $\modifydecom$,
    and $h(y)$ is defined in \eqref{eqn:leaf_level}.
    
\end{lemma}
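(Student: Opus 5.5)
The plan follows the proof of \Cref{lemma:detour}. The lower bound $\dist(x,y) \le \dportP(x,y)$ is immediate from the triangle inequality along the path $(x=p_0(x),\dots,p_\ell(x),p_\ell(y),\dots,p_j(y)=y)$.

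For the upper bound, the first step is to apply the triangle inequality to the middle segment:
\[
\dist(p_\ell(x),p_\ell(y)) \le \sum_{i=0}^{\ell-1}\dist(p_i(x),p_{i+1}(x)) + \dist(x,y) + \sum_{i=j}^{\ell-1}\dist(p_i(y),p_{i+1}(y)).
\]
This gives
\[
\dportP(x,y) \le \dist(x,y) + 2\sum_{i=0}^{\ell-1}\dist(p_i(x),p_{i+1}(x)) + 2\sum_{i=j}^{\ell-1}\dist(p_i(y),p_{i+1}(y)),
\]
so it remains to bound the individual steps.

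The $x$-side steps are exactly as in \Cref{lemma:detour}. Since $p_i(x)\in C_i(x)\cap X$ (portals lie in $X$, and by nesting $p_i(x)\in C_{i+1}(x)$), the covering property \ref{it:portals}(b) gives $\dist(p_i(x),p_{i+1}(x)) \le \rho 2^{i+1}$. Summing over $i<\ell$ yields $O(\rho)2^\ell$.

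The $y$-side steps are where the case split enters.
\begin{itemize}
\item If $h(y)=j=\ell$, the $y$-side sum is empty. Moreover $p_\ell(y)=y$, so the path's last segment is $\dist(p_\ell(x),y)$ and the bound $\dist(x,y)+O(\rho)2^\ell$ follows.
\item If $j<\ell$, the first step is $\dist(y,p_{j+1}(y))$ with $y\in D_{j+1}(y)\setminus X$. By the weaker covering property this is at most $2\sqrt{\rho}\,2^{j+1} \le O(\sqrt\rho)2^\ell$. Every subsequent step starts at a portal $p_i(y)\in X\cap D_{i+1}(y)$ and costs at most $\rho 2^{i+1}$, summing to $O(\rho)2^\ell$.
\end{itemize}
Together this gives $\dist(x,y)+O(\sqrt\rho)2^\ell$ in the case $h(y)<\ell$.

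The main care needed is in verifying the nesting facts that make the covering property applicable. First, $p_i(y)\in D_{i+1}(y)$, which holds because $D_i\subseteq D_{i+1}$. Second, $y$ lies in the level-$(j+1)$ ancestor, which holds by construction since $y$ is added to the ancestors of the cluster containing $\proj{X}{y}$. Third, for $j<\ell\le L-1$ the covering property (which requires level $\le L-1$) applies at every level up to $\ell$; this is guaranteed by the hypothesis $\ell\le L-1$.
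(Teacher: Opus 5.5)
Your proposal is correct and follows the paper's proof essentially step for step. The triangle inequality on the middle segment doubles the two detour sums, the $x$-side steps and the $y$-side steps above level $h(y)+1$ each contribute $O(\rho)2^\ell$, and the single weak-covering step $\dist(y,p_{h(y)+1}(y))\le 2\sqrt{\rho}\,2^{h(y)+1}$ yields the $O(\sqrt{\rho})2^\ell$ term when $h(y)<\ell$. (When $y\in X$, so $h(y)=0$ and $y\notin D_{j+1}(y)\setminus X$, that first step instead falls under the stronger $\rho\, 2^{1}$ covering bound, which only helps.)
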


We note that \Cref{lemma:detour_complicated} is similar but weaker than \Cref{lemma:detour}.
If $h(y) = \ell$, then the bound is the same as \Cref{lemma:detour}.
If $h(y) < \ell$, we lose a $\sqrt{\rho}$ factor in the additive error term.
This is mainly due to the weaker covering property in \ref{it:portals} of \Cref{lemma:new_decomposition_for_general_metric}.
More concretely, it is only guaranteed that $y$ can be connected to a nearby portal within distance $O(\sqrt{\rho}) 2^{h(y) + 1}$ rather than $O(\rho) 2^{h(y) + 1}$.
We give the proof below.

\begin{proof}[Proof of \Cref{lemma:detour_complicated}]
    The lower bound for $\dportP(x, y)$ is straightforward.
    For the upper bound, by definition we have 
    \begin{align*}
        \dportP(x, y) 
        &= \dist(p_{\ell}(x), p_\ell(y))
    + \sum_{i = 0}^{\ell - 1} \dist(p_i(x), p_{i + 1}(x))
    + \sum_{i = h(y)}^{\ell - 1} \dist(p_i(y), p_{i + 1}(y)) \\
        &\leq \dist(x, y) + 
        2\sum_{i = 0}^{\ell - 1} \dist(p_i(x), p_{i + 1}(x))
    + 2\sum_{i = h(y)}^{\ell - 1} \dist(p_i(y), p_{i + 1}(y)).
    \end{align*}

    It is easy to show that the first summation is at most $O(\rho) 2^\ell$.
    One should be more careful about the second summation.
    If $h(y) = \ell$, then the second summation does not exist, and we have $\dportP(x, y) \leq \dist(x, y) + O(\rho) 2^\ell$.

    If $h(y) < \ell$, then $h(y) < L - 1$.
    By property \ref{it:portals} of \Cref{lemma:new_decomposition_for_general_metric}, we have  
    $\dist(y, p_{h(y) + 1}(y)) \leq 2\sqrt{\rho} 2^{h(y) + 1}$.
    Therefore, 
    \begin{align*}
        \sum_{i = h(y)}^{\ell - 1} \dist(p_i(y), p_{i + 1}(y))
        &= \dist(y, p_{h(y) + 1}(y)) + \sum_{i = h(y) + 1}^{\ell - 1} \dist(p_i(y), p_{i + 1}(y)) \\
        &\leq 2\sqrt{\rho} 2^{h(y) + 1} + \sum_{i = h(y) + 1}^{\ell - 1} O(\rho) 2^i \\
        &\leq O(\sqrt{\rho}) 2^{h(y)} + O(\rho) 2^\ell \\
        &\leq O(\sqrt{\rho}) 2^\ell.
    \end{align*}
\end{proof}

The following lemma is a corollary of \Cref{lemma:detour_complicated}.
It shows that the error incurred by portal-respecting distance (i.e., $\dportP(x, y) - \dist(x, y)$) is essentially determined by the highest level where $\{x, \proj{X}{y}\}$ is cut w.r.t. $\modifydecom$, 
where $\proj{X}{y}$ is the closest point to $y$ in $X$.
This allows us to utilize the cutting probability (Property \ref{it:cutting} of \Cref{lemma:new_decomposition_for_general_metric}) on $\modifydecom$ to bound the error. 

\begin{lemma}\label{lemma:detour_by_projection}
    Let $x \in X$, $y \in Y$ and $0 \leq \ell \leq L - 1$ be the highest level where $\{x, \proj{X}{y}\}$ is cut w.r.t. $\modifydecom$.
    Then 
    \[
    \dist(x, y) \leq \dportP(x, y) \leq (1 + O(\sqrt{\rho}))\dist(x, y) + O(\sqrt{\rho}) 2^\ell.
    \]
\end{lemma}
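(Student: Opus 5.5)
The plan is to reduce to \Cref{lemma:detour_complicated}. Let $\ell'$ be the highest level where $\{x, y\}$ is cut w.r.t. $\modifydecom$, and let $\ell$ be the highest level where $\{x, \proj{X}{y}\}$ is cut. If $y \in X$ then $\proj{X}{y} = y$, so $\ell' = \ell$ and \Cref{lemma:detour_complicated} gives the bound immediately (both of its cases are dominated by $O(\sqrt{\rho})2^\ell$). So assume $y \in Y \setminus X$ and write $h = h(y)$. The lower bound $\dist(x,y) \le \dportP(x,y)$ is already part of \Cref{lemma:detour_complicated}, so only the upper bound needs work.

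The structural fact I would use is this: by property \ref{it:ornaments} and Line \ref{line:add_y_to_ancestor} of \Cref{alg:modified_decomposition}, at every level $i > h$ the cluster containing $y$ is exactly the cluster containing $\proj{X}{y}$. At level $h$, $y$ sits alone in its ornament $\{y\}$, so $\{x,y\}$ is cut at level $h$ (as $x \neq y$). Hence $\ell' \ge h$, and for every $i > h$, $\{x,y\}$ is cut at level $i$ if and only if $\{x,\proj{X}{y}\}$ is. This yields two cases, and I would handle them in order.

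\textbf{Case $\ell' > h$.} Then $\ell' = \ell$ by the equivalence above, and \Cref{lemma:detour_complicated} (case $h(y) < \ell'$) gives $\dportP(x,y) \le \dist(x,y) + O(\sqrt{\rho})2^{\ell}$.

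\textbf{Case $\ell' = h$.} Here \Cref{lemma:detour_complicated} gives $\dportP(x,y) \le \dist(x,y) + O(\rho)2^{h}$, and the goal is to charge $2^h$ to $\dist(x,y)$. If $h = L-1$, note that $\ell'$ is the highest cut level, so $\{x,\proj{X}{y}\}$ is not cut at levels above $h$ either; in this sub-case I would bound $\rho 2^{L-1}$ crudely, and this step needs some care. Otherwise $h = \lceil \log(\dist(y,X)/\sqrt{\rho})\rceil$, so $2^h \le 2\dist(y,X)/\sqrt{\rho}$. Since $x \in X$, $\dist(y,X) \le \dist(x,y)$, hence
\[
O(\rho)2^h \le O(\sqrt{\rho})\,\dist(y,X) \le O(\sqrt{\rho})\,\dist(x,y).
\]
This is exactly the $(1+O(\sqrt{\rho}))\dist(x,y)$ term in the statement.

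The main obstacle is the boundary case $h = L-1$. There I would fall back on the alternative bound $2^{L-1} \le 2^{\ell}$ whenever $\ell \ge L-1$. Alternatively, I would use that $L = \lceil\log\Delta\rceil + 4$ makes $2^{L-1} \le O(\Delta)$. Under the normalization of \Cref{lemma:new_decomposition_for_general_metric}, $\dist(y,X)/\sqrt{\rho}$ exceeds $2^{L-2}$ only when $\dist(x,y) \ge \dist(y,X) = \Omega(\sqrt{\rho}\,2^{L})$. That again gives $\rho 2^{L-1} \le O(\sqrt{\rho})\dist(x,y)$, which closes this sub-case.

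The remaining routine check is that $\ell \le L-1$ is well defined for the equivalence between cut levels. This follows because level $L$ is the single root cluster, so neither pair is cut there.
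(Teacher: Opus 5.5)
Your proof is correct and follows the paper's argument: your split $\ell' > h$ versus $\ell' = h$ is equivalent to the paper's $\ell > h(y)$ versus $\ell \le h(y)$, and each case is closed the same way. Both use \Cref{lemma:detour_complicated}, and in the second case both charge $O(\rho)2^{h(y)}$ to $O(\sqrt{\rho})\dist(y,X) \le O(\sqrt{\rho})\dist(x,y)$. The boundary case $h = L-1$ that you flag is not actually an obstacle: $h(y)$ is defined as a minimum, so $h(y) \le \lceil \log(\dist(y,X)/\sqrt{\rho})\rceil$ always, giving $2^{h(y)} \le 2\dist(y,X)/\sqrt{\rho}$ uniformly, which the paper uses directly.
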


\begin{proof}
    If $y \in X$, then the proof is the same as \Cref{lemma:detour}.
    We thus assume $y \in Y \setminus X$.
    Recall that $h(y)$ is the level where $y$ is a leaf node.
    By Property \ref{it:ornaments} of \Cref{lemma:new_decomposition_for_general_metric}, $\{y, \proj{X}{y}\}$ is cut at level at most $h(y)$.
    Consider the following two cases.

    If $\ell > h(y)$, then at level $\ell + 1$ of $\modifydecom$, the three points $x, \proj{X}{y}, y$ are in the same cluster.
    Hence, $\ell$ is the highest level where $\{x, y\}$ is cut.
    By \Cref{lemma:detour_complicated},
    \begin{equation}\label{eqn:l_greater_than_h}
        \dportP(x, y) \leq \dist(x, y) + O(\sqrt{\rho}) 2^\ell.
    \end{equation}

    If $\ell \leq h(y)$, then at level $h(y) + 1$ of $\modifydecom$, the three points $x, \proj{X}{y}, y$ are in the same cluster.
    Hence, $h(y)$ is the highest level where $\{x, y\}$ is cut.
    By \Cref{lemma:detour_complicated},
    \begin{align}
        \dportP(x, y) 
        &\leq \dist(x, y) + O(\rho) 2^{h(y)} \notag \\
        &\leq \dist(x, y) + O(\rho) \frac{\dist(y, X)}{\sqrt{\rho}} && \text{By \Cref{lemma:new_decomposition_for_general_metric}} \notag \\
        &\leq (1 + O(\sqrt{\rho})) \dist(x, y). \label{eqn:l_smaller_than_h}
    \end{align}

    Combining \eqref{eqn:l_greater_than_h} with \eqref{eqn:l_smaller_than_h} completes the proof.
\end{proof}

\section{Facility Location with Low-dimensional Centers}
\label{sec:fl_bounded_centers}

The first problem we consider is facility location when the set $Y$ has doubling dimension at most $\ddim$.
Our main result in this section is the following.

\begin{theorem}\label{thm:FL_bounded_centers}
    There is a randomized algorithm that, given as input $\epsilon \in (0, \tfrac{1}{2})$, $n, m \in \NN$ and  $(X \cup Y, \dist)$ with $|X| = n, |Y| = m, \ddim(Y) \leq \ddim$, computes a $(1 + \epsilon)$-approximation of facility location in time 
    $2^{2^t} \cdot \tilde{O}(n + m)$ 
    with constant success probability, where
    \begin{equation*}
        t \in O\left(\ddim \log\frac{\ddim}{\epsilon}\right).
    \end{equation*}
\end{theorem}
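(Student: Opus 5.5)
We will follow the dynamic-programming framework of \cite{Cohen-AddadFS21}, combined with the proxy sets of \Cref{sec:dim_reduction}, as sketched in the technical overview.

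\textbf{Step 1: preprocessing.} First we compute a constant-factor approximation $S \subseteq Y$ for facility location in near-linear time. For example, a greedy or Mettu--Plaxton-type algorithm can be implemented using the ANN structure of \Cref{lemma:ANN_doubling_queries}. For each $x$ we then compute $\proj{S}{x}$ approximately and the proxy net $N_x$. By \Cref{lemma:packing}, $|N_x| \le \epsilon^{-O(\ddim)}$.

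Next we build the decomposition $\decom$ of \Cref{lemma:hierarchical_decomposition} on $Y$ with $\rho = \epsilon^{10}/\ddim^2$. We make the portal nets fine enough that they contain a suitable $\epsilon\dist(x,S)$-net of each revealed ball; equivalently, we choose $N_x$ among portals of the level $\approx \log(\dist(x,S)/\epsilon)$. This gives $N_x \subseteq P_{C(x)}$ for the lowest cluster $C(x)$ that contains $B_Y(\proj{S}{x},\dist(x,S)/\epsilon)$.

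\textbf{Step 2: structural bound on the expected error.} Fix the optimal solution $F^*$ and let $f^*(x)$ be its facility serving $x$. By \Cref{lemma:connection_cost_portal},
\[
\dporthatH(x,F^*) \le (1+10\epsilon)\dist(x,F^*) + 2\epsilon\dist(x,S) + O(\rho)\bigl(2^{\ell_x}+2^{j_x}\bigr).
\]
We take expectations using \eqref{eqn:cutting_probability}:
\[
\E[2^{j_x}] \le \sum_i 2^i \cdot \frac{c\,\ddim\,\dist(x,S)/\epsilon}{2^i}.
\]
This sum must be truncated, exactly as in \cite{Cohen-AddadFS21}, by declaring points whose cut level is much larger than $\log$ of their diameter to be ``badly cut'' and charging them separately. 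Without truncation it gives $O(\ddim\, L)\,\dist(x,S)/\epsilon$, which is too large.

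We follow their badly-cut technique. A point $x$ is \emph{badly cut} if $\{\proj{S}{x},f\}$ or the ball is cut at a level above $\log(\dist/\epsilon)+\log(\ddim/\epsilon)$-ish; this happens with probability $O(\epsilon)$. For badly-cut points we add $\proj{S}{x}$ to the instance, e.g.\ by forcing $S$-facilities open at zero cost adjustment, or by moving $x$ as in \cite{Cohen-AddadFS21}. This costs $O(\epsilon)\,\mathrm{cost}(S) = O(\epsilon)\opt$ in expectation. For all other points the additive error is $O(\rho \ddim/\epsilon)(\dist(x,F^*)+\dist(x,S)/\epsilon)$, which is $O(\epsilon)$ times the cost by the choice of $\rho$. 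Combining with the corollary above yields a solution of $\dporthatH$-cost at most $(1+O(\epsilon))\opt$ with constant probability.

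\textbf{Step 3: the DP.} The DP table is indexed by (cluster $C$, configuration $\veca_C$). Here $\veca_C$ records, for each portal of $C$, a rounded distance to the nearest open facility, both inside and outside $C$, as in \cite{Cohen-AddadFS21}. The value of an entry is the revealed cost $g(C,\veca_C)$. The update combines consistent configurations of the $2^{O(\ddim)}$ children and adds the newly revealed cost $\sum_{x:C(x)=C}\dporthatH(x,F)$. This cost is computable from $\veca_C$ because $N_x\subseteq P_C$.

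Since the number of portals is $\rho^{-O(\ddim)}=2^{t}$ with $t=O(\ddim\log(\ddim/\epsilon))$, the number of configurations is $2^{2^t}$. The total time is therefore $2^{2^t}\tilde O(n+m)$, after reducing the aspect ratio by standard splitting.

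\textbf{Main obstacle.} The hardest part is Step 2. The expectation bound on $2^{j_x}$ interacts with the proxy ball of radius $\dist(x,S)/\epsilon$, so the badly-cut argument must handle cutting of a whole ball rather than a pair of points. That is exactly why property \ref{item:new_decomposition_prop_3} is stated for arbitrary sets $T$. I would first try to prove that the portal rounding error in configurations stays within $O(\epsilon)$ by using rounding relative to $2^{\ell}$.
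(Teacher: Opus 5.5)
Your pipeline is the paper's: proxies drawn from $P_{C(x)}$, moving clients whose ball $B_Y(\proj{S}{x},\dist(x,S)/\epsilon)$ is badly cut, and a DP over portal configurations. But Step~2 never actually controls the cut level of the pair $\{\proj{S}{x},\proj{F}{x}\}$. You treat it as a per-client event (``$\{\proj{S}{x},f^*(x)\}$ is badly cut'') and repair it by moving $x$ or opening $\proj{S}{x}$.

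\emph{Moving does not help.} The event depends on the unknown $F^*$, so the algorithm cannot detect it. Even after moving $x$ to $s=\proj{S}{x}$, the pair $\{s,f^*(x)\}$ is still cut at an uncontrolled level, so $\dportH(s,F^*)$ still carries an error $O(\rho)2^{\ell}$ with $\ell$ as large as $L$.

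\emph{Opening $s$ cannot be charged.} Opening $s$ in the comparison solution does repair $x$, but opening costs are positive and paid once per facility. So $s$ must be opened as soon as \emph{any} of its clients has a badly cut pair. These are cut events for different sets $\{s,f\}$, one for each optimal facility $f$ used by clients of $s$, possibly at many scales. A union over them gives no $O(\epsilon)$ bound on $\Pr[s\text{ opened}]$, so your charge of $O(\epsilon)\,\mathrm{cost}(S)$ is unjustified.

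\textbf{The missing idea} is a single bad event per facility of $S$. Call $s\in S$ bad if $B_Y(s,10\,\dist(s,F^*))$ is badly cut, and compare against $F=F^*\cup\Bad_S$. Its extra opening cost has expectation $\sum_{s\in S}\ocost(s)\Pr[s\in\Bad_S]=O(\epsilon)\floc(X,S)$. One ball then serves all clients $x$ of $s$:
\begin{itemize}
\item If $\dist(x,F^*)\le\dist(x,S)/(2\epsilon)$, then $\proj{F}{x}$ lies in the client's own, not badly cut, ball.
\item Otherwise $\dist(x,F^*)\le\dist(s,F^*)/(1-2\epsilon)$, which puts $\proj{F}{x}$ in $B_Y(s,10\,\dist(s,F^*))$.
\item If $s$ is bad, then $s\in F$ and $\proj{F}{x}$ is within $2\dist(x,S)$ of $s$.
\item A client moved to $s$ is covered by the same ball, or sits on $s\in F$.
\end{itemize}
This is the four-case argument of \Cref{lemma:fl_bounded_centers_good_solution}. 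It is the real obstacle; the cutting of a whole ball, which you flag, is already handled by \eqref{eqn:cutting_probability}.

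\textbf{A second gap} concerns $S$. Because the loss is $O(\epsilon)\floc(X,S)$, you need $S$ to be an $O(1)$-approximation. The paper only has a $2^{O(\ddim)}$-approximation in near-linear time, and rescaling $\epsilon$ by $2^{-O(\ddim)}$ would put a $\ddim^2$ term into $t$. It therefore bootstraps: it reruns with $S=\widehat F_i$ for $O(\ddim)$ rounds, using $\alpha_{i+1}\le 1+\epsilon+2\epsilon\alpha_i$. Your greedy/Mettu--Plaxton suggestion is not shown to run in near-linear time. You must either justify a genuine $O(1)$-approximation or add this bootstrap.
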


\subsection{Structural Lemmas}

Our algorithm is based on our dimension reduction techniques in \Cref{sec:dim_reduction}.
Specifically, every client $x \in X$ will be represented by a proxy set $N_x$, which is an $\epsilon \dist(x, S)$-net on  $B_Y(\proj{S}{x}, \dist(x, S)/\epsilon)$, 
where $S \subseteq Y$ will be chosen as a constant approximate solution for facility location. 
Moreover, we integrate our dimension reduction with portal-respecting distances.
For a facility set $F \subseteq Y$, we use $\dporthatH(x, F)$ defined in \Cref{def:connection_cost_portal} to replace the metric distance $\dist(x, F)$, where the scaling parameter $\rho$ of portals is set to $\rho = \epsilon^{10}/\ddim^2$.

Recall in \Cref{lemma:connection_cost_portal}, the error incurred by replacing $\dist(x, F)$ with $\dporthatH(x, F)$ depends on the highest level where the two sets $B_Y(\proj{S}{x}, \dist(x, S)/\epsilon)$ and $\{\proj{S}{x}, \proj{F}{x}\}$ are cut.
For the first set, we show in \Cref{lemma:fl_bounded_centers_new_instance} that we can upper bound the cutting level by preprocessing the dataset.
For the second set, we show in \Cref{lemma:fl_bounded_centers_good_solution} the existence of a good solution $F \subseteq Y$, such that $\{\proj{S}{x}, \proj{F}{x}\}$ is cut at a bounded level.
Both of these structural results rely on the notion of badly-cut sets and bad points, which are first introduced in~\cite{Cohen-AddadFS21}.
We further adapt these notions to our setting.

\begin{definition}[Badly cut]\label{def:badly_cut_old}
Let $\decom$ be a random hierarchical decomposition as in \Cref{lemma:hierarchical_decomposition}.
Say a set $T \subseteq Y$ is badly cut w.r.t. $\decom$ if $T$ is cut at level $\log \diam(T) + \epsddim$ on $\decom$.
\end{definition}

The next lemma follows immediately from Equation \eqref{eqn:cutting_probability} of Lemma \ref{lemma:hierarchical_decomposition}, which bounds the probability of a set being badly cut.

\begin{lemma}\label{lemma:badly_cut}
    For every $T \subseteq Y$, 
    \begin{align*}
        \Pr[T \text{ is badly cut}]  \le c \cdot \epsilon, 
    \end{align*}
    where $c$ is the same constant as in \eqref{eqn:cutting_probability}.
\end{lemma}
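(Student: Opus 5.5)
The plan is to apply the cutting-probability bound \eqref{eqn:cutting_probability} from \Cref{lemma:hierarchical_decomposition} directly at the level appearing in \Cref{def:badly_cut_old}. Set $\ell^* := \log \diam(T) + \epsddim$, so that $2^{\ell^*} = \diam(T)\cdot \ddim/\epsilon$. The set $T \subseteq Y$ lives in the doubling space $Y$ on which $\decom$ is built, so property \ref{item:new_decomposition_prop_3} applies to it verbatim.

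Plugging in gives
\[
\Pr[T \text{ is badly cut}] = \Pr[T \text{ is cut at level } \ell^*] \le \frac{c\cdot \ddim \cdot \diam(T)}{2^{\ell^*}} = \frac{c\cdot\ddim\cdot\diam(T)\cdot\epsilon}{\diam(T)\cdot\ddim} = c\cdot\epsilon .
\]
This is exactly the claimed bound.

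Only two minor points need attention, and neither is a real obstacle. First, $\ell^*$ need not be an integer. I would interpret ``cut at level $\ell^*$'' as cut at level $\lceil \ell^* \rceil$; since $2^{\lceil\ell^*\rceil} \ge 2^{\ell^*}$, the bound only improves. (If the intended convention is the floor, one loses a factor $2$, which is absorbed into the constant. The statement's use of ``the same constant $c$'' suggests the ceiling convention.) Second, the degenerate case $\diam(T)=0$, i.e.\ $|T|\le 1$, is trivial: such a $T$ is never cut, so the probability is $0$. Beyond this, the statement is immediate from \eqref{eqn:cutting_probability}.
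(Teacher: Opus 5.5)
Your proof is correct and is exactly the paper's argument: the paper simply notes that the lemma follows immediately from \eqref{eqn:cutting_probability} evaluated at level $\log \diam(T) + \log(\ddim/\epsilon)$, where the bound becomes $c\cdot\ddim\cdot\diam(T)\cdot\epsilon/(\ddim\cdot\diam(T)) = c\epsilon$. Your remarks on rounding the level and on the degenerate case $|T|\le 1$ are fine. Also, since the decomposition is nested, being cut at a level implies being cut at every lower level, so it makes no difference whether ``badly cut'' is read as cut at that level or at that level or above.
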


\begin{definition}[Bad points]\label{def:bad_points_bounded_centers}
Consider a metric space $(X \cup Y, \dist)$ and let $\decom$ be the decomposition for $Y$ in \Cref{lemma:hierarchical_decomposition}.
Let $F^*, S \subseteq Y$.
\begin{enumerate}[label=(\arabic*)]
    \item A point $x \in X$ is called a \emph{bad client} (w.r.t. $S$ and $\decom$) if $B_Y(\proj{S}{x}, \dist(x,S)/\epsilon)$ is badly cut w.r.t. $\decom$.
    Denote the set of bad clients as $\Bad_X \subseteq X$.
    \item A point $y \in S$ is called a \emph{bad facility} (w.r.t. $F^*$ and $\decom$) if $B_Y(y, 10 \dist(y, F^*))$ is badly cut. 
    Denote the set of bad facilities as $\Bad_S \subseteq S$.
\end{enumerate}
\end{definition}

In our subsequent analysis, we instantiate $F^* \subseteq Y$ as an optimal solution for facility location and $S \subseteq Y$ as a constant approximate solution,
and define bad clients $\Bad_X \subseteq X$ and bad facilities $\Bad_S \subseteq S$.

\Cref{def:bad_points_bounded_centers} is similar to~\cite[Definition 10]{Cohen-AddadFS21} expect for one major difference.
In~\cite{Cohen-AddadFS21}, bad clients are defined w.r.t. a ball around $x$ (i.e., $B(x, \dist(x, S)/\epsilon)$).
However, since in our setting $x$ does not necessarily lie in the doubling subset, we change the center of the ball to $\proj{S}{x}$ which is in $Y$.
This new definition also integrates well with the proxy set of $x$ defined in \Cref{sec:dim_reduction}.

As mentioned earlier, to bound the error $\dporthatH(x, F) - \dist(x, F)$ we need to bound the cutting level of $B_Y(\proj{S}{x}, \dist(x, S)/\epsilon)$ for every $x \in X$.
By \Cref{def:bad_points_bounded_centers}, this level is at most $\log(\dist(x, S)/\epsilon) + \epsddim$ if $x$ is not a bad client.
It then remains to handle the bad clients, and the idea is to ``eliminate'' them by moving every $x \in \Bad_X$ to $\proj{S}{x}$.
This movement creates a new dataset $X'$.
The next lemma shows that such movement changes any clustering cost by at most $\epsilon \floc(X, S)$.
This is essentially the same as~\cite[Lemma 12]{Cohen-AddadFS21}.
We provide the proof below for completeness.

\begin{lemma}[New instance]\label{lemma:fl_bounded_centers_new_instance}
    Given $(X \cup Y, \dist)$, with $\ddim(Y) \leq \ddim$, and a solution $S \subseteq Y$ for facility location, 
    construct a new 
    (multi-)set of clients $X' \subseteq X \cup Y$ as $X' := \phi(X)$ for 
    \begin{align*}
        \phi(x) := \begin{cases}
            x, &\text{if } x \notin \Bad_X; \\
            \proj{S}{x}, & \text{if } x \in \Bad_X,
        \end{cases}
    \end{align*} 
    namely, $X'$ is constructed from $X$ by moving every bad client $x$ to $\proj{S}{x}$.
    Then with probability $0.99$,
    \begin{align*}
        \forall F \subseteq Y, \qquad \Big|\floc(X, F) - \floc(X', F)\Big|
        \leq \epsilon \floc(X, S).
    \end{align*}
\end{lemma}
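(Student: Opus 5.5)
By the triangle inequality, moving one client changes its connection cost to any $F$ by at most the distance moved. Opening costs depend only on $F$ and are unchanged. So for every $F\subseteq Y$,
\[
\Big|\floc(X,F)-\floc(X',F)\Big| \le \sum_{x\in X}\dist(x,\phi(x)) = \sum_{x\in\Bad_X}\dist(x,S).
\]
The right-hand side does not depend on $F$. Hence it suffices to show that, with probability $0.99$ over the random decomposition $\decom$,
\[
\sum_{x\in \Bad_X}\dist(x,S)\le \epsilon\,\floc(X,S).
\]
This gives the statement simultaneously for all $F$, so no union bound over the exponentially many $F$ is needed.

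To bound this sum I will compute its expectation. For each $x$, the set $T_x=B_Y(\proj{S}{x},\dist(x,S)/\epsilon)$ is a fixed subset of $Y$, determined independently of $\decom$. Therefore Lemma \ref{lemma:badly_cut} applies and gives $\Pr[x\in\Bad_X]\le c\,\epsilon'$, where $\epsilon'$ is the parameter used in the definition of badly cut. By linearity,
\[
\E\Big[\sum_{x\in\Bad_X}\dist(x,S)\Big]\le c\,\epsilon'\sum_x\dist(x,S)\le c\,\epsilon'\floc(X,S).
\]
Markov's inequality then yields, with probability at least $0.99$,
\[
\sum_{x\in\Bad_X}\dist(x,S)\le 100\,c\,\epsilon'\floc(X,S).
\]

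The main obstacle is that this gives $100c\,\epsilon'$ rather than $\epsilon$, so the constants must be matched. I would handle this as \cite{Cohen-AddadFS21} do, by defining badly cut with the threshold $\log\diam(T)+\log(\ddim/\epsilon')$ for $\epsilon'=\epsilon/(100c)$. Then Lemma \ref{lemma:badly_cut} gives probability $\le\epsilon/100$, and Markov produces exactly $\epsilon\floc(X,S)$. Equivalently, one can rescale $\epsilon$ by a constant, which only changes constants hidden in $\epsddim$ and hence in $t$.

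One degenerate case needs a remark. If $\dist(x,S)=0$, then $x=\proj{S}{x}$ and $\phi(x)=x$ trivially, so such clients contribute nothing. If $X'$ is a multiset, the cost is simply summed with multiplicity, and the triangle inequality argument is unaffected.
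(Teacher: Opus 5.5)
Your proof is correct and follows the paper's argument. The paper also bounds $|\floc(X,F)-\floc(X',F)|$ for every $F$ by the total movement $\sum_{x\in\Bad_X}\dist(x,S)$, bounds its expectation by $O(\epsilon)\floc(X,S)$ via \Cref{lemma:badly_cut}, and applies Markov's inequality once. Your explicit rescaling of $\epsilon$ to absorb the factor $100c$ makes precise what the paper does implicitly when it replaces $O(\epsilon)$ by $\epsilon$.
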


\begin{proof}
    By construction, 
    for every $x \in X$, either $x$ is not a bad client or $x = \proj{S}{x}$.
    To bound the change in clustering cost, 
    note that 
    \begin{align*}
        \E \left[\sum_{x \in X} \dist(x, \phi(x)) \right] 
        &= \sum_{x \in X} \E[\dist(x, \phi(x))] \\
        & = \sum_{x \in X} \dist(x, \proj{S}{x}) \cdot \Pr[x \text{ is a bad client}] \\
        & \leq O(\epsilon) \sum_{x \in X} \dist(x, S) && \text{By \Cref{lemma:badly_cut}}\\
        & \leq O(\epsilon) \floc(X, S)
    \end{align*}
    By Markov's inequality, with probability $0.99$, 
    \[
    \sum_{x \in X} \dist(x, \phi(x)) \leq O(\epsilon) \floc(X, S).
    \]

    Conditioning on this, consider an arbitrary 
    facility set $F \subseteq Y$.
    We have
    \begin{align*}
        \Big|\floc(X, F) - \floc(X', F)\Big|
        &= \left|\sum_{x \in X} (\dist(x, F) - \dist(\phi(x), F))\right| \\
        &\leq \sum_{x \in X} |\dist(x, F) - \dist(\phi(x), F)| \\
        &\leq \sum_{x \in X} \dist(x, \phi(x)) \\
        &\leq \epsilon \floc(X, S).
    \end{align*}
\end{proof}

We remark that, if $S$ is a constant approximation, then we have \[\floc(X', F) \in (1 \pm O(\epsilon)) \floc(X, F)\] for arbitrary solution $F \subseteq Y$ with probability $0.99$.

In the following lemma, we show the existence of a facility set $F \subseteq Y$, such that the portal-respecting cost of $X$ w.r.t. $F$ is bounded by $(1 + \epsilon) \optFL(X, Y) + \epsilon \floc(X, S)$.
This allows us to compute the solution with minimum portal-respecting cost in our final algorithm.
The lemma can be viewed as an analogy to~\cite[Lemmas 13 and 14]{Cohen-AddadFS21},
with an essential difference that we use the portal-respecting distance defined in \Cref{def:connection_cost_portal}, which utilizes our dimension reduction and proxy sets for clients.
The validity of such replacement is mainly based on \Cref{lemma:connection_cost_portal}. 
It also makes the proof more challenging.

\begin{lemma}[Good portal-respecting solution]\label{lemma:fl_bounded_centers_good_solution}
    Let $(X \cup Y, \dist)$ with $\ddim(Y) \leq \ddim$ be a facility location instance, $S \subseteq Y$ be a solution and 
    $\decom$ be the hierarchical decomposition on $Y$ in \Cref{lemma:hierarchical_decomposition} with portal parameter $\rho = \epsilon^{10}/\ddim^2$.
    Let $X' \subseteq X$ be the new client set constructed by \Cref{lemma:fl_bounded_centers_new_instance}.
    For $x \in X'$ and $F \subseteq Y$, define the portal-respecting connection cost of $x$ as
    \begin{align*}
        \ccostport(x, F) := \begin{cases}
            \dportH(x, F), & \text{if } x = \proj{S}{x}; \\
            \dporthatH(x, F), &\text{if } x \neq \proj{S}{x}.
        \end{cases}
    \end{align*}
    Then with probability $0.9$, there exists a solution $F \subseteq Y$, such that 
    \begin{equation}\label{eqn:fl_bounded_centers_good_solution}
        \sum_{x \in X'} \ccostport(x, F) + \ocost(F) \leq (1 + \epsilon) \optFL(X, Y)
        + \epsilon \floc(X, S).
    \end{equation}
\end{lemma}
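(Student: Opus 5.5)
Following the template of \cite[Lemmas 13 and 14]{Cohen-AddadFS21}, I construct $F$ from the optimal solution $F^*$ by adding the bad facilities of $S$: $F := F^* \cup \Bad_S$, with $\Bad_S$ defined w.r.t.\ $F^*$ and $\decom$ as in \Cref{def:bad_points_bounded_centers}. The opening cost is $\ocost(F^*) + \ocost(\Bad_S)$. By \Cref{lemma:badly_cut} each $s \in S$ is bad with probability at most $c\epsilon$, so $\E[\ocost(\Bad_S)] \le c\epsilon\, \ocost(S) \le c\epsilon \floc(X,S)$. The remaining task is to bound $\E\bigl[\sum_{x\in X'} \ccostport(x,F)\bigr]$ by $(1+O(\epsilon))\sum_x \dist(x,F^*) + O(\epsilon)\floc(X,S)$. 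After rescaling $\epsilon$, Markov's inequality and a union bound with the $0.99$ event of \Cref{lemma:fl_bounded_centers_new_instance} then give probability $0.9$.

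For a client $x \in X'$, I would route it through $s := \proj{S}{x}$ and then to a facility of $F$. There are two cases.
\begin{itemize}
\item If $s \in \Bad_S$, then $s \in F$, so $x$ can connect to $s$. Since $\{\proj{S}{x}, s\}$ is a single point, it is never cut. For $x \neq \proj{S}{x}$, the client $x$ is not bad in $X'$ (bad clients were moved), so by \Cref{lemma:connection_cost_portal} the cut level $j$ of $B_Y(s,\dist(x,S)/\epsilon)$ satisfies $2^j \le \dist(x,S)\ddim/\epsilon^2$. Hence the portal error is $O(\rho)\dist(x,S)\ddim/\epsilon^2 \le O(\epsilon)\dist(x,S)$. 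The resulting cost is $(1+O(\epsilon))\dist(x,S)$. This is where a subtlety arises: it is charged against $\dist(x,S)$, not $\dist(x,F^*)$. I would handle it as in \cite{Cohen-AddadFS21}: with $f = \proj{F^*}{s}$, $\dist(x,s) \le \dist(x,f)+\dist(s,f)$, and $\dist(s,f) \le 2\dist(x,S)+\dist(x,F^*)$ type bounds. Since $s$ is bad only with probability $c\epsilon$ independently of the rest, the expected extra cost over $\dist(x,F^*)$ is $O(\epsilon)(\dist(x,S)+\dist(x,F^*))$.
\item If $s$ is good, I connect $x$ to $f^* = \proj{F^*}{x}$. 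By \Cref{lemma:connection_cost_portal}, the error is $O(\epsilon)\dist(x,f^*) + 2\epsilon\dist(x,S) + O(\rho)(2^\ell+2^j)$, where $\ell$ is the cut level of $\{s,f^*\}$. The term $2^j$ is handled as above. For $2^\ell$: if $\dist(s,f^*) \le 10\dist(s,F^*)$, then $\{s,f^*\} \subseteq B_Y(s,10\dist(s,F^*))$, which is not badly cut. This gives $2^\ell \le 20\dist(s,F^*)\ddim/\epsilon$, and $\dist(s,F^*) \le \dist(x,S)+\dist(x,F^*)$, so $O(\rho)2^\ell = O(\epsilon)(\dist(x,S)+\dist(x,F^*))$ using $\rho=\epsilon^{10}/\ddim^2$. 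Otherwise $\dist(s,f^*)$ is much larger than $\dist(s,F^*)$. I then instead connect to $f' = \proj{F^*}{s}$, which lies in that ball, and use $\dist(x,f') \le \dist(x,s)+\dist(s,F^*) \le 2\dist(x,S)+\dist(x,F^*)$. I expect this sub-case to need the most care to keep the loss at $O(\epsilon)$. The fix I would try first is to choose between $f^*$ and $f'$ according to whether $\dist(x,S) \le \epsilon\dist(x,F^*)$. In the case $\dist(x,S) > \epsilon\dist(x,F^*)$, the large-level error $O(\rho)2^\ell \le O(\rho)\cdot 2\dist(s,f^*)\ddim/\epsilon$ is only bounded in expectation via \eqref{eqn:cutting_probability}, summing $\Pr[\text{cut at }\ell]\cdot\rho 2^\ell$ over $\ell$ up to $\log\Delta$, which gives $O(\rho\,\ddim\log\Delta)\dist(s,f^*)$. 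I would replace that by truncating at level $\log(\dist(s,f^*)\ddim/\epsilon)$ using the good ball.
\end{itemize}

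For moved clients $x=\proj{S}{x}$, we have $x \in Y$, so \Cref{lemma:detour} applies directly to $\{x,f\}$ with the same ball argument. Here $\dist(x,S)=0$, so no loss occurs. Summing over clients, the total is $(1+O(\epsilon))\optFL$ in connection cost plus $O(\epsilon)\floc(X,S)$. Together with the moving cost from \Cref{lemma:fl_bounded_centers_new_instance}, which relates $X'$ back to $X$, this yields \eqref{eqn:fl_bounded_centers_good_solution}.
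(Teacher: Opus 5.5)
Your overall architecture matches the paper: $F=F^*\cup\Bad_S$, a case split on whether $x$ was moved and whether $s=\proj{S}{x}$ is bad, \Cref{lemma:connection_cost_portal} and \Cref{lemma:detour} for the per-client error, and Markov for $\ocost(\Bad_S)$. The gap is the sub-case you flag as delicate, and your proposal does not close it: $x$ not moved, $s$ not bad, and $\dist(s,f^*)>10\,\dist(s,F^*)$ for $f^*=\proj{F^*}{x}$.
\begin{itemize}
\item Routing to $f'=\proj{F^*}{s}$ costs an extra $2\dist(x,S)$. That is $O(\epsilon)$-small only when $\dist(x,S)\le\epsilon\dist(x,F^*)$.
\item In the complementary case, summing $\Pr[\text{cut at }\ell]\cdot\rho 2^\ell$ over levels gives the unacceptable $O(\rho\,\ddim\log\Delta)$ factor, as you note yourself.
\item The proposed ``truncation at level $\log(\dist(s,f^*)\ddim/\epsilon)$ using the good ball'' has no justification. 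The only sets whose cut level is controlled are the client ball $B_Y(s,\dist(x,S)/\epsilon)$, because $x$ was not moved, and the facility ball $B_Y(s,10\dist(s,F^*))$, because $s$ is not bad.
\item Nothing controls a ball of radius $\dist(s,f^*)$ around $s$. The pair $\{s,f^*\}$ itself can be badly cut with probability up to $c\epsilon$, with no bound on the resulting portal error.
\end{itemize}

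The missing observation is that $f^*$ always lies in one of the two controlled balls, so neither the detour through $f'$ nor any expectation over cut levels is needed. In your ``otherwise'' branch, $\dist(s,f^*)\le\dist(x,S)+\dist(x,F^*)$ and $\dist(s,F^*)\ge\dist(x,F^*)-\dist(x,S)$ together force $\dist(x,F^*)<\tfrac{11}{9}\dist(x,S)$. Hence $\dist(s,f^*)<\tfrac{20}{9}\dist(x,S)\le\dist(x,S)/\epsilon$ for $\epsilon\le 9/20$. So $f^*$ lies in the client ball, the cut level of $\{s,f^*\}$ is at most $j$, and the error is $O(\rho)2^j=O(\epsilon)\dist(x,S)$ deterministically.

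The paper instead splits at $\dist(x,F^*)\le\dist(x,S)/(2\epsilon)$. Below that threshold $\proj{F}{x}$ lies in the client ball. Above it, $\dist(s,\proj{F}{x})\le(1+2\epsilon)\dist(x,F^*)\le 10\dist(s,F^*)$, so $\proj{F}{x}$ lies in the facility ball.

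Your $s\in\Bad_S$ case also does not need probabilistic charging to $\dist(x,S)$. Compare with $\proj{F}{x}$ rather than with $s$: then $\dist(s,\proj{F}{x})\le 2\dist(x,S)$, so $\proj{F}{x}$ lies in the client ball and the cost is $(1+O(\epsilon))\dist(x,F)\le(1+O(\epsilon))\dist(x,F^*)$ deterministically. Your assertion that $s$ is bad ``independently of the rest'' is false, since $s$ and $x$ are judged against the same decomposition $\decom$. Linearity of expectation with casewise deterministic bounds does not need independence, though, so this slip is harmless.
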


Let us briefly explain the definition of $\ccostport(x, F)$.
The case $x \neq \proj{S}{x}$ implies that $x$ is \emph{originally} not a bad client in $X$.
Then the set $B_Y(\proj{S}{x}, \dist(x, S)/\epsilon)$ is cut at a bounded level of $\decom$.
We can therefore use $\dporthatH(x, F)$ for approximation, and by \Cref{lemma:connection_cost_portal}, it suffices to only bound the highest level where $\{\proj{S}{x}, \proj{F}{x}\}$ is cut.
However, if $x$ is originally a bad client (which corresponds to $x = \proj{S}{x}$ in \Cref{lemma:fl_bounded_centers_good_solution}), then we can no longer use $\dporthatH(x, F)$ for approximation, because we now have no control on the highest level where $B_Y(\proj{S}{x}, \dist(x, S)/\epsilon)$ is cut.
Fortunately, in this case $x$ and $\proj{S}{x}$ are the same in $X'$, so we can directly use the original portal-respecting distance $\dportH(x, F)$ for approximation.
By \Cref{lemma:detour}, we still only need to bound the highest level where $\{\proj{S}{x}, \proj{F}{x}\}$ is cut.

\begin{proof}[Proof \Cref{lemma:fl_bounded_centers_good_solution}]
    Recall that a facility $f \in S$ is called a bad facility if $B_Y(f, 10 \dist(f, F^*))$ is badly cut.
    Let $\Bad_S \subseteq S$ be the set of bad facilities in $S$.
    Define 
    \[F := F^* \cup \Bad_S.\]
    We show that the facility set $F$ satisfies \eqref{eqn:fl_bounded_centers_good_solution}.
    Recall that $X' = \phi(X)$ by \Cref{lemma:fl_bounded_centers_new_instance}, thus we can rewrite the LHS of \eqref{eqn:fl_bounded_centers_good_solution} as 
    \[
    \sum_{x \in X} \ccostport(\phi(x), F) + \ocost(F).
    \]

    We first bound for each $x \in X$ the highest level where $\{\proj{S}{\phi(x)}, \proj{F}{\phi(x)}\}$ is cut.
    Specifically, 
    let $\ell(x)$ be the highest level where $\{\proj{S}{\phi(x)}, \proj{F}{\phi(x)}\}$ is cut;
    we show that 
    \[\ell(x) \leq \log(\dist(\phi(x), S)/\epsilon + 10 \dist(\phi(x), F^*)) + \epsddim.
    \]
    Consider the following cases.

    \paragraph{Case 1: $x$ is a bad client, and $\proj{S}{x}$ is a bad facility.}
    By the construction of $X'$, $\phi(x) = \proj{S}{x}$. 
    Since $\proj{S}{x}$ is a bad facility, it is in $\Bad_S$, and thus in $F$.
    Therefore, $\proj{S}{\phi(x)} = \phi(x) = \proj{F}{\phi(x)}$, and $\{\proj{S}{\phi(x)}, \proj{F}{\phi(x)}\}$ is never cut.

    \paragraph{Case 2: $x$ is a bad client, and $\proj{S}{x}$ is not a bad facility.}
    By the construction of $X'$, $\phi(x) = \proj{S}{x}$. 
    Since $\proj{S}{x}$ is not a bad facility, the ball 
    \[B_Y(\proj{S}{x}, 10 \dist(\proj{S}{x}, F^*)) = B_Y(\phi(x), 10 \dist(\phi(x), F^*))\] 
    is cut at level at most $\log(10 \dist(\phi(x), F^*)) + \epsddim$.
    Note that $\proj{F}{\phi(x)} \in B_Y(\phi(x), 10 \dist(\phi(x), F^*))$, and $\proj{S}{\phi(x)} = \phi(x)$.
    Hence, $\{\proj{S}{\phi(x)}, \proj{F}{\phi(x)}\}$ is cut at level at most $\ell(x) \leq \log(10 \dist(\phi(x), F^*)) + \epsddim$.

    \paragraph{Case 3: $x$ is not a bad client, and $\proj{S}{x}$ is a bad facility.}
    In this case $\phi(x) = x$, and the ball $B_Y(\proj{S}{x}, \dist(x,S) / \epsilon)$ is cut at level at most $\log(\dist(x,S)/\epsilon) + \epsddim$.
    On the other hand, since $\proj{S}{x}$ is a bad facility, we have $\proj{S}{x} \in F$.
    Therefore, 
    \begin{align*}
        \dist(\proj{S}{x}, \proj{F}{x}) \leq \dist(x, \proj{S}{x}) + \dist(x, \proj{F}{x}) \leq 2 \dist(x,S),
    \end{align*}
    which implies $\proj{F}{x} \in B_Y(\proj{S}{x}, \dist(x,S) / \epsilon)$.
    Therefore, $\{\proj{S}{x}, \proj{F}{x}\}$ is cut at level at most $\ell(x) \leq \log(\dist(x,S)/\epsilon) + \epsddim = \log(\dist(\phi(x), S)/\epsilon) + \epsddim$.

    \paragraph{Case 4: $x$ is not a bad client, and $\proj{S}{x}$ is not a bad facility.}
    In this case $\phi(x) = x$, and the ball $B_Y(\proj{S}{x}, \dist(x,S) / \epsilon)$ is cut at level at most $\log(\dist(x,S)/\epsilon) + \epsddim$, 
    and the ball $B_Y(\proj{S}{x}, 10 \dist(\proj{S}{x}, F^*))$ is cut at level at most $\log(10 \dist(\proj{S}{x}, F^*)) + \epsddim$.
    Consider the following sub-cases.

    If $\dist(x, F^*) \leq \dist(x,S)/(2\epsilon)$, then 
    \begin{align*}
        \dist(\proj{S}{x}, \proj{F}{x}) \leq \dist(x, \proj{S}{x}) + \dist(x, \proj{F}{x}) 
        \leq \dist(x,S) + \dist(x, F^*) 
        \leq \dist(x,S)/\epsilon.
    \end{align*}
    Hence, $\proj{F}{x} \in B_Y(\proj{S}{x}, \dist(x,S) / \epsilon)$. Thus $\{\proj{S}{x}, \proj{F}{x}\}$ is cut at level at most $\log(\dist(x,S)/\epsilon) + \epsddim = \log(\dist(\phi(x),S)/\epsilon) + \epsddim$.

    If $\dist(x, F^*) > \dist(x,S)/(2\epsilon)$, first note that
    \begin{align*}
        \dist(x, F^*) \leq \dist(\proj{S}{x}, F^*) + \dist(x, \proj{S}{x})
        = \dist(\proj{S}{x}, F^*) + \dist(x,S)
        \leq \dist(\proj{S}{x}, F^*) + 2\epsilon \dist(x, F^*),
    \end{align*}
    which implies $\dist(x, F^*) \leq \tfrac{1}{1 - 2 \epsilon} \dist(\proj{S}{x}, F^*) \leq 2 \dist(\proj{S}{x}, F^*)$.
    Then 
    \begin{align*}
        \dist(\proj{S}{x}, \proj{F}{x}) 
        &\leq \dist(x, \proj{S}{x}) + \dist(x, \proj{F}{x}) \\
        &\leq \dist(x,S) + \dist(x, F^*) \\
        &\leq (1 + 2 \epsilon) \dist(x, F^*) && \text{Since $\dist(x,S) < 2\epsilon \dist(x, F^*)$} \\
        &\leq 2 (1 + 2 \epsilon) \dist(\proj{S}{x}, F^*).
    \end{align*}
    This implies $\proj{F}{x} \in B_Y(\proj{S}{x}, 10 \dist(\proj{S}{x}, F^*))$.
    Therefore, $\{\proj{S}{x}, \proj{F}{x}\}$ is cut at level at most $\log(10 \dist(\proj{S}{x}, F^*)) + \epsddim \leq \log(10 \dist(\phi(x), F^*) + 10 \dist(\phi(x), S)) + \epsddim$.

    \medskip

    In conclusion, $\{\proj{S}{\phi(x)}, \proj{F}{\phi(x)}\}$ is cut at level at most $\ell(x) \leq \log(\dist(\phi(x),S)/\epsilon + 10 \dist(\phi(x), F^*)) + \epsddim$.

    To further prove \eqref{eqn:fl_bounded_centers_good_solution}, we first show that $\floc(X, F) \leq \optFL(X, Y) + O(\epsilon) \floc(X, S)$ with probability $0.99$.
    First note that 
    \begin{align*}
        \floc(X, F) 
        &= \sum_{x \in X} \dist(x, F) + \ocost(F) \\
        & \leq \sum_{x \in X} \dist(x, F^*) + \ocost(F^*) + \sum_{f \in \Bad_S} \ocost(f) \\
        &= \optFL(X, Y) + \sum_{f \in \Bad_S} \ocost(f).
    \end{align*}
    Then 
    \begin{align*}
        0 \leq \floc(X, F) - \optFL(X, Y) \leq \sum_{f \in \Bad_S} \ocost(f).
    \end{align*}
    Taking expectation, we have
    \begin{align*}
        \E[\floc(X, F) - \optFL(X, Y)]
        &\leq \E\left[\sum_{f \in \Bad_S} \ocost(f)\right] \\
        &= \sum_{f \in S} \ocost(f) \cdot \Pr[f \text{ is a bad facility}] \\
        &\leq O(\epsilon) \sum_{f \in S} \ocost(f) && \text{By \Cref{lemma:badly_cut}} \\
        & \leq O(\epsilon) \floc(X, S).
    \end{align*}
    Applying Markov's inequality to the (non-negative) random variable $\floc(X, F) - \optFL(X, Y)$, with probability $0.99$, $\floc(X, F) \leq \optFL(X, Y) + O(\epsilon) \floc(X, S)$.

    Next we bound $\ccostport(\phi(x), F)$ for every $x \in X$.
    If $\phi(x) = \proj{S}{\phi(x)}$, since $\ell(x)$ is the highest level where $\{\proj{S}{\phi(x)}, \proj{F}{\phi(x)}\}$ is cut, 
    by \Cref{lemma:detour} we have 
    \[\ccostport(\phi(x), F) = \dportH(\phi(x), F) \leq \dist(\phi(x), F) + O(\rho) 2^{\ell(x)}.\]
    If $\phi(x) \neq \proj{S}{\phi(x)}$, then $x$ is not a bad client.
    By \Cref{lemma:connection_cost_portal}, we have  
    \begin{align*}
        \ccostport(\phi(x), F) 
        &= \dporthatH(\phi(x), F) 
    \leq \dporthatH(\phi(x), \proj{F}{\phi(x)}) \\
    &\leq (1 + O(\epsilon))\dist(\phi(x), F) + O(\rho) 2^{\ell(x)} + O(\rho) \expepsddim \cdot \frac{\dist(\phi(x), S)}{\epsilon}.
    \end{align*}
   For $\rho = \epsilon^{10}/\ddim^2$ we can bound the LHS of \eqref{eqn:fl_bounded_centers_good_solution} as 
    \begin{align*}
        &\qquad \sum_{x \in X} \ccostport(\phi(x), F) + \ocost(F) \\
        &\leq \sum_{\substack{x \in X \\ \phi(x) = \proj{S}{\phi(x)}}} (
            \dist(\phi(x), F) + O(\rho) 2^{\ell(x)}
        ) \\
        & \qquad + \sum_{\substack{x \in X \\ \phi(x) \neq \proj{S}{\phi(x)}}} 
        \left(
            (1 + O(\epsilon)) \dist(\phi(x), F) + O(\rho) 2^{\ell(x)} + O(\rho) \frac{\ddim \dist(\phi(x), S)}{\epsilon^2}
        \right)
        + \ocost(F)  \\
        & \leq (1 + O(\epsilon)) \floc(\phi(X), F) + O(\rho) \expepsddim \sum_{x \in X} \left(
            \frac{2 \dist(\phi(x), S)}{\epsilon} + 10 \dist(\phi(x), F^*) 
        \right)\\
        & \leq (1 + O(\epsilon)) \floc(X', F) + 
        O(\epsilon) \floc(X', S) 
        + O(\epsilon) \floc(X', F^*)  \\
        & \leq (1 + O(\epsilon)) \floc(X, F) + 
        O(\epsilon) \floc(X, S) 
        + O(\epsilon) \floc(X, F^*) + O(\epsilon) \floc(X, S) &&\text{\Cref{lemma:fl_bounded_centers_new_instance}} \\
        &\leq (1 + O(\epsilon)) \floc(X, F) + O(\epsilon) \floc(X, S) \\
        & \leq (1 + O(\epsilon)) \optFL(X, Y) + O(\epsilon) \floc(X, S) &&\text{w.p. } 0.99.
    \end{align*}

    We conclude that with probability $0.9$, 
    \[
    \sum_{x \in X'} \ccostport(x, F) + \ocost(F) \leq (1 + O(\epsilon)) \optFL(X, Y) + O(\epsilon) \floc(X, S).
    \]
    Rescaling $\epsilon$ completes the proof.
\end{proof}

\subsection{The Algorithm}
\label{sec:fl_alg}
Our algorithm is based on the dynamic programming framework proposed by~\cite{Cohen-AddadFS21}.
Given an instance $(X, Y)$ with $\ddim(Y)$ bounded, the algorithm constructs the hierarchical decomposition $\decom$ on top of $Y$.
It then modifies the instance according to \Cref{lemma:fl_bounded_centers_new_instance}.
The dynamic program is run on top of the modified instance, with respect to the portal-respecting connection cost $\ccostport(x, F)$ defined in \Cref{lemma:fl_bounded_centers_good_solution}.
The correctness of the algorithm is proven in \Cref{sec:fl_alg_correctness}, which is based on \Cref{lemma:fl_bounded_centers_new_instance,lemma:fl_bounded_centers_good_solution}.
We analyze the time complexity in \Cref{sec:fl_alg_time}.
We describe the algorithm as follows.

\paragraph{Preprocessing stage.}
Given as input $(X, Y)$, the algorithm first applies the techniques in \Cref{sec: constant-approx-algos} to computes a $2^{O(\ddim)}$-approximation solution $S \subseteq Y$ for facility location.
At first, this approximation might be too large, as it requires us to rescale the precision parameter $\epsilon$ by a factor of $1/2^{O(\ddim)}$, which may blow up the time complexity of our algorithm.
We ignore this potential issue when describing our algorithm at this point, and discuss how we can fix it in \Cref{sec:fl_alg_correctness}.

Our algorithm then constructs the hierarchical decomposition $\decom$ together with portals on top of $Y$.
The scaling parameter of portals is set to be $\rho = \epsilon^{10}/\ddim^2$.
For every $x \in X$, the algorithm checks if $B_Y(\proj{S}{x}, \dist(x,S) / \epsilon)$ is badly cut (i.e., if $x$ is a bad client).
It moves every bad client $x \in X$ to $\proj{S}{x}$, creating a new instance $(X', Y)$.\footnote{In fact, this step can be efficiently done by moving every $x$ to its $(1 + \epsilon)$-ANN in $S$.
This replacement does not affect the correctness of our previous analysis, and only enlarge the approximation ratio by a $(1 + \epsilon)$ factor.
}

For every $f \in S$, the algorithm decides how many copies of $f$ exist in $X'$; the value is denoted by $w(f) \in \NN$.
For $x \in X'$, it computes 
\begin{align*}
    j(x) := \begin{cases}
        0, &\text{if } x = \proj{S}{x}; \\
        \log(\dist(x,S)/\epsilon) + \epsddim + 1, & \text{otherwise}.
    \end{cases}
\end{align*}
By \Cref{lemma:fl_bounded_centers_new_instance}, $B_Y(\proj{S}{x}, \dist(x,S) / \epsilon)$ is entirely contained in some cluster $C \in \decom_{j(x)}$.
Such $C$, denoted by $C(x) \in \decom_{j(x)}$, is called the cluster where $x$ is \emph{revealed}.
(In particular, if $x = \proj{S}{x}$, then $C(x) = \{x\} \in \decom_0$.)
Roughly speaking, if $j(x) > 0$, then the connection cost of client $x$ will not be computed at level $0$.
Instead, we will ``defer'' the computation of $\ccostport(x, F)$ to a higher level cluster $C(x) \in \decom_{j(x)}$.
Note that every $x$ is revealed in exactly one cluster on $\decom$.
Compute $C(x)$ for every $x \in X'$.

The algorithm also computes the proxies $N_x$ for every $x \in X'$.
This can be done by computing $N_x = B_Y(\proj{S}{x}, \dist(x,S) / \epsilon) \cap P_{C(x)}$.
For  $\rho = \epsilon^{10}/\ddim^2$, $N_x$ is indeed an $\rho 2^{j(x)} \leq \epsilon \dist(x,S)$-net of $B_Y(\proj{S}{x}, \dist(x,S) / \epsilon)$.

\paragraph{Dynamic Program.}
Each table entry of the dynamic program is represented by a cluster $C$ on the hierarchical decomposition, together with a configuration
\[
\veca_C = \braket{a_C^1, a_C^2, \dots, a_C^{|P_C|}}, \quad \vecb_C = \braket{b_C^1, b_C^2, \dots, b_C^{|P_C|}}.
\]
Roughly speaking, the configuration encodes the positional information of the current facility set $F$.
Specifically, for every portal $p$ of cluster $C$,
$a_C^p$ encodes the portal-respecting distance from $p$ to the closest facility inside $C$;
$b_C^p$ encodes the portal-respecting distance from  $p$ to the closest facility outside $C$.
The value stored in entry $(C, \veca_C, \vecb_C)$
is the minimum \emph{revealed} cost within cluster $C$.
Formally,
\begin{align}
    g(C, \veca_C, \vecb_C)
    := \min_{\substack{
    F \colon \text{$F$ is consistent} \\
    \text{with the configuration}}
    }
    \left\{\sum_{x \in X' \colon C(x) \subseteq C}
    \ccostport(x, F)
    + \ocost(F \cap C)\right\}.
    \label{eqn:entry_value}
\end{align}

\paragraph{Base case.}
The base case of the dynamic program corresponds to the leaf nodes of $\decom$.
Consider a leaf node $C = \{y\}$ and the corresponding configuration $a, b$.
(We can wlog assume $\{x\}$ has itself as its only portal.)
It is easy to check consistency for the configuration, since there is either a facility on $y$ (then $a = 0$) or no facility on $y$ (then $a = \infty$).
Furthermore, the opening cost inside $\{y\}$ is either $\ocost(y)$ or $0$, which can also be easily computed.

The connection cost of $\{y\}$ is non-zero iff $y \in X'$.
This can be checked easily, since in the preprocessing stage, we have already marked the number of copies of $y$ in $X'$, denoted by $w(y)$.
For each of these copies, its connection cost can be read from the configuration, specifically, $\ccostport(y, F) = \min\{a, b\}$.
Therefore, the revealed cost in $\{y\}$ is 
\[
g\left(\{y\}, a, b\right)
= w(y) \cdot \ccostport(y, F) + \ocost(F \cap \{y\}).
\]

\paragraph{Updating the DP table.}
Now consider a higher level cluster $C$ and a configuration \sloppy $\veca_C = \braket{a_C^1, \dots, a_C^{|P_C|}}, \vecb_C = \braket{b_C^1, \dots, b_C^{|P_C|}}$.
To compute the value of the entry  $(C, \veca_C, \vecb_C)$, the algorithm first computes the minimum cost inherited from all its child clusters.
Specifically, it enumerates all combinations of configurations for the child clusters.
For each of these combinations
\begin{equation}\label{eqn:config_combination}
    \Big\{\Big(D, \veca_D = \braket{a^1_D, \dots, a^{|P_D|}_D}, \vecb_D = \braket{b^1_D, \dots, b^{|P_D|}_D}\Big)\Big\}_{D \in \child(C)},
\end{equation}
our algorithm checks if it is consistent with $(C, \veca_C, \vecb_C)$. 
Following~\cite{Cohen-AddadFS21}, we use the following criteria:
\begin{enumerate}[label=(\alph*)]
    \item \label{it:criteria1}
    Every portal of $C$ connects to a facility inside through a portal of its child cluster.
    Formally, for every $p \in P_C$, there exists $D \in \child(C)$ and $q \in P_D$, such that $a_C^p = a^q_D + \dist(p, q)$.
    \item \label{it:criteria2}
    Every portal of $D$ connects to a facility outside through a portal of either of its parent cluster or its sibling cluster.
    Formally, for every $D \in \child(C)$ and $q \in P_D$, either there exists a portal $p \in P_C$ such that $b^q_D = b_C^p + \dist(p, q)$, or there exists a cluster $D' \in \child(C)$ and a portal $q' \in P_{D'}$ such that $b^q_D = a^{q'}_{D'} + \dist(q, q')$.
\end{enumerate}
For every combination of configurations that is consistent, compute the summation of clustering cost over $D \in \child(C)$.
The inherited cost of $C$ is the minimum summation, i.e., 
\begin{align}
    \min_{\substack{
    \{(D, \veca_D, \vecb_D)\}_{D \in \child(C)} \\
    \text{ is consistent}
    }}
    \sum_{D \in \child(C)} g(D, \veca_D, \vecb_D).
    \label{eqn:inherited_cost}
\end{align}

Besides the inherited cost, the algorithm also computes for $C$ the connection cost of $x$ for all $x$ that is revealed in $C$, which is called the \emph{newly revealed cost} in $C$, i.e., 
\[\sum_{x \in X' \colon C(x) = C} \ccostport(x, F).\]
Consider an arbitrary $x \in X'$ with $C(x) = C$.
Wlog, assume $x \neq \proj{S}{x}$, since otherwise $x$ is already decided in level $0$ (i.e., the base case).
Recall the definition of $\ccostport(x, F)$ in \Cref{lemma:fl_bounded_centers_good_solution,def:connection_cost_portal}:
\begin{align*}
        \ccostport(x, F) := 
        \dporthatH(x, F)
        = \min_{f \in F} \min_{u \in N_x} \dist(x, u) + \dportH(u, f)
        = \min_{u \in N_x} \dist(x, u) + \dportH(u, F).
    \end{align*}
To compute $\ccostport(x, F)$, 
it suffices to compute $\dist(x, u)$ and $\dportH(u, F)$ for every $u \in N_x$.
The former can be easily computed.
For the latter, 
by construction $N_x = B_Y(\proj{S}{x}, \dist(x,S) / \epsilon) \cap P_C$ is a subset of $P_C$.
Hence $u$ is a portal of $C$; therefore $\dportH(u, F)$ can be directly obtained from the configuration of $C$, which is $\min\{a_u, b_u\}$.

We conclude that each entry in the DP table can be updated by
\begin{align*}
    g(C, \veca_C, \vecb_C) 
    = \min_{\substack{
    \{(D, \veca_D, \vecb_D)\}_{D \in \child(C)} \\
    \text{ is consistent}
    }}
    \sum_{D \in \child(C)} g(D, \veca_D, \vecb_D) 
    + \sum_{x \in X' \colon C(x) = C} \ccostport(x, F).
\end{align*}

\paragraph{Reducing the number of configurations.}
For every $\ell \in [0, \log \Delta]$ and a level $\ell$ cluster $C \in \decom_\ell$, the number of configurations of $C$ can be infinite. 
To resolve this issue, we use two tricks following~\cite{Cohen-AddadFS21}.
First, we restrict $\{a_C^p\}, \{b_C^p\}$ to be multiplications of $\rho 2^{\ell}$.
This discretization incurs a $\pm \rho 2^\ell$ additive error in the portal-respecting distance $\dportH(p, F)$ for every portal $p$, but the error can be charged to the detour between $p$ and $F$.

The second trick is to restrict each $a_C^p, b_C^p$ within a reasonable range $a_C^p, b_C^p \in [0, 2^\ell/\epsilon]$.
The restriction for $a_C^p$ is straightforward, since the distance from a portal $p \in P_C$ to a facility inside $C$ is at most $O(2^\ell) \leq 2^\ell/\epsilon$.
Regarding the restriction for $b_C^p$, note that if there exists $b_C^p > 2^\ell/\epsilon$, then it falls in either of the following two cases:
\begin{enumerate}[label=(\alph*)]
    \item $C \cap F \neq \emptyset$.
    In this case, the closest facility of each portal $p \in P_C$ is inside $C$, which means the actual value of $b_C^p$ will never be used for computing connection cost.
    Therefore, it is safe to reduce the configuration to $\veca_C, \vecb_C = \braket{\infty, \dots, \infty}$.
    \item $C \cap F = \emptyset$.
    In this case, we claim that there is no facility in $B_Y(C, 2^\ell/\epsilon)$.
    Moreover, up to losing a $(1 + \epsilon)$ multiplicative factor, we can wlog assume every point $\{x \in X' \colon C(x) \subseteq C\}$ is assigned to the same facility in $F$.
    Therefore, we can treat the whole cluster $C$ as a single point with weight $|\{x \in X' \colon C(x) \subseteq C\}|$, and decide its connection cost at a higher level.
    An extra boolean flag can be used in the configuration to indicate that $C$ is such a ``compressed'' cluster.
\end{enumerate}

In conclusion, up to small error, we can restrict $a_C^p, b_C^p$ to be multiplications of $\rho 2^\ell$ in range $[0, 2^\ell/\epsilon]$.
The number of configurations of $C$ thus can be bounded by 
\begin{align*}
    (\epsilon \rho)^{-2 |P_C|} = 
    \left(\frac{\ddim}{\epsilon}\right)^{\left(\frac{\ddim}{\epsilon}\right)^{O(\ddim)}}.
\end{align*}

\subsubsection{Proof of Correctness}
\label{sec:fl_alg_correctness}

Let $\widehat{F} \subseteq Y$ be the set of facilities returned by our algorithm in \Cref{sec:fl_alg}.
Our plan is to show that $\floc(X, \widehat{F}) \leq (1 + \epsilon) \optFL(X, Y)$.
We first show the following weaker bound for $\floc(X, \widehat{F})$.

\begin{lemma}\label{lemma:fl_bounded_centers_correctness_weaker}
    Let $\widehat{F} \subseteq Y$ be the set of facilities returned by the algorithm in \Cref{sec:fl_alg} w.r.t. a solution $S \subseteq Y$.
    Then with constant probability, 
    \begin{align*}
        \floc(X, \widehat{F}) \leq (1 + \epsilon) \optFL(X, Y) + 2 \epsilon \floc(X, S).
    \end{align*}
\end{lemma}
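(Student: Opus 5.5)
The plan is a standard sandwich argument. Write $\widehat F$ for the output, $X'$ for the moved instance of \Cref{lemma:fl_bounded_centers_new_instance}, and $\mathrm{DP}(F) := \sum_{x\in X'} \ccostport(x,F) + \ocost(F)$ for the portal-respecting objective. The argument has three steps.

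\textbf{Step 1: exact cost is at most the portal cost.} For every $F$ and every $x\in X'$ we have $\dist(x,F) \le \ccostport(x,F)$:
\begin{itemize}
\item if $x = \proj{S}{x}$, this follows from the lower bound in \Cref{lemma:detour};
\item otherwise, it follows from the lower bound $\dist(x,y)\le\dporthatH(x,y)$ in \Cref{lemma:connection_cost_portal}, using the triangle inequality through each proxy $u$ and $\dist(u,y)\le\dportH(u,y)$.
\end{itemize}
Summing over $X'$ gives $\floc(X',F)\le \mathrm{DP}(F)$.

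\textbf{Step 2: the DP is (near-)optimal for $\mathrm{DP}$.} The key claim is that the dynamic program returns a set $\widehat F$ whose true portal objective satisfies $\mathrm{DP}(\widehat F)\le(1+O(\epsilon))\min_F \mathrm{DP}(F)$. This holds up to the discretization and range-restriction errors.

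The exactness part needs three checks:
\begin{itemize}
\item every client is counted in exactly one cluster $C(x)$, and its cost $\min_{u\in N_x}\dist(x,u)+\min\{a_u,b_u\}$ is read correctly from the configuration because $N_x\subseteq P_{C(x)}$;
\item the consistency criteria \ref{it:criteria1},\ref{it:criteria2} make $\min\{a_C^p,b_C^p\}$ equal to $\dportH(p,F)$ for the facility set encoded by the optimal consistent choice;
\item the solution achieving the minimum is realized by the configurations it induces.
\end{itemize}

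The approximation errors are handled as follows:
\begin{itemize}
\item rounding to multiples of $\rho 2^\ell$ adds $O(\rho 2^\ell)$ per level. This is charged exactly like the detour terms, giving at most $O(\rho)2^{\ell(x)}$ per client plus a geometric sum.
\item Truncation at $2^\ell/\epsilon$ and the compressed-cluster trick lose a $(1+\epsilon)$ factor, as in \cite{Cohen-AddadFS21}.
\end{itemize}
I expect this step to be the main obstacle. The error must be charged against $\mathrm{DP}$ of the \emph{good} solution from \Cref{lemma:fl_bounded_centers_good_solution}, not against an arbitrary $F$. The cleanest route is to argue that the good solution $F$ admits a rounded configuration whose DP value is at most $(1+O(\epsilon))\mathrm{DP}(F)$, and that any DP value upper-bounds the true $\mathrm{DP}$ of the recovered set. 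Then the chain goes through without needing exactness for arbitrary sets.

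\textbf{Step 3: chain the bounds.} Intersect the probability-$0.99$ event of \Cref{lemma:fl_bounded_centers_new_instance} with the probability-$0.9$ event of \Cref{lemma:fl_bounded_centers_good_solution}; a union bound leaves constant probability. Then
\[
\floc(X,\widehat F)\le \floc(X',\widehat F)+\epsilon\floc(X,S)\le \mathrm{DP}(\widehat F)+\epsilon\floc(X,S),
\]
using \Cref{lemma:fl_bounded_centers_new_instance} and then Step 1. Next, Step 2 gives $\mathrm{DP}(\widehat F)\le (1+O(\epsilon))\,\mathrm{DP}(F)$, and \Cref{lemma:fl_bounded_centers_good_solution} bounds $\mathrm{DP}(F)$ by $(1+\epsilon)\optFL(X,Y)+\epsilon\floc(X,S)$. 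Rescaling $\epsilon$ by a constant yields $(1+\epsilon)\optFL(X,Y)+2\epsilon\floc(X,S)$.
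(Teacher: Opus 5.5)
Your proposal is correct and is essentially the paper's proof: the same chain $\floc(X,\widehat F)\le\floc(X',\widehat F)+\epsilon\floc(X,S)\le\sum_{x\in X'}\ccostport(x,\widehat F)+\ocost(\widehat F)+\epsilon\floc(X,S)$, followed by comparison with the good solution of \Cref{lemma:fl_bounded_centers_good_solution}. The paper simply treats $\widehat F$ as the exact minimizer of the portal-respecting objective, so your Step 2 adds care about discretization that the paper skips; note that for the good solution the rounding loss is additive, $O(\epsilon)(\optFL(X,Y)+\floc(X,S))$, rather than a multiplicative $(1+O(\epsilon))$ factor on $\mathrm{DP}(F)$ as written in your Step 3, but this still suffices after rescaling.
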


\begin{proof}
    By the algorithm in \Cref{sec:fl_alg}, 
    $\widehat{F}$ minimizes the facility location cost w.r.t. $\ccostport(\cdot, \cdot)$, i.e., 
    \begin{align*}
        \widehat{F} = \argmin_{F \subseteq Y}
        \sum_{x \in X'} \ccostport(x, F) + \ocost(F).
    \end{align*}
    We note that the definition of $X'$ and $\ccostport$ also depend on $S$.

    By \Cref{lemma:fl_bounded_centers_good_solution}, 
    with probability $0.9$, there exists a solution $F \subseteq Y$, such that 
    \begin{equation}\label{eqn:apply_fl_bounded_centers_good_solution}
        \sum_{x \in X'} \ccostport(x, F) + \ocost(F) \leq (1 + \epsilon) \optFL(X, Y) + \epsilon \floc(X, S).
    \end{equation}
    Therefore, 
    \begin{align*}
        \floc(X, \widehat{F}) 
        &\leq \floc(X', \widehat{F}) + \epsilon \floc(X, S) && \text{By \Cref{lemma:fl_bounded_centers_new_instance}, w.p. $0.99$} \notag \\
        & \leq \sum_{x \in X'} \ccostport(x, \widehat{F}) + \ocost(F)
        + \epsilon \floc(X, S) \\
        & \leq \sum_{x \in X'} \ccostport(x, F) + \ocost(F)
        + \epsilon \floc(X, S) &&\text{By definition of $\widehat{F}$}\\
        &\leq (1 + \epsilon) \optFL(X, Y) 
        + 2 \epsilon \floc(X, S) && \text{By \eqref{eqn:apply_fl_bounded_centers_good_solution}}.
    \end{align*}
\end{proof}

Unfortunately, \Cref{lemma:fl_bounded_centers_correctness_weaker} does not imply that $\floc(X, \widehat{F}) \leq (1 + \epsilon) \optFL(X, Y)$.
Recall that in \Cref{sec:fl_alg}, we are only able to obtain a $2^{O(\ddim)}$-approximate solution $S$.
Therefore, if we want to obtain the desired bound, we have to rescale $\epsilon$ by a factor of $1/2^{O(\ddim)}$.
Since the time complexity of our algorithm also depends on $\epsilon$, it could be blow up due to such naive rescaling.

To resolve this issue, we apply a similar bootstrap as in~\cite[Section 3.3]{Cohen-AddadFS21}.
By \Cref{lemma:fl_bounded_centers_correctness_weaker}, if $S$ is an $\alpha$-approximation of $\optFL(X, Y)$, then $\widehat{F}$ is a $(1 + \epsilon + 2 \epsilon\alpha)$-approximation, which improves over $S$ by a $2 \epsilon$ factor.
Therefore, we can start from an arbitrary $2^{O(\ddim)}$-approximation $\widehat{F}_0$; at each stage $i$ run the algorithm in \Cref{sec:fl_alg} with $S = \widehat{F}_i$
to obtain a better approximate solution $\widehat{F}_{i + 1}$.
The procedure halts when $i$ reaches $c \cdot \ddim$ for some sufficiently large constant $c$.
We return $\widehat{F}_{c \cdot \ddim}$ as our final solution.
Since the success probability in \Cref{lemma:fl_bounded_centers_correctness_weaker} can be boosted to $1 - 1/\ddim^2$ by standard amplification,
we can guarantee that with constant probability, all steps of our bootstrap succeed.
Moreover, the bootstrap only introduces a $\poly(\ddim)$ overhead in the running time.

Finally, we prove the returned solution $\widehat{F}_{c \cdot \ddim}$ satisfies that $\floc(X, \widehat{F}_{c \cdot \ddim}) \leq (1 + \epsilon) \optFL(X, Y)$.

\begin{proof}[Proof of \Cref{thm:FL_bounded_centers} (correctness)]
    For $0 \leq i \leq c \cdot \ddim$, assume $\widehat{F}_i$ is an $\alpha_i$-approximate solution, i.e., $\floc(X, \widehat{F}_i) \leq \alpha_i \optFL(X, Y)$.
    By \Cref{lemma:fl_bounded_centers_correctness_weaker}, we have
    \[
    \alpha_{i + 1} \leq 1 + \epsilon + 2 \epsilon \alpha_i, \qquad 
    0 \leq i < c \cdot \ddim.
    \]
    Therefore, for every $s \leq c \cdot \ddim$, 
    \begin{align*}
        \alpha_{s} 
        &\leq 1 + \epsilon + 2 \epsilon \alpha_{s - 1} \\
        &\leq 1 + \epsilon + 2 \epsilon (1 + \epsilon) + (2 \epsilon)^2 \alpha_{s - 2} \\
        & \leq \dots \\
        & \leq (1 + \epsilon) \sum_{j = 0}^{\infty} (2 \epsilon)^j + (2 \epsilon)^{s} \alpha_0 \\
        & \leq \frac{1 + \epsilon}{1 - 2 \epsilon} + (2 \epsilon)^s 2^{O(\ddim)}. 
    \end{align*}
    Choosing $s = c \cdot \ddim$, we have $\alpha_s \leq 1 + O(\epsilon)$. 
    Rescaling $\epsilon$ completes the proof.
\end{proof}

\subsubsection{Time Complexity}
\label{sec:fl_alg_time}

We prove the time complexity of our algorithm is $2^{2^t} \tilde{O}(n + m)$ for $t = O(\ddim \log(\ddim/\epsilon))$.
By \Cref{lem: transform_to_bounded_aspect_ratio,lem:merge_instances}, we can wlog assume that the aspect ratio of $X \cup Y$ is $\Delta = \poly(n, m)$.

\begin{proof}[Proof of \Cref{thm:FL_bounded_centers} (time complexity)]
    We analyze time complexity for the preprocessing stage and the dynamic program separately.

    \paragraph{Preprocessing stage.}
    In the preprocessing stage, the $2^{O(\ddim)}$-approximation solution $S \subseteq Y$ can be computed in time $2^{O(\ddim)} O(n + m) \log \Delta$ by \Cref{sec: constant-approx-algos}.
    By \Cref{lemma:hierarchical_decomposition}, the hierarchical decomposition $\decom$ together with portals for each cluster can be computed in time $\rho^{-O(\ddim)} m \log \Delta = (\ddim /\epsilon)^{O(\ddim)} m \log \Delta$.
    By~\cite{Cohen-AddadFS21}, it can be checked in $\epsilon^{-O(\ddim)} O(n \log \Delta)$ time if $B_Y(\proj{S}{x}, \dist(x,S) /\epsilon)$ is badly cut for all $x$.
    Therefore, by \Cref{lemma:ANN_doubling_data}, the new instance $(X', Y)$ can be constructed in $\epsilon^{-O(\ddim)} O(n) \log \Delta$ time by moving all bad points $x \in X$ to its $(1 + \epsilon)$-ANN in $S$.
    The number of copies of each $f \in S$ can be computed in  a total of $O(n)$ time.

    For $x \in X'$, to determine the cluster $C(x)$ where $x$ is revealed, it suffices to find the cluster $C \in \decom_{j(x)}$ that contains $\proj{S}{x}$.
    This can be done in $O(\log \Delta)$ time, thus a total of $O(n \log \Delta)$ time.
    A list is maintained for every $C$, which contains all points $x$ that are revealed in $C$, i.e., $\{x \in X' \colon C(x) = C\}$.

    The proxies $N_x$ of $x$ is computed by $N_x := B_Y(\proj{S}{x}, \dist(x,S) / \epsilon) \cap P_{C(x)}$.
    This can be done in $O(|P_{C(x)}|) = \rho^{-O(\ddim)} = (\ddim/\epsilon)^{O(\ddim)}$ time by checking for each $u \in P_{C(x)}$ if $\dist(u, \proj{S}{x}) \leq \dist(x,S) /\epsilon$.
    Hence, the time complexity of computing $N_x$ for all $x \in X'$ is $(\ddim/\epsilon)^{O(\ddim)} O(n)$.

    In conclusion, the time complexity of the preprocessing stage is $(\ddim/\epsilon)^{O(\ddim)} O(n + m) \log \Delta$.

    \paragraph{Dynamic program.}
    Fix a cluster $C$ and a configuration $\veca_C = \braket{a_C^1, \dots, a_C^{|P_C|}}, \vecb_C = \braket{b_C^1, \dots, b_C^{|P_C|}}$.
    The computation of $g(C, \veca_C, \vecb_C)$ consists of two parts --- the inherited cost and the newly revealed connection cost.

    To compute the inherited cost, the algorithm enumerates all combinations of configurations of the child clusters of $C$, in the form of \eqref{eqn:config_combination}.
    Recall that $a_C^p, b_C^p \in \{0, \rho 2^\ell, 2\rho 2^\ell, \dots, 2^\ell/\epsilon\}$;
    therefore there are $(\epsilon \rho)^{- \sum_{D \in \child(C)} 2|P_D|} = (\epsilon \rho)^{\rho^{-O(\ddim)}}$ such combinations.
    For each one of these combinations, checking its consistency takes time 
    \begin{align*}
        |P_C| \cdot \sum_{D \in \child(C)} |P_D|
        + \sum_{D \in \child(C)} \sum_{p \in P_D} \left(|P_C| + \sum_{D' \in \child(C)} |P_{D'}|\right)
        = \rho^{-O(\ddim)}.
    \end{align*}
    Computing the summation of $g(D, \veca_D, \vecb_D)$ takes time $|\child(C)| = 2^{O(\ddim)}$.
    Therefore, the time complexity of computing the inherited cost of $C$ (i.e., \eqref{eqn:inherited_cost}) is 
    \begin{align*}
        (\epsilon \rho)^{\rho^{-O(\ddim)}} \cdot (\rho^{-O(\ddim)} + 2^{O(\ddim)})
        = \left(\frac{\ddim}{\epsilon}\right)^{\left(\frac{\ddim}{\epsilon}\right)^{O(\ddim)}}.
    \end{align*}

    To compute the connection cost for all $x$ with $C(x) = C$, it suffices to compute $\ccostport(x, F)$ for every such $x$.
    By the previous analysis, this can be done by computing $\dist(x, u)$ and $\dportH(u, F)$ for every $u \in N_x$, which can be done in $O(|N_x|) = \epsilon^{-O(\ddim)}$.
    Hence, the time complexity of computing the newly revealed connection cost is 
    \begin{align*}
        \epsilon^{-O(\ddim)} \cdot |\{x \colon C(x) = C\}|.
    \end{align*}

    We conclude that the time complexity of computing 
    $g(C, \veca_C, \vecb_C)$ is 
    \begin{align*}
        \left(\frac{\ddim}{\epsilon}\right)^{\left(\frac{\ddim}{\epsilon}\right)^{O(\ddim)}}
        + \epsilon^{-O(\ddim)} \cdot |\{x \colon C(x) = C\}|.
    \end{align*}
    The total time complexity of filling the DP table is 
    \begin{align*}
        &\qquad \sum_{\ell}\sum_{C \in \decom_\ell} (\epsilon \rho)^{-2 |P_C|} \left[
        \left(\frac{\ddim}{\epsilon}\right)^{\left(\frac{\ddim}{\epsilon}\right)^{O(\ddim)}}
        + \epsilon^{-O(\ddim)} \cdot |\{x \colon C(x) = C\}|
        \right] \\
        & = \left(\frac{\ddim}{\epsilon}\right)^{\left(\frac{\ddim}{\epsilon}\right)^{O(\ddim)}} \sum_{\ell} |\decom_\ell|
        + \left(\frac{\ddim}{\epsilon}\right)^{\left(\frac{\ddim}{\epsilon}\right)^{O(\ddim)}} 
        \sum_\ell\sum_{C \in \decom_\ell} |\{x \colon C(x) = C\}| \\
        & = \left(\frac{\ddim}{\epsilon}\right)^{\left(\frac{\ddim}{\epsilon}\right)^{O(\ddim)}} O(m + n) \log \Delta.
    \end{align*}

    \medskip 

    Combining the analysis above, we conclude that the time complexity of our algorithm is $2^{2^t} \cdot O(n + m) 
    \log \Delta$, for 
    \begin{align*}
        t = O\left(\ddim \log\frac{\ddim}{\epsilon}\right).
    \end{align*}
    As mentioned in \Cref{sec:fl_alg_correctness}, the bootstrap only introduces a $\poly(\ddim)$ overhead in the running time, which can be charged to the $2^{2^t}$ factor.
    \Cref{thm:FL_bounded_centers} follows with $\Delta = \poly(n, m)$.
\end{proof}

\section{Facility Location with Low-dimensional Clients}
\label{sec:fl_bounded_clients}

In this section, we focus on the setting where clients $X$ have bounded doubling dimension $\ddim$, while $Y$ is high-dimensional.
Our main result is the following.

\begin{theorem}\label{thm:FL_bounded_clients}
    There is a randomized algorithm that, given as input $\epsilon \in (0, \tfrac{1}{2})$, $n, m \in \NN$ and  $(X \cup Y, \dist)$ with $|X| = n, |Y| = m$,
    $\ddim(X) \leq \ddim$, 
    computes a $(1 + \epsilon)$-approximation of facility location in time 
    $2^{2^t} \cdot \tilde{O}(n + m)$
    with constant success probability, where
    \begin{equation*}
        t \in O\left(\ddim \log\frac{\ddim}{\epsilon}\right).
    \end{equation*}
\end{theorem}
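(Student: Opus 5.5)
The plan is to mirror the structure of \Cref{sec:fl_bounded_centers}, with the new decomposition $\modifydecom$ of \Cref{lemma:new_decomposition_for_general_metric} (parameter $\rho=\epsilon^{10}/\ddim^2$) in place of $\decom$. The portal-respecting distance is $\dportP$ from \Cref{sec:portal_respecting_new}. First I will compute a $2^{O(\ddim)}$-approximate solution $S\subseteq Y$ using \Cref{sec: constant-approx-algos}. I then need bad clients and bad facilities whose defining sets lie in $X$, since the cutting bound (property \ref{it:cutting}) only holds for subsets of $X$.

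\textbf{Bad clients.} A client $x$ is bad if $B_X(x,\dist(x,S)/\epsilon)$ is badly cut. Bad clients are moved to $\proj{X}{\proj{S}{x}}$, or rather to a point whose connection cost is controlled. The analogue of \Cref{lemma:fl_bounded_centers_new_instance} then follows by the same expectation and Markov argument, since the movement distance is $O(\dist(x,S))$.

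\textbf{Bad facilities.} A facility $f\in S$ is bad if $B_X(\proj{X}{f},10\dist(f,F^*)+O(\dist(f,X)))$ is badly cut. I take $F:=F^*\cup\Bad_S$. Repeating the four-case analysis of \Cref{lemma:fl_bounded_centers_good_solution}, I will show that for each client $x$ the highest level $\ell(x)$ at which $\{x,\proj{X}{\proj{F}{x}}\}$ is cut satisfies $2^{\ell(x)}\le (\ddim/\epsilon)\cdot O(\dist(x,S)/\epsilon+\dist(x,F^*))$.

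\Cref{lemma:detour_by_projection} then gives $\dportP(x,F)\le(1+O(\sqrt\rho))\dist(x,F)+O(\sqrt\rho)2^{\ell(x)}$. With $\sqrt\rho=\epsilon^5/\ddim$, summing yields the analogue of \eqref{eqn:fl_bounded_centers_good_solution}. The expected extra opening cost of $\Bad_S$ is $O(\epsilon)\floc(X,S)$ by \Cref{lemma:badly_cut}. The main geometric care is in passing from $f\in Y$ to $\proj{X}{f}$: I will use $\dist(\proj{X}{f},\proj{F}{x})\le 2\dist(f,X)+\dots$ and $\dist(f,X)\le\dist(f,x)$, which is bounded by $\dist(x,S)$ when $f=\proj{S}{x}$.

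\textbf{Dynamic program.} The DP runs on $\modifydecom$ with configurations $(\veca_C,\vecb_C)$ on portals $P_C\subseteq X$, discretized as in \Cref{sec:fl_alg}, and I expect this to be the main obstacle. Clients are leaves at level $0$, as in~\cite{Cohen-AddadFS21}, so no deferred revealing is needed. The difficulty is that $C$ may have unboundedly many ornament children $\{y\}$. I enumerate only the configurations of the $2^{O(\ddim)}$ non-ornament children. From these, together with $\veca_C$, I read off which portal values $a_C^p$ must be realized by a facility inside $C$ that is not provided by a non-ornament child. Each ornament $y$ opened at level $h(y)$ realizes the value $\dist(y,p_{h(y)+1}(y))+\dots$ on a determined set of portals of $C$ through its path. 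So choosing ornaments reduces to a weighted set cover over the universe $P_C$ together with discretized values, whose size is $\rho^{-O(\ddim)}$. This can be solved exactly by a DP over subsets in $2^{\rho^{-O(\ddim)}}\cdot|\Schild(C)|$ time, which is linear in the number of ornaments and fits within $2^{2^t}$. Correctness then follows as in \Cref{lemma:fl_bounded_centers_correctness_weaker}, and the bootstrap of \Cref{sec:fl_alg_correctness} removes the $2^{O(\ddim)}$ factor from $S$.

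\textbf{Running time.} Summing over clusters, the total is $2^{2^t}O(n+m)\log\Delta$. Here the ornaments contribute $O(m)$ in total, since each is a child of exactly one cluster. The aspect-ratio reduction yields $\Delta=\poly(n,m)$.
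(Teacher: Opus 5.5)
Your overall plan matches the paper: the decomposition $\modifydecom$, moving bad clients to $\proj{X}{\proj{S}{x}}$, $F:=F^*\cup\Bad_S$, set cover over ornaments, and the bootstrap. Your Cases 2--4 also go through with your radius for bad facilities.

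\textbf{The main gap: Case 1.} You claim that repeating the four-case analysis gives, for \emph{every} client, $2^{\ell(x)}\le(\ddim/\epsilon)\cdot O(\dist(x,S)/\epsilon+\dist(x,F^*))$. This fails when $x\in\Bad_X$ and $s:=\proj{S}{x}\in\Bad_S$.
\begin{itemize}
\item In \Cref{sec:fl_bounded_centers} this case was trivial, because the moved client $\phi(x)=s$ was itself an open facility. Here $\phi(x)=\proj{X}{s}$ is not a facility.
\item The only two sets that could control the cutting level of $\{\phi(x),\proj{X}{\proj{F}{\phi(x)}}\}$ are the ball around $x$ and the ball around $\proj{X}{s}$. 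Both are badly cut by assumption.
\item So if another facility of $F$ is closer to $\phi(x)$ than $s$, nothing bounds where this pair is cut. Then $\dportP(\phi(x),F)-\dist(\phi(x),F)$ can be of order $\dist(s,X)$, which can be as large as $\dist(x,S)$. That is not chargeable to $\epsilon\dist(x,F^*)$ or $\epsilon\dist(x,S)$.
\end{itemize}
The missing idea, which the paper singles out as the main new difficulty of this setting, is to give up on a per-client detour bound in this case.
\begin{itemize}
\item Since $s\in F$, you have $\dportP(\phi(x),F)\le\dportP(\proj{X}{s},s)$.
\item The pair $\{\proj{X}{s},s\}$ is cut exactly at level $h(s)$. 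So \Cref{lemma:detour_complicated} gives $\dportP(\phi(x),F)\le(1+O(\sqrt{\rho}))\dist(s,X)\le 2\dist(x,S)$.
\item Then charge in expectation. This case requires $x\in\Bad_X$, which has probability at most $c\epsilon$. By Markov, these clients contribute $O(\epsilon)\floc(X,S)$ with probability $0.99$, and the bootstrap absorbs this term.
\end{itemize}
Without this step your sum does not close.

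\textbf{A secondary problem: the set-cover universe.} In your DP the universe cannot consist only of $P_C$ with its values $a_C^p$.
\begin{itemize}
\item A portal $q$ of a non-ornament child $D$ may have an ornament sibling $y\in\Schild(C)$ as its nearest facility outside $D$.
\item Since $\{q,y\}$ is cut exactly at $h(y)$, the portal-respecting path is the single hop from $q$ to $y$. It is not a route through $p_{h(y)+1}(y)$. So the value $b_D^q=\dist(q,y)$ must also be certified by an opened ornament. Likewise, a portal $p\in P_C$ served by an ornament child has $a_C^p=\dist(p,y)$.
\item The paper's universe contains two kinds of pairs: $(C,p)$ whose $a_C^p$ is not realized through a non-ornament child, and $(D,q)$ whose $b_D^q$ is realized neither through $P_C$ nor through a non-ornament sibling.
\end{itemize}
This universe still has size $\rho^{-O(\ddim)}$, so your subset-DP running time is unaffected.
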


\subsection{Structural Lemmas}

Our algorithm is based on the new hierarchical decomposition in \Cref{sec:metric_decomposition}.
Specifically, we construct the new decomposition $\modifydecom$ for the entire metric $X \cup Y$, with portal scaling parameter $\rho := \epsilon^{10}/\ddim^2$.
When calculating the connection cost, we will replace metric distances with portal-respecting distances $\dportP(x, F)$ introduced in \Cref{sec:portal_respecting_new}. 
\Cref{lemma:detour_complicated,lemma:detour_by_projection} are used to bound the error $\dportP(x, y) - \dist(x, y)$.
In \Cref{lemma:fl_bounded_clients_good_solution} we further prove the existence of $(1 + \epsilon)$-approximate portal-respecting solution.

In order to control the error $\dportP(x, y) - \dist(x, y)$,
we need to bound the highest level where $\{x, \proj{X}{y}\}$ is cut on $\modifydecom$.
We need the notion of badly cut and bad points w.r.t. $\modifydecom$.
The following definition extends the badly cut notion in~\cite{Cohen-AddadFS21} to our new decomposition $\modifydecom$.

\begin{definition}[Badly cut]\label{def:badly_cut}
Let $\modifydecom$ be some random hierarchical decomposition as defined in Lemma \ref{lemma:new_decomposition_for_general_metric}.
    Say a set $T \subseteq X$ is badly cut w.r.t. $\modifydecom$, if $T$ is cut at level  $\log \diam(T) + \epsddim$ in $\modifydecom$.
\end{definition}
The following lemma is a direct corollary of property \ref{it:cutting} in \Cref{lemma:new_decomposition_for_general_metric}.

\begin{lemma}\label{lemma:badly_cut_ambient}
    For every $T \subseteq X$, 
    \begin{align*}
        \Pr[T \text{ is badly cut w.r.t. } \modifydecom] \leq c\cdot \epsilon,
    \end{align*}
    where $c$ is the same constant as in \ref{it:cutting}.
\end{lemma}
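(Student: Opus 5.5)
The plan is to plug the definition of ``badly cut'' into the cutting-probability bound, Property \ref{it:cutting} of \Cref{lemma:new_decomposition_for_general_metric}, in the same way \Cref{lemma:badly_cut} follows from \eqref{eqn:cutting_probability}. Fix $T \subseteq X$. By \Cref{def:badly_cut}, $T$ is badly cut w.r.t.\ $\modifydecom$ exactly when it is cut at level $\ell_T := \log \diam(T) + \epsddim$. We may assume $\diam(T) > 0$; otherwise $|T| \le 1$, so $T$ is never cut and the probability is $0$. If $\ell_T$ is not an integer, we round it up to $\lceil \ell_T \rceil$, matching how levels are indexed; this only decreases the bound below.

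Next we apply Property \ref{it:cutting} at level $\ell_T$. This gives
\[
\Pr[T \text{ is cut at level } \ell_T] \le \frac{c \cdot \ddim \cdot \diam(T)}{2^{\ell_T}} = \frac{c \cdot \ddim \cdot \diam(T)}{\diam(T) \cdot \ddim/\epsilon} = c \cdot \epsilon .
\]
If $\ell_T > L$, then $T$ cannot be cut at that level, since $\modifydecom_L$ is the single root cluster, so the bound holds trivially.

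The only step needing care is that Property \ref{it:cutting} really holds for $\modifydecom$ and not just for $\decom$. I would justify it as follows. By construction, restricting each cluster of $\modifydecom_\ell$ to $X$ gives exactly the clusters of $\decom_\ell$, because \Cref{alg:modified_decomposition} only adds points of $Y \setminus X$ (ornaments and their ancestors). Consequently, for $T \subseteq X$, the event ``$T$ is cut at level $\ell$ w.r.t.\ $\modifydecom$'' coincides with the same event w.r.t.\ $\decom$. The latter event is controlled by \eqref{eqn:cutting_probability} of \Cref{lemma:hierarchical_decomposition}, with the same constant $c$.
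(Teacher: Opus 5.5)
Your proposal is correct and is essentially the paper's argument. The paper states this lemma as a direct corollary of Property~\ref{it:cutting} of \Cref{lemma:new_decomposition_for_general_metric}, obtained by substituting the badly-cut level $\log \diam(T) + \epsddim$ into the bound $c \cdot \ddim \cdot \diam(T)/2^\ell$. Your extra check that, for $T \subseteq X$, being cut in $\modifydecom$ is the same event as being cut in $\decom$ (ornaments are disjoint from $T$, and non-ornament clusters agree with $\decom$ on $X$) is exactly why the paper says Property~\ref{it:cutting} follows from the construction and \Cref{lemma:hierarchical_decomposition}.
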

Next, 
we extend the definition of bad points (\Cref{def:bad_points_bounded_centers}) to our current setting.

\begin{definition}[Bad points]\label{def:two_types_of_bad_points}
    Consider a metric space $(X \cup Y, \dist)$ and let $\modifydecom$ be the decomposition of $X \cup Y$ in \Cref{lemma:new_decomposition_for_general_metric}.
    Let $F^*, S \subseteq Y$.
    \begin{enumerate}[label=(\arabic*)]
        \item A point $x \in X$ is called a \emph{bad client} (w.r.t. $S$ and $\modifydecom$), if $B_X(x, \dist(x, S)/\epsilon)$ is badly cut w.r.t. $\modifydecom$.
        Denote the set of bad clients as $\Bad_X$.
        \item A point $y \in S$ is called a \emph{bad facility} (w.r.t. $S, F^*$ and $\modifydecom$) if $B_X(\proj{X}{y}, 100 \dist(\proj{X}{y}, F^*) + 100 \dist(\proj{X}{y}, S))$ is badly cut w.r.t. $\modifydecom$.
        Denote the set of bad facilities as $\Bad_S$.
    \end{enumerate}
\end{definition}

In our subsequent analysis, we instantiate $F^* \subseteq Y$ as an optimal solution for facility location and $S \subseteq Y$ as a constant approximate solution, 
and define bad clients $\Bad_X \subseteq X$ and bad facilities $\Bad_S \subseteq S$.

Let us compare \Cref{def:two_types_of_bad_points} with \Cref{def:bad_points_bounded_centers}, which defines bad points in a similar but different way.
In \Cref{def:two_types_of_bad_points}, both bad clients and bad facilities are defined w.r.t. balls in $X$ instead of $Y$.
This is because our current notion of badly cut (\Cref{def:badly_cut}) is only defined for subsets of the low-dimensional space $X$.
Moreover, the centers of the balls are $x$ and $\proj{X}{y}$ respectively, which is symmetric to \Cref{def:bad_points_bounded_centers} where the centers are $\proj{S}{x}$ and $y$.
Another difference is that for bad facilities, the radius of the ball depends on both $\dist(\proj{X}{y}, F^*)$ and $\dist(\proj{X}{y}, S)$.
A similar extra $100 \dist(\proj{X}{y}, S)$ term does not appear in \Cref{def:bad_points_bounded_centers}, since $\dist(y, S) = 0$ for $y \in S$.

The following lemma shows how we can eliminate bad clients in $X$.
Similar to \Cref{lemma:fl_bounded_centers_new_instance}, we will move every bad client $x$ to a nearby point and create a new dataset $X'$.
However, since $\proj{S}{x}$ is now in the high-dimensional ambient space $Y$, moving $x$ to $\proj{S}{x}$ as \Cref{lemma:fl_bounded_centers_new_instance} would increase the dimension of our dataset, 
which could be problematic for our algorithm design.
Therefore, we slightly alter our strategy by moving $x$ to $\proj{X}{\proj{S}{x}}$, i.e., the nearest neighbor of $\proj{S}{x}$ in $X$.
After this modification, our new dataset $X'$ still lies in $X$ and thus has bounded doubling dimension.
We can further prove that the cost of such movement is at most an $\epsilon$ fraction of $\floc(X, S)$.

\begin{lemma}[New instance]\label{lemma:fl_bounded_clients_new_instance}
    Given $(X \cup Y, \dist)$, with $\ddim(X) \leq \ddim$, and a solution $S \subseteq Y$ for facility location, construct a new 
    (multi-)set of clients $X' \subseteq X$ as $X' := \phi(X)$ for 
    \begin{align*}
        \phi(x) := \begin{cases}
            x, &\text{if } x \notin \Bad_X; \\
            \proj{X}{\proj{S}{x}}, & \text{if } x \in \Bad_X,
        \end{cases}
    \end{align*} 
    namely, $X'$ is constructed from $X$ by moving every bad client $x$ to $\proj{X}{\proj{S}{x}}$.
    Then with probability $0.99$,
    \begin{align*}
        \forall F \subseteq Y, \qquad \Big|\floc(X, F) - \floc(X', F)\Big|
        \leq \epsilon \floc(X, S).
    \end{align*}
\end{lemma}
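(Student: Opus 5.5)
The plan follows the proof of \Cref{lemma:fl_bounded_centers_new_instance}: first bound the total movement $\sum_{x \in X} \dist(x, \phi(x))$ in expectation, then apply Markov's inequality. The triangle inequality then transfers this bound to every $F \subseteq Y$ simultaneously. The only new ingredient is that a bad client is moved to $\proj{X}{\proj{S}{x}}$ rather than to $\proj{S}{x}$, so we first need a pointwise bound on this movement.

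\textbf{Step 1: pointwise movement.} For $x \in \Bad_X$ we have
\[
\dist(x, \phi(x)) \le \dist(x, \proj{S}{x}) + \dist(\proj{S}{x}, \proj{X}{\proj{S}{x}}).
\]
Since $x \in X$ is a candidate nearest neighbor of $\proj{S}{x}$ in $X$, the second term is at most $\dist(\proj{S}{x}, x) = \dist(x, S)$. Hence $\dist(x, \phi(x)) \le 2\dist(x, S)$. If $\proj{X}{\cdot}$ is replaced by a $(1+\epsilon)$-ANN, as in the remark after \Cref{alg:modified_decomposition}, the constant becomes $2+\epsilon$. For $x \notin \Bad_X$ the movement is $0$.

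\textbf{Step 2: expectation and Markov.} Whether $x$ is bad depends only on whether the set $T = B_X(x, \dist(x,S)/\epsilon) \subseteq X$ is badly cut w.r.t.\ $\modifydecom$. By \Cref{lemma:badly_cut_ambient} this happens with probability at most $c\epsilon$. Note that $S$ is fixed beforehand and independent of the randomness of $\modifydecom$, and $T$ is a deterministic subset of $X$. By linearity of expectation,
\[
\E\Big[\sum_x \dist(x,\phi(x))\Big] \le 2c\epsilon \sum_x \dist(x,S) \le 2c\epsilon\, \floc(X,S).
\]
Markov's inequality then gives, with probability $0.99$, $\sum_x \dist(x,\phi(x)) \le O(\epsilon)\floc(X,S)$.

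\textbf{Step 3: transfer to all solutions.} Conditioned on the event from Step 2, fix any $F \subseteq Y$. The opening costs cancel, and $|\dist(x,F) - \dist(\phi(x),F)| \le \dist(x,\phi(x))$ by the triangle inequality. Summing over $x$ yields $|\floc(X,F) - \floc(X',F)| \le O(\epsilon)\floc(X,S)$. Rescaling $\epsilon$ by a constant gives the stated bound.

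\textbf{Main obstacle.} The main point is Step 1, namely that the detour through $\proj{X}{\cdot}$ costs only a constant factor; the argument above settles it. A secondary point is the independence needed in Step 2: the solution $S$ must be chosen before $\modifydecom$ is sampled, which holds in the algorithm since the constant-approximation $S$ is computed first.
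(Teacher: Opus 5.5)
Your proposal is correct and is the argument the paper intends when it omits this proof as ``analogous'' to \Cref{lemma:fl_bounded_centers_new_instance}: bound the expected total movement using the badly-cut probability, apply Markov's inequality, and transfer the bound to every $F$ by the triangle inequality. The paper leaves one step implicit and you supply it: the bound $\dist(x,\phi(x)) \le 2\dist(x,S)$, which holds because $x$ is itself a candidate nearest neighbor of $\proj{S}{x}$ in $X$. The resulting constant is absorbed by the same rescaling of $\epsilon$ that the paper uses.
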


We remark that, if $S$ is a constant approximation, then we have $\floc(X', F) \in (1 \pm O(\epsilon)) \floc(X, F)$ for arbitrary solution $F \subseteq Y$ with probability $0.99$.
The proof of \Cref{lemma:fl_bounded_clients_new_instance} is analogous to \Cref{lemma:fl_bounded_centers_new_instance}, and thus is omitted.

The following lemma claims the existence of $(1 + \epsilon)$-approximate portal-respecting solution.
It is similar to \Cref{lemma:fl_bounded_centers_good_solution}, with the only difference that we now use the portal-respecting distance $\dportP$ introduced in \Cref{sec:portal_respecting_new}.
It allows us to compute portal-respecting solution in our facility location algorithm.

\begin{lemma}[Good portal-respecting solution]\label{lemma:fl_bounded_clients_good_solution}
    Let $(X \cup Y, \dist)$ with $\ddim(X) \leq \ddim$ be a facility location instance, $S \subseteq Y$ be a solution and $\modifydecom$ be the hierarchical decomposition on $X \cup Y$ in \Cref{lemma:new_decomposition_for_general_metric} with portal parameter $\rho = \epsilon^{10}/\ddim^2$.
    Let $X' \subseteq X$ be the new client set constructed by \Cref{lemma:fl_bounded_clients_new_instance}.
    Then with probability $0.9$, there exists a solution $F \subseteq Y$, such that 
    \begin{equation}\label{eqn:fl_bounded_clients_good_solution}
        \sum_{x \in X'} \dportP(x, F) + \ocost(F) \leq (1 + \epsilon) \optFL(X, Y) + \epsilon \floc(X, S).
    \end{equation}
\end{lemma}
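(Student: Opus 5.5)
We follow the template of \Cref{lemma:fl_bounded_centers_good_solution}. Let $F^*$ be an optimal facility location solution, let $\Bad_S\subseteq S$ be the bad facilities of \Cref{def:two_types_of_bad_points}, and set
\[
F := F^* \cup \Bad_S .
\]
We rewrite the left-hand side of \eqref{eqn:fl_bounded_clients_good_solution} as $\sum_{x\in X}\dportP(\phi(x),F)+\ocost(F)$. Each $\dportP(\phi(x),F)$ is bounded by $\dportP(\phi(x),\proj{F}{\phi(x)})$, to which we apply \Cref{lemma:detour_by_projection} with $y=\proj{F}{\phi(x)}$:
\[
\dportP(\phi(x),F)\le (1+O(\sqrt\rho))\dist(\phi(x),F)+O(\sqrt\rho)\,2^{\ell(x)},
\]
where $\ell(x)$ is the highest level at which $\{\phi(x),\proj{X}{\proj{F}{\phi(x)}}\}$ is cut. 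Both points lie in $X$, so the badly-cut machinery on $X$ applies. The target bound is
\[
2^{\ell(x)}\le \tfrac{\ddim}{\epsilon}\cdot O\bigl(\dist(\phi(x),S)/\epsilon+\dist(\phi(x),F^*)+\dist(x,\phi(x))\bigr).
\]
With $\rho=\epsilon^{10}/\ddim^2$ we have $\sqrt\rho\cdot\ddim/\epsilon^2=\epsilon^3$, so these error terms sum to $O(\epsilon)\bigl(\floc(X,S)+\floc(X,F^*)\bigr)$. The opening-cost surplus $\ocost(\Bad_S)$ is handled exactly as before: \Cref{lemma:badly_cut_ambient} gives expectation $O(\epsilon)\floc(X,S)$, and Markov's inequality finishes. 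We then pass from $X'$ to $X$ via \Cref{lemma:fl_bounded_clients_new_instance}, take a union bound over the failure events, and rescale $\epsilon$.

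The core is the case analysis bounding $\ell(x)$. Write $s=\proj{S}{x}$ and $f=\proj{F}{\phi(x)}$. The useful fact is that $\proj{X}{f}$ is close to $\phi(x)$:
\[
\dist(\phi(x),\proj{X}{f})\le \dist(\phi(x),f)+\dist(f,X)\le 2\dist(\phi(x),F).
\]
So it suffices to show that a ball in $X$ around a suitable center, containing both $\phi(x)$ and $\proj{X}{f}$, is not badly cut. The cases are as follows.

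\textbf{Case 1: $x$ is not a bad client and $\dist(x,F^*)\le \dist(x,S)/(2\epsilon)$.} Then both points lie in $B_X(x,\dist(x,S)/\epsilon)$, which is not badly cut.

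\textbf{Case 2: $x$ is not a bad client and $\dist(x,F^*)>\dist(x,S)/(2\epsilon)$.} Use the ball $B_X\bigl(\proj{X}{s},\,100\dist(\proj{X}{s},F^*)+100\dist(\proj{X}{s},S)\bigr)$ attached to $s$. If $s$ is not bad, this ball is not badly cut. Its radius is comparable to $\dist(x,F^*)$, and it contains $x$ and $\proj{X}{f}$, since
\[
\dist(x,\proj{X}{s})\le 2\dist(x,S).
\]
If $s$ is bad, then $s\in F$, so $\dist(x,F)\le\dist(x,S)$ and Case 1's ball suffices.

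\textbf{Case 3: $x$ is a bad client.} Then $\phi(x)=\proj{X}{s}$. If $s\in\Bad_S\subseteq F$, then $\dist(\phi(x),F)\le\dist(\phi(x),S)$, and $\{\phi(x),\proj{X}{f}\}$ has diameter $O(\dist(\phi(x),S))$. To control its cut level I would reuse the ball of $s$ if $s$ is not bad; if $s$ is bad, I would simply charge $2^{\ell(x)}$ against a fallback bound, described next.

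The hardest step is exactly this last sub-case, $x\in\Bad_X$ with $s\in\Bad_S$, where neither ball is guaranteed uncut. The first fallback I would try is to cut the argument short: instead of bounding $\ell(x)$, use \Cref{lemma:detour_complicated} directly, because $f$ and $\phi(x)$ are very close ($\dist(\phi(x),s)=\dist(s,X)$). I expect the ornament level $h(s)\approx\log(\dist(s,X)/\sqrt\rho)$ to force $\{\phi(x),s\}$ to be cut only at level $h(s)$, giving error $O(\sqrt\rho)\dist(s,X)$. If that fails, I would instead bound the contribution in expectation: the event has probability $O(\epsilon^2)$ by independence-like reasoning, and the cost is trivially bounded by $O(\rho)$ times the diameter scale. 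Getting one of these two routes to work cleanly, and making Case 2's containment constants fit inside the factor $100$, are the points I expect to require care.
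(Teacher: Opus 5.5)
Your choice $F=F^*\cup\Bad_S$ matches the paper, as do the reduction via \Cref{lemma:detour_by_projection} and your handling of every client with $x\notin\Bad_X$ or $\proj{S}{x}\notin\Bad_S$. Your Cases 1 and 2 reproduce the paper's Cases 3 and 4, and the $s\notin\Bad_S$ part of your Case 3 is the paper's Case 2.

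The gap is the remaining case $x\in\Bad_X$, $s\in\Bad_S$. You leave it open, and neither fallback closes it alone.

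Your first route computes the right quantity. Since $s\in F$ and $\{\phi(x),s\}$ is cut exactly at level $h(s)$, \Cref{lemma:detour_complicated} gives $\dportP(\phi(x),F)\le\dportP(\phi(x),s)\le(1+O(\sqrt\rho))\dist(s,X)\le 2\dist(x,S)$. But this is not an additive error $O(\sqrt\rho)\dist(s,X)$ on top of $\dist(\phi(x),F)$. The nearest point of $F$ to $\phi(x)$ may be a facility of $F^*$ much closer than $s$, yet separated from $\phi(x)$ at an uncontrolled level. So the excess over $\dist(\phi(x),F)$ can be $\Theta(\dist(x,S))$.

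Your target bound cannot absorb this:
\begin{itemize}
\item The term $\dist(x,\phi(x))$ vanishes when $x$ itself is $\proj{X}{s}$.
\item $\dist(\phi(x),S)$ can be far smaller than $\dist(s,X)$, because $\proj{S}{\phi(x)}$ need not be $s$.
\end{itemize}
For the same reason, your Case 3 claim $\dist(\phi(x),F)\le\dist(\phi(x),S)$ is unjustified; you only get $\dist(\phi(x),F)\le\dist(s,X)$.

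The missing step is to charge the \emph{whole} cost of these clients probabilistically, as the paper does. Split the sum, bound each such client by $2\dist(x,S)$, and use
\[
\E\Bigl[\sum_{x\in\Bad_X,\ \proj{S}{x}\in\Bad_S}2\dist(x,S)\Bigr]\le\sum_{x\in X}2\dist(x,S)\Pr[x\in\Bad_X]\le O(\epsilon)\floc(X,S)
\]
by \Cref{lemma:badly_cut_ambient}. Then apply Markov and a union bound with the other two $0.99$-events (the new-instance lemma and the $\ocost(\Bad_S)$ bound).

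Your second fallback points this way but uses the wrong ingredients:
\begin{itemize}
\item The events $x\in\Bad_X$ and $s\in\Bad_S$ come from the same random decomposition, so no independence yields $O(\epsilon^2)$. You do not need it: the bound $c\epsilon$ from the client event alone suffices.
\item The per-client cost is not $O(\rho)$ times anything small; it is of order $\dist(x,S)$. That is exactly the bound your first route supplies, and the expectation argument must be applied to it.
\end{itemize}
The two routes have to be combined, not tried as alternatives.
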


Technically, proving \Cref{lemma:fl_bounded_clients_good_solution} will be more challenging than \Cref{lemma:fl_bounded_centers_good_solution}.
Let us first recall the key ideas behind the proof of \Cref{lemma:fl_bounded_centers_good_solution}.
Our main goal is to upper bound for every $x \in X'$ the highest level $\ell$ where $\{\proj{S}{x}, \proj{F}{x}\}$ is cut.
For this purpose, consider client $x$ and facility $\proj{S}{x}$; if either $x$ is not a bad client or $\proj{S}{x}$ is not a bad facility, then we can bound $\ell$ by utilizing the fact that either $B_Y(\proj{S}{x}, \dist(x, S)/\epsilon)$ or $B_Y(\proj{S}{x}, 10\dist(\proj{S}{x}, F^*))$ is not badly cut.
Otherwise, we directly add $\proj{S}{x}$ to $F$, guaranteeing that $\proj{S}{x} = \proj{F}{x} = x$; and thus $\{\proj{S}{x}, \proj{F}{x}\}$ is never cut.

However, this idea does not work in our current setting.
Due to the definition of $\dportP$, we now need to bound the highest level where $\{x, \proj{F}{x}\}$ is cut w.r.t. $\modifydecom$.
This introduces two main issues.
First, $\proj{F}{x}$ is in the high-dimensional space $Y$, but existing tools such as badly cut (\Cref{lemma:badly_cut_ambient}) and bad points (\Cref{def:two_types_of_bad_points}) are only defined w.r.t. subsets of $X$.
Second, in the case where both $x$ is a bad client and $\proj{S}{x}$ is a bad facility, 
we no longer have the property that $x$ and $\proj{F}{x}$ are collocated/never cut in $X'$, which makes it difficult to bound the cutting level.

To resolve these issues, we need some new ideas.
For the first issue, instead of directly considering the level where $\{x, \proj{F}{x}\}$ is cut, we will consider the level $\{x, \proj{X}{\proj{F}{x}}\}$ is cut.
Since $\proj{X}{\proj{F}{x}}$ is in the low-dimensional space, we can still use the tools developed before.
The correctness of replacing $\proj{F}{x}$ with it nearest neighbor in $X$ is guaranteed by \Cref{lemma:detour_by_projection}.
The second issue is much more challenging, 
because we have completely no information for any meaningful cutting level of $\{x, \proj{F}{x}\}$ in this bad case.
Nevertheless, we can still utilize the level $h(\proj{S}{x})$, the level where ornament $\proj{S}{x}$ is attached to $\modifydecom$ (Property \ref{it:ornaments} of \Cref{lemma:new_decomposition_for_general_metric}), 
and show that $\dportP(x, F) \leq O(\dist(x, S))$.
This bound looks useless at first, because $\dist(x, S)$ is not directly comparable with $\dist(x, F^*)$.
However, we can show that bad cases happen with probability $\epsilon$.
Hence the total error can be bounded by $O(\epsilon) \floc(X, S)$.

\begin{proof}[Proof of \Cref{lemma:fl_bounded_clients_good_solution}]
    Recall that $F^* \subseteq Y$ is the optimal solution and that $\Bad_S \subseteq S$ is the set of bad facilities in $S$.
    We show that with probability $0.97$, solution 
    \[
    F := F^* \cup \Bad_S
    \]
    satisfies \eqref{eqn:fl_bounded_clients_good_solution}.
    By \Cref{lemma:fl_bounded_clients_new_instance}, $X' = \phi(x)$.
    We thus rewrite the LHS of Equation \eqref{eqn:fl_bounded_clients_good_solution} as 
    \[
    \sum_{x \in X} \dportP(\phi(x), F) + \ocost(F).
    \]

    To bound the connection cost $\dportP(\phi(x), F)$ for each $x \in X$, we consider the following cases.

    \paragraph{Case 1: $x$ is a bad client, and $\proj{S}{x}$ is a bad facility.}
    By the construction of $X'$, $\phi(x) = \proj{X}{\proj{S}{x}}$. 
    Since $\proj{S}{x}$ is a bad facility, it is in $\Bad_S$, and thus in $F$.
    If $\proj{S}{x} \in X$, then $\proj{X}{\proj{S}{x}} = \proj{S}{x}$, and thus $\dportP(\phi(x), F) = 0$.
    If $\proj{S}{x} \in Y \setminus X$, then by the construction of $\modifydecom$, the set $\{\proj{X}{\proj{S}{x}}, \proj{S}{x}\}$ is cut exactly at level $h(\proj{S}{x})$, the level where $\proj{S}{x}$ is a leaf node on $\modifydecom$.
    By \Cref{lemma:detour_complicated}, 
    \begin{align}
        \dportP(\phi(x), F)
        &\leq \dportP(\proj{X}{\proj{S}{x}}, \proj{S}{x}) \notag\\
        &\leq \dist(\proj{X}{\proj{S}{x}}, \proj{S}{x}) + O(\rho) 2^{h(\proj{S}{x})} \notag\\
        &\leq \dist(x, \proj{S}{x}) + O(\rho) \frac{\dist(\proj{S}{x}, X)}{\sqrt{\rho}} 
        \notag \\
        &\leq (1 + O(\sqrt{\rho})) \dist(x, \proj{S}{x}) \notag\\
        &\leq 2 \dist(x, S). \label{eqn:case_1}
    \end{align}

    \paragraph{Case 2: $x$ is a bad client, and $\proj{S}{x}$ is not a bad facility.}
    By the construction of $X'$, $\phi(x) = \proj{X}{\proj{S}{x}}$.
    Our plan is to upper bound the highest level where $\{\phi(x), \proj{X}{\proj{F}{\phi(x)}}\}$ is cut w.r.t. $\modifydecom$.

    Since $\proj{S}{x}$ is not a bad facility, by \Cref{def:badly_cut,def:two_types_of_bad_points}, the ball 
    \begin{align*}
    &B_X\Big(\proj{X}{\proj{S}{x}}, 100 \dist(\proj{X}{\proj{S}{x}}, F^*) + 100 \dist(\proj{X}{\proj{S}{x}}, S)\Big) \\
    = & B_X\Big(\phi(x), 100 \dist(\phi(x), F^*) + 100 \dist(\phi(x), S)\Big)
    \end{align*}
    is cut at level at most 
    $\ell \leq \log(100 \dist(\phi(x), F^*) + 100 \dist(\phi(x), S)) + \epsddim$.
    Note that 
    \begin{align*}
        \dist(\phi(x), \proj{X}{\proj{F}{\phi(x)}})
        &\leq \dist(\phi(x), \proj{F}{\phi(x)}) + \dist(\proj{F}{\phi(x)}, \proj{X}{\proj{F}{\phi(x)}}) \\
        &\leq 2 \dist(\phi(x), \proj{F}{\phi(x)}) \\
        &\leq 2 \dist(\phi(x), F^*) &&\text{Since $F^* \subseteq F$.}
    \end{align*}
    Hence, $\proj{X}{\proj{F}{\phi(x)}} \in B_X(\phi(x), 100 \dist(\phi(x), F^*) + 100 \dist(\phi(x), S))$.
    Therefore, the highest level where $\{\phi(x), \proj{X}{\proj{F}{\phi(x)}}\}$ is cut can be bounded by 
    $\ell \leq \log(100 \dist(\phi(x), F^*) + 100 \dist(\phi(x), S)) + \epsddim$.

    By \Cref{lemma:detour_by_projection}, 
    \begin{align}
        \dportP(\phi(x), F)
        &\leq (1 + \sqrt{\rho}) \dist(\phi(x), F) + O(\sqrt{\rho}) 2^\ell \notag \\
        &\leq (1 + \sqrt{\rho}) \dist(\phi(x), F) + O(\sqrt{\rho}) \expepsddim (\dist(\phi(x), F^*) + \dist(\phi(x), S)) \notag \\
        &\leq (1 + \sqrt{\rho}) \dist(\phi(x), F) + O(\epsilon) (\dist(\phi(x), F^*) + \dist(\phi(x), S)). 
        \label{eqn:case_2}
    \end{align}

    \paragraph{Case 3: $x$ is not a bad client, and $\proj{S}{x}$ is a bad facility.}
    In this case $\phi(x) = x$, and the ball $B_X(x, \dist(x, S) / \epsilon)$ is cut at level at most  $\ell \leq \log(\dist(x, S)/\epsilon) + \epsddim$.
    On the other hand, since $\proj{S}{x} \in \Bad_S$, we have $\proj{S}{x} \in F$.
    Therefore, 
    \[
        \dist(x, \proj{X}{\proj{F}{x}})
        \leq 2 \dist(x, F) 
        \leq 2 \dist(x, S).
    \]
    We have $\proj{X}{\proj{F}{x}} \in B_X(x, \dist(x, S) / \epsilon)$.
    Hence, the highest level where $\{x, \proj{X}{\proj{F}{x}}\}$ is cut is at most 
    $\ell \leq \log(\dist(x, S)/\epsilon) + \epsddim$.

    By \Cref{lemma:detour_by_projection},
    \begin{align}
        \dportP(x, F) 
        &\leq (1 + O(\sqrt{\rho})) \dist(x, F) + O(\sqrt{\rho}) 2^\ell \notag \\
        &\leq (1 + O(\sqrt{\rho})) \dist(x, F) + O(\sqrt{\rho}) \expepsddim \frac{\dist(x, S)}{\epsilon} \notag\\
        & \leq (1 + O(\sqrt{\rho})) \dist(x, F) + O(\epsilon) \dist(x, S). \label{eqn:case_3}
    \end{align}

    \paragraph{Case 4: $x$ is not a bad client, and $\proj{S}{x}$ is not a bad facility.}
    In this case $\phi(x) = x$.
    Following the proof in \Cref{sec:fl_bounded_centers}, consider the following two sub-cases:

    If $\dist(x, F^*) \leq \dist(x, S) / (2 \epsilon)$, 
    first observe that 
    the ball $B_X(x, \dist(x, S) / \epsilon)$ is cut at level at most $\ell \leq \log(\dist(x, S)/\epsilon) + \epsddim$.
    On the other hand,
    \[
        \dist(x, \proj{X}{\proj{F}{x}}) 
        \leq 2 \dist(x, F) 
        \leq 2 \dist(x, F^*) 
        \leq \frac{\dist(x, S)}{\epsilon}.
    \]
    Hence $\proj{X}{\proj{F}{x}} \in B_X(x, \dist(x, S) / \epsilon)$.
    Therefore, the highest level where $\{x, \proj{X}{\proj{F}{x}}\}$ is cut can be bounded by
    $\ell \leq \log(\dist(x, S)/\epsilon) + \epsddim$.
    Similar to Case 3, we can show that 
    \begin{equation}\label{eqn:case_4.1}
        \dportP(\phi(x), F)
        \leq (1 + O(\sqrt{\rho})) \dist(x, F) + O(\epsilon) \dist(x, S).
    \end{equation}

    If $\dist(x, F^*) > \dist(x, S) / (2 \epsilon)$, we  utilize the fact that the ball
    \[
    B_X\Big(\proj{X}{\proj{S}{x}}, 100 \dist(\proj{X}{\proj{S}{x}}, F^*) + 100 \dist(\proj{X}{\proj{S}{x}}, S)\Big)
    \]
    is cut at level at most $\ell \leq \log(100 \dist(\proj{X}{\proj{S}{x}}, F^*) + 100 \dist(\proj{X}{\proj{S}{x}}, S)) + \epsddim$, by showing 
    both $x$ and $\proj{X}{\proj{F}{x}}$ are contained in the ball.

    First observe that 
    \begin{align*}
        \dist(x, F^*) 
        &\leq \dist(x, S) + \dist(\proj{S}{x}, F^*) \\
        &\leq \dist(x, S) + \dist(\proj{S}{x}, \proj{X}{\proj{S}{x}}) + \dist(\proj{X}{\proj{S}{x}}, F^*) \\
        &\leq 2 \dist(x, S) + \dist(\proj{X}{\proj{S}{x}}, F^*) \\
        & \leq 4 \epsilon \dist(x, F^*) + \dist(\proj{X}{\proj{S}{x}}, F^*).
    \end{align*}
    Hence, $\dist(x, F^*) \leq \frac{1}{1 - 4 \epsilon} \dist(\proj{X}{\proj{S}{x}}, F^*) \leq 3 \dist(\proj{X}{\proj{S}{x}}, F^*)$.
    
    Furthermore, note that 
    \begin{align*}
        \dist(x, \proj{X}{\proj{S}{x}})
        \leq 2 \dist(x, S) 
        < 4\epsilon \dist(x, F^*)
        \leq 12 \epsilon \dist(\proj{X}{\proj{S}{x}}, F^*)
        \leq 10 \dist(\proj{X}{\proj{S}{x}}, F^*), 
    \end{align*}
    and
    \begin{align*}
        \dist(\proj{X}{\proj{F}{x}}, \proj{X}{\proj{S}{x}})
        &\leq \dist(x, \proj{X}{\proj{F}{x}}) + \dist(x, \proj{X}{\proj{S}{x}}) \\
        &\leq 2 \dist(x, F) + 2 \dist(x, S) \\
        &\leq 2 \dist(x, F^*) + 4 \epsilon \dist(x, F^*)\\
        &= (2 + 4 \epsilon) \dist(x, F^*) \\
        & \leq 3(2 + 4 \epsilon) \dist(\proj{X}{\proj{S}{x}}, F^*) \\
        & \leq 10 \dist(\proj{X}{\proj{S}{x}}, F^*).
    \end{align*}
    We conclude that $x, \proj{X}{\proj{F}{x}} \in B_X(\proj{X}{\proj{S}{x}}, 100 \dist(\proj{X}{\proj{S}{x}}, F^*) + 100 \dist(\proj{X}{\proj{S}{x}}, S))$.
    Thus the highest level where $\{x, \proj{X}{\proj{F}{x}}\}$ is cut can be bounded by 
    $\ell \leq \log(100 \dist(\proj{X}{\proj{S}{x}}, F^*) + 100 \dist(\proj{X}{\proj{S}{x}}, S)) + \epsddim$.
    Similar to Case 2, we can show that 
    \begin{align}
        \dportP(\phi(x), F)
        &\leq (1 + O(\sqrt{\rho})) \dist(x, F) + O(\epsilon) \Big(\dist(\proj{X}{\proj{S}{x}}, F^*) + \dist(\proj{X}{\proj{S}{x}}, S) \Big) \notag \\
        &\leq (1 + O(\sqrt{\rho})) \dist(x, F) + O(\epsilon) \dist(x, F^*) + O(\epsilon) \dist(x, S). \label{eqn:case_4.2}
    \end{align}

    Combining \eqref{eqn:case_4.1} with \eqref{eqn:case_4.2}, we conclude that in this case, 
    \begin{equation}\label{eqn:case_4}
        \dportP(\phi(x), F)
        \leq (1 + O(\sqrt{\rho})) \dist(x, F) + O(\epsilon) \dist(x, F^*) + O(\epsilon) \dist(x, S).
    \end{equation}

    \medskip

    Combining above four cases (i.e., \eqref{eqn:case_1}, \eqref{eqn:case_2}, \eqref{eqn:case_3} and \eqref{eqn:case_4}), we have 
    \begin{align}
        &\qquad \sum_{x \in X} \dportP(\phi(x), F) + \ocost(F) \notag \\
        &\leq \sum_{\substack{x \in \Bad_X, \\ \proj{S}{x} \in \Bad_S}} 2 \dist(x, S) + \ocost(F)\notag \\
        & \hspace{5em} + \sum_{\substack{x \notin \Bad_X \text{ or} \\ \proj{S}{x} \notin \Bad_S}} \Big(
            (1 + O(\sqrt{\rho})) \dist(\phi(x), F)
            + O(\epsilon) \dist(\phi(x), F^*)
            + O(\epsilon) \dist(\phi(x), S)
        \Big)\\
        &\leq \sum_{\substack{x \in \Bad_X, \\ \proj{S}{x} \in \Bad_S}} 2 \dist(x, S) 
        + (1 + O(\sqrt{\rho}))\floc(X', F) 
        + O(\epsilon)\floc(X', F^*) 
        + O(\epsilon) \floc(X', S) \label{eqn:combine_4_cases}
    \end{align}

    To bound the RHS of \eqref{eqn:combine_4_cases}, first note that by \Cref{lemma:fl_bounded_clients_new_instance},
    \begin{align}
        \text{with probability } 0.99, \quad 
        &\floc(X', F) \leq \floc(X, F) + O(\epsilon) \floc(X, S), \notag \\ 
        &\floc(X', F^*) \leq \floc(X, F^*) + O(\epsilon) \floc(X, S) \notag\\
        &\floc(X', S) \leq \floc(X, S) + O(\epsilon) \floc(X, S)
        \leq O(1) \floc(X, S) \label{eqn:first_0.99}
    \end{align}
    hold simultaneously.

    Next, we bound $\floc(X, F)$ by $\optFL(X, Y) + O(\epsilon) \floc(X, S)$.
    Write 
    \begin{align*}
        \floc(X, F) 
        &= \sum_{x \in X} \dist(x, F) + \ocost(F) \\
        & \leq \sum_{x \in X} \dist(x, F^*) + \ocost(F^*) + \sum_{f \in \Bad_S} \ocost(f) \\
        &= \optFL(X, Y) + \sum_{f \in \Bad_S} \ocost(f).
    \end{align*}
    Then 
    \begin{align*}
        0 \leq \floc(X, F) - \optFL(X, Y) \leq \sum_{f \in \Bad_S} \ocost(f).
    \end{align*}
    Taking expectation, we have
    \begin{align*}
        \E[\floc(X, F) - \optFL(X, Y)]
        &\leq \E\left[\sum_{f \in \Bad_S} \ocost(f)\right] \\
        &= \sum_{f \in S} \ocost(f) \cdot \Pr[f \text{ is a bad facility}] \\
        &\leq O(\epsilon) \sum_{f \in S} \ocost(f) && \text{By \Cref{lemma:badly_cut_ambient}} \\
        & \leq O(\epsilon) \floc(X, S).
    \end{align*}
    Applying Markov's inequality to the (non-negative) random variable $\floc(X, F) - \optFL(X, Y)$, 
    \begin{equation}\label{eqn:second_0.99}
        \text{with probability } 0.99, \qquad  
    \floc(X, F) \leq \optFL(X, Y) + O(\epsilon) \floc(X, S).
    \end{equation}

    Finally, we bound $
    \sum_{\substack{x \in \Bad_X, \\ \proj{S}{x} \in \Bad_S}} 2 \dist(x, S)
    $. Observe that 
    \begin{align*}
        \E \sum_{\substack{x \in \Bad_X, \\ \proj{S}{x} \in \Bad_S}} 2 \dist(x, S)
        &=  \sum_{x \in X} 2 \dist(x, S) \cdot \Pr[x \in \Bad_X, \proj{S}{x} \in \Bad_S] \\
        & \leq O(\epsilon) \cdot \sum_{x \in X} 2 \dist(x, S) &&\text{By \Cref{lemma:badly_cut_ambient}}\\
        & \leq O(\epsilon) \cdot \floc(X, S).
    \end{align*}
    Applying Markov's inequality, we have 
    \begin{equation}\label{eqn:third_0.99}
        \text{with probability } 0.99, \qquad  
        \sum_{\substack{x \in \Bad_X, \\ \proj{S}{x} \in \Bad_S}} 2 \dist(x, S) 
        \leq O(\epsilon) \floc(X, S).
    \end{equation}

    Combining \eqref{eqn:first_0.99}, \eqref{eqn:second_0.99} and \eqref{eqn:third_0.99} with \eqref{eqn:combine_4_cases}, we conclude that with probability $0.97$ 
    \begin{align*}
        \sum_{x \in X} \dportP(\phi(x), F) + \ocost(F)
        &\leq 
        + (1 + O(\sqrt{\rho})) \floc(X, F) + O(\epsilon) \optFL(X, Y) + O(\epsilon) \floc(X, S)\\
        &\leq (1 + O(\epsilon)) \optFL(X, Y) + O(\epsilon) \floc(X, S).
    \end{align*}
    Rescaling $\epsilon$ concludes the proof.
    
\end{proof}

\subsection{The Algorithm}
\label{sec:alg_fl_bounded_clients}

Our algorithm is based on the dynamic programming framework proposed by~\cite{Cohen-AddadFS21}.
Given as input $(X \cup Y, \dist)$ with $\ddim(X) \leq \ddim$, the algorithm constructs the new hierarchical decomposition $\modifydecom$ on top of $X \cup Y$ (\Cref{alg:modified_decomposition}).
The dataset $X$ is subsequently transformed to $X'$ according to \Cref{lemma:fl_bounded_clients_new_instance}.
We run a dynamic program for $X'$ on top of $\modifydecom$, with respect to the portal-respecting distance measure $\dportP$.
Although the DP framework is similar to that of~\cite{Cohen-AddadFS21}, our algorithm handles facilities in the ambient space, which is essentially different from the setting of~\cite{Cohen-AddadFS21}.
This new feature makes our algorithm more technical challenging.
We describe the algorithm as follows.

\paragraph{Preprocessing stage.}
The algorithm first applies the techniques in \Cref{sec: constant-approx-algos} to compute a $2^{O(\ddim)}$-approximation solution $S \subseteq Y$ for facility location.
At first, this approximation might be too large, as it requires us to rescale the precision parameter $\epsilon$ by a factor of $1/2^{O(\ddim)}$, which may blow up the time complexity of our algorithm.
We ignore this potential issue when describing our algorithm at this point, and discuss how we can fix it in \Cref{sec:fl_bounded_clients_alg_correctness}.

The algorithm then uses \Cref{alg:modified_decomposition} to construct the new hierarchical decomposition $\modifydecom$,
where the scaling parameter of portals is set to $\rho = \epsilon^{10}/\ddim^2$.
For every $x \in X$, the algorithm checks if $B_X(x, \dist(x, S)/\epsilon)$ is badly cut w.r.t. $\modifydecom$ (i.e., if $x$ is a bad client).
It moves every bad client $x$ to $\proj{X}{\proj{S}{x}}$,\footnote{In fact, this step can be efficiently done by computing $(1 + \epsilon)$-ANN instead of the exact $\proj{X}{\proj{S}{x}}$.
This replacement does not affect the correctness of our previous analysis, and only enlarges the final approximation ratio by a $(1 + O(\epsilon))$ factor.}
creating the new instance $X'$.

\paragraph{Dynamic program.}
Each table entry of the dynamic program is represented by a cluster $C$ on the modified decomposition $\modifydecom$, together with a configuration
\[
\veca_C = \braket{a_C^1, a_C^2, \dots, a_C^{|P_C|}}, \qquad 
\vecb_C = \braket{b_C^1, b_C^2, \dots, b_C^{|P_C|}}.
\]
Roughly speaking, the configuration encodes the positional information of the current facility set $F$.
Specifically, each $a_C^p$ encodes the distance from portal $p$ to the closest facility inside $C$, 
and $b_C^p$ encodes the distance from portal $p$ to the closest facility outside $C$.
The value stored in entry $(C, \veca_C, \vecb_C)$ is the minimum cost of $C$ among all potential facility sets which are consistent with the configuration.
Formally, 
\begin{align*}
    g(C, \veca_C, \vecb_C)    
    := \min_{\substack{
    F \colon \text{$F$ is consistent} \\
    \text{with the configuration}}
    }
    \left\{\sum_{x \in X' \cap C}
    \dportP(x, F)
    + \ocost(F \cap C) \right\}.
\end{align*}

\paragraph{Base case.}
The base case of the dynamic program corresponds to the leaf nodes of $\modifydecom$ which are in $X'$.
Consider $x \in X'$ and its corresponding node $\{x\} \in \modifydecom_0$, together with a configuration $(a, b)$. 
(Wlog, assume that $x$ itself is the only portal of the node $\{x\}$.)
It is easy to check if the configuration is valid.
Recall that $a$ encodes the distance from $x$ to the closest facility in $\{x\}$.
Therefore, if $x \in Y \cap X'$, then $a$ is either $0$ or $\infty$;
if $x \in X' \setminus Y$, then $a$ must be $\infty$.
We set $g(\{x\}, a, b) = \infty$ if the configuration is invalid.

If the configuration is valid, we then compute the cost inside $\{x\}$.
If $a = 0$, then $x$ is a facility in the solution, we thus have $\ocost(F \cap \{x\}) = \ocost(x)$.
Otherwise, $\ocost(F \cap \{x\}) = 0$.
Furthermore, the connection cost of $x$ is $\dportP(x, F) = \min\{a, b\}$ multiplied by $w(x)$, the number of copies of $x$ in $X'$.
Hence, \[
g(\{x\}, a, b) = w(x) \cdot \min\{a, b\} + \ocost(F \cap \{x\}).
\]

\paragraph{Updating the DP table.}
Consider a higher level cluster $C \in \modifydecom_\ell$ with $\ell \geq 1$, together with its configuration $\veca_C = \braket{a_C^1, \dots, a_C^{|P_C|}}, \vecb_C = \braket{b_C^1, \dots, b_C^{|P_C|}}$.
Recall that $\Schild(C)$ is the set of $C$'s children which are ornaments, 
and $\Nchild(C)$ is the set of non-ornament children of $C$.

To compute $g(C, \veca_C, \vecb_C)$, the algorithm enumerates all combinations of configurations for clusters $D \in \Nchild(C)$.
For each of these combinations 
\begin{equation}\label{eqn:combination_of_configurations}
    \Big\{\Big(D, \veca_D = \braket{a^1_D, \dots, a^{|P_D|}_D}, \vecb_D  = \braket{b^1_D, \dots, b^{|P_D|}_D}\Big)\Big\}_{D \in \Nchild(C)},
\end{equation}
we discuss in the following how to compute its cost $\Gamma = \Gamma(\{(D, \veca_D, \vecb_D)\}_{D \in \Nchild(C)})$.
We will first need the following definition of weighted set cover problem. 

\begin{definition}[Weighted set cover problem]\label{prob:set_cover}
Given a universe $U$ with $m$ elements and $n$ subsets $R_1, R_2, \dots, R_n \subseteq U$ with $\bigcup_{i = 1}^n R_i = U$, together with weights $w_1, w_2, \dots, w_n$, 
compute a set of indices $I \subseteq [n]$ that minimizes $\sum_{i \in I} w_i$, subject to $\bigcup_{i \in I} R_i = U$.
\end{definition}

We compute the cost $\Gamma = \Gamma(\{(D, \veca_D, \vecb_D)\}_{D \in \Nchild(C)})$ by the following steps.

\begin{itemize}
    \item \textbf{Step 0: Initialize.}
    Maintain a universe $U$, which is set to $\emptyset$ initially.
    For every ornament $\{y\} \in \Schild(C)$, maintain a set $R_y$, which is set to $\emptyset$ initially.

    Roughly speaking, $U$ is the set of portals that need to be served by an ornament in $\Schild(C)$, 
    and $R_y$ contains the portals which can be served by $y$.

    \item \textbf{Step 1: Check consistency for $p \in P_C$.}
    For every $p \in P_C$, the algorithm finds $D \in \Nchild(C)$ and $q \in P_D$, such that $a_C^p = a_D^q + \dist(p, q)$.
    If such $D$ and $q$ exist, then check for the next $p \in P_C$.

    Otherwise we add the pair $(C, p)$ to $U$, indicating portal $p$ needs to be served by a facility in $\Schild(C)$.
    Our algorithm then finds all ornaments $\{y\} \in \Schild(C)$ that satisfy $a_C^p = \dist(p, y)$, and adds the pair $(C, p)$ to $R_y$, indicating $p$ can be served by $y$.
    If no such $y$ exists, we claim that the combination \eqref{eqn:combination_of_configurations} is inconsistent, and return $\Gamma = \infty$.
    
    \item \textbf{Step 2: Check consistency for $D \in \Nchild(C)$ and $q \in P_D$.}
    For every $D \in \Nchild(C)$ and $q \in P_D$, our algorithm first tries to find $p \in P_C$, such that $b^q_D = b_C^p + \dist(p, q)$.
    If such $p$ exists, it means $q$ connects to a facility outside $D$ first via $p$, then to a facility outside $C$.
    We can then check for the next $D$ and $q$.

    If such $p$ does not exist, our algorithm tries to find a non-ornament cluster $D' \in \Nchild(C), D' \neq D$ and a portal $q' \in P_{D'}$, such that $b^q_D = a^{q'}_{D'} + \dist(q, q')$.
    If such $D', q'$ exist, it means $q$ connects to a facility outside $D$ first via $q'$, then to a facility inside $D'$.
    We can then check for the next $D$ and $q$.

    Assume there exist neither $p \in P_C$, such that $b^q_D = b_C^p + \dist(p, q)$, nor $D' \in \Nchild(C), D' \neq D$ and $q' \in P_{D'}$, such that $b^q_D = a^{q'}_{D'} + \dist(q, q')$.
    We add the pair $(D, q)$ to the universe $U$, indicating that portal $q \in P_D$ needs to be served by a facility in $\Schild(C)$.
    Our algorithm then finds all ornaments $\{y\} \in \Schild(C)$ that satisfy $b^q_D = \dist(q, y)$, and adds $(D, q)$ to $R_y$, indicating $q$ can be served by $y$.
    If no such $y$ exist, we claim that the combination \eqref{eqn:combination_of_configurations} is inconsistent, and return $\Gamma = \infty$.

    \item \textbf{Step 3: Compute the opening cost for $\Schild(C)$.}
    At this stage, 
    the universe $U$ is a collection of (cluster, portal) pairs:
    \begin{equation}\label{eqn:universe}
        U \subseteq \{C\} \times P_C \cup \bigcup_{D \in \Nchild(C)} \{D\} \times P_D.
    \end{equation}
    For every $\{y\} \in \Schild(C)$, $R_y$ is a subset of $U$, and $\bigcup_{y \in \Schild(C)} R_y = U$.
    We solve the weighted set cover problem (\Cref{prob:set_cover}) for universe $U$ and set system $\{R_y \colon \{y\}\in \Schild(C)\}$, where the weight of each set $R_y$ is $\ocost(y)$.
    The opening cost for $\Schild(C)$ is the optimal cost for this set cover instance, denoted by $\Gamma_{\text{open}}$.

    \item \textbf{Step 4: compute the total cost $\Gamma$.}
    If the combination \eqref{eqn:combination_of_configurations} passes all the consistency checks, its cost is then computed by \[
    \Gamma = \sum_{D \in \Nchild(C)} g(D, \veca_D, \vecb_D) + \Gamma_{\text{open}}.
    \]
\end{itemize}

\medskip
\noindent The value of $g(C, \veca_C, \vecb_C)$ is the minimum $\Gamma$ over all possible combinations of child configurations:
\begin{equation*}
    g(C, \veca_C, \vecb_C)
    = \min_{\{(D, \veca_D, \vecb_D)\}_{D \in \Nchild(C)}} \Gamma\Big(
        \{(D, \veca_D, \vecb_D)\}_{D \in \Nchild(C)}
    \Big).
\end{equation*}

\paragraph{Reducing the number of configurations.}
Similar to \Cref{sec:fl_alg}, for every cluster $C \in \modifydecom_\ell$, we can restrict $a_p, b_p$ to multiplications of $\rho 2^\ell$ in the range $[0, 2^\ell/\epsilon]$.
This reduces the number of configurations to $(\epsilon \rho)^{-2 |P_C|} = 2^{2^t}$ for $ t = O(\ddim \log(\ddim/\epsilon))$.
The reason is exactly the same as \Cref{sec:fl_alg}, and thus is omitted here.

\subsubsection{Proof of Correctness}
\label{sec:fl_bounded_clients_alg_correctness}

Let $\widehat{F} \subseteq Y$ be the set of facilities returned by our algorithm in \Cref{sec:fl_alg}.
Similar to \Cref{sec:fl_alg_correctness}, 
we first show the following weaker bound for $\floc(X, \widehat{F})$.

\begin{lemma}\label{lemma:fl_bounded_clients_correctness_weaker}
    Let $\widehat{F} \subseteq Y$ be the set of facilities returned by the algorithm in \Cref{sec:alg_fl_bounded_clients} w.r.t. a solution $S \subseteq Y$.
    Then with constant probability, 
    \begin{align*}
        \floc(X, \widehat{F}) \leq (1 + \epsilon) \optFL(X, Y) + 2 \epsilon \floc(X, S).
    \end{align*}
\end{lemma}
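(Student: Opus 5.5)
The plan is to mirror the proof of \Cref{lemma:fl_bounded_centers_correctness_weaker}. The extra work is to argue that the dynamic program of \Cref{sec:alg_fl_bounded_clients} really returns a facility set $\widehat{F}$ minimizing the portal-respecting objective $\sum_{x \in X'} \dportP(x, F) + \ocost(F)$ over all $F \subseteq Y$, up to the $(1+O(\epsilon))$ loss from discretizing and truncating the configurations.

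First I would establish this optimality claim. A feasible $F$ induces, at every cluster $C \in \modifydecom_\ell$, a true configuration $(\veca_C, \vecb_C)$ given by portal-respecting distances to facilities inside and outside $C$. The key point is the handling of ornaments: an ornament child $\{y\}$ contains no client of $X'$ (since $X' \subseteq X$) and has no portals of its own. Hence its only contribution is its opening cost $\ocost(y)$ together with the portal-respecting distances $\dist(p,y)$ from portals of $C$ or of non-ornament siblings. These are exactly the links in the last hop of a portal-respecting path ending at $y$, since $p_{h(y)}(y)=y$ and $h(y)$ is the level of $C$'s children.

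Consequently, for the configurations induced by $F$, the sets $R_y$ built in Steps 1--2 are covered by $\{y : y \in F \cap \Schild(C)\}$. The set cover optimum $\Gamma_{\text{open}}$ is therefore at most $\ocost(F \cap \Schild(C))$. Conversely, any cover selected by the DP, combined with the children's solutions, yields a facility set that is consistent with the configuration and attains that cost. An induction over levels, starting from the base case on leaves $\{x\}$, $x \in X'$, then shows that $g(\text{root})$ equals $\min_F \sum_{x\in X'} \dportP(x,F)+\ocost(F)$ up to rounding. I expect the equality direction (``$\le$'') to be the main obstacle: the distances encoded by the configuration must coincide with $\dportP$ as defined in \Cref{sec:portal_respecting_new}, even though paths may terminate at an ornament at a high level. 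I would check carefully that criterion (a) and the ornament test $b_D^q=\dist(q,y)$ capture exactly the two ways a portal-respecting path can end.

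Given optimality, the chain is as before. By \Cref{lemma:fl_bounded_clients_new_instance}, with probability $0.99$ we have $\floc(X,\widehat F)\le \floc(X',\widehat F)+\epsilon\floc(X,S)$. Next, $\floc(X',\widehat F)\le \sum_{x\in X'}\dportP(x,\widehat F)+\ocost(\widehat F)$ because $\dportP\ge \dist$ by \Cref{lemma:detour_complicated}. The latter is at most the same objective for the good solution $F$ of \Cref{lemma:fl_bounded_clients_good_solution}, up to the discretization factor. That lemma bounds it, with probability $0.9$, by $(1+\epsilon)\optFL(X,Y)+\epsilon\floc(X,S)$. Summing the two error terms gives $2\epsilon\floc(X,S)$, and a union bound yields constant success probability after rescaling $\epsilon$.
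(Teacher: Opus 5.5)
Your proposal is correct and follows the paper's proof. It uses the same chain: \Cref{lemma:fl_bounded_clients_new_instance}, then $\dportP \geq \dist$, then optimality of $\widehat{F}$ for the portal-respecting objective, then \Cref{lemma:fl_bounded_clients_good_solution}. The paper simply takes the DP's optimality as given, so your checks of the ornament/set-cover step and the discretization loss are extra care rather than a different route. In that check you need not show that the DP's encoded distances coincide with $\dportP$ (the DP may find shorter portal paths than the greedy one defining $\dportP$); it suffices that they lie between $\dist$ and $\dportP$, which is all the chain uses.
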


\begin{proof}
    By \Cref{sec:alg_fl_bounded_clients}, $\widehat{F} \subseteq Y$ minimizes the facility location cost w.r.t. $\dportP(\cdot, \cdot)$, i.e., 
    \[
    \widehat{F} = \argmin_{F \subseteq Y} \sum_{x \in X'} \dportP(x, F) + \ocost(F).
    \]
    We note that the definition of $X'$ depends on $S$.

    By \Cref{lemma:fl_bounded_clients_good_solution}, 
    with probability $0.9$, there exists a solution $F \subseteq Y$, such that 
    \begin{equation}\label{eqn:apply_fl_bounded_clients_good_solution}
        \sum_{x \in X'} \dportP(x, F) + \ocost(F) \leq (1 + \epsilon) \optFL(X, Y) + \epsilon \floc(X, S).
    \end{equation}
    Therefore, 
    \begin{align*}
        \floc(X, \widehat{F}) 
        &\leq \floc(X', \widehat{F}) + \epsilon \floc(X, S) && \text{By \Cref{lemma:fl_bounded_clients_new_instance}, w.p. $0.99$} \notag \\
        & \leq \sum_{x \in X'} \dportP(x, \widehat{F}) + \ocost(F)
        + \epsilon \floc(X, S) \\
        & \leq \sum_{x \in X'} \dportP(x, F) + \ocost(F)
        + \epsilon \floc(X, S) &&\text{By definition of $\widehat{F}$}\\
        &\leq (1 + \epsilon) \optFL(X, Y) + 2 \epsilon \floc(X, S) && \text{By \eqref{eqn:apply_fl_bounded_clients_good_solution}}.
    \end{align*}
\end{proof}

We use the same bootstrap as in~\cite[Section 3.3]{Cohen-AddadFS21} and \Cref{sec:fl_alg_correctness}.
Concretely, we can start from an arbitrary $2^{O(\ddim)}$-approximation $\widehat{F}_0$; at each stage $i$ run the algorithm in \Cref{sec:alg_fl_bounded_clients} with $S = \widehat{F}_i$
to obtain a better approximate solution $\widehat{F}_{i + 1}$.
The procedure halts when $i$ reaches $c \cdot \ddim$ for some sufficiently large constant $c$.
We return $\widehat{F}_{c \cdot \ddim}$ as our final solution.
Since the success probability in \Cref{lemma:fl_bounded_clients_correctness_weaker} can be boosted to $1 - 1/\ddim^2$ by standard amplification,
we can guarantee that with constant probability, all steps of our bootstrap succeed.
Moreover, the bootstrap only introduces a $\poly(\ddim)$ overhead in the running time.

The proof of $\floc(X, \widehat{F}_{c \cdot \ddim}) \leq (1 + \epsilon) \optFL(X, Y)$ is the same as \Cref{sec:fl_alg_correctness}, and we omit it here.

\subsubsection{Time Complexity.}
\label{sec:time_complexity_fl_bounded_clients}

Recall in \Cref{sec:alg_fl_bounded_clients}, we need to solve a weighted set cover problem to determine the opening cost for every $\Schild(C)$.
It is well known that weighted set cover is NP-hard in general.
Nonetheless, we always have $m = |U| = O(1)$ in our setting.
This allows us to compute weighted set cover exactly in $O(n)$ time.

\begin{lemma}\label{lemma:set_cover_alg}
    There exists an algorithm that computes the weighted set cover problem (\Cref{prob:set_cover}) exactly in $O(2^m n)$ time.
\end{lemma}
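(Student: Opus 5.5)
We use the standard dynamic program over subsets of the universe. Identify $U$ with $\{1,\dots,m\}$ and encode each set $R_i$ as a bitmask in $\{0,1\}^m$. Computing these masks takes $O(\sum_i |R_i|) \le O(mn)$ time, which is within the claimed $O(2^m n)$ bound. We then maintain a table $T$ indexed by subsets $A \subseteq U$. The entry $T[A]$ stores the minimum total weight of a subcollection of the sets processed so far whose union is exactly $A$. Initially $T[\emptyset]=0$ and $T[A]=\infty$ for every $A\neq\emptyset$.

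The sets $R_1,\dots,R_n$ are processed one at a time. When handling $R_i$, we compute a new table
\[
T'[A] = \min\Bigl\{T[A],\ \min_{B \colon B\cup R_i = A} T[B] + w_i\Bigr\}.
\]
It is easier to implement this as a push step: for every $B\subseteq U$, update $T'[B\cup R_i] \gets \min\{T'[B\cup R_i],\, T[B]+w_i\}$. With bitmasks, each union takes $O(1)$ word operations, since $m=O(1)$ in our application (and more generally $m$ is at most the word size). So each set costs $O(2^m)$ time, and the whole algorithm runs in $O(2^m n)$ time. The output is $T[U]$, which is finite because $\bigcup_i R_i = U$. Storing for each entry the index that last improved it lets us recover the optimal index set $I$ by backtracking.

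Correctness is an induction on $i$. After processing $R_1,\dots,R_i$, $T[A]$ equals the minimum of $\sum_{j\in J} w_j$ over all $J\subseteq[i]$ with $\bigcup_{j\in J}R_j = A$, or $\infty$ if no such $J$ exists. The inductive step splits on whether $i\in J$. If $i\notin J$, the optimum is the old value $T[A]$. If $i\in J$, then $J\setminus\{i\}$ has some union $B$ with $B\cup R_i=A$, so the optimum is $T[B]+w_i$ minimized over such $B$. These are exactly the two terms in the recurrence. The nonnegativity of the weights is not even needed, since each set is used at most once.

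The only delicate point is the word-RAM assumption that makes subset unions cost $O(1)$. In our setting the universe satisfies $|U| \le |P_C| + \sum_{D} |P_D| = \rho^{-O(\ddim)}$, which is a constant independent of $n$ and $m$, so the assumption holds. If one insists on counting bit operations, the bound becomes $O(2^m m n)$, and this is still absorbed into the $2^{2^t}$ factor.
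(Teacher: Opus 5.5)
Your proposal is correct and follows essentially the same route as the paper: both run a dynamic program over subsets of $U$, processing $R_1,\dots,R_n$ one at a time with an include/exclude recurrence, for $O(2^m n)$ total time. The only cosmetic difference is that the paper's state $h(i,T)$ is ``minimum weight to cover $T$'' and pulls from $h(i-1,T\setminus R_i)$, whereas your $T[A]$ tracks the exact union and pushes to $B\cup R_i$; these formulations are equivalent.
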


\begin{proof}
    Consider the following dynamic program.
    For $i \in [n]$ and $T \subseteq U$, let $h(i, T)$ be the minimum weight of covering $T$ by $R_1, R_2, \dots, R_i$.
    Specifically, the optimal value for the original problem is $h(n, U)$.
    Define $h(0, \emptyset) = 0$ and $h(0, T) = \infty$ for $T \neq \emptyset$, and observe that
    \[
    h(i, T) = \min\{h(i - 1, T), h(i - 1, T \setminus R_i) + w_i\}.
    \]
    The value of all $h(i, T)$ can be computed in time $O(2^m n)$.
\end{proof}

We are ready to prove \Cref{thm:FL_bounded_clients} for the time complexity.
By \Cref{lem: transform_to_bounded_aspect_ratio,lem:merge_instances}, we can wlog assume that the aspect ratio of $X \cup Y$ is $\Delta = \poly(n, m)$.

\begin{proof}[Proof of \Cref{thm:FL_bounded_clients} (time complexity)]
    We analyze time complexity for the preprocessing stage and the dynamic program separately.

    \paragraph{Preprocessing stage.}
    The $2^{O(\ddim)}$-approximate solution $S \subseteq Y$ can be computed in $2^{O(\ddim)} O(n + m) \log \Delta$ time by \Cref{sec: constant-approx-algos}.
    By \Cref{lemma:new_decomposition_for_general_metric}, the new decomposition $\modifydecom$ can be computed in time $\rho^{-O(\ddim)} O(n + m) 
    \log \Delta = (\ddim/\epsilon)^{O(\ddim)} O(n + m) \log \Delta$.
    By~\cite{Cohen-AddadFS21},
    the algorithm can find all bad client $x \in \Bad_X$ (i.e., $x \in X$ s.t. $B_X(x, \dist(x, S)/\epsilon)$ is badly cut) in time $\epsilon^{-O(\ddim)} O(n \log \Delta)$.
    Finally, to construct the new dataset $X'$, the algorithm first uses the $(1 + \epsilon)$-ANN data structure in \Cref{lemma:ANN_doubling_queries} to compute $\proj{S}{x}$, then uses the $(1 + \epsilon)$-ANN data structure in \Cref{lemma:ANN_doubling_data} to further compute $\proj{X}{\proj{S}{x}}$, and moves $x$ to $\proj{X}{\proj{S}{x}}$.
    The total time complexity is $\epsilon^{-O(\ddim)} \tilde{O}(n + m) \log \Delta$.

    In conclusion, the time complexity of the preprocessing stage is $(\ddim/\epsilon)^{O(\ddim)} \tilde{O}(n + m) \log \Delta$.

    \paragraph{Dynamic program.}
    Fix a non-ornament cluster $C$ and a configuration $\veca_C = \braket{a_C^1, \dots, a_C^{|P_C|}}$, $\vecb_C = \braket{b_C^1, \dots, b_C^{|P_C|}}$.
    When computing $g(C, \veca_C, \vecb_C)$, the algorithm enumerates all combinations of configurations for $D \in \Nchild(C)$ in the form of \eqref{eqn:combination_of_configurations}, and computes the cost $\Gamma$ for each combination.
    We start by analyzing the time complexity of computing $\Gamma$.

    In step 0, the algorithm initializes $R_y \gets \emptyset$ for $\{y\} \in \Schild(C)$.
    This takes $O(|\Schild(C)|)$ time.

    In step 1, checking consistency for $p \in P_C$ requires going over all $D \in \Nchild(C), q \in P_D$ and $\{y\} \in \Schild(C)$. 
    It thus takes time
    \[
    |P_C| \cdot \left(\sum_{D \in \Nchild(C)} |P_D| + |\Schild(C)|\right)
    = \rho^{-O(\ddim)} + \rho^{-O(\ddim)} |\Schild(C)|.
    \]

    In step 2, checking consistency for $D \in \Nchild(C), q \in P_D$ requires going over all $p \in P_C$, $D' \in \Nchild(C), q' \in P_{D'}$ and $\{y\} \in \Schild(C)$. 
    It thus takes time
    \[
    \sum_{D \in \Nchild(C)} \sum_{q \in P_D} 
    \left(
        |P_C| + \sum_{D' \in \Nchild(C)} |P_{D'}| 
        + |\Schild(C)|
    \right)
    = \rho^{-O(\ddim)} \cdot(1+ |\Schild(C)|).
    \]

    In step 3, the algorithm solves a weighted set cover problem for universe $U$ given in \eqref{eqn:universe}, and set system $\{R_y \colon y \in \Schild(C)\}$.
    By \Cref{lemma:set_cover_alg}, this can be done in $O(2^{|U|} |\Schild(C)|)$ time.
    Since $|U| \leq \rho^{-O(\ddim)}$, the time complexity of step 3 is 
    \[
    2^{\rho^{-O(\ddim)}} |\Schild(C)|.
    \]

    We conclude that the time complexity of computing the cost $\Gamma$ for a single combination \eqref{eqn:combination_of_configurations} is 
    \[
    \rho^{-O(\ddim)} + 2^{\rho^{-O(\ddim)}} |\Schild(C)|.
    \]
    Since there are a total of $(\epsilon \rho)^{-\rho^{-O(\ddim)}}$ such combinations, 
    the time complexity of computing $g(C, \veca_C, \vecb_C)$ is 
    \[
    (\epsilon \rho)^{-\rho^{-O(\ddim)}} \cdot \left(
        \rho^{-O(\ddim)} + 2^{\rho^{-O(\ddim)}} |\Schild(C)| 
    \right)
    \leq (\epsilon \rho)^{-\rho^{-O(\ddim)}} \cdot (1 + |\Schild(C)|).
    \]

    The total time complexity of filling the DP table is 
    \begin{align*}
        & \qquad \sum_{\ell = 0}^L \sum_{\substack{C \in \modifydecom_\ell \\ \text{$C$ is non-ornament}}} 
        (\epsilon \rho)^{-2 |P_C|} \cdot 
        (\epsilon \rho)^{-\rho^{-O(\ddim)}} \cdot (1 + |\Schild(C)|) \\
        & = (\epsilon \rho)^{-\rho^{-O(\ddim)}} 
        \sum_{\ell = 0}^L \sum_{\substack{C \in \modifydecom_\ell \\ \text{$C$ is non-ornament}}} 
        (1 + |\Schild(C)|) \\
        & = (\epsilon \rho)^{-\rho^{-O(\ddim)}}
        \sum_{\ell = 0}^L \Big(|\{C \in \modifydecom_\ell \colon \text{$C$ is non-ornament}\}|
        + |\{y \in Y \colon h(y) = \ell\}|\Big) \\
        & = \left(\frac{\ddim}{\epsilon}\right)^{\left(\frac{\ddim}{\epsilon}\right)^{O(\ddim)}} O(n + m) \log \Delta.
    \end{align*}

    Combining the analysis above, we conclude that the time complexity of our algorithm is $2^{2^t} \cdot \tilde{O}(n + m) \log \Delta$, for 
    \begin{align*}
        t = O\left(\ddim \log\frac{\ddim}{\epsilon}\right).
    \end{align*}
    As mentioned in \Cref{sec:fl_bounded_clients_alg_correctness}, the bootstrap only introduces a $\poly(\ddim)$ overhead in the running time, which can be charged to the $2^{2^t}$ factor.
    \Cref{thm:FL_bounded_clients} follows with $\Delta = \poly(n, m)$.
\end{proof}
\section{$k$-Median with Low-dimensional Centers}
\label{sec:kmedian_bounded_centers}

In this section, we extend \Cref{thm:FL_bounded_centers} to $k$-median problem in the setting where the candidate center set has low doubling dimension.
Our main result of this section is the following.

\begin{theorem}\label{thm:kmedian_bounded_centers}
    There is a randomized algorithm that, given as input $\epsilon \in (0, \tfrac{1}{2})$, $n, m, k \in \NN$ and  $(X \cup Y, \dist)$ with  $|X| = n, |Y| = m, \ddim(Y) \leq \ddim$, computes a $(1 + \epsilon)$-approximation of $k$-median in time 
    $2^{2^t} \cdot \tilde{O}(n + m)$
    with constant success probability, where
    \begin{equation*}
        t \in O\left(\ddim \log\frac{\ddim}{\epsilon}\right).
    \end{equation*}
\end{theorem}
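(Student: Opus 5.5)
The plan mirrors the facility-location argument of \Cref{sec:fl_bounded_centers}, with the modifications that \cite{Cohen-AddadFS21} uses to pass from facility location to $k$-median. First, compute a constant-factor (or $2^{O(\ddim)}$-factor) approximate $k$-median solution $S \subseteq Y$ via \Cref{sec: constant-approx-algos}. Then build Talwar's decomposition $\decom$ on $Y$ with $\rho = \epsilon^{10}/\ddim^2$. Next, form the proxy sets $N_x$ and the moved instance $X'$ exactly as in \Cref{lemma:fl_bounded_centers_new_instance}. That proof never used the opening costs; it only used $\sum_x \dist(x,\phi(x)) \le O(\epsilon)\sum_x \dist(x,S)$. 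So it gives, with probability $0.99$, $|\median(X,F)-\median(X',F)| \le \epsilon\,\median(X,S)$ for every $F$.

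The structural core is an analogue of \Cref{lemma:fl_bounded_centers_good_solution}. The difficulty is that $F^*\cup\Bad_S$ may have more than $k$ centers. Following \cite{Cohen-AddadFS21}, I would instead define $F$ by \emph{swapping}. For each bad facility $s\in\Bad_S$, take the center $f^*(s) = \proj{F^*}{s}$ and replace it by $s$; when several bad facilities share the same $f^*$, keep only one of them. This gives $|F|\le k$.

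The cost analysis has two parts.
\begin{itemize}
\item Clients whose $F^*$-center was removed are rerouted through $s$. The increase is bounded by $O(\dist(x,F^*)+\dist(x,S))$, in the same way as the reassignment argument in \cite[Lemma 14]{Cohen-AddadFS21}.
\item Each facility of $S$ is bad with probability $O(\epsilon)$ by \Cref{lemma:badly_cut}. So the expected extra connection cost is $O(\epsilon)(\median(X,F^*)+\median(X,S))$, and Markov's inequality turns this into a high-probability bound.
\end{itemize}
With $F$ fixed, the four-case analysis of cutting levels for $\{\proj{S}{\phi(x)},\proj{F}{\phi(x)}\}$ goes through verbatim. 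In the case where $\proj{S}{x}$ is bad, $\proj{S}{x}\in F$ still holds, since bad facilities are exactly what we inserted. \Cref{lemma:connection_cost_portal,lemma:detour} then bound $\sum_{x\in X'}\ccostport(x,F)$ by $(1+O(\epsilon))\optkmed(X,Y)+O(\epsilon)\median(X,S)$.

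I expect the main obstacle to be this swap step. One must ensure the removed optimal centers do not wreck cutting levels: clients of a removed $f^*$ now go to $s$, and $\proj{F}{x}$ may leave the balls used in the case analysis. I would handle this as in \cite{Cohen-AddadFS21}. Bound $\dist(x,s)\le \dist(x,f^*)+\dist(f^*,s)\le \dist(x,f^*)+\dist(s,F^*)$, and enlarge the radius in the bad-facility definition if needed so that $s$ stays inside the relevant uncut ball.

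Algorithmically, I would augment the DP of \Cref{sec:fl_alg} with a counter $k_C\in\{0,\dots,k\}$ of centers opened inside $C$, and drop the opening costs. A naive counter costs a factor $k$ per table entry and $k^{2^{O(\ddim)}}$ for combining children. This would break the near-linear bound, so I would use the Lagrangian-relaxation approach of \cite{Cohen-AddadFS21}, or their reduction to $O(\log n)$ candidate counts. Concretely, run the facility-location DP with uniform opening cost $\lambda$, binary search $\lambda$, and combine two solutions bracketing $k$. Finally, the bootstrap of \Cref{sec:fl_alg_correctness} removes the dependence on the approximation factor of $S$. The running-time analysis of \Cref{sec:fl_alg_time} then applies with at most polylogarithmic overhead, giving $2^{2^t}\tilde O(n+m)$.
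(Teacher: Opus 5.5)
Your swap step fails, and it is the crux of the $k$-median structural lemma. The map $s\mapsto\proj{F^*}{s}$ is not injective. So ``keep only one of them'' leaves some bad facilities outside $F$. But Cases~1 and~3 need $\proj{S}{x}\in F$ for \emph{every} bad $\proj{S}{x}$; in Case~1 the argument has no other control over the cutting level of $\{\phi(x),\proj{F}{\phi(x)}\}$. Hence your later claim that ``bad facilities are exactly what we inserted'' is false, and inserting all of them instead breaks $|F|\le k$.

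Where the map is injective, the cost bound still fails. A client $x$ of the removed center $f^*=\proj{F^*}{s}$ pays an extra $\dist(f^*,s)=\dist(s,F^*)$, which is not $O(\dist(x,F^*)+\dist(x,S))$: $x$ may sit at $f^*$ next to another $s'\in S$ while $s$ is far away. Where the map is not injective, $x$'s center is removed as soon as \emph{any} $s$ with $\proj{F^*}{s}=\proj{F^*}{x}$ is bad. That is not an $O(\epsilon)$-probability event.

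The construction that works (\Cref{lemma:kmedian_bounded_centers_good_solution_construction}, after \cite[Claims 20, 21]{Cohen-AddadFS21}) goes the other way. Map $f\in F^*$ to $\proj{S}{f}$, and let $f_s$ be the closest preimage of $s$. For bad $s$ with nonempty preimage, replace $f_s$ by $s$. This is injective, removes a fixed $f$ with probability at most $c\epsilon$, and costs only $\dist(f_s,s)=\dist(f_s,S)\le\dist(x,f_s)+\dist(x,S)$.

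Bad facilities with empty preimage (the set $S_0$) need room made in advance. Delete the cheapest $1000c\epsilon|F^*\setminus F_S|$ centers of $F^*\setminus F_S$; this is cheap by averaging, since each such $f$ can be rerouted to $f_{\proj{S}{f}}$ at cost $2\dist(f,S)$. Then $|S_0|=|F^*\setminus F_S|$ and Markov give $|\Bad_S\cap S_0|\le 1000c\epsilon|S_0|$. This yields $\Bad_S\subseteq F$, $|F|\le k$ with probability $0.999$, and the pointwise bound $\dist(x,F)\le 6\dist(x,F^*)+5\dist(x,S)$. Since now $F^*\not\subseteq F$, the four cases are not verbatim either: Cases~2 and~4 must use this bound, with the Case~4 threshold changed to $\dist(x,S)/(10\epsilon)$.

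Your DP step also fails as stated. Binary search on a uniform opening cost $\lambda$ only returns solutions on the lower convex envelope of $k\mapsto\min_{|F|\le k}\sum_{x\in X'}\ccostport(x,F)$, and this function need not be convex. For example, put $N$ clients at each corner of an equilateral triangle of side $D$ and let $Y$ be the corners plus the centroid. The optimal costs for $k=1,2,3$ are $\sqrt{3}ND$, $ND$, $0$, so no $\lambda$ makes a $2$-center solution optimal. You give no way to merge two bracketing solutions into $k$ centers at cost $(1+\epsilon)$ times optimal, and the standard bi-point rounding loses a constant factor. This is also not what \cite{Cohen-AddadFS21} do.

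The paper, following them, swaps the roles of cost and cardinality. The entry $(C,\veca_C,\vecb_C,\Gamma_C)$, with $\Gamma_C$ a power of $1+\epsilon'$ for $\epsilon'=\epsilon/(2^{O(\ddim)}\log\Delta)$, stores the minimum number of centers in $C$ achieving revealed cost at most $\Gamma_C$. Children are combined one at a time through an auxiliary table $h$, so the overhead is $\mathrm{polylog}(\Delta)$ rather than $\log^{2^{O(\ddim)}}\Delta$. An induction over levels (\Cref{lemma:G_bounded_by_g}) shows the rounding loses only a factor $(1+\epsilon')^{2^{O(\ddim)}L}\le 1+\epsilon$. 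The bootstrap is unnecessary here, because \Cref{sec: constant-approx-algos} already yields an $O(1)$-approximation for $k$-median.
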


\subsection{Structural Lemmas}

Let $F^* \subseteq Y, |F^*| \leq k$ be the optimal solution for $k$-median, and 
$S \subseteq Y, |S| \leq k$ be a constant approximate solution for $k$-median.
We define bad clients $\Bad_X \subseteq X$ and bad facilities $\Bad_S \subseteq S$ w.r.t. $F^*$, $S$ and $\decom$.
Analogous to \Cref{lemma:fl_bounded_centers_new_instance}, we have the following lemma that eliminates bad clients by moving them to the nearest facility in $S$.
The proof of \Cref{lemma:kmedian_bounded_centers_new_instance} is the same as \Cref{lemma:fl_bounded_centers_new_instance}, and thus is omitted.

\begin{lemma}[New instance]\label{lemma:kmedian_bounded_centers_new_instance}
    Given $(X \cup Y, \dist)$, with $\ddim(Y) \leq \ddim$, and a constant approximate solution $S \subseteq Y$ for $k$-median, 
    construct a new 
    (multi-)set of clients $X' \subseteq X \cup Y$ as $X' := \phi(X)$ for 
    \begin{align*}
        \phi(x) := \begin{cases}
            x, &\text{if } x \notin \Bad_X; \\
            \proj{S}{x}, & \text{if } x \in \Bad_X,
        \end{cases}
    \end{align*} 
    namely, $X'$ is constructed from $X$ by moving every bad client $x$ to $\proj{S}{x}$.
    Then with probability $0.99$,
    \begin{align*}
        \forall F \subseteq Y, |F| \leq k, \qquad \Big|\median(X, F) - \median(X', F)\Big|
        \leq \epsilon \optkmed(X, Y).
    \end{align*}
\end{lemma}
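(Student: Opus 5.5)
I would mirror the proof of \Cref{lemma:fl_bounded_centers_new_instance} almost verbatim, replacing facility location cost by $k$-median cost. The plan has two steps. First, bound the total movement $\sum_{x\in X}\dist(x,\phi(x))$ in expectation. Second, convert this into a uniform bound over all $F$ by the triangle inequality. Nothing about the cardinality constraint $|F|\le k$ enters the argument, since the bound will hold for every $F\subseteq Y$ simultaneously.

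\textbf{Step 1: expected movement.} By construction, $\dist(x,\phi(x))$ is $0$ if $x\notin\Bad_X$ and equals $\dist(x,S)$ otherwise. Hence, by linearity of expectation,
\[
\E\Big[\sum_{x\in X}\dist(x,\phi(x))\Big]=\sum_{x\in X}\dist(x,S)\Pr[x\in\Bad_X].
\]
By definition, $x\in\Bad_X$ means that the set $B_Y(\proj{S}{x},\dist(x,S)/\epsilon)\subseteq Y$ is badly cut. \Cref{lemma:badly_cut} gives probability at most $c\epsilon$ for each fixed set; the set depends only on $x$ and $S$, not on the randomness of $\decom$. Therefore the expectation is at most $c\epsilon\,\median(X,S)$. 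Since $S$ is an $O(1)$-approximation, this is at most $O(\epsilon)\optkmed(X,Y)$. Markov's inequality then gives, with probability $0.99$,
\[
\sum_{x\in X}\dist(x,\phi(x))\le O(\epsilon)\optkmed(X,Y).
\]

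\textbf{Step 2: uniform transfer.} Conditioned on the Step 1 event, fix any $F\subseteq Y$ with $|F|\le k$. The map $z\mapsto\dist(z,F)$ is $1$-Lipschitz, so
\[
|\median(X,F)-\median(X',F)|\le\sum_{x\in X}|\dist(x,F)-\dist(\phi(x),F)|\le\sum_{x\in X}\dist(x,\phi(x)).
\]
This yields the claimed bound for all $F$ simultaneously. Rescaling $\epsilon$ by the constant hidden in $O(\epsilon)$ (which depends on $c$ and the approximation factor of $S$) gives exactly $\epsilon\optkmed(X,Y)$.

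The only point needing care is the use of $S$ as a constant-factor approximation to convert $\median(X,S)$ into $\optkmed(X,Y)$; this is where the statement differs from the facility-location version, which bounded by $\floc(X,S)$. In the bootstrapped algorithm, where $S$ may only be a $2^{O(\ddim)}$-approximation, one would instead state the bound as $\epsilon\,\median(X,S)$, exactly as in the facility location case. Apart from this, I expect no real obstacle: the argument is a direct expectation-plus-Markov computation.
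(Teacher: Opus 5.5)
Your proposal is correct and matches the paper's argument: the paper omits this proof and defers to \Cref{lemma:fl_bounded_centers_new_instance}, which follows exactly your two steps. First it bounds the expected total movement by $c\epsilon\,\median(X,S)=O(\epsilon)\optkmed(X,Y)$ via \Cref{lemma:badly_cut} and applies Markov's inequality. Then it transfers the bound to every $F$ using the $1$-Lipschitz property of $\dist(\cdot,F)$, rescaling $\epsilon$ at the end.
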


The following lemma guarantees the existence of $(1 + \epsilon)$-approximate portal-respecting solution.
It is the same as \Cref{lemma:fl_bounded_centers_good_solution}, with an extra requirement that we now need $F$ to have bounded size $F \leq k$.

\begin{lemma}[Good portal-respecting solution]\label{lemma:kmedian_bounded_centers_good_solution}
    Let $(X \cup Y, \dist)$ with $\ddim(Y) \leq \ddim$ be a $k$-median instance, $S \subseteq Y$ be a constant approximate solution and $\decom$ be the hierarchical decomposition on $Y$ in \Cref{lemma:hierarchical_decomposition} with portal parameter $\rho = \epsilon^{10}/\ddim^2$.
    Let $X' \subseteq X$ be the new client set constructed by \Cref{lemma:kmedian_bounded_centers_new_instance}.
    For $x \in X'$ and $F \subseteq Y$, define the portal-respecting connection cost of $x$ as
    \begin{align*}
        \ccostport(x, F) := \begin{cases}
            \dportH(x, F), & \text{if } x = \proj{S}{x}; \\
            \dporthatH(x, F), &\text{if } x \neq \proj{S}{x}.
        \end{cases}
    \end{align*}
    Then with probability $0.9$, there exists a solution $F \subseteq Y, |F| \leq k$, such that 
    \begin{equation}\label{eqn:kmedian_bounded_centers_good_solution}
        \sum_{x \in X'} \ccostport(x, F) \leq (1 + \epsilon) \optkmed(X, Y).
    \end{equation}
\end{lemma}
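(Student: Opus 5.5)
We would mirror the proof of \Cref{lemma:fl_bounded_centers_good_solution}. The one genuinely new difficulty is the cardinality constraint: the facility location proof used $F := F^* \cup \Bad_S$, which may have up to $|F^*| + |\Bad_S| > k$ centers. We would therefore replace this by a \emph{swap} construction, in the spirit of \cite[Lemma 14]{Cohen-AddadFS21}.

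\textbf{Step 1 (construction of $F$).} Let $F^*$ be optimal with $|F^*|\le k$. For every $f \in F^*$ let $s(f) := \proj{S}{f}$. Define
\[
F := \{f \in F^* : s(f)\notin \Bad_S\} \cup \{s(f) : f\in F^*,\ s(f)\in\Bad_S\}.
\]
Each $f$ contributes at most one point, so $|F|\le k$. Bad facilities of $S$ that do not capture any $f$ are simply not opened.

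\textbf{Step 2 (cost of $F$ in the true metric).} A client $x$ is affected only if $f=\proj{F^*}{x}$ was swapped out. In that case we reroute $x$ to $s(f)$ and bound
\[
\dist(x,s(f)) \le \dist(x,f)+\dist(f,S) \le \dist(x,f) + \dist(f,\proj{S}{x}) \le 2\dist(x,F^*)+\dist(x,S).
\]
This extra cost is incurred only if $s(f)$ is bad, which by \Cref{lemma:badly_cut} happens with probability $O(\epsilon)$. Hence $\E[\median(X,F)-\optkmed(X,Y)] \le O(\epsilon)(\optkmed(X,Y)+\median(X,S))$, and Markov's inequality turns this into a high-probability bound. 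Since $S$ is an $O(1)$-approximation, this gives $\median(X,F) \le (1+O(\epsilon))\optkmed(X,Y)$.

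\textbf{Step 3 (cutting levels).} For each $x\in X$ we would bound the highest level $\ell(x)$ at which $\{\proj{S}{\phi(x)},\proj{F}{\phi(x)}\}$ is cut, running through the four cases of \Cref{lemma:fl_bounded_centers_good_solution}.
\begin{itemize}
\item Cases 2 and 4 ($\proj{S}{x}$ not bad) should go through essentially verbatim. The proof there only used $\proj{F}{\phi(x)}\in B_Y(\proj{S}{x},\cdot)$, which follows from $\dist(\phi(x),F)\le \dist(\phi(x),F^*)$. We now lose this inequality for swapped $f$, so it must be replaced by the rerouting bound of Step 2. For this we enlarge the ball radius in the definition of bad facility by a constant, e.g.\ $B_Y(y,10\dist(y,F^*))$ becomes $B_Y(y,c\,\dist(y,F^*)+c\,\dist(y,S))$ with a suitable constant $c$; the badness probability stays $O(\epsilon)$ by \Cref{lemma:badly_cut}.
\item Cases 1 and 3 ($\proj{S}{x}$ bad) are the obstacle, because $\proj{S}{x}$ is no longer guaranteed to lie in $F$. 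If $\proj{S}{x}\in F$, the original argument works unchanged. Otherwise we would not try to control $\ell(x)$ at all. Instead we use the crude bound $\ccostport(\phi(x),F)\le \dist(\phi(x),F) + O(\rho)2^{\ell(x)}$, where $\ell(x)\le L$. Since $\diam(Y)$ may be huge relative to $\dist(x,S)$, this bound is too weak on its own. What I would try first is to reroute the portal path through $\proj{S}{x}$'s captured optimal center and charge the detour to $\dist(x,S)+\dist(x,F^*)$, paying only with probability $O(\epsilon)$. If that fails, the fallback is to include in $F$ every bad $s\in S$ together with removing $\proj{F^*}{s}$. This keeps $|F|\le k$ and gives $\proj{S}{x}\in F$ in Cases 1 and 3, at a rerouting cost $\dist(s,F^*)$ that is again charged as in Step 2.
\end{itemize}

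\textbf{Step 4 (summation).} Given the bounds on $\ell(x)$, the remaining computation is identical to the end of \Cref{lemma:fl_bounded_centers_good_solution}. We apply \Cref{lemma:detour} and \Cref{lemma:connection_cost_portal} with $\rho=\epsilon^{10}/\ddim^2$, then \Cref{lemma:kmedian_bounded_centers_new_instance} to pass from $X'$ back to $X$, and use $\median(X,S)=O(\optkmed(X,Y))$. A union bound over the $O(1)$ Markov events gives probability $0.9$, and rescaling $\epsilon$ yields \eqref{eqn:kmedian_bounded_centers_good_solution}.
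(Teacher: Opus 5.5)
There is a genuine gap, and it is the one you flag yourself. Your Steps 1--2 are sound as far as they go: swapping $f$ for $s(f)=\proj{S}{f}$ when $s(f)$ is bad keeps $|F|\le k$ and gives $\dist(x,F)\le 2\dist(x,F^*)+\dist(x,S)$. That is enough for Cases 2 and 4, and you do not need to enlarge the bad-facility radius, since $\dist(y,S)=0$ for $y\in S$. However, you never obtain an $F$ with both $|F|\le k$ and $\Bad_S\subseteq F$. Bad centers $s\in S$ that equal $\proj{S}{f}$ for no $f\in F^*$ are left out, and they are genuinely needed:
\begin{itemize}
\item In Case~1 the moved client sits at $\phi(x)=s$. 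The ball $B_Y(s,10\dist(s,F^*))$ is badly cut, so the detour $O(\rho)2^{\ell(x)}$ is not controlled at all.
\item In Case~3 with $\dist(x,F^*)\gg\dist(x,S)/\epsilon$, the point $\proj{F}{x}$ lies outside the well-cut ball $B_Y(s,\dist(x,S)/\epsilon)$, and the other ball is badly cut.
\end{itemize}

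Neither of your remedies closes this. The first is vacuous: in your construction a bad $s$ is missing from $F$ exactly when it captures no optimal center. Moreover, $\dportH$ is determined by $\decom$, so there is no portal path to reroute. The fallback fails for two reasons. First, $s\mapsto\proj{F^*}{s}$ need not be injective, so several bad $s$ can share a single deleted center and then $|F|>k$. Second, the clients of $f=\proj{F^*}{s}$ are moved a distance $\dist(s,F^*)$. This is not bounded by their $\dist(\cdot,F^*)+\dist(\cdot,S)$, because $s$ need not be the point of $S$ nearest to $f$. Your Step~2 charging worked only because the swap went the other way, $s(f)=\proj{S}{f}$.

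The missing idea is to create slack in advance for the bad centers that capture nothing. This is what \Cref{lemma:kmedian_bounded_centers_good_solution_construction} does, following \cite[Claims 20 and 21]{Cohen-AddadFS21}:
\begin{enumerate}
\item Assume $|F^*|=|S|=k$. Let $f_s$ be the preimage of $s$ closest to $s$, let $F_S=\{f_s\}$, and let $S_0$ be the centers of $S$ with empty preimage. Then $|S_0|=|F^*\setminus F_S|$.
\item Delete from $F^*\setminus F_S$ the cheapest subset of $1000c\epsilon|F^*\setminus F_S|$ centers. This choice does not depend on $\decom$. A deleted $f$'s clients can go to $f_{\proj{S}{f}}$ at cost at most $3\dist(x,F^*)+2\dist(x,S)$. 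Averaging over a random subset of that size therefore bounds the cost increase by $O(\epsilon)(\optkmed(X,Y)+\median(X,S))$.
\item Add every bad $s\in S_0$. By Markov, with probability $0.999$ there are at most $1000c\epsilon|S_0|$ of them, so they fit into the freed slots.
\item For each bad $s$ with nonempty preimage, swap $f_s$ for $s$.
\end{enumerate}
This yields $\Bad_S\subseteq F$, $|F|\le k$, and $\dist(x,F)\le 6\dist(x,F^*)+5\dist(x,S)$ for every $x$. After that, the four cases and your Step~4 go through essentially verbatim, with threshold $\dist(x,S)/(10\epsilon)$ in Case~4.
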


To prove \Cref{lemma:kmedian_bounded_centers_good_solution}, we first show in the following lemma that there exists a facility set of size at most $k$ containing $\Bad_S$.
\Cref{lemma:kmedian_bounded_centers_good_solution_construction} is essentially a combination of~\cite[Claims 20 and 21]{Cohen-AddadFS21}.
We provide the proof in \Cref{appendix:proof_of_lemma_kmedian_bounded_centers_construction} for completeness.

\begin{restatable}{lemma}{lemmakmedianboundedcentersgoodsolutionconstruction}
    \label{lemma:kmedian_bounded_centers_good_solution_construction}
Let $(X \cup Y, \dist)$ with $\ddim(Y) \leq \ddim$ be a $k$-median instance, and $S \subseteq Y$ be a constant approximate solution.
Then there exists a set of facilities $F \subseteq Y$, which satisfies the following properties.
\begin{enumerate}[label=(\alph*)]
    \item $\Bad_S \subseteq F$.
    \item For every $x \in X \cup Y$, $\dist(x, F) \leq 6 \dist(x, F^*) + 5 \dist(x, S)$.
    \item With probability $0.99$, $|F| \leq k$ and $\median(X, F) \leq (1 + \epsilon) \optkmed(X, Y)$.
\end{enumerate}
\end{restatable}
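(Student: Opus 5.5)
We follow the construction of \cite[Claims 20 and 21]{Cohen-AddadFS21}. Start from $F^*$ and, for every bad facility $s \in \Bad_S$, try to add $s$ while removing an element of $F^*$ that $s$ ``replaces''. Concretely, for each $f \in F^*$ let $s(f) := \proj{S}{f}$ be its nearest facility in $S$. Call $f \in F^*$ \emph{matched} if it is the unique (or a chosen representative) element of $F^*$ whose nearest facility is $s(f)$, and call $s \in S$ \emph{lonely} if no $f \in F^*$ has $s(f) = s$. Define
\[
F := \bigl(F^* \setminus \{f \in F^* : s(f) \in \Bad_S,\ f \text{ is the representative of } s(f)\}\bigr) \cup \Bad_S .
\]
Additionally, for every lonely bad facility $s$ we remove from $F^*$ some element that is not the representative of anything. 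Such an element exists, and the removal is cheap, because the number of lonely facilities in $S$ is at most the number of non-representative elements of $F^*$ (counting: $|S| \le k$, $|F^*| \le k$). We choose the removed elements randomly, or in the canonical way of \cite{Cohen-AddadFS21}. This guarantees $\Bad_S \subseteq F$, which is property (a), and, whenever all removals are charged correctly, $|F| \le k$.

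\textbf{Property (b).} Fix $x$ and let $f = \proj{F^*}{x}$. If $f \in F$ we are done. Otherwise $f$ was removed. There are two cases.

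In the first case, $f$ was replaced by $s(f) \in F$. Then
\[
\dist(x,F) \le \dist(x,f) + \dist(f,s(f)) \le \dist(x,f) + \dist(f,\proj{S}{x}) \le 2\dist(x,F^*) + \dist(x,S).
\]

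In the second case, $f$ was a non-representative removed for a lonely facility. Then $s(f)$ has a representative $f' \ne f$ with $s(f') = s(f)$, and $f'$ stays in $F$ unless it was itself replaced, in which case $s(f) \in F$. Either way, $\dist(x,F) \le \dist(x,f) + \dist(f,s(f)) + \dist(s(f),f')$. We would choose the representative $f'$ to be the element of $F^*$ closest to $s(f)$, so that $\dist(s(f),f') \le \dist(s(f),f)$. This gives at most $\dist(x,f) + 2\dist(f,s(f)) \le 3\dist(x,F^*) + 2\dist(x,S)$, comfortably inside $6\dist(x,F^*) + 5\dist(x,S)$. The triangle inequalities used here hold for all $x \in X \cup Y$, so property (b) is deterministic.

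\textbf{Property (c), the main obstacle.} The goal is the expected cost increase. For the first type of removal, the clients of $f$ pay an extra $\dist(f,s(f)) \le \dist(x,f) + \dist(x,S)$, but only when $s(f) \in \Bad_S$, which happens with probability $O(\epsilon)$ by \Cref{lemma:badly_cut}. The delicate part is the lonely case. Its extra cost should be charged to a removed $f$ whose own clients are cheaply rerouted, and we need each removed $f$ to be charged only with probability $O(\epsilon)$. Here we would use the radius $10\dist(s,F^*)$ in the definition of bad facility, exactly as in \cite[Claim 21]{Cohen-AddadFS21}: only bad $s$ cause removals, and each $s$ is bad with probability at most $c\epsilon$.

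Summing the charges, the expected increase is
\[
O(\epsilon)\bigl(\optkmed(X,Y) + \median(X,S)\bigr) = O(\epsilon)\,\optkmed(X,Y),
\]
since $S$ is a constant approximation. Markov's inequality then gives probability $0.99$ after rescaling $\epsilon$. The bound $|F| \le k$ holds deterministically by the counting above.
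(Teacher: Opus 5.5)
Your proof works, but it follows a different route from the paper's. The paper, following \cite{Cohen-AddadFS21}, works in three steps:
\begin{itemize}
\item It first deletes a \emph{fixed} set of $1000c\epsilon|F^*\setminus F_S|$ non-representatives, where $F_S$ is the set of representatives $f_s$. This set is chosen to minimize the cost increase, is independent of $\decom$, and its cost is bounded by averaging over random deletions.
\item It then adds every bad lonely facility.
\item Finally, for every bad non-lonely $s$, it swaps $f_s$ for $s$.
\end{itemize}
As a result, $|F|\le k$ holds only with probability $0.999$ (Markov's inequality on $|\Bad_S\cap S_0|$, using $|S_0|=|F^*\setminus F_S|$). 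The two-stage rerouting also gives the constants $6$ and $5$ in (b).

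You instead delete exactly one non-representative per bad lonely facility. This makes $|F|=|F^*|\le k$ deterministic and gives the sharper one-stage bound $\dist(x,F)\le 3\dist(x,F^*)+2\dist(x,S)$. The price is that the deleted set now depends on $\decom$. Its cost must therefore be controlled jointly with the randomness of $\decom$, rather than by a one-time averaging.

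That is exactly the step your write-up of (c) leaves open, and the reason you cite is not the relevant one: the radius $10\dist(s,F^*)$ plays no role here. The step closes as follows.
\begin{itemize}
\item Let $B:=|\Bad_S\cap S_0|$, and given $\decom$, delete a uniformly random $B$-subset of $F^*\setminus F_S$. Each non-representative is then deleted with probability $\E[B]/|F^*\setminus F_S|\le c\epsilon|S_0|/|F^*\setminus F_S|\le c\epsilon$.
\item The inequality $|S_0|\le|F^*\setminus F_S|$ requires $|S|\le|F^*|$. Your bounds $|S|\le k$ and $|F^*|\le k$ alone do not give this, so fix $F^*$ of size exactly $k$.
\item By your deterministic rerouting bounds, a client of a deleted non-representative pays at most $2\dist(x,F^*)+2\dist(x,S)$ extra. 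A client of a swapped representative pays at most $\dist(x,F^*)+\dist(x,S)$ extra. Each event has probability at most $c\epsilon$, so $\E[\median(X,F)]-\optkmed(X,Y)\le 2c\epsilon\,(\optkmed(X,Y)+\median(X,S))$.
\item Since $|F|\le k$ always, this difference is nonnegative and Markov's inequality applies.
\item The extra randomness is harmless for an existence statement. If the random $F$ is good with probability $0.99$ over the joint randomness, then with probability at least $0.99$ over $\decom$ a good $F$ exists. Equivalently, take the cheapest $B$-subset, whose cost is at most the uniform average.
\end{itemize}

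Finally, the ``canonical way of \cite{Cohen-AddadFS21}'' you offer as an alternative is the paper's fixed up-front deletion. It would not give your deterministic size bound, so commit to the adaptive random (or cheapest) choice.
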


Based on \Cref{lemma:kmedian_bounded_centers_good_solution_construction}, we are ready to prove \Cref{lemma:kmedian_bounded_centers_good_solution}.
In fact, we show the facility set in \Cref{lemma:kmedian_bounded_centers_good_solution_construction} satisfies the conditions in \Cref{lemma:kmedian_bounded_centers_good_solution}.

\begin{proof}[Proof of \Cref{lemma:kmedian_bounded_centers_good_solution}]
    We show that the facility set in \Cref{lemma:kmedian_bounded_centers_good_solution_construction} satisfies \eqref{eqn:kmedian_bounded_centers_good_solution}.
    Rewrite the LHS of \eqref{eqn:kmedian_bounded_centers_good_solution} as 
    \[
    \sum_{x \in X} \ccostport(\phi(x), F) + \ocost(F).
    \]
    For every $x \in X$, we first bound the highest level where $\{\proj{S}{\phi(x)}, \proj{F}{\phi(x)}\}$ is cut.
    Specifically, let $\ell(x)$ be the highest level where $\{\proj{S}{\phi(x)}, \proj{F}{\phi(x)}\}$
    is cut; we show that 
    \[
    \ell(x) \leq \log(\dist(\phi(x), S) /\epsilon + 10 \dist(\phi(x), F^*)) + \epsddim.
    \]
    Consider the following cases:

    \paragraph{Case 1: $x$ is a bad client, and $\proj{S}{x}$ is a bad facility.}
    By the construction of $X'$, $\phi(x) = \proj{S}{x}$. 
    Since $\proj{S}{x}$ is a bad facility, it is in $\Bad_S$, and thus in $F$.
    Therefore, $\proj{S}{\phi(x)} = \phi(x) = \proj{F}{\phi(x)}$; $\{\proj{S}{\phi(x)}, \proj{F}{\phi(x)}\}$ is never cut.

    \paragraph{Case 2: $x$ is a bad client, and $\proj{S}{x}$ is not a bad facility.}
    By the construction of $X'$, $\phi(x) = \proj{S}{x}$. 
    Since $\proj{S}{x}$ is not a bad facility, the ball 
    \[B_Y(\proj{S}{x}, 10 \dist(\proj{S}{x}, F^*)) = B_Y(\phi(x), 10 \dist(\phi(x), F^*))\] 
    is cut at level at most $\log(10 \dist(\phi(x), F^*)) + \epsddim$.
    By \Cref{lemma:kmedian_bounded_centers_good_solution_construction}, 
    \[
    \dist(\phi(x), \proj{F}{\phi(x)}) \leq 6 \dist(\phi(x), F^*) + 5 \dist(\phi(x), S) = 6 \dist(\phi(x), F^*).
    \] 
    Then $\proj{F}{\phi(x)} \in B_Y(\phi(x), 10 \dist(\phi(x), F^*))$.
    Hence, the highest level where $\{\proj{S}{\phi(x)}, \proj{F}{\phi(x)}\}$ is cut can be bounded by $\ell(x) \leq \log(10 \dist(\phi(x), F^*)) + \epsddim$.

    \paragraph{Case 3: $x$ is not a bad client, and $\proj{S}{x}$ is a bad facility.}
    In this case $\phi(x) = x$, and the ball $B_Y(\proj{S}{x}, \dist(x,S) / \epsilon)$ is cut at level at most $\log(\dist(x,S)/\epsilon) + \epsddim$.
    On the other hand, since $\proj{S}{x}$ is a bad facility, we have $\proj{S}{x} \in F$.
    Therefore, 
    \begin{align*}
        \dist(\proj{S}{x}, \proj{F}{x}) \leq \dist(x, \proj{S}{x}) + \dist(x, \proj{F}{x}) \leq 2 \dist(x,S),
    \end{align*}
    which implies $\proj{F}{x} \in B_Y(\proj{S}{x}, \dist(x,S) / \epsilon)$.
    Therefore, $\{\proj{S}{x}, \proj{F}{x}\}$ is cut at level at most $\ell(x) \leq \log(\dist(x,S)/\epsilon) + \epsddim = \log(\dist(\phi(x), S)/\epsilon) + \epsddim$.

    \paragraph{Case 4: $x$ is not a bad client in $X$, and $\proj{S}{x}$ is not a bad facility.}
    In this case $\phi(x) = x$, and the ball $B_Y(\proj{S}{x}, \dist(x, S)  / \epsilon)$ is cut at level at most $\log(\dist(x, S) /\epsilon) + \epsddim$, 
    and the ball $B_Y(\proj{S}{x}, 10 \dist(\proj{S}{x}, F^*))$ is cut at level at most $\log(10 \dist(\proj{S}{x}, F^*)) + \epsddim$.
    Consider the following sub-cases.

    If $\dist(x, F^*) \leq \dist(x, S) /(10 \epsilon)$, then 
    \begin{align*}
        \dist(\proj{S}{x}, \proj{F}{x}) 
        &\leq \dist(x, \proj{S}{x}) + \dist(x, \proj{F}{x}) \\
        &\leq \dist(x, S)  + 6 \dist(x, F^*) + 5 \dist(x, S)   &&\text{By \Cref{lemma:kmedian_bounded_centers_good_solution_construction} }\\
        &\leq \dist(x, S) /\epsilon &&\text{Since $\dist(x, F^*) \leq \dist(x, S)  / (10 \epsilon)$}.
    \end{align*}
    Hence, $\proj{F}{x} \in B_Y(\proj{S}{x}, \dist(x,S) / \epsilon)$, and thus the highest level where $\{\proj{S}{x}, \proj{F}{x}\}$ is cut is at most $\log(\dist(x, S) /\epsilon) + \epsddim = \log(\dist(\phi(x), S) /\epsilon) + \epsddim$.

    If $\dist(x, F^*) > \dist(x, S) /(10\epsilon)$, first note that
    \begin{align*}
        \dist(x, F^*) \leq \dist(\proj{S}{x}, F^*) + \dist(x, \proj{S}{x})
        = \dist(\proj{S}{x}, F^*) + \dist(x, S) 
        \leq \dist(\proj{S}{x}, F^*) + 10 \epsilon \dist(x, F^*),
    \end{align*}
    which implies $\dist(x, F^*) \leq \tfrac{1}{1 - 10 \epsilon} \dist(\proj{S}{x}, F^*)$.
    Then 
    \begin{align*}
        \dist(\proj{S}{x}, \proj{F}{x}) 
        &\leq \dist(x, \proj{S}{x}) + \dist(x, \proj{F}{x}) \\
        &\leq \dist(x, S)  + 6 \dist(x, F^*) + 5 \dist(x, S)   &&\text{By \Cref{lemma:kmedian_bounded_centers_good_solution_construction} } \\
        &\leq 6(1 + 10 \epsilon) \dist(x, F^*) && \text{Since $\dist(x, S)  < 10 \epsilon \dist(x, F^*)$} \\
        &\leq \frac{6 (1 + 10 \epsilon)}{1 - 10 \epsilon} \dist(\proj{S}{x}, F^*) \\
        &\leq 10 \dist(\proj{S}{x}, F^*) &&\text{for } \epsilon \leq 1/100.
    \end{align*}
    This implies $\proj{F}{x} \in B_Y(\proj{S}{x}, 10 \dist(\proj{S}{x}, F^*))$.
    Therefore, the highest level where $\{\proj{S}{x}, \proj{F}{x}\}$ is cut is at most $\log(10 \dist(\proj{S}{x}, F^*)) + \epsddim \leq \log(10 \dist(\phi(x), F^*) + 10 \dist(\phi(x), S)) + \epsddim$.

    \medskip

    In conclusion, we bound the highest level where $\{\proj{S}{\phi(x)}, \proj{F}{\phi(x)}\}$ is cut by $\ell(x) \leq \log(\dist(\phi(x), S) /\epsilon + 10 \dist(\phi(x), F^*)) + \epsddim$.
    
    We are now ready to prove \eqref{eqn:kmedian_bounded_centers_good_solution}.
    Fix $x \in X$.
    If $\phi(x) = \proj{S}{\phi(x)}$, since $\ell(x)$ is the highest level where $\{\proj{S}{\phi(x)}, \proj{F}{\phi(x)}\}$ is cut, 
    by \Cref{lemma:detour} we have 
    \[\ccostport(\phi(x), F) = \dportH(\phi(x), F) \leq \dist(\phi(x), F) + O(\rho) 2^{\ell(x)}.\]
    If $\phi(x) \neq \proj{S}{\phi(x)}$, then $x$ is not a bad client.
    By \Cref{lemma:connection_cost_portal}, we have  
    \begin{align*}
        \ccostport(\phi(x), F) 
        &= \dporthatH(\phi(x), F) 
    \leq \dporthatH(\phi(x), \proj{F}{\phi(x)}) \\
    &\leq (1 + O(\epsilon))\dist(\phi(x), F) + O(\rho) 2^{\ell(x)} + O(\rho) \expepsddim \cdot \frac{\dist(\phi(x), S)}{\epsilon}.
    \end{align*}
     For $\rho = \epsilon^{10}/\ddim^2$ we can bound the LHS of \eqref{eqn:kmedian_bounded_centers_good_solution} as 
    \begin{align*}
        &\qquad \sum_{x \in X} \ccostport(\phi(x), F) \\
        &\leq \sum_{\substack{x \in X \\ \phi(x) = \proj{S}{\phi(x)}}} (
            \dist(\phi(x), F) + O(\rho) 2^{\ell(x)}
        ) \\
        & \qquad + \sum_{\substack{x \in X \\ \phi(x) \neq \proj{S}{\phi(x)}}} 
        \left(
            (1 + O(\epsilon)) \dist(\phi(x), F) + O(\rho) 2^{\ell(x)} + O(\rho) \frac{\ddim \dist(\phi(x), S)}{\epsilon^2}
        \right) \\
        & \leq (1 + O(\epsilon)) \floc(\phi(X), F) + O(\rho) \expepsddim \sum_{x \in X} \left(
            \frac{2 \dist(\phi(x), S)}{\epsilon} + 10 \dist(\phi(x), F^*) 
        \right)\\
        & \leq (1 + O(\epsilon)) \median(X', F) + 
        O(\epsilon) \median(X', S) 
        + O(\epsilon) \median(X', F^*)  \\
        & \leq (1 + O(\epsilon)) \median(X, F) + 
        O(\epsilon) \median(X, S) 
        + O(\epsilon) \median(X, F^*) + O(\epsilon) \optkmed(X, Y) &&\text{ \Cref{lemma:kmedian_bounded_centers_new_instance}} \\
        &\leq (1 + O(\epsilon)) \median(X, F) + O(\epsilon) \optkmed(X, Y) \\
        & \leq (1 + O(\epsilon)) \optkmed(X, Y) &&\text{ \Cref{lemma:kmedian_bounded_centers_good_solution_construction}}
    \end{align*}

    We conclude that with probability $0.9$, 
    \[
    \sum_{x \in X'} \ccostport(x, F) \leq (1 + \epsilon) \optkmed(X, Y).
    \]

\end{proof}

\subsection{The Algorithm}
\label{sec:alg_kmedian_bounded_centers}
Our $k$-median algorithm is a modification of the facility location algorithm in \Cref{sec:fl_alg}.

\paragraph{Preprocessing stage.}
By \Cref{sec: constant-approx-algos}, we can compute an $O(1)$-approximate solution $S \subseteq Y$ for $k$-median in near-linear time, when $\ddim(Y)$ is bounded.
The rest of the preprocessing stage is exactly the same as \Cref{sec:fl_alg},
where we 
1) construct the hierarchical decomposition $\decom$ with parameter $\rho = \epsilon^{10}/\ddim^2$,
2) construct the new instance $(X', Y)$, 
3) compute the revealed cluster $C(x)$ for every $x \in X'$ and 
4) construct the proxy set $N_x$ for every $x \in X'$.

\paragraph{Dynamic program.}
The most significant difference of $k$-median from facility location is that the number of facilities is always bounded by $k$, which makes the dynamic program more complicated.
Analogous to the facility location algorithm, each entry of the DP table is represented by a cluster $C$ on $\decom$, together with a configuration $\veca_C = \braket{a^1_C, a^2_C, \dots, a^{|P_C|}_C}, \vecb_C = \braket{b^1_C, b^2_C, \dots, b^{|P_C|}_C}$, which encodes the minimum (portal-respecting) distance from each portal to facilities inside and outside cluster $C$, respectively.
The entry is additionally encoded by a value $\Gamma_C \in [\median(X, S)/n, 2 \median(X, S)]$, which corresponds to the total revealed cost inside cluster $C$.

Given an entry encoded by $(C, \veca_C, \vecb_C, \Gamma_C)$, the value stored in it is the minimum number of facilities required to be placed in $C$, such that the revealed cost inside $C$ is at most $\Gamma_C$.
Formally, 
\begin{align}
    g(C, \veca_C, \vecb_C, \Gamma_C) := \min\Bigg\{
        K \colon \exists F\subset Y \text{ that is consistent with } \veca_C, \vecb_C, \text{ s.t. } 
        |F \cap C| = K \notag \\
        \text{ and }
        \sum_{x \in X' \colon C(x) \subseteq C} \ccostport(x, F) \leq \Gamma_C
    \Bigg\}.
    \label{eqn:kmedian_DP_table_entry_value}
\end{align}
Note that once we have the table entries corresponding to the root node $X'$, it suffices to output the minimum $\Gamma_{X'}$ with $g(X', \veca_{X'}, \vecb_{X'}, \Gamma_{X'}) \leq k$ as the $k$-median value.

As in \Cref{sec:fl_alg}, for every level $\ell$ cluster $C \in \decom_\ell$, we discretize the configuration $a_i, b_i$ to be multiplications of $\rho 2^\ell$ in the range $[0, 2^\ell/\epsilon]$.
Besides, $\Gamma$ is discretized to be powers of $(1 + \tfrac{\epsilon}{2^{O(\ddim)} \log \Delta})$ in the range $[\median(X, S)/n, 2 \median(X, S)]$.

\paragraph{Base case.}
Consider a leaf node $C = \{y\}$ and the corresponding configuration $a, b$ and $\Gamma$.
(We can wlog assume that the only portal of $\{y\}$ is $y$ itself.)
It is easy to check consistency for $a, b$, since there is either a facility on $y$ (then $a = 0$) or no facility on $y$ (then $a = \infty$).

The revealed cost in $\{y\}$ can be easily computed.
Recall in the preprocessing stage, we have already marked the number of copies of $y$ in $X'$, denoted by $w(y)$.
For each of these copies, its connection cost can be read from the configuration, specifically, $\ccostport(y, F) = \min\{a, b\}$.
Therefore, the revealed cost in $\{y\}$ is $w(y) \cdot \ccostport(y, F)$. 

Finally, we compare the revealed cost $w(y) \cdot \ccostport(y, F)$ with $\Gamma$.
If $w(y) \cdot \ccostport(y, F) > \Gamma$, then we set $g(\{y\}, a, b, \Gamma) = \infty$, indicating this is not a consistent configuration.
If $w(y) \cdot \ccostport(y, F) \leq \Gamma$, then we set $g(\{y\}, a, b, \Gamma)$ to be $0$ or $1$, depending on $a = \infty$ or $0$ (i.e., whether there is a facility on $y$).

Next, we show how to compute $g(C, \veca_C, \vecb_C, \Gamma_C)$ for a higher level cluster $C$.
We first give an intuitive but inefficient algorithm.
Then following the idea of~\cite{Cohen-AddadFS21}, we show how to accelerate the algorithm using an auxiliary DP.

\paragraph{Updating the DP table -- an inefficient solution.}
Consider a higher level cluster $C \in \decom_\ell$ and a configuration $\veca_C = \braket{a^1_C, \dots, a^{|P_C|}_C}, \vecb_C = \braket{b^1_C, \dots, b^{|P_C|}_C}$, together with value $\Gamma_C$.
Following \Cref{sec:fl_alg}, the newly revealed cost in $C$ can be computed from $\veca_C$ and $\vecb_C$.
Let \begin{align*}
    \Gamma'_C = \Gamma_C - \sum_{x \in X' \colon C(x) = C} \ccostport(x, F).
\end{align*}
Note that $\Gamma'_C$ is the maximal cost that $C$ is allowed to inherit from its child clusters.

To compute $g(C, \veca_C, \vecb_C, \Gamma)$, our algorithm enumerates all possible combinations of configurations for the child clusters of $C$.
For each of these combinations $\{(D, \veca_D, \vecb_D, \Gamma_D)\}_{D \in \child(C)}$, the consistency of $\veca_D$ and $\vecb_D$ with $\veca_C, \vecb_C$ can be checked in the same way as \Cref{sec:fl_alg} (i.e., criteria \ref{it:criteria1} and \ref{it:criteria2}).
We additionally require that $\sum_{D \in \child(C)} \Gamma_D \leq \Gamma'_C$.
The DP table is then updated by 
\begin{align}
    \qquad g(C, \veca_C, \vecb_C, \Gamma_C) 
    = \min_{\substack{
    \{(D, \veca_D, \vecb_D, \Gamma_D)\}_{D \in \child(C)} \\
    \text{ is consistent} 
    \text{ and } \sum_D \Gamma_D \leq \Gamma'_C
    }}
    \sum_{D \in \child(C)} 
    g(D, \veca_D, \vecb_D, \Gamma_D).
    \label{eqn:kmedian_naive_update}
\end{align}

One issue of \eqref{eqn:kmedian_naive_update} is that there are too many combinations of configurations to enumerate.
To be specific, the total number of combinations for each cluster $C$ is 
\begin{align*}
    \prod_{D \in \child(C)} (\epsilon \rho)^{-2 |P_D|} \log^2 \Delta
    = \left(\frac{\ddim}{\epsilon}\right)^{\left(\frac{\ddim}{\epsilon}\right)^{O(\ddim)}} \log^{2^{O(\ddim)}} \Delta,
\end{align*}
which will introduce an $O(\log^{2^{O(\ddim)}} \Delta)$ factor in the 
total running time.

\paragraph{Acceleration by an auxiliary DP.}
To reduce this dependency of $\log^{2^{O(\ddim)}} \Delta$, we follow~\cite{Cohen-AddadFS21} to use an auxiliary DP to accelerate the computation.
In this auxiliary DP, children of $C$ are given an arbitrary order $D_1, D_2, \dots, D_{|\child(C)|}$.
Every DP table entry is encoded by a cluster $C$, one of its child cluster $D_i$, a set of configurations $\veca_C, \vecb_C, \veca_1, \vecb_1, \veca_2, \vecb_2, \dots, \veca_i, \vecb_i$, and a value $\Gamma$. 
The value in that table entry equals to the minimum number of facilities needed to be placed in $D_{i + 1} \cup D_{i + 2} \cup \dots \cup D_{|\child(C)|}$, such that the cost in $D_{i + 1} \cup D_{i + 2} \cup \dots \cup D_{|\child(C)|}$ is at most $\Gamma$, given the configurations of $C, D_1, \dots, D_i$.

The auxiliary DP table $h(C, D_i, \veca_C,\vecb_C,\veca_1, \vecb_1, \dots, \veca_i, \vecb_i, \Gamma)$ can be updated by enumerating the configuration of the \emph{next} child cluster, namely, $D_{i + 1}$.
\begin{align}
    h(C, D_i, \veca_C, \vecb_C, \veca_1, \vecb_1, \dots, \veca_i, \vecb_i, \Gamma)
    = & \min_{\veca_{i + 1}, \vecb_{i + 1}, \Gamma_{i + 1}}
    \Big\{
        g(D_{i + 1}, \veca_{i + 1}, \vecb_{i + 1}, \Gamma_{i + 1}) \notag \\
        &+ h(C, D_{i + 1}, \veca_C, \vecb_C, \veca_1, \vecb_1, \dots, \veca_{i + 1}, \vecb_{i + 1}, \Gamma - \Gamma_{i + 1})
    \Big\}
    \label{eqn:compute_h}
\end{align}
\begin{align}
    h(C, D_{|\child(C)|}, \veca_C, \vecb_C, \veca_1, \vecb_1, \dots, \veca_{|\child(C)|}, \vecb_{|\child(C)|}, \Gamma) 
    &= \begin{cases}
        0, & \text{if  configuration is consistent;} \\
        \infty, & \text{otherwise.}
    \end{cases} 
    \label{eqn:compute_h_base}
\end{align}
Moreover, we have 
\begin{align}
    g(C, \veca_C, \vecb_C, \Gamma_C)
    = \min_{\veca_1, \vecb_1, \Gamma_1}
    \Big\{
        g(D_1, \veca_1, \vecb_1, \Gamma_1)
        + h(C, D_1, \veca_C, \vecb_C, \veca_1, \vecb_1, \Gamma'_C - \Gamma_1)
    \Big\}.
    \label{eqn:compute_g_C_accelerated}
\end{align}

\subsubsection{Proof of Correctness}
Denote by $\widehat{F} \subseteq Y$ the solution returned by the dynamic program.
We note that $\widehat{F}$ does not necessarily minimize
$
    \sum_{x \in X'} \ccostport(x, F)
$, and it is even non-trivial to see if it gives a $(1 + \epsilon)$-approximation for the minimum cost.
This potential misalignment is mainly due to the discretization tricks used in our dynamic program.
Recall that we only consider powers of $(1 + \epsilon')$ for the value $\Gamma$.
Therefore, when we enumerate $\{\Gamma_D\}$ with $\sum_D \Gamma_D \leq \Gamma$, every $\Gamma_D$ is also a power of $(1 + \epsilon')$.
This discretization leads to the loss of other possible combinations of $\{\Gamma_D\}$,
which could result in a $\pm O(1)$ additive error in the computation of $g(C, \veca_C, \vecb_C, \Gamma_C)$.
The error accumulates with levels and could become as large as $\poly(n)$ on the root cluster $X'$.
Nevertheless, we show the following lemma, whose proof is provided in \Cref{appendix:proof_of_lemma:discretization_is_ok} for completeness.

\begin{restatable}{lemma}{lemmadiscretizationisok}
\label{lemma:discretization_is_ok}
Let $\widehat{F} \subseteq Y$ be the set of facilities returned by the algorithm in \Cref{sec:alg_kmedian_bounded_centers}.
Then 
\[
\sum_{x \in X'} \ccostport(x, \widehat{F}) \leq (1 + O(\epsilon)) \min_{F \subseteq Y, |F| \leq k} \sum_{x \in X'} \ccostport(x, F).
\]
\end{restatable}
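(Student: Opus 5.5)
Let $F^\circ$ be a minimizer of $\sum_{x\in X'}\ccostport(x,F)$ over $|F|\le k$, and for each cluster $C$ let $\Gamma^\circ_C$ be its true revealed cost under $F^\circ$, i.e.\ the sum of $\ccostport(x,F^\circ)$ over $x$ with $C(x)\subseteq C$. The plan is to show that the DP can follow $F^\circ$ while only \emph{rounding up} the budgets it writes down. Concretely, we build a family of rounded budgets $\widetilde\Gamma_C$, one per cluster, such that three things hold. First, each $\widetilde\Gamma_C$ is a grid value, i.e.\ a power of $(1+\epsilon')$ with $\epsilon'=\epsilon/(2^{O(\ddim)}\log\Delta)$. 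Second, the family is feasible for the update rule: for every cluster $C$, the sum $\sum_{D\in\child(C)}\widetilde\Gamma_D$ plus the newly revealed cost of $C$ is at most $\widetilde\Gamma_C$. Third, $\widetilde\Gamma_C\le(1+\epsilon')^{c\,(\ell(C)+1)}\Gamma^\circ_C$ for a constant $c$, where $\ell(C)$ is the level of $C$.

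We define these budgets bottom-up. At a leaf, $\widetilde\Gamma$ is the revealed cost rounded up to the next grid power. At a cluster $C$, $\widetilde\Gamma_C$ is the children's rounded budgets plus the newly revealed cost of $C$, rounded up once more. By induction each level loses only one factor $(1+\epsilon')$. Summing the children's bounds gives the level bound, since each child already satisfies the bound at level $\ell-1$ and the newly revealed cost is exact.

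Given this family, the configurations of $F^\circ$ (its distance vectors $\veca_C,\vecb_C$, suitably discretized) together with the budgets $\widetilde\Gamma_C$ form a feasible witness for the table. By the definition \eqref{eqn:kmedian_DP_table_entry_value}, this yields $g(\text{root},\cdot,\cdot,\widetilde\Gamma_{\text{root}})\le|F^\circ|\le k$. This step needs an induction: whenever an entry has a feasible witness in its subtree, the entry's value is at most the number of facilities that witness places. The auxiliary DP \eqref{eqn:compute_h}–\eqref{eqn:compute_g_C_accelerated} computes exactly the same minimum as the naive rule \eqref{eqn:kmedian_naive_update}, so the witness argument applies to it directly. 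The algorithm therefore outputs some $\widehat F$ with root budget at most $\widetilde\Gamma_{\text{root}}\le(1+\epsilon')^{O(L)}\Gamma^\circ_{\text{root}}=(1+O(\epsilon))\sum_{x\in X'}\ccostport(x,F^\circ)$, using $L=O(\log\Delta)$.

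We also need the converse: every table value certifies a real solution. The induction uses subadditivity of the budgets together with the consistency criteria (a)–(b). It shows that if $g(C,\veca,\vecb,\Gamma)=K$, then there is a facility set with $K$ facilities in $C$, consistent with $(\veca,\vecb)$, whose actual revealed cost is at most $\Gamma$. Applying this at the root gives $\sum_{x\in X'}\ccostport(x,\widehat F)\le\Gamma_{\text{root}}$, which combined with the previous paragraph is the claim.

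Two technical points remain. First, revealed costs below $\median(X,S)/n$ fall outside the grid. We round these up to the minimum grid value, $\median(X,S)/n$. Summed over $n$ clients this adds $O(\median(X,S))$ in total, and the argument needs it to be an $O(\epsilon)$ fraction of the optimum. To get that, we instead use a floor of $\epsilon\median(X,S)/n$, charged through $S$ being an $O(1)$-approximation. I expect the main obstacle to be making this floor and the cap at $2\median(X,S)$ consistent with the distance discretization of $\veca,\vecb$. The planned fix is to treat each rounding as a one-sided overestimate of $\ccostport$, whose cost is already accounted for by the detour bounds.
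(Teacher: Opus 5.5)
Your overall structure matches the paper's. The paper proves, by induction on the level, the sandwich $g(C,\veca_C,\vecb_C,\alpha_\ell\Gamma_C)\le G(C,\veca_C,\vecb_C,\Gamma_C)\le g(C,\veca_C,\vecb_C,\Gamma_C)$ for every entry (Lemma~\ref{lemma:G_bounded_by_g}, with $G$ the true value), and then uses minimality of the returned budget. Your ``converse'' is the right-hand inequality, and your witness built from $F^\circ$ is the left-hand inequality specialized to the optimum.

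The step that fails is your claim that the auxiliary DP \eqref{eqn:compute_h}--\eqref{eqn:compute_g_C_accelerated} computes exactly the same minimum as the naive rule \eqref{eqn:kmedian_naive_update}. The table $h$ is indexed by \emph{discretized} residual budgets: the running-time analysis charges only $O(\log^2\Delta)$ values for the budget coordinate of $h$. If the residuals $\Gamma'_C-\Gamma_1-\dots-\Gamma_i$ were stored exactly, $h$ would have about as many entries as the naive enumeration, $\log^{2^{O(\ddim)}}\Delta$, which is what the acceleration is meant to avoid. So $h(\dots,\Gamma-\Gamma_{i+1})$ can only be read at a grid value at most $\Gamma-\Gamma_{i+1}$. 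Write $\mathrm{new}_C$ for the newly revealed cost of $C$. A single budget per cluster with $\sum_{D\in\child(C)}\widetilde\Gamma_D+\mathrm{new}_C\le\widetilde\Gamma_C$ and no slack is then not a witness for the recursion the algorithm actually runs. Once the residual after $D_1$ is rounded down to the grid, the budgets of $D_2,\dots,D_\tau$ need no longer fit.

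The missing ingredient is a chain of grid-valued suffix budgets, as in the paper's proof of Lemma~\ref{lemma:G_bounded_by_g}. Let $R(\cdot)$ round up to the next power of $(1+\epsilon')$ and let $\tau=|\child(C)|$. Put $\Psi_{\tau-1}=\widetilde\Gamma_{D_\tau}$, $\Psi_i=R(\Psi_{i+1}+\widetilde\Gamma_{D_{i+1}})$ for $0\le i\le\tau-2$, and $\widetilde\Gamma_C=R(\Psi_0+\mathrm{new}_C)$. Since $\Psi_{i+1}+\widetilde\Gamma_{D_{i+1}}\le\Psi_i$ and $h$ is monotone in its budget, every lookup of the accelerated DP along your witness is covered.

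The price is a factor $(1+\epsilon')^{\tau}=(1+\epsilon')^{2^{O(\ddim)}}$ per level instead of $(1+\epsilon')$. Your bound therefore becomes $(1+\epsilon')^{2^{O(\ddim)}(L+1)}$ rather than $(1+\epsilon')^{O(L)}$. This is precisely why $\epsilon'=\epsilon/(2^{O(\ddim)}\log\Delta)$ rather than $\epsilon/\log\Delta$, and with that choice your conclusion survives.

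By contrast, the floor issue you single out as the main obstacle is benign, and the paper does not treat it explicitly. Admit $0$ as the budget of clusters with no revealed cost. Then all partial sums above are $0$ or at least the floor $f$. An additive round-up to $f$ happens only at a cluster where some client is newly revealed and all children have budget $0$, i.e.\ at most $n$ times. The total additive error is therefore at most $nf$ before the multiplicative blow-up.
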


Based on \Cref{lemma:discretization_is_ok}, we are ready to prove the correctness of our algorithm.

\begin{proof}[Proof of \Cref{thm:kmedian_bounded_centers} (correctness)]
    We will show that with constant probability, the facility set $\widehat{F}$ computed by our algorithm satisfies $\median(X, \widehat{F}) \leq (1 + \epsilon) \optkmed(X, Y)$.

    By \Cref{lemma:kmedian_bounded_centers_good_solution}, with probability $0.9$, there exists a solution $F \subseteq Y, |F| \leq k$, such that $
    \sum_{x \in X'} \ccostport(x, F) \leq (1 + \epsilon) \optkmed(X, Y).
    $

    Therefore, 
    \begin{align*}
        \median(X, \widehat{F}) 
        &\leq \median(X', \widehat{F}) + \epsilon \optkmed(X, Y) &&\text{By \Cref{lemma:kmedian_bounded_centers_new_instance}, w.p. $0.99$} \\
        &\leq \sum_{x \in X'} \ccostport(x, \widehat{F}) + \epsilon \optkmed(X, Y) \\
        &\leq (1 + O(\epsilon)) \sum_{x \in X'} \ccostport(x, F) + \epsilon \optkmed(X, Y) &&\text{By \Cref{lemma:discretization_is_ok}} \\
        &\leq (1 + O(\epsilon)) \optkmed(X, Y) &&\text{By \Cref{lemma:kmedian_bounded_centers_good_solution}}.
    \end{align*}
    Rescaling $\epsilon$ completes the proof.
\end{proof}

\subsubsection{Time Complexity}

We prove the time complexity of our algorithm is $2^{2^t} \tilde{O}(n + m)$ for $t = O(\ddim \log(\ddim/\epsilon))$.
By \Cref{lem: transform_to_bounded_aspect_ratio,lem:merge_instances}, we can wlog assume that the aspect ratio of $X \cup Y$ is $\Delta = \poly(n, m)$.

\begin{proof}[Proof of \Cref{thm:kmedian_bounded_centers} (time complexity)]
    The time complexity of the preprocessing stage is the same as the facility location algorithm, which is $(\ddim/\epsilon)^{O(\ddim)} \tilde{O}(n + m) \log \Delta$.

    We focus on the time complexity of filling the DP table.
    Fix a cluster $C \in \decom_\ell$.
    We calculate the complexity of filling all table entries regarding $C$.
    
    and a configuration $\veca_C = \braket{a^1_C, \dots, a^{|P_C|}_C}, \vecb_C = \braket{b^1_C, \dots, b^{|P_C|}_C}$, and a value $\Gamma_C \in [\median(X, S)/n, 2\median(X, S)]$.

    The computation of $g(C, \veca_C, \vecb_C, \Gamma_C)$ includes first computing \[\Gamma'_C = \Gamma_C - \sum_{x \in X' \colon C(x) = C} \ccostport(x, F).\]
    By \Cref{sec:fl_alg_time}, this can be done in time 
    \begin{align*}
        \epsilon^{-O(\ddim)} \cdot |\{x \colon C(x) = C\}|.
    \end{align*}
    Then, the algorithm enumerates $\veca_1, \vecb_1, \Gamma_1$.
    Recall that $a_1^p, b_1^p$ are multiplications of $\rho 2^\ell$ in the range $[0, 2^\ell/\epsilon]$, and $\Gamma_1$ is a power of $(1 + \frac{\epsilon}{2^{O(\ddim)} \log \Delta})$ in the range $[\median(X, S)/n, 2 \median(X, S)]$.
    Hence, there are $(\epsilon \rho)^{-2 |P_{D_1}|} \log^2 \Delta = (\epsilon \rho)^{-\rho^{-O(\ddim)}} \log^2 \Delta$ such combinations.
    For each of these combinations, the value of $g(\cdot) + h(\cdot)$ can be calculated in $O(1)$ time, thus a total of $(\epsilon \rho)^{-\rho^{-O(\ddim)}} \log^2 \Delta$ time.

    Analogously, by~\eqref{eqn:compute_h}, each entry in the auxiliary DP 
    $h(C, D_i, \veca_C, \vecb_C, \veca_1, \vecb_1, \dots, \veca_i, \vecb_i, \Gamma)$
    can be computed in time  $(\epsilon \rho)^{-\rho^{-O(\ddim)}} \log^2 \Delta$.

    Therefore, the time complexity of the computation regarding a single cluster $C \in \modifydecom_\ell$ is 
    \begin{align*}
        &\qquad (\epsilon \rho)^{-2 |P_C|} \log^2 \Delta \cdot \Big(
            \epsilon^{-O(\ddim)} \cdot |\{x \colon C(x) = C\}| + 
         (\epsilon \rho)^{-2 |P_{D_1}|} \log^2 \Delta 
        \Big)\\
        &\qquad \qquad + \sum_{i = 1}^{|\child(C)| - 1}
        (\epsilon \rho)^{-2 |P_C| - 2\sum_{j = 1}^i |P_{D_j}|} \log^2 \Delta \cdot (\epsilon \rho)^{-2 |P_{D_{i + 1}}|} \log^2 \Delta \\
        & \leq (\epsilon \rho)^{-\rho^{-O(\ddim)}} \log^4 \Delta \cdot (1 + |\{x \colon C(x) = C\}|).
    \end{align*}

    The total complexity of filling the DP table is 
    \begin{align*}
        &\qquad \sum_{\ell = 0}^L \sum_{C \in \decom_\ell} 
        (\epsilon \rho)^{-\rho^{-O(\ddim)}} \log^4 \Delta \cdot (1 + |\{x \colon C(x) = C\}|) \\
        & = (\epsilon \rho)^{-\rho^{-O(\ddim)}} \log^4 \Delta
        \sum_{\ell = 0}^L \Big(|\decom_\ell|
        + |\{x \in X \colon x \text{ is revealed at level } \ell\}|\Big) \\
        & = \left(\frac{\ddim}{\epsilon}\right)^{\left(\frac{\ddim}{\epsilon}\right)^{O(\ddim)}} \tilde{O}(n + m) \log^4 \Delta.
    \end{align*}

    Finally, the algorithm finds the smallest $\Gamma_{X'}$ such that there exists $\veca_{X'}, \vecb_{X'}$ satisfying $g(X', \veca_{X'}, \vecb_{X'}, \Gamma_{X'}) \leq k$.
    This takes extra $(\epsilon \rho)^{-\rho^{-O(\ddim)}} \log^2 \Delta$ time.

    Combining the analysis above, we conclude that the time complexity of our algorithm is $2^{2^t} \cdot \tilde{O}(n + m) \log^4 \Delta$, for 
    \begin{align*}
        t = O\left(\ddim \log\frac{\ddim}{\epsilon}\right).
    \end{align*}
    \Cref{thm:kmedian_bounded_centers} follows with $\Delta = \poly(n, m)$.
\end{proof}
\section{$k$-Median with Low-dimensional Clients}
\label{sec:kmedian_bounded_clients}

In this section, we extend our results in \Cref{sec:fl_bounded_clients} to the $k$-median problem.
Our main result is the following.

\begin{theorem}\label{thm:kmedian_bounded_clients}
    There is a randomized algorithm that, given as input $\epsilon \in (0, \tfrac{1}{2})$, $n, m, k \in \NN$ and  $(X \cup Y, \dist)$ with $|X| = n, |Y| = m, \ddim(X) \leq \ddim$, computes a $(1 + \epsilon)$-approximation of $k$-median in time 
    $2^{2^t} \cdot \tilde{O}(n + m)$
    with constant success probability, where
    \begin{equation*}
        t \in O\left(\ddim \log\frac{\ddim}{\epsilon}\right).
    \end{equation*}
\end{theorem}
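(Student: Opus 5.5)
The plan is to combine the facility-location machinery of \Cref{sec:fl_bounded_clients} with the $k$-median dynamic program of \Cref{sec:kmedian_bounded_centers}. As in \Cref{sec:kmedian_bounded_centers}, we first compute an $O(1)$-approximate $k$-median solution $S\subseteq Y$ using \Cref{sec: constant-approx-algos}. Next we build the decomposition $\modifydecom$ of \Cref{lemma:new_decomposition_for_general_metric} with $\rho=\epsilon^{10}/\ddim^2$. Bad clients and bad facilities are defined via \Cref{def:two_types_of_bad_points} with $F^*$ an optimal $k$-median solution. Each bad client $x$ is moved to $\proj{X}{\proj{S}{x}}$. The $k$-median analogue of \Cref{lemma:fl_bounded_clients_new_instance} has the same Markov-inequality proof, with $\median$ in place of $\floc$. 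It gives $|\median(X,F)-\median(X',F)|\le\epsilon\,\optkmed(X,Y)$ for all $F$ with probability $0.99$.

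The structural core is a $k$-median version of \Cref{lemma:fl_bounded_clients_good_solution}. Since we cannot simply take $F^*\cup\Bad_S$, which may have more than $k$ centers, we take $F$ from \Cref{lemma:kmedian_bounded_centers_good_solution_construction}. Its construction (Claims 20, 21 of \cite{Cohen-AddadFS21}) only uses that each facility of $S$ is bad with probability $O(\epsilon)$, so it should carry over verbatim to the new notion of bad facility via \Cref{lemma:badly_cut_ambient}. This yields $\Bad_S\subseteq F$, $|F|\le k$, $\median(X,F)\le(1+\epsilon)\optkmed$, and $\dist(x,F)\le 6\dist(x,F^*)+5\dist(x,S)$.

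We then redo the four-case analysis of \Cref{lemma:fl_bounded_clients_good_solution}, bounding the level at which $\{\phi(x),\proj{X}{\proj{F}{\phi(x)}}\}$ is cut and applying \Cref{lemma:detour_by_projection}. The only change is that $\dist(x,F)\le\dist(x,F^*)$ is replaced by the $6/5$ bound. The radius constant $100$ in the bad-facility definition was chosen with enough slack to absorb these constants; threshold $\dist(x,S)/(c\epsilon)$ in Case 4 with $c$ a suitable constant. The doubly bad Case 1 again costs $O(\dist(x,S))$ and has probability $O(\epsilon)$, so in expectation it contributes $O(\epsilon)\median(X,S)$. I expect this step to be the main obstacle, namely checking that the constants close in Case 4 and that Case 1's bound via $h(\proj{S}{x})$ still holds. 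The chain of estimates yields $\sum_{x\in X'}\dportP(x,F)\le(1+O(\epsilon))\optkmed$ with probability $0.9$.

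Algorithmically, we extend the DP of \Cref{sec:alg_fl_bounded_clients} by adding the budget $\Gamma_C$. Each entry stores the minimum number of facilities, as in \eqref{eqn:kmedian_DP_table_entry_value}, and children are combined with the auxiliary DP \eqref{eqn:compute_h}. The one new point is the ornaments. Their cost is now a count rather than an opening cost, so Step 3 solves an unweighted set cover over the universe $U$ of size $\rho^{-O(\ddim)}$ in $2^{|U|}|\Schild(C)|$ time by \Cref{lemma:set_cover_alg}. Ornaments contribute no connection cost because they are not clients, so they do not interact with the $\Gamma$ splitting. Correctness then follows from \Cref{lemma:discretization_is_ok}, whose proof is unchanged, together with the new-instance lemma, exactly as in the proof of \Cref{thm:kmedian_bounded_centers}. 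The running time is bounded as in \Cref{sec:time_complexity_fl_bounded_clients} times the $\log^4\Delta$ overhead, with $\Delta=\poly(n,m)$.
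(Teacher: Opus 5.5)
Your proposal is correct and follows the paper's proof essentially step for step. The shared steps are: an $O(1)$-approximate $S$; the decomposition $\modifydecom$ with $\rho=\epsilon^{10}/\ddim^2$; moving bad clients to $\proj{X}{\proj{S}{x}}$; the solution $F$ from the Claims~20/21 construction with $\Bad_S\subseteq F$ and $\dist(x,F)\le 6\dist(x,F^*)+5\dist(x,S)$; the four-case analysis (the paper uses threshold $\dist(x,S)/(20\epsilon)$ in Case~4); and the main/auxiliary DP over non-ornament children closed by an unweighted set cover over ornaments. Your remark that the ornament count does not depend on $\Gamma$ is exactly what \Cref{lemma:G_bounded_by_g_clients} uses to confine the rounding error to $2^{O(\ddim)}$ children per level.
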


\subsection{Structural Lemmas}

Let $F^* \subseteq Y, |F^*| \leq k$ be the optimal $k$-median solution, and 
Let $S \subseteq Y, |S| \leq k$ be a constant approximate solution.
We define  bad clients $\Bad_X \subseteq X$ and bad facilities $\Bad_S \subseteq S$ w.r.t. $F^*$, $S$ and $\modifydecom$.
We will use the same definition of bad clients and bad facilities in \Cref{def:two_types_of_bad_points}, 
w.r.t. $F^*$ and $S$.
We first state the following lemma which eliminate bad clients in $X$.

\begin{lemma}[New instance]\label{lemma:kmedian_bounded_clients_new_instance}
    Given $(X \cup Y, \dist)$, with $\ddim(X) \leq \ddim$, and a constant approximate solution $S \subseteq Y$ for $k$-median, 
    construct a new 
    (multi-)set of clients $X' \subseteq X$ as $X' := \phi(X)$ for 
    \begin{align*}
        \phi(x) := \begin{cases}
            x, &\text{if } x \notin \Bad_X; \\
            \proj{X}{\proj{S}{x}}, & \text{if } x \in \Bad_X,
        \end{cases}
    \end{align*} 
    namely, $X'$ is constructed from $X$ by moving every bad client $x$ to $\proj{X}{\proj{S}{x}}$.
    Then with probability $0.99$,
    \begin{align*}
        \forall F \subseteq Y, |F| \leq k,  \qquad \Big|\median(X, F) - \median(X', F)\Big|
        \leq \epsilon \optkmed(X, Y).
    \end{align*}
\end{lemma}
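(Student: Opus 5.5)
The plan mirrors the proof of \Cref{lemma:fl_bounded_centers_new_instance}. The only new ingredient is bounding the displacement $\dist(x, \phi(x))$ when $\phi(x) = \proj{X}{\proj{S}{x}}$ rather than $\proj{S}{x}$. The argument has three steps.

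\textbf{Step 1 (deterministic displacement bound).} For a bad client $x$, the triangle inequality and the definition of $\proj{X}{\cdot}$ give
\[
\dist(x, \proj{X}{\proj{S}{x}}) \leq \dist(x, \proj{S}{x}) + \dist(\proj{S}{x}, X) \leq 2\dist(x, \proj{S}{x}) = 2\dist(x, S).
\]
Here we use $\dist(\proj{S}{x}, X) \leq \dist(\proj{S}{x}, x)$, since $x \in X$. If $\proj{X}{\cdot}$ is replaced by a $(1+\epsilon)$-ANN, as in the algorithm, the constant becomes $2+\epsilon$, which is harmless. For good clients the displacement is $0$.

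\textbf{Step 2 (expected total displacement).} By linearity of expectation,
\[
\E\Big[\sum_{x\in X}\dist(x,\phi(x))\Big] \le \sum_{x\in X} 2\dist(x,S)\cdot\Pr[x\in\Bad_X].
\]
Whether $x$ is bad depends only on whether the set $B_X(x,\dist(x,S)/\epsilon) \subseteq X$ is badly cut w.r.t.\ $\modifydecom$. This set is fixed before the randomness, since $S$ is a fixed input solution. Hence \Cref{lemma:badly_cut_ambient} gives $\Pr[x\in\Bad_X]\le c\epsilon$. It follows that the expected total displacement is $O(\epsilon)\,\median(X,S) = O(\epsilon)\,\optkmed(X,Y)$, because $S$ is a constant approximation. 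Markov's inequality then gives, with probability $0.99$, $\sum_x \dist(x,\phi(x)) \le O(\epsilon)\,\optkmed(X,Y)$.

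\textbf{Step 3 (uniform transfer to all $F$).} Condition on the event from Step 2 and fix any $F\subseteq Y$ with $|F|\le k$. Since $\dist(\cdot,F)$ is $1$-Lipschitz,
\[
|\median(X,F)-\median(X',F)| \le \sum_x |\dist(x,F)-\dist(\phi(x),F)| \le \sum_x \dist(x,\phi(x)).
\]
This bound holds simultaneously for every $F$, because the conditioning event does not depend on $F$. Rescaling $\epsilon$ by the hidden constant completes the proof.

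The only step needing care is Step 2, specifically checking that the badly-cut event for $x$ concerns a subset of $X$ that is independent of the random decomposition. Otherwise the argument is routine. Note that $\median(X,S) \le O(1)\,\optkmed$ is what turns the bound from $\epsilon\,\floc(X,S)$, as in the facility location version, into $\epsilon\,\optkmed(X,Y)$.
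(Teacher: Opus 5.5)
Your proposal is correct and follows the paper's intended route. The paper omits this proof as analogous to \Cref{lemma:fl_bounded_centers_new_instance}: bound the expected total displacement using the badly-cut probability, apply Markov's inequality, and transfer the bound to every $F$ via $|\dist(x,F)-\dist(\phi(x),F)|\le \dist(x,\phi(x))$. Your bound $\dist(x,\proj{X}{\proj{S}{x}})\le 2\dist(x,S)$, together with $\median(X,S)=O(\optkmed(X,Y))$, is exactly the adaptation this setting requires.
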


The proof of \Cref{lemma:kmedian_bounded_clients_new_instance} is analogous to \Cref{lemma:fl_bounded_centers_new_instance}, and thus is omitted.
The following lemma shows the existence of $(1 + \epsilon)$-approximate portal-respecting solution.
It is similar to \Cref{lemma:fl_bounded_clients_good_solution}, with the only extra requirement that $|F| \leq k$.

\begin{lemma}[Good portal-respecting solution]\label{lemma:kmedian_bounded_clients_good_solution}
    Let $(X \cup Y, \dist)$ with $\ddim(X) \leq \ddim$ be a $k$-median instance, $S \subseteq Y$ be a constant approximate solution and $\modifydecom$ be the hierarchical decomposition on $X \cup Y$ in \Cref{lemma:new_decomposition_for_general_metric} with portal parameter $\rho = \epsilon^{10}/\ddim^2$.
    Let $X' \subseteq X$ be the new client set constructed by \Cref{lemma:kmedian_bounded_clients_new_instance}.
    Then with probability $0.9$, there exists a solution $F \subseteq Y, |F| \leq k$, such that 
    \begin{equation}\label{eqn:kmedian_bounded_clients_good_solution}
        \sum_{x \in X'} \dportP(x, F)
        \leq (1 + \epsilon) \optkmed(X, Y).
    \end{equation}
\end{lemma}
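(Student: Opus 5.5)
The proof will mirror that of \Cref{lemma:fl_bounded_clients_good_solution}, with \Cref{lemma:kmedian_bounded_centers_good_solution_construction} replacing the naive choice $F = F^* \cup \Bad_S$, which may exceed $k$ facilities. Concretely, I will first establish the analogue of \Cref{lemma:kmedian_bounded_centers_good_solution_construction} for the decomposition $\modifydecom$ with bad facilities as in \Cref{def:two_types_of_bad_points}. This gives a set $F \subseteq Y$ with the following properties.
\begin{enumerate*}[label=(\alph*)]
\item $\Bad_S \subseteq F$.
\item $\dist(x,F) \le 6\dist(x,F^*) + 5\dist(x,S)$ for all $x$.
\item With probability $0.99$, $|F| \le k$ and $\median(X,F) \le (1+\epsilon)\optkmed(X,Y)$.
\end{enumerate*}
The construction from \cite[Claims 20, 21]{Cohen-AddadFS21} only uses that each $f\in S$ is bad with probability $O(\epsilon)$. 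Here that holds by \Cref{lemma:badly_cut_ambient}, since the defining ball lies in $X$. So the construction transfers verbatim: replace the optimal center nearest to each bad facility, and pay via the charging argument.

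Next I will run the same four-case analysis over $x \in X$, according to whether $x\in\Bad_X$ and whether $\proj{S}{x}\in\Bad_S$. In each case I bound the highest level at which $\{\phi(x), \proj{X}{\proj{F}{\phi(x)}}\}$ is cut, and then apply \Cref{lemma:detour_by_projection}. Wherever the facility-location proof used $\dist(x,F)\le\dist(x,F^*)$, I will substitute property (b). This gives $\dist(\phi(x),\proj{X}{\proj{F}{\phi(x)}}) \le 2\dist(\phi(x),F) \le 12\dist(\phi(x),F^*)+10\dist(\phi(x),S)$.

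The radius $100\dist(\proj{X}{\proj{S}{x}},F^*)+100\dist(\proj{X}{\proj{S}{x}},S)$ in the bad-facility definition leaves enough slack to absorb these constants. In Case 2 that is immediate. In Case 4 I will split on $\dist(x,F^*) \le \dist(x,S)/(c\epsilon)$ with $c$ a suitable constant (say $c=20$), exactly as in the $k$-median centers proof. In the first branch the point lands in $B_X(x,\dist(x,S)/\epsilon)$. In the second it lands in the ball around $\proj{X}{\proj{S}{x}}$, using $\dist(x,F^*)\le 3\dist(\proj{X}{\proj{S}{x}},F^*)$.

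Case 1 is handled exactly as in \eqref{eqn:case_1}: $\proj{S}{x}\in F$, so $\dportP(\phi(x),F)\le 2\dist(x,S)$. The expected sum of these terms is $O(\epsilon)\median(X,S)$, because $x$ is a bad client with probability $O(\epsilon)$; Markov's inequality then controls it.

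Summing gives $\sum_x \dportP(\phi(x),F) \le (1+O(\sqrt\rho))\median(X',F) + O(\epsilon)(\median(X',F^*)+\median(X',S))$, plus the Case 1 term. Applying \Cref{lemma:kmedian_bounded_clients_new_instance} to move back from $X'$ to $X$, using that $S$ is an $O(1)$-approximation, and then property (c), yields $(1+O(\epsilon))\optkmed(X,Y)$. A union bound over the constant-probability events gives overall probability $0.9$, and rescaling $\epsilon$ finishes.

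The main obstacle I expect is the first step: verifying that the Cohen-Addad et al. construction of $F$, which swaps centers of $F^*$ for bad facilities, still gives property (b) with the new definition of bad facility. That definition is centered at $\proj{X}{y}$ rather than $y$. I will check that the swap argument only uses that badness events have probability $O(\epsilon)$ and the triangle inequality, so it is indifferent to where the ball is centered.
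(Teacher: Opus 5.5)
Your proposal is correct and follows the paper's proof essentially step for step. The paper also uses the clients analogue of the swap construction (\Cref{lemma:kmedian_bounded_clients_good_solution_construction}, with exactly your properties (a)--(c)), the same four-case analysis with the $\dist(x,S)/(20\epsilon)$ split in Case 4, Markov's inequality for the Case 1 term, and \Cref{lemma:kmedian_bounded_clients_new_instance} with a union bound. The obstacle you flag is benign: property (b) uses only the triangle inequality, and the probabilistic part only needs $\Pr[s\in\Bad_S]\le c\epsilon$, which \Cref{lemma:badly_cut_ambient} gives because the defining ball lies in $X$.
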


To prove \Cref{lemma:kmedian_bounded_clients_good_solution}, we use the following lemma, which claims the existence of a $(1 + \epsilon)$-approximate solution that contains all bad facilities in $S$.
\Cref{lemma:kmedian_bounded_clients_good_solution_construction} is essentially a combination of~\cite[Claims 20 and 21]{Cohen-AddadFS21}.
We provide the proof in \Cref{appendix:proof_of_kmedian_bounded_clients_construction} for completeness.

\begin{restatable}{lemma}{lemmakmedianboundedclientsgoodsolutionconstruction}
\label{lemma:kmedian_bounded_clients_good_solution_construction}
Let $(X \cup Y, \dist)$ with $\ddim(X) \leq \ddim$ be a $k$-median instance, and $S \subseteq Y$ be a constant approximate solution.
Then there exists a set of facilities $F \subseteq Y$, which satisfies the following properties.
\begin{enumerate}[label=(\alph*)]
    \item $\Bad_S \subseteq F$.
    \item For every $x \in X$, $\dist(x, F) \leq 6 \dist(x, F^*) + 5 \dist(x, S)$.
    \item With probability $0.99$, $|F| \leq k$ and $\median(X, F) \leq (1 + \epsilon) \optkmed(X, Y)$.
\end{enumerate}
\end{restatable}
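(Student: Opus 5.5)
We follow the swap-based construction of \cite[Claims 20 and 21]{Cohen-AddadFS21}. For every $f \in F^*$ let $s(f) := \proj{S}{f}$ be its nearest facility in $S$. Call $s \in S$ \emph{matched} if it is $s(f)$ for some $f \in F^*$; such an $s$ is then matched to the nearest $f$ among those with $s(f)=s$. Let $L \subseteq F^*$ be the facilities $f$ whose $s(f)$ is a bad facility but $f$ is not the one matched to $s(f)$. Define
\[
F := (F^* \setminus L') \cup \Bad_S,
\]
where $L' \subseteq F^*$ is a set of exactly $|\Bad_S \setminus \{s(f) \colon f \in F^*\}|$ facilities to be removed. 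Each bad facility in $S$ either replaces its matched $f$ (swap $f \to s(f)$) or is added outright and paid for by deleting some $f \in F^*$. Property (a) then holds by construction.

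For property (b), fix $x \in X$ and let $f^* = \proj{F^*}{x}$. If $f^*$ is kept in $F$, the bound is immediate. Suppose $f^*$ was removed and replaced by $s(f^*) \in \Bad_S$. Then $\dist(x, s(f^*)) \le \dist(x,f^*) + \dist(f^*, S) \le 2\dist(x,f^*) + \dist(x,S)$. The delicate case is when $f^*$ was deleted to make room. Here we must choose $L'$ so that every deleted $f$ has some facility $s \in S$ that is matched to another $f' \in F^*$ retained (or swapped into $F$). We then chain $x \to \proj{S}{x} =: s_x \to$ the facility associated with $s_x$. Using $\dist(s_x, F^*) \le \dist(s_x, x) + \dist(x,F^*)$, the facility $f'$ matched to $s_x$ (or $s_x$ itself, if $s_x \in F$) is within $O(\dist(x,F^*)+\dist(x,S))$ of $x$. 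Tracking the constants through the two triangle inequalities should give $6\dist(x,F^*)+5\dist(x,S)$. This bound is deterministic; it is the standard ``route through $S$'' argument.

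For property (c), note that the number of facilities deleted equals the number of bad facilities in $S$ that are not matched. Since $|S| \le k$ and at most $|F^*|$ elements of $S$ are matched, a counting argument as in \cite{Cohen-AddadFS21} gives $|F| \le k$. This needs care: we delete only elements of $F^*$ that are not the unique match of a good facility. The lonely facilities (those matched to no $s$) supply enough candidates for deletion, since $|F^*| - |\text{matched } S| \ge |S| - |\text{matched } S| - (k-|F^*|)$. This is the step I expect to be the main obstacle, because the deletions must be charged so that the expected cost increase is small.

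The cost bound follows from (b) restricted to clients whose $f^*$ is changed. A client is affected only when $s(f^*)$ or some related facility is bad. Each bad facility occurs with probability at most $c\epsilon$ by \Cref{lemma:badly_cut_ambient}, and the randomness of the decomposition is independent of the construction of $F^*$ and $S$. Hence
\[
\E[\median(X,F) - \optkmed(X,Y)] \le O(\epsilon)\sum_{x}\bigl(\dist(x,F^*)+\dist(x,S)\bigr) = O(\epsilon)\optkmed(X,Y),
\]
using that $S$ is a constant approximation. Markov's inequality then gives the bound with probability $0.99$ after rescaling $\epsilon$.
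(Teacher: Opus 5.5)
Your proposal has two genuine gaps.

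\textbf{Gap 1: the routing for property (b).} In the deleted case you route $x \to s_x := \proj{S}{x} \to$ ``the facility matched to $s_x$''. But the matching is built from nearest neighbours $F^* \to S$, so $\dist(s_x, f_{s_x})$ is not controlled by $\dist(s_x,F^*)$. The only preimage of $s_x$ can lie arbitrarily far from $s_x$ while $\proj{F^*}{x}$ maps to a different point of $S$. Also, $s_x$ may be unmatched and good, in which case your chain has no target at all.

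The chain must instead go through $s := \proj{S}{f^*}$, where $f^* = \proj{F^*}{x}$ is the deleted facility. Since $f^*$ is itself a preimage of $s$, the nearest preimage $f_s$ exists. It lies in $F_S$, so it is never deleted as a lonely facility, and it satisfies $\dist(s,f_s)\le \dist(f^*,S)$. This gives $\dist(x,f_s)\le 3\dist(x,F^*)+2\dist(x,S)$. If $f_s$ was in turn swapped for a bad $s$, one more triangle inequality costs a factor $2$ plus $\dist(x,S)$. Composing the two estimates is where the constants $6$ and $5$ come from.

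\textbf{Gap 2 (more serious): the choice of $L'$ and the cost bound.} You never say which lonely facilities (those in $F^*\setminus F_S$, i.e.\ not the nearest preimage of any point of $S$) go into $L'$, and your cost argument does not cover them. A client whose $\proj{F^*}{x}$ lies in $L'$ is hurt because some \emph{unrelated} unmatched facility of $S_0$ turned out bad, not because a facility attached to $x$ did. So the per-client $c\epsilon$ bound is unavailable.

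All you know is $\E[|\Bad_S\cap S_0|]\le c\epsilon|S_0|$, which for $|S_0|\gg 1/\epsilon$ gives no small bound on $\Pr[L'\neq\emptyset]$. An arbitrary rule may then always delete the lonely facility serving most of the clients, at a cost that is a constant fraction of $\optkmed(X,Y)$.

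\textbf{How the paper avoids this.} It fixes the deletions \emph{before} the decomposition is sampled:
\begin{enumerate}
\item It removes the $1000c\epsilon|F^*\setminus F_S|$ lonely facilities with least cost increase. This increase is bounded by averaging over random subsets, with clients rerouted to $f_{\proj{S}{f}}$ as above.
\item It pads to $|F^*|=|S|=k$, so that $|S_0|=|F^*\setminus F_S|$.
\item It applies Markov's inequality to get $|\Bad_S\cap S_0|\le 1000c\epsilon|S_0|$ with probability $0.999$. This is why $|F|\le k$ is only claimed with probability.
\end{enumerate}
The fact that $\overline{F}^*$ is deterministic is also what makes the swap event for $\proj{\overline{F}^*}{x}$ have probability at most $c\epsilon$.

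\textbf{A possible repair of your variant.} You could take $L'$ to be a uniformly random $t$-subset of $F^*\setminus F_S$, with $t=|\Bad_S\cap S_0|$ and fresh randomness. After padding, each lonely facility is then deleted with probability at most $\E[t]/|F^*\setminus F_S|\le c\epsilon$. That choice and its analysis are exactly the missing step.

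Finally, your displayed formula for $F$ never removes $f_s$ for the swapped bad $s\in S$ that are matched. Without that removal, $|F|$ can exceed $k$.
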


We prove \Cref{lemma:kmedian_bounded_clients_good_solution} based on \Cref{lemma:kmedian_bounded_clients_good_solution_construction}.
In fact, we show that the solution $F$ in \Cref{lemma:kmedian_bounded_clients_good_solution_construction} satisfies \eqref{eqn:kmedian_bounded_clients_good_solution}.
The proof is very similar to \Cref{lemma:fl_bounded_clients_good_solution}.

\begin{proof}[Proof of \Cref{lemma:kmedian_bounded_clients_good_solution}]
    Recall that $F^* \subseteq Y$ is the optimal solution and that $\Bad_S \subseteq S$ is the set of bad facilities in $S$.
    We show that with probability $0.9$,  the solution $F$ in \Cref{lemma:kmedian_bounded_clients_good_solution_construction} satisfies \eqref{eqn:kmedian_bounded_clients_good_solution}.
    By \Cref{lemma:kmedian_bounded_clients_new_instance}, $X' = \phi(x)$.
    We thus rewrite the LHS of \eqref{eqn:kmedian_bounded_clients_good_solution} as 
    \[
    \sum_{x \in X} \dportP(\phi(x), F).
    \]

    To bound the connection cost $\dportP(\phi(x), F)$ for each $x \in X$, we consider the following cases.

    \paragraph{Case 1: $x$ is a bad client, and $\proj{S}{x}$ is a bad facility.}
    By the construction of $X'$, $\phi(x) = \proj{X}{\proj{S}{x}}$. 
    Since $\proj{S}{x}$ is a bad facility, it is in $\Bad_S$, and thus in $F$.
    If $\proj{S}{x} \in X$, then $\proj{X}{\proj{S}{x}} = \proj{S}{x}$, and thus $\dportP(\phi(x), F) = 0$.
    If $\proj{S}{x} \in Y \setminus X$, then by the construction of $\modifydecom$, the set $\{\proj{X}{\proj{S}{x}}, \proj{S}{x}\}$ is cut exactly at level $h(\proj{S}{x})$, the level where $\proj{S}{x}$ is a leaf node on $\modifydecom$.
    By \Cref{lemma:detour_complicated}, 
    \begin{align}
        \dportP(\phi(x), F)
        &\leq \dportP(\proj{X}{\proj{S}{x}}, \proj{S}{x}) \notag\\
        &\leq \dist(\proj{X}{\proj{S}{x}}, \proj{S}{x}) + O(\rho) 2^{h(\proj{S}{x})} \notag\\
        &\leq \dist(x, \proj{S}{x}) + O(\rho) \frac{\dist(\proj{S}{x}, X)}{\sqrt{\rho}} 
        \notag \\
        &\leq (1 + O(\sqrt{\rho})) \dist(x, \proj{S}{x}) \notag\\
        &\leq 2 \dist(x, S). \label{eqn:kmedian_case_1}
    \end{align}

    \paragraph{Case 2: $x$ is a bad client, and $\proj{S}{x}$ is not a bad facility.}
    By the construction of $X'$, $\phi(x) = \proj{X}{\proj{S}{x}}$.
    Our plan is to upper bound the highest level where $\{\phi(x), \proj{X}{\proj{F}{\phi(x)}}\}$ is cut w.r.t. $\modifydecom$.

    Since $\proj{S}{x}$ is not a bad facility, by \Cref{def:badly_cut,def:two_types_of_bad_points}, the ball 
    \begin{align*}
        &B_X\Big(\proj{X}{\proj{S}{x}}, 100 \dist(\proj{X}{\proj{S}{x}}, F^*) + 100 \dist(\proj{X}{\proj{S}{x}}, S)\Big) \\
        = &B_X\Big(\phi(x), 100 \dist(\phi(x), F^*) + 100 \dist(\phi(x), S)\Big)
    \end{align*}
    is cut at level at most $\ell \leq \log(100 \dist(\phi(x), F^*) + 100 \dist(\phi(x), S)) + \epsddim$.
    Note that 
    \begin{align*}
        \dist(\phi(x), \proj{X}{\proj{F}{\phi(x)}})
        &\leq \dist(\phi(x), \proj{F}{\phi(x)}) + \dist(\proj{F}{\phi(x)}, \proj{X}{\proj{F}{\phi(x)}}) \\
        &\leq 2 \dist(\phi(x), F) \\
        &\leq 12 \dist(\phi(x), F^*) + 10 \dist(\phi(x), S) &&\text{By \Cref{lemma:kmedian_bounded_clients_good_solution_construction}.}
    \end{align*}
    Hence, $\proj{X}{\proj{F}{\phi(x)}} \in B_X(\phi(x), 100 \dist(\phi(x), F^*) + 100 \dist(\phi(x), S))$.
    Therefore, the highest level where $\{\phi(x), \proj{X}{\proj{F}{\phi(x)}}\}$ is cut is at most 
    $\ell \leq \log(100 \dist(\phi(x), F^*) + 100 \dist(\phi(x), S)) + \epsddim$.
    By \Cref{lemma:detour_by_projection}, 
    \begin{align}
        \dportP(\phi(x), F)
        &\leq (1 + \sqrt{\rho}) \dist(\phi(x), F) + O(\sqrt{\rho}) 2^\ell \notag \\
        &\leq (1 + \sqrt{\rho}) \dist(\phi(x), F) + O(\sqrt{\rho}) \expepsddim (\dist(\phi(x), F^*) + \dist(\phi(x), S)) \notag \\
        &\leq (1 + \sqrt{\rho}) \dist(\phi(x), F) + O(\epsilon) (\dist(\phi(x), F^*) + \dist(\phi(x), S)). \label{eqn:kmedian_case_2}
    \end{align}

    \paragraph{Case 3: $x$ is not a bad client, and $\proj{S}{x}$ is a bad facility.}
    In this case $\phi(x) = x$, and the ball $B_X(x, \dist(x, S) / \epsilon)$ is cut at level at most $\ell \leq \log(\dist(x, S)/\epsilon) + \epsddim$.
    On the other hand, since $\proj{S}{x} \in \Bad_S$, we have $\proj{S}{x} \in F$ by \Cref{lemma:kmedian_bounded_clients_good_solution_construction}.
    Therefore, 
    \[
        \dist(x, \proj{X}{\proj{F}{x}})
        \leq 2 \dist(x, F) 
        \leq 2 \dist(x, S).
    \]
    We have $\proj{X}{\proj{F}{x}} \in B_X(x, \dist(x, S) / \epsilon)$.
    Hence, the highest level where $\{x, \proj{X}{\proj{F}{x}}\}$ is cut is at most  
    $\ell \leq \log(\dist(x, S)/\epsilon) + \epsddim$.

    By \Cref{lemma:detour_by_projection},
    \begin{align}
        \dportP(\phi(x), F) 
        &= \dportP(x, F) \notag \\
        &\leq (1 + O(\sqrt{\rho})) \dist(x, F) + O(\sqrt{\rho}) 2^\ell \notag \\
        &\leq (1 + O(\sqrt{\rho})) \dist(x, F) + O(\sqrt{\rho}) \expepsddim \frac{\dist(x, S)}{\epsilon} \notag \\
        & \leq (1 + O(\sqrt{\rho})) \dist(x, F) + O(\epsilon) \dist(x, S). \label{eqn:kmedian_case_3}
    \end{align}

    \paragraph{Case 4: $x$ is not a bad client, and $\proj{S}{x}$ is not a bad facility.}
    In this case $\phi(x) = x$.
    Following the proof in \Cref{sec:fl_bounded_centers,sec:fl_bounded_clients,sec:kmedian_bounded_centers}, consider the following two sub-cases:

    If $\dist(x, F^*) \leq \dist(x, S) / (20 \epsilon)$, 
    first observe that 
    the ball $B_X(x, \dist(x, S) / \epsilon)$ is cut at level at most $\ell \log(\dist(x, S)/\epsilon) + \epsddim$.
    On the other hand,
    \[
        \dist(x, \proj{X}{\proj{F}{x}}) 
        \leq 2 \dist(x, F) 
        \leq 12 \dist(x, F^*) + 10 \dist(x, S)
        \leq \frac{\dist(x, S)}{\epsilon}.
    \]
    Hence $\proj{X}{\proj{F}{x}} \in B_X(x, \dist(x, S) / \epsilon)$.
    Therefore, the highest level where $\{x, \proj{X}{\proj{F}{x}}\}$ is cut is at most  
    $\ell \log(\dist(x, S)/\epsilon) + \epsddim$.
    Using a similar argument as Case 3, we can show that 
    \begin{equation}\label{eqn:kemdian_case_4.1}
        \dportP(\phi(x), F)
        \leq (1 + O(\sqrt{\rho})) \dist(x, F) + O(\epsilon) \dist(x, S).
    \end{equation}

    If $\dist(x, F^*) > \dist(x, S) / (20 \epsilon)$, we plan to utilize the fact that 
    $
        B_X(\proj{X}{\proj{S}{x}}, 100 \dist(\proj{X}{\proj{S}{x}}, F^*) + 100 \dist(\proj{X}{\proj{S}{x}}, S))
    $ 
    is cut at level at most $\ell \leq \log(100 \dist(\proj{X}{\proj{S}{x}}, F^*) + 100 \dist(\proj{X}{\proj{S}{x}}, S)) + \epsddim$, by showing 
    both $x$ and $\proj{X}{\proj{F}{x}}$ is contained in this ball.

    First observe that 
    \begin{align*}
        \dist(x, F^*) 
        &\leq \dist(x, S) + \dist(\proj{S}{x}, F^*) \\
        &\leq \dist(x, S) + \dist(\proj{S}{x}, \proj{X}{\proj{S}{x}}) + \dist(\proj{X}{\proj{S}{x}}, F^*) \\
        &\leq 2 \dist(x, S) + \dist(\proj{X}{\proj{S}{x}}, F^*) \\
        & \leq 40 \epsilon \dist(x, F^*) + \dist(\proj{X}{\proj{S}{x}}, F^*).
    \end{align*}
    Hence, $\dist(x, F^*) \leq \frac{1}{1 - 40 \epsilon} \dist(\proj{X}{\proj{S}{x}}, F^*)$.
    
    Furthermore, note that 
    \begin{align*}
        \dist(x, \proj{X}{\proj{S}{x}})
        \leq 2 \dist(x, S) 
        < 40 \epsilon \dist(x, F^*)
        \leq \frac{40 \epsilon}{1 - 40 \epsilon}  \dist(\proj{X}{\proj{S}{x}}, F^*)
        \leq 100 \dist(\proj{X}{\proj{S}{x}}, F^*), 
    \end{align*}
    and that 
    \begin{align*}
        \dist(\proj{X}{\proj{F}{x}}, \proj{X}{\proj{S}{x}})
        &\leq \dist(x, \proj{X}{\proj{F}{x}}) + \dist(x, \proj{X}{\proj{S}{x}}) \\
        &\leq 2 \dist(x, F) + 2 \dist(x, S) \\
        &\leq 12 \dist(x, F^*) + 12 \dist(x, S) &&\text{By \Cref{lemma:kmedian_bounded_clients_good_solution_construction}} \\
        &\leq 12(1 + 20 \epsilon) \dist(x, F^*)\\
        & \leq \frac{12(1 + 20 \epsilon)}{1 - 40\epsilon} \dist(\proj{X}{\proj{S}{x}}, F^*) \\
        & \leq 100 \dist(\proj{X}{\proj{S}{x}}, F^*),
    \end{align*}
    for $\epsilon \leq 1/100$.
    We conclude that $x, \proj{X}{\proj{F}{x}} \in B_X(\proj{X}{\proj{S}{x}}, 100 \dist(\proj{X}{\proj{S}{x}}, F^*) + 100 \dist(\proj{X}{\proj{S}{x}}, S))$.
    Thus the highest level where $\{x, \proj{X}{\proj{F}{x}}\}$ is cut is at most  
    $\ell \leq \log(100 \dist(\proj{X}{\proj{S}{x}}, F^*) + 100 \dist(\proj{X}{\proj{S}{x}}, S)) + \epsddim$.
    Similar to Case 2, we can show that 
    \begin{align}
        \dportP(\phi(x), F)
        &\leq (1 + O(\sqrt{\rho})) \dist(x, F) + O(\epsilon) \Big(\dist(\proj{X}{\proj{S}{x}}, F^*) + \dist(\proj{X}{\proj{S}{x}}, S) \Big) \notag \\
        &\leq (1 + O(\sqrt{\rho})) \dist(x, F) + O(\epsilon) \dist(x, F^*) + O(\epsilon) \dist(x, S). \label{eqn:kmedian_case_4.2}
    \end{align}

    Combining \eqref{eqn:kemdian_case_4.1} with \eqref{eqn:kmedian_case_4.2}, we conclude that in this case, 
    \begin{equation}\label{eqn:kmedian_case_4}
        \dportP(\phi(x), F)
        \leq (1 + O(\sqrt{\rho})) \dist(x, F) + O(\epsilon) \dist(x, F^*) + O(\epsilon) \dist(x, S).
    \end{equation}

    \medskip

    Combining above four cases (i.e., \eqref{eqn:kmedian_case_1}, \eqref{eqn:kmedian_case_2}, \eqref{eqn:kmedian_case_3} and \eqref{eqn:kmedian_case_4}), we have 
    \begin{align}
        &\qquad \sum_{x \in X} \dportP(\phi(x), F) \notag \\
        &\leq \sum_{\substack{x \in \Bad_X, \\ \proj{S}{x} \in \Bad_S}} 2 \dist(x, S) \notag \\
        & \hspace{5em} + \sum_{\substack{x \notin \Bad_X \text{ or} \\ \proj{S}{x} \notin \Bad_S}} \Big(
            (1 + O(\sqrt{\rho})) \dist(\phi(x), F)
            + O(\epsilon) \dist(\phi(x), F^*)
            + O(\epsilon) \dist(\phi(x), S)
        \Big) \notag\\
        &\leq \sum_{\substack{x \in \Bad_X, \\ \proj{S}{x} \in \Bad_S}} 2 \dist(x, S) 
        + (1 + O(\sqrt{\rho}))\median(X', F) 
        + O(\epsilon)\median(X', F^*) 
        + O(\epsilon) \median(X', S) \label{eqn:kmedian_combine_4_cases}
    \end{align}

    By \Cref{lemma:kmedian_bounded_clients_new_instance},
    \begin{align}
        \text{with probability } 0.99, \quad 
        &\median(X', F) \leq \median(X, F) + O(\epsilon) \optkmed(X, Y), \notag \\ 
        &\median(X', F^*) \leq \median(X, F^*) + O(\epsilon) \optkmed(X, Y) = (1 + O(\epsilon)) \optkmed(X, Y) \notag\\
        &\median(X', S) \leq \median(X, S) + O(\epsilon) \optkmed(X, Y)
        \leq O(1) \optkmed(X, Y) \label{eqn:kmedian_first_0.99}
    \end{align}
    hold simultaneously.

    By \Cref{lemma:kmedian_bounded_clients_good_solution_construction},
    \begin{equation}\label{eqn:kmedian_second_0.99}
        \text{with probability } 0.99, \qquad |F| \leq k \text{ and } 
    \median(X, F) \leq (1 + O(\epsilon)) \optkmed(X, Y).
    \end{equation}

    Finally, observe that 
    \begin{align*}
        \E \sum_{\substack{x \in \Bad_X, \\ \proj{S}{x} \in \Bad_S}} 2 \dist(x, S)
        &=  \sum_{x \in X} 2 \dist(x, S) \cdot \Pr[x \in \Bad_X, \proj{S}{x} \in \Bad_S] \\
        & \leq O(\epsilon) \cdot \sum_{x \in X} 2 \dist(x, S) &&\text{By \Cref{lemma:badly_cut_ambient}} \\
        & \leq O(\epsilon) \cdot \median(X, S) \\
        & \leq O(\epsilon) \cdot \optkmed(X, Y).
    \end{align*}
    Applying Markov's inequality, we have 
    \begin{equation}\label{eqn:kmedian_third_0.99}
        \text{with probability } 0.99, \qquad  
        \sum_{\substack{x \in \Bad_X, \\ \proj{S}{x} \in \Bad_S}} 2 \dist(x, S) 
        \leq O(\epsilon) \optkmed(X, Y).
    \end{equation}

    Combining \eqref{eqn:kmedian_first_0.99}, \eqref{eqn:kmedian_second_0.99} and \eqref{eqn:kmedian_third_0.99} with \eqref{eqn:kmedian_combine_4_cases}, we conclude that with probability $0.97$, $|F| \leq k$ and  
    \begin{align*}
        \sum_{x \in X} \dportP(\phi(x), F)
        &\leq O(\epsilon) \optkmed(X, Y)
        + (1 + O(\epsilon)) \median(X, F) + O(\epsilon) \optkmed(X, Y) \\
        &\leq (1 + O(\epsilon)) \optkmed(X, Y).
    \end{align*}
    Rescaling $\epsilon$ concludes the proof.
    
\end{proof}

\subsection{The Algorithm}
\label{sec:alg_kmedian_bounded_clients}

Our $k$-median algorithm is a dynamic program on top of the new decomposition $\modifydecom$.
The framework is similar to the algorithm in \Cref{sec:alg_kmedian_bounded_centers}, with the complication of handling ornaments on $\modifydecom$.

\paragraph{Preprocessing stage.}
By \Cref{sec: constant-approx-algos}, we can compute an $O(1)$-approximate solution $S \subseteq Y$ for $k$-median in near-linear time, when $\ddim(X)$ is bounded.
The rest of the preprocessing stage is exactly the same as \Cref{sec:alg_fl_bounded_clients}, where we
1) use \Cref{alg:modified_decomposition} to compute the decomposition $\modifydecom$ on top of $X \cup Y$ with scaling parameter $\rho = \epsilon^{10}/\ddim^2$, 
and 2) construct the new dataset $X'$ by moving every bad client $x \in X$ to $\proj{X}{\proj{S}{x}}$.

\paragraph{Dynamic program.}
Analogous to \Cref{sec:alg_kmedian_bounded_centers}, our algorithm consists of a main DP $g(\cdot)$ and an auxiliary DP $h(\cdot)$.

In the main DP, each entry in the DP table is encoded by $(C, \veca_C, \vecb_C, \Gamma_C)$.
The configurations $\veca_C = \braket{a^1_C, a^2_C, \dots, a^{|P_C|}_C}, \ \vecb_C = \braket{b^1_C, b^2_C, \dots, b^{|P_C|}_C}$ encode the minimum (portal-respecting) distance from each portal to facilities inside and outside cluster $C$, respectively.
The real value $\Gamma_C \in [\median(X, S)/n, 2 \median(X, S)]$ corresponds to the total connection cost inside $C$.
The DP value $g(C, \veca_C, \vecb_C, \Gamma_C)$ equals to the minimum number of facilities required to be placed in $C$, such that the connection cost inside $C$ is at most $\Gamma_C$.
Formally, 
\begin{align}
    g(C, \veca_C, \vecb_C, \Gamma_C) := \min\Bigg\{
        K \colon \exists F \subseteq Y 
        &\text{ that is consistent with } 
        \veca_C, \vecb_C, \notag \\
        &\text{ s.t. }
        |F \cap C| = K 
        \text{ and } 
        \sum_{x \in X' \cap C} \dportP(x, F) \leq \Gamma_C
    \Bigg\}. \label{eqn:kmedian_bounded_clients_main_DP}
\end{align}
Note that once we have the table entries corresponding to the root node $X'$, it suffices to output the minimum $\Gamma$ with $g(X', \veca, \vecb, \Gamma) \leq k$ as the $k$-median value.

To define the auxiliary DP, as in \Cref{sec:alg_kmedian_bounded_centers}, we need an order for the non-ornament child clusters of $C$.
Recall that $\Nchild(C)$ is the set of non-ornament child clusters of $C$ and $\Schild(C)$ is the set of ornament child clusters of $C$.
Consider an arbitrary order of $\Nchild(C)$ as $D_1, D_2, \dots, D_{|\Nchild(C)|}$.
Each table entry of the auxiliary DP $h$ is encoded by a cluster $C$, its $i$-th non-ornament child cluster $D_i$, a set of configurations $\veca_C, \vecb_C, \veca_1, \vecb_1, \dots, \veca_i, \vecb_i$, and a value $\Gamma$.
The value in that table entry 
$h(C, D_i, \veca_C, \vecb_C, \veca_1, \vecb_1, \dots, \veca_i, \vecb_i, \Gamma)$
equals to the minimum number of facilities needed to be placed in 
\[
C \setminus \bigcup_{j = 1}^i D_j 
= \bigcup_{j = i+ 1}^{|\Nchild(C)|} D_{j} \cup \bigcup_{\{y\} \in \Schild(C)} \{y\}, 
\] 
such that the cost in $C \setminus \bigcup_{j = 1}^i D_j$ is at most $\Gamma$, given the configurations of $C, D_1, \dots, D_i$.

For a level $\ell$ cluster $C \in \modifydecom_\ell$, we discretize the configurations $a_p, b_p$ to be multiplications of $\rho 2^\ell$ in the range $[0, 2^\ell/\epsilon]$.
Besides, $\Gamma$ is discretized to be powers of $(1 + \tfrac{\epsilon}{2^{O(\ddim)} \log \Delta})$ in the range $[\median(X, S)/n, 2 \median(X, S)]$.

\paragraph{Base case.}
The base case of the dynamic program corresponds to the leaf nodes of $\modifydecom$ which are in $X'$.
Consider $x \in X'$ and its corresponding node $\{x\} \in \modifydecom_0$, together with a configuration $(a, b, \Gamma)$. 
(We can wlog assume $\{x\}$ has itself as its only portal.)
It is easy to check if $(a, b)$ is valid.
Recall that $a$ encodes the distance from $x$ to the closest facility in $\{x\}$.
Therefore, if $x \in Y \cap X'$, then $a$ is either $0$ or $\infty$;
if $x \in X' \setminus Y$, then $a$ must be $\infty$.
We set $g(\{x\}, a, b, \Gamma) = \infty$ if $(a, b)$ is invalid.

To further check if $\Gamma$ is consistent with $(a, b)$, we can compare the connection cost of $x$ with $\Gamma$.
Specifically, the connection cost of $x$ is $\dportP(x, F) = \min\{a, b\}$ multiplied by $w(x)$, the number of copies of $x$ in $X'$.
If $\Gamma < w(x) \cdot \min\{a, b\}$, i.e., the connection cost inside $\{x\}$ already exceeds $\Gamma$, 
then $\Gamma$ is inconsistent with $(a, b)$.
We set $g(\{x\}, a, b, \Gamma) = \infty$

If $(a, b, \Gamma)$ is consistent, we set $g(a, b, \Gamma)$ to be the number of facilities in $\{x\}$, which is either $0$ or $1$, depending on whether $a$ is $\infty$ or $0$.

\paragraph{Updating the DP table.}
Consider a higher level cluster $C \in \modifydecom_\ell$ with $\ell \geq 1$, together with its configuration $\veca_C = \braket{a^1_C, \dots, a^{|P_C|}_C}, \vecb_C = \braket{b^1_C, \dots, b^{|P_C|}_C}$ and value $\Gamma_C$.
To compute $g(C, \veca_C, \vecb_C, \Gamma_C)$, the algorithm enumerates the configuration of its first non-ornament child $D_1$; the number of facilities inside $C$ equals to the number of facilities in $D_1$ plus the number of facilities in $C \setminus D_1$.
Formally, 
\begin{equation}\label{eqn:kmedian_bounded_clients_g}
    g(C, \veca_C, \vecb_C, \Gamma_C) := \min_{\veca_1, \vecb_1, \Gamma_1} \Big\{
    g(D_1, \veca_1, \vecb_1, \Gamma_1) + h(C, D_1, \veca_C, \vecb_C, \veca_1, \vecb_1, \Gamma_C - \Gamma_1)
\Big\}.
\end{equation}

For every $1 \leq i \leq |\Nchild(C)| - 1$, to compute the auxiliary DP entry $h(C, D_i, \veca_C, \vecb_C, \veca_1, \vecb_1, \dots, \veca_i, \vecb_i, \Gamma)$, our algorithm enumerates the configuration of $C$'s $(i + 1)$-th non-ornament child cluster $D_{i + 1}$.
The number of facilities inside $C \setminus \bigcup_{j = 1}^i D_j$ equals to the number of facilities in $D_{i + 1}$ plus the number of facilities in $C \setminus \bigcup_{j = 1}^{i + 1} D_j$.
Formally, 
\begin{align}
    h(C, D_i, \veca_C, \vecb_C, \veca_1, \vecb_1, \dots, \veca_i, \vecb_i, \Gamma)
    = & \min_{\veca_{i + 1}, \vecb_{i + 1}, \Gamma_{i + 1}}
    \Big\{
        g(D_{i + 1}, \veca_{i + 1}, \vecb_{i + 1}, \Gamma_{i + 1}) \notag \notag \\
        &+ h(C, D_{i + 1}, \veca_C, \vecb_C, \veca_1, \vecb_1, \dots, \veca_{i + 1}, \vecb_{i + 1}, \Gamma - \Gamma_{i + 1})
    \Big\}.
    \label{eqn:kmedian_bounded_clients_h}
\end{align}

Finally, we discuss how to compute 
\[
    h(C, D_{|\Nchild(C)|}, \veca_C, \vecb_C, \veca_1, \vecb_1, \dots, \veca_{|\Nchild(C)|}, \vecb_{|\Nchild(C)|}, \Gamma).
\]
This corresponds to the number of facilities in $\bigcup_{\{y\} \in \Schild(C)} \{y\}$.
We use the (almost) same 4-step procedure as \Cref{sec:alg_fl_bounded_clients}.

\begin{itemize}
    \item \textbf{Step 0: Initialize.}
    Maintain a universe $U$, which is set to $\emptyset$ initially.
    For every ornament $\{y\} \in \Schild(C)$, maintain a set $R_y$, which is set to $\emptyset$ initially.

    Roughly speaking, $U$ is the set of portals that need to be served by a facility in $\Schild(C)$, 
    and $R_y$ contains the portals which can be served by $y$.

    \item \textbf{Step 1: Check consistency for $p \in P_C$.}
    For every $p \in P_C$, the algorithm finds $D \in \Nchild(C)$ and $q \in P_D$, such that $a_C^p = a_D^q + \dist(p, q)$.
    If such $D$ and $q$ exists, then check for the next $p \in P_C$.

    Otherwise, such $D$ and $q$ do not exist. 
    We add the pair $(C, p)$ to $U$, indicating portal $p$ needs to be served by a facility in $\Schild(C)$.
    Our algorithm then finds all ornaments $\{y\} \in \Schild(C)$ that satisfy $a_C^p = \dist(p, y)$, and adds the pair $(C, p)$ to $R_y$, indicating $p$ can be served by $y$.
    If no such $y$ exists, we claim that the configuration
    $(\veca_C, \vecb_C, \veca_1, \vecb_1, \dots, \veca_{|\Nchild(C)|}, \vecb_{|\Nchild(C)|})$
    is inconsistent, and return $h(C, D_{|\Nchild(C)|}, \veca_C, \vecb_C, \veca_1, \vecb_1, \dots, \veca_{|\Nchild(C)|}, \vecb_{|\Nchild(C)|}, \Gamma) = \infty$.
    
    \item \textbf{Step 2: Check consistency for $D \in \Nchild(C)$ and $q \in P_D$.}
    For every $D \in \Nchild(C)$ and $q \in P_D$, our algorithm first tries to find $p \in P_C$, such that $b^q_D = b_C^p + \dist(p, q)$.
    If such $p$ exists, it means $q$ connects to a facility outside $D$ first via $p$, then to a facility outside $C$.
    We can then check for the next $D$ and $q$.

    If such $p$ does not exist, our algorithm tries to find a non-ornament cluster $D' \in \Nchild(C), D' \neq D$ and a portal $q' \in P_{D'}$, such that $b^q_D = a^{q'}_{D'} + \dist(q, q')$.
    If such $D', q'$ exists, it means $q$ connects to a facility outside $D$ first via $q'$, then to a facility inside $D'$.
    We can then check for the next $D$ and $q$.

    Assume there exist neither $p \in P_C$, such that $b^q_D = b_C^p + \dist(p, q)$, nor $D' \in \Nchild(C), D' \neq D$ and $q' \in P_{D'}$, such that $b^q_D = a^{q'}_{D'} + \dist(q, q')$.
    We add the pair $(D, q)$ to the universe $U$, indicating that portal $q \in P_D$ needs to be served by a facility in $\Schild(C)$.
    Our algorithm then finds all ornaments $\{y\} \in \Schild(C)$ that satisfy $b_D^q = \dist(q, y)$, and adds $(D, q)$ to $R_y$, indicating $q$ can be served by $y$.
    If no such $y$ exists, we claim that the configuration
    $(\veca_C, \vecb_C, \veca_1, \vecb_1, \dots, \veca_{|\Nchild(C)|}, \vecb_{|\Nchild(C)|})$
    is inconsistent, 
    and return $h(C, D_{|\Nchild(C)|}, \veca_C, \vecb_C, \veca_1, \vecb_1, \dots, \veca_{|\Nchild(C)|}, \vecb_{|\Nchild(C)|}, \Gamma) = \infty$.

    \item \textbf{Step 3: Compute the number of facilities in $\Schild(C)$.}
    At this stage, 
    the universe $U$ is a collection of (cluster, portal) pairs:
    \begin{equation}\label{eqn:kmedian_universe}
        U \subseteq \{C\} \times P_C \cup \bigcup_{D \in \Nchild(C)} \{D\} \times P_D.
    \end{equation}
    For every $\{y\} \in \Schild(C)$, $R_y$ is a subset of $U$, and $\bigcup_{y \in \Schild(C)} R_y = U$.
    We solve the unweighted set cover problem for universe $U$ and set system $\{R_y \colon \{y\}\in \Schild(C)\}$.
    $h(C, D_{|\Nchild(C)|}, \veca_C, \vecb_C, \veca_1, \vecb_1, \dots, \veca_{|\Nchild(C)|}, \vecb_{|\Nchild(C)|}, \Gamma)$
    is set to 
    the optimal cost for this set cover instance, i.e., the minimum number of sets required to cover the universe $U$.
\end{itemize}

\subsubsection{Proof of Correctness}

Let $\widehat{F} \subseteq Y$ be the set of facilities returned by the algorithm.
Similar to \Cref{lemma:discretization_is_ok}, we can show the following lemma, whose proof is deferred to \Cref{appendix:proof_of_kmedianboundedclientsdiscretizationisok}:
\begin{restatable}{lemma}{lemmakmedianboundedclientsdiscretizationisok}
\label{lemma:kmedian_bounded_clients_discretization_is_ok}
    Let $\widehat{F} \subseteq Y$ be the set of facilities returned by the algorithm in \Cref{sec:alg_kmedian_bounded_clients}.
    Then \[
    \sum_{x \in X'} \dportP(x, \widehat{F})
    \leq (1 + O(\epsilon)) \min_{F \subseteq Y, |F| \leq k}
    \sum_{x \in X'} \dportP(x, F).
    \]
\end{restatable}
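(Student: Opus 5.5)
Let $F^\dagger \subseteq Y$ with $|F^\dagger| \le k$ be a minimizer of $\sum_{x \in X'} \dportP(x, F)$. The plan is to show that the DP "sees" a configuration path consistent with $F^\dagger$ whose discretized budget at the root is at most $(1+O(\epsilon))\sum_{x\in X'}\dportP(x,F^\dagger)$ and whose facility count is at most $k$. Since the algorithm returns the smallest feasible root budget $\Gamma$, and the cost actually paid by $\widehat{F}$ is bounded by the root budget it certifies, the statement follows.

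\textbf{Step 1: configurations.} For every non-ornament cluster $C \in \modifydecom_\ell$ and portal $p \in P_C$, I will set $a_C^p$ and $b_C^p$ to the portal-respecting distances from $p$ to $F^\dagger \cap C$ and $F^\dagger \setminus C$, rounded up to multiples of $\rho 2^\ell$. Distances above $2^\ell/\epsilon$ are handled by the compressed-cluster/$\infty$ convention of \Cref{sec:fl_alg}. I then check the consistency criteria. Rounding costs at most $\rho 2^i$ per level along each portal path, summing to $O(\rho)2^{\ell}$ with $\ell$ the cut level. This is dominated by the detour already charged in \Cref{lemma:detour_complicated}, so it changes $\dportP$ by a $(1+O(\epsilon))$ factor, or it can be charged against $\epsilon\cdot$cost as in \cite{Cohen-AddadFS21}.

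\textbf{Step 2: ornaments.} For the ornaments, I must check that the set-cover step at each $C$ with the true configurations returns at most $|F^\dagger \cap \bigcup\Schild(C)|$. The ornaments of $F^\dagger$ in $\Schild(C)$ cover every pair placed in $U$, since each such portal's witness facility is an ornament child. The number of facilities is therefore never overcounted.

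\textbf{Step 3: budgets, the main obstacle.} The hard part is that each $\Gamma_D$ is rounded up to a power of $(1+\epsilon')$ with $\epsilon' = \epsilon/(2^{O(\ddim)}\log\Delta)$. The plan is to define the budget at each cluster bottom-up as the rounded-up sum of the children's budgets plus the true cost of the cluster's own clients. At each cluster this inflates the exact cost by a factor of at most $(1+\epsilon')$. Costs are nested, and the recursion via the auxiliary DP $h$ over the $|\Nchild(C)| = 2^{O(\ddim)}$ non-ornament children rounds once per step. The total multiplicative blow-up along a root-to-leaf chain is therefore $(1+\epsilon')^{2^{O(\ddim)}\cdot L} \le 1+O(\epsilon)$ for $L = O(\log\Delta)$.

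Some budgets fall below the grid minimum $\median(X,S)/n$. I will round these up to that minimum; there are at most $n$ client-containing leaves, so this adds at most $O(\median(X,S)) = O(\epsilon)\opt$ after rescaling by $\epsilon$. A cleaner way is to use a zero budget symbol, or to charge this with the same per-level grid argument. Since $\sum_{x}\dportP(x,F^\dagger) \ge \optkmed(X,Y)/2$ up to constants and $S$ is a constant-factor approximation, this additive term is $O(\epsilon)$ times the minimum, which completes the bound.
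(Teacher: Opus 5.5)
Your Steps 2 and 3 are the paper's argument. The paper proves, by induction over levels, the two-sided bound $g(C,\veca_C,\vecb_C,\alpha_\ell\Gamma_C)\le G(C,\veca_C,\vecb_C,\Gamma_C)\le g(C,\veca_C,\vecb_C,\Gamma_C)$ with $\alpha_\ell=(1+\epsilon')^{\ell\cdot 2^{O(\ddim)}}$ (\Cref{lemma:G_bounded_by_g_clients}). It uses the same rounded partial sums $\Psi_i=R(\Psi_{i+1}+\Gamma'_{i+1})$ along the auxiliary chain $h$, and observes that the ornament count $K$ depends only on the configurations. Your claim that $\widehat F$ pays at most its certified root budget is the direction $G\le g$. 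Your explicit budget path for $F^\dagger$ is the direction $g(\cdot,\alpha_L\Gamma)\le G(\cdot,\Gamma)$, unrolled for a single solution.

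The step that does not work as written is Step 1. Rounding configurations to multiples of $\rho 2^i$ adds $O(\rho)2^{\ell}$ to a client's evaluated cost, where $\ell$ is the level at which its path turns. \Cref{lemma:detour_complicated} only bounds the detour \emph{from above} by this quantity; the actual detour can be zero. For example, let $x$ and $f\in F^\dagger\cap X$ be at distance $\delta\ll\rho 2^\ell$, separated at level $\ell$, and both portals of all their ancestors up to level $\ell$. Then $\dportP(x,f)=\delta$, while the rounded configuration charges at least $\rho 2^\ell$.

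You know nothing about the cut levels of an arbitrary minimizer $F^\dagger$. So neither the claimed $(1+O(\epsilon))$ factor nor a charge ``as in \cite{Cohen-AddadFS21}'' is available for it. There are two ways out:
\begin{itemize}
\item Do as the paper does: it defines its true value $G$ relative to the discretized configurations, so the rounding is part of the cost the DP optimizes and the lemma concerns only the $\Gamma$-grid.
\item Run your path construction for the good solution of \Cref{lemma:kmedian_bounded_clients_good_solution} instead of $F^\dagger$. Its cut levels are controlled, so the rounding costs $O(\epsilon)\optkmed(X,Y)$. Then compare with the minimum via $\min_F\sum_{x\in X'}\dportP(x,F)\ge(1-\epsilon)\optkmed(X,Y)$, a lower bound of the kind you already invoke at the end.
\end{itemize}

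Your attention to the grid floor is warranted and goes beyond the paper, which applies $R$ to possibly vanishing $\Phi_i$. However, it requires an actual change of the grid, not just a rescaling in the analysis. With the literal floor $\median(X,S)/n$, each of the $n$ level-$0$ leaves must receive at least that budget in the enumeration. Every certified root budget is then at least $\median(X,S)$, which can be a constant factor above the optimum. Lowering the floor to $\epsilon\median(X,S)/n$, or adding a zero value, gives the $O(\epsilon)\optkmed(X,Y)$ additive term you want.
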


Based on \Cref{lemma:kmedian_bounded_clients_discretization_is_ok}, we prove the correctness of our algorithm.

\begin{proof}[Proof of \Cref{thm:kmedian_bounded_clients} (correctness)]
    We will show that with constant probability, the facility set $\widehat{F}$ computed by our algorithm satisfies $\median(X, \widehat{F}) \leq (1 + \epsilon) \optkmed(X, Y)$.

    By \Cref{lemma:kmedian_bounded_clients_good_solution}, with probability $0.9$, there exists a solution $F \subseteq Y, |F| \leq k$, such that $
    \sum_{x \in X'} \dportP(x, F) \leq (1 + \epsilon) \optkmed(X, Y).
    $

    Therefore, 
    \begin{align*}
        \median(X, \widehat{F}) 
        &\leq \median(X', \widehat{F}) + \epsilon \optkmed(X, Y) &&\text{By \Cref{lemma:kmedian_bounded_clients_new_instance}, w.p. $0.99$} \\
        &\leq \sum_{x \in X'} \dportP(x, \widehat{F}) + \epsilon \optkmed(X, Y) \\
        &\leq (1 + \epsilon) \sum_{x \in X'} \dportP(x, F) + \epsilon \optkmed(X, Y) &&\text{By \Cref{lemma:kmedian_bounded_clients_discretization_is_ok}} \\
        &\leq (1 + O(\epsilon)) \optkmed(X, Y) &&\text{By \Cref{lemma:kmedian_bounded_clients_good_solution}}.
    \end{align*}
    Rescaling $\epsilon$ completes the proof.
\end{proof}

\subsubsection{Time Complexity}

We prove the time complexity of our algorithm is $2^{2^t} \tilde{O}(n + m)$ for $t = O(\ddim \log(\ddim/\epsilon))$.
By \Cref{lem: transform_to_bounded_aspect_ratio,lem:merge_instances}, we can wlog assume that the aspect ratio of $X \cup Y$ is $\Delta = \poly(n, m)$.

\begin{proof}[Proof of \Cref{thm:kmedian_bounded_clients} (time complexity)]
    The preprocessing stage is the same as \Cref{sec:alg_fl_bounded_clients}, and thus has time complexity $(\ddim/\epsilon)^{O(\ddim)} \tilde{O}(n + m) \log \Delta$.

    We focus on the time complexity of filling the DP table.
    Fix a cluster $C \in \modifydecom_\ell$ and a configuration $\veca_C = \braket{a^1_C, \dots, a^{|P_C|}_C}, \vecb_C = \braket{b^1_C, \dots, b^{|P_C|}_C}$, and a value $\Gamma_C \in [\median(X, S)/n, 2\median(X, S)]$.

    The computation of $g(C, \veca_C, \vecb_C, \Gamma_C)$ includes an enumeration of $\veca_1, \vecb_1, \Gamma_1$.
    Recall that $a_p^C, b_p^C$ are multiplications of $\rho 2^\ell$ in the range $[0, 2^\ell/\epsilon]$, and $\Gamma_C$ is a power of $(1 + \tfrac{\epsilon}{2^{O(\ddim)} \log \Delta})$ in the range $[\median(X, S)/n, 2 \median(X, S)]$.
    Hence, there are $(\epsilon \rho)^{-2 |P_{D_1}|} \log^2 \Delta = (\epsilon \rho)^{-\rho^{-O(\ddim)}} \log^2 \Delta$ such combinations.
    For each of these combinations, the value of $g(\cdot) + h(\cdot)$ can be calculated in $O(1)$ time, thus a total of $(\epsilon \rho)^{-\rho^{-O(\ddim)}} \log^2 \Delta$ time.

    For $1 \leq i \leq |\Nchild(C)| - 1$, the computation of $h(C, D_i, \veca_C, \vecb_C, \veca_1, \vecb_1, \dots, \veca_i, \vecb_i, \Gamma)$ is analogous, thus $(\epsilon \rho)^{-\rho^{-O(\ddim)}} \log^2 \Delta$ time in total.

    The computation of $
    h(C, D_{|\Nchild(C)|}, \veca_C, \vecb_C, \veca_1, \vecb_1, \dots, \veca_{|\Nchild(C)|}, \vecb_{|\Nchild(C)|}, \Gamma)
    $
    is more involved.
    By \Cref{sec:time_complexity_fl_bounded_clients}, it can be computed in time 
    \[
    \rho^{-O(\ddim)} + 2^{\rho^{-O(\ddim)}} |\Schild(C)|.
    \]

    Therefore, the time complexity of the computation regarding a single cluster $C \in \modifydecom_\ell$ is 
    \begin{align*}
        &\qquad (\epsilon \rho)^{-2 |P_C|} \log^2 \Delta \cdot (\epsilon \rho)^{-2 |P_{D_1}|} \log^2 \Delta \\
        &\qquad \qquad + \sum_{i = 1}^{|\Nchild(C)| - 1}
        (\epsilon \rho)^{-2 |P_C| - 2\sum_{j = 1}^i |P_{D_j}|} \log^2 \Delta \cdot (\epsilon \rho)^{-2 |P_{D_{i + 1}}|} \log^2 \Delta \\
        &\qquad \qquad + (\epsilon \rho)^{-2 |P_C| - 2\sum_{j = 1}^{|\Nchild(C)|} |P_{D_j}|} \log^2 \Delta
        (\rho^{-O(\ddim)} + 2^{\rho^{-O(\ddim)}} |\Schild(C)|) \\
        & \leq (\epsilon \rho)^{-\rho^{-O(\ddim)}} \log^4 \Delta \cdot (1 + |\Schild(C)|).
    \end{align*}

    The total complexity of filling the DP table is 
    \begin{align*}
        &\qquad \sum_{\ell = 0}^L \sum_{\substack{C \in \modifydecom_\ell \\ \text{$C$ is non-ornament}}}  
        (\epsilon \rho)^{-\rho^{-O(\ddim)}} 
        \log^4 \Delta \cdot (1 + |\Schild(C)|) \\
        & = (\epsilon \rho)^{-\rho^{-O(\ddim)}} \log^4 \Delta
        \sum_{\ell = 0}^L \Big(|\{C \in \modifydecom_\ell \colon \text{$C$ is non-ornament}\}|
        + |\{y \in Y \colon h(y) = \ell\}|\Big) \\
        & = \left(\frac{\ddim}{\epsilon}\right)^{\left(\frac{\ddim}{\epsilon}\right)^{O(\ddim)}} \tilde{O}(n + m) \log^4 \Delta.
    \end{align*}

    Combining the analysis above, we conclude that the time complexity of our algorithm is $2^{2^t} \cdot \tilde{O}(n + m) \log^4 \Delta$, for 
    \begin{align*}
        t = O\left(\ddim \log\frac{\ddim}{\epsilon}\right).
    \end{align*}
    \Cref{thm:kmedian_bounded_clients} follows with $\Delta = \poly(n, m)$.
\end{proof}

\section{$(k,\ell)$-Median Clustering of Polygonal Curves under Discrete Fréchet Distance}
\label{sec:discreteFrechet}
In this section, we apply our method to obtain new results for the $(k,\ell)$-median problem of polygonal curves under the discrete Fr\'echet distance. This is a standard variant of the $k$-median problem, where the centers are constrained to have $\ell$ vertices. We first establish the general result and then specialize to the case 
$d=1$, in which we achieve an improved running time by means of a novel complexity-reduction technique that may be of independent interest.

The following statement about the doubling dimension of the discrete Fr\'echet distance is probably folklore, but we include a proof for completeness. 
\begin{proposition}
\label{prop:ddim}
The metric space defined on the equivalence classes of polygonal curves in $\bX_\comp^{d}$ that have pairwise discrete Fr\'echet distance $0$, 
equipped with the discrete Fr\'echet distance, has doubling dimension $\Theta(d\comp)$.
\end{proposition}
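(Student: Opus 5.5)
We prove the upper bound $O(d\comp)$ and the lower bound $\Omega(d\comp)$ separately.

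\textbf{Upper bound.} Fix a ball $B(\tau, r)$ in $(\bX_\comp^d, \dist_{dF})$, where $\tau = \langle q_1,\dots,q_\comp\rangle$. We will cover it by $2^{O(d\comp)}$ balls of radius $r/2$.
\begin{enumerate*}[label=(\roman*)]
\item Let $G$ be an $(r/2)$-net, in the $\ell_2$ norm, of $\bigcup_j B_2(q_j, r) \subseteq \RR^d$. Since each Euclidean ball of radius $r$ is covered by $2^{O(d)}$ Euclidean balls of radius $r/2$, we may take $|G| \le \comp\, 2^{O(d)}$.
\item For any curve $\pi = \langle p_1,\dots,p_\comp\rangle$ with $\dist_{dF}(\pi,\tau) \le r$, every vertex $p_i$ is matched by the traversal to some $q_j$. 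Hence $\|p_i - q_j\| \le r$, and we can round $p_i$ to a point $g_i \in G$ with $\|p_i - g_i\| \le r/2$.
\item The rounded curve $\pi' = \langle g_1,\dots,g_\comp\rangle$ satisfies $\dist_{dF}(\pi,\pi') \le r/2$. This follows by using the identity traversal $(i,i)$ between the two curves.
\end{enumerate*}
The number of possible rounded curves is at most $|G|^\comp = \comp^\comp\, 2^{O(d\comp)}$. Because of the $\comp^\comp$ factor, this only gives a doubling dimension of $O(d\comp + \comp\log \comp)$. This is the main obstacle.

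To remove the $\comp^\comp$ factor, we exploit the monotone structure of the traversal. A curve within distance $r$ of $\tau$ can be described in two steps. First, choose a monotone assignment of each index $i$ to an index $j(i)$ of $\tau$. A monotone surjection-type map from $[\comp]$ to $[\comp]$ is determined by its nondecreasing sequence, so there are at most $\binom{2\comp}{\comp} \le 4^\comp$ such assignments. Second, given the assignment, round $p_i$ inside $B_2(q_{j(i)}, r)$ using a fixed local $(r/2)$-net $G_{j(i)}$ of that ball, which has size $2^{O(d)}$. The total count is $4^\comp \cdot 2^{O(d\comp)} = 2^{O(d\comp)}$. The identity-traversal argument from step (iii) still shows that every curve in the ball lies within distance $r/2$ of one of these rounded curves.

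Finally, the centers of these balls must lie in the space itself. This is automatic, since the rounded curves are curves in $\bX_\comp^d$. If one insists on the strict definition where centers must come from the set being covered, we lose a factor of $2$ in the radius, which changes the dimension only by a constant factor. Working on equivalence classes is harmless, because the discrete Fr\'echet distance is a pseudometric and taking classes makes it a metric.

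\textbf{Lower bound.} We exhibit a large packing and use \Cref{lemma:packing}. Take $\comp$ points $c_1,\dots,c_\comp$ on a line in $\RR^d$ spaced far apart, at mutual distance $\gg 1$. Consider curves $\langle c_1 + v_1, \dots, c_\comp + v_\comp\rangle$, where each $v_i$ is drawn from a $1/4$-packing of the unit ball in $\RR^d$ of size $2^{\Omega(d)}$. Because the $c_i$ are far apart, any traversal achieving distance below a constant must be the identity. Consequently, two distinct such curves have distance at least $1/4$, while all of them lie within distance $1$ of $\langle c_1,\dots,c_\comp\rangle$. This gives a $1/4$-packing of $2^{\Omega(d\comp)}$ points inside a ball of radius $1$, i.e. of diameter $O(1)$. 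By \Cref{lemma:packing}, the doubling dimension is therefore $\Omega(d\comp)$.
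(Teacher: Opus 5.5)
Your upper bound is the same argument as the paper's. You enumerate the at most $4^{\comp}$ monotone matching patterns (nondecreasing maps, where the paper counts traversals via \cite[Lemma 4]{FFK23}). For each pattern you take a product of $2^{O(d)}$-size local $(r/2)$-nets around the matched vertices of the center. The identity traversal then certifies distance at most $r/2$.

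Your lower bound has the same skeleton as the paper's: widely spaced base vertices, independent perturbations of each vertex, and an argument that forces the identity traversal. The choice of perturbation set is where you differ, and it matters. The paper perturbs each vertex only by $\pm e_j$. That family has $(2d)^{\comp}$ members, not $2^{d\comp}$ as the paper claims, so it only certifies $\Omega(\comp\log d)$.

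Moreover, with spacing $3$, some pairs in the paper's family are at distance exactly $1$ via a non-identity traversal. For example, curves with first coordinates $\langle 4,5,8\rangle$ and $\langle 4,7,8\rangle$ are both within $1/2$ of $\langle 4,4.5,7.5\rangle$. So the paper's claim that no two curves share a radius-$1/2$ ball also fails as written.

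Your use of a $1/4$-packing of the unit ball of size $2^{\Omega(d)}$, with spacing $\gg 1$, fixes both problems. Such a packing exists with at least $4^d$ points, by a volume argument applied to a maximal packing. This is what actually yields the claimed $\Omega(d\comp)$.

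One small step should be tightened. \Cref{lemma:packing} as stated bounds the doubling dimension of the packing set itself, not of the whole space. You can either invoke the standard fact that a subset has doubling dimension at most twice that of the ambient space, or argue directly. For the direct argument, apply the doubling property four times to cover the radius-$1$ ball by $2^{4t}$ balls of radius $1/16$, where $t$ is the doubling dimension of the space. Each such ball has diameter less than $1/4$ and so contains at most one packing point, which forces $2^{4t}\ge 2^{\Omega(d\comp)}$.
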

\begin{proof}
We first show that the doubling dimension is at least $d\comp$. For each $i \in [d]$, let $e_i$ be the $i$-th vector of the standard basis in $\RR^d$, i.e., $e_i$ has $1$ appearing in the $i$-th position and $0$ in all other positions.  
Consider a polygonal curve $\pi = \langle x_1,\dots,x_\comp \rangle $ where each $x_i = (3\cdot i,0\ldots, 0) $ for each $i\in [\comp]$. Let $B$ be the discrete Fr\'echet ball of radius $1$ centered at $\pi$, i.e., $B = \{\tau \in \bX_{\comp}^d ~|~ \dist_{dF}(\pi,\tau) \leq 1\}$. All polygonal curves defined by sequences in $\prod_{i=1}^\comp \bigcup_{j=1}^{d} \{x_i-e_j,x_i+e_j\}$ are in $B$, but no two of them can be covered by the same ball of radius $1/2$. Hence, we need at least $2^{d\comp}$ balls of radius $1/2$ to cover $B$, implying that the doubling dimension is at least $d\comp$. 

For the upper bound, consider any polygonal curve $\pi=\langle x_1,\ldots,x_\comp \rangle $ and any radius $r>0$. Let $B_r$ be the discrete Fr\'echet ball of radius $r$ centered at $\pi$. For each $i\in [\comp]$, let $S_i$ be a minimal $r/2$-covering set of the Euclidean ball of radius $r$ centered at $x_i$, i.e., for any $p \in \RR^d$ satisfying $\|p-x_i\|_2\leq r$, there exists a $p' \in S_i$ such that $\|p-p'\|_2\leq r/2$. It is known (see e.g.~\cite{GuptaKL03}) that $|S_i| = 2^{O(d)}$. 
Let  $T^{\ast}$ be an optimal traversal between $\pi$ and an arbitrary polygonal curve $\tau= \langle q_1,\ldots, q_\comp\rangle \in B_r$.  For each $j\in [\comp]$, let $i_{j}$ be the first index (pointing to the $i_{j}$-th vertex of $\pi$) paired with the $j$-th vertex of $\tau$ in $T^{\ast}$. It must hold that there is a $y_j \in S_{i_j}$ such that $\|q_j-y_j\|_2 \leq r/2 $. 
Hence, the sequence $\langle y_1,\ldots, y_\comp \rangle$ defines a polygonal curve with discrete Fr\'echet distance within $r/2$ from $\tau$. By taking into account all traversals and for each traversal all relevant combinations of points of $S_i$, we cover the entire $B_r$ with balls of radius $r/2$. By \cite[Lemma 4]{FFK23}, 
the number of traversals is at most $4^\comp$, and for each traversal, we consider at most $2^{O(d\comp)}$ polygonal curves as centers of the balls of radius $r/2$, implying an upper bound of $O(d\comp)$ on the doubling dimension. 
\end{proof}

We begin with an auxiliary lemma on discretizing the set of polygonal curves that have complexity $\ell$ and are within distance $r$ from an arbitrary polygonal curve of complexity $\comp$. This was essentially proven and used by Filtser et al.~\cite{FFK23} in the context of approximate nearest neighbor data structures. Since there is no standalone lemma with the exact same statement in~\cite{FFK23}, we include a proof for completeness. 

\begin{lemma}
\label{lemma:candidate_single_scale}
Given a polygonal curve $\pi\in \bX_\comp^d$, threshold $r>0$ and $\eps \in (0,1)$, we can compute a set of polygonal curves $\cS_{\pi,r,\eps}$, in time $O(1/\eps)^{d\ell} + O(d\comp\log \comp)$, such that $|\cS_{\pi,r,\eps}| \in O(1/\eps)^{d\ell}$ and for all $\tau\in \bX_{\ell}^d$, if $\dist_{dF}(\pi,\tau)\leq r$ then there exists $\tau'\in \cS_{{\pi},r,\eps}$ such that $\dist_{dF}(\tau',\tau)\leq \eps r$. 
\end{lemma}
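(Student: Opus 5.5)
The key observation is that any $\tau\in\bX_\ell^d$ with $\dist_{dF}(\pi,\tau)\le r$ has every vertex $q_j$ within distance $r$ of some vertex of $\pi$. So every vertex of $\tau$ lies in $U:=\bigcup_{i=1}^{\comp} B(p_i,r)\subset\RR^d$. The plan is to snap each vertex of $\tau$ to a point of a fine grid near $U$. Snapping every vertex by at most $\eps r$ changes the discrete Fr\'echet distance by at most $\eps r$: keep the identity traversal between $\tau$ and its snapped copy $\tau'$. The difficulty is that the number of grid cells meeting $U$ grows with $\comp$, while the bound must be independent of $\comp$.

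To fix this, I use the structure of the traversal rather than just the set $U$. Let $T$ be an optimal traversal between $\pi$ and $\tau$. Each vertex $q_j$ is matched to a contiguous block $I_j$ of vertices of $\pi$, and $\|p_i-q_j\|\le r$ for all $i\in I_j$. In particular $q_j\in B(p_{i_j},r)$ with $i_j=\min I_j$. Thus $\tau$ is determined up to $\eps r$ by two pieces of data:
\begin{itemize}
\item a choice of indices $i_1\le\dots\le i_\ell$;
\item for each $j$, a grid point within the ball $B(p_{i_j},r)$.
\end{itemize}
The grid has side $\eps r/\sqrt d$, so each ball contains $O(1/\eps)^{d}$ grid points. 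The index choice alone gives $\binom{\comp+\ell}{\ell}$ options, which depends on $\comp$, so I must remove this dependence.

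The main step is therefore to compress $\pi$ to a curve of complexity $O(\ell)$, or at least to a set of $O(\ell)$ relevant vertex positions, before enumerating. Following Filtser et al.\ \cite{FFK23}, I would first compute a simplification $\tilde\pi$ of $\pi$ with at most $\ell$ vertices satisfying $\dist_{dF}(\pi,\tilde\pi)\le c\,r$. Such a simplification exists, namely $\tau$ itself, whenever the premise $\dist_{dF}(\pi,\tau)\le r$ holds for some $\tau$. It can be found approximately (within a constant factor of $r$) in $O(d\comp\log\comp)$ time by a greedy ball-covering pass along $\pi$. The pass extends a block while all of its points lie within distance $2r$ of the block's first point, and declares the premise false if more than $\ell$ blocks are needed.

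Then any $\tau$ with $\dist_{dF}(\pi,\tau)\le r$ satisfies $\dist_{dF}(\tilde\pi,\tau)\le (c+1)r$ by the triangle inequality. I then apply the traversal argument above to $\tilde\pi$ in place of $\pi$. There are at most $4^{2\ell}$ traversals by \cite[Lemma 4]{FFK23}. For each traversal, each of the $\ell$ vertices of $\tau$ is snapped to one of $O((c+1)/\eps)^d$ grid points of a ball around the matched vertex of $\tilde\pi$. This gives $|\cS_{\pi,r,\eps}|\le 4^{2\ell}\,O(1/\eps)^{d\ell}=O(1/\eps)^{d\ell}$, with $\dist_{dF}(\tau',\tau)\le\eps r$. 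The running time is the simplification cost plus the enumeration cost, as claimed.

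The main obstacle is the simplification step. I must argue that the greedy pass uses at most $\ell$ blocks whenever a valid $\tau$ exists, and that each block stays within distance $O(r)$ of a single point. For the first claim, each vertex $q_j$ of $\tau$ covers a contiguous block of $\pi$ with radius $r$, so that block has diameter at most $2r$. A greedy pass with radius $2r$ starting at the block's first point therefore never needs more blocks than $\tau$ has vertices. This gives $c=2$.
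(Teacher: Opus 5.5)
Your proposal is correct and follows the paper's proof. Both compute a curve with at most $\ell$ vertices within discrete Fr\'echet distance $2r$ of $\pi$, then, for every traversal, snap each vertex of $\tau$ to a grid point (cell width $\eps r/\sqrt{d}$) near its first-matched vertex, giving $2^{O(\ell)}\cdot O(1/\eps)^{d\ell}$ candidates. The only difference is that you replace the black box \cite[Lemma 10]{FFK23} by an explicit greedy pass; that pass is valid by a greedy-stays-ahead argument and even runs in $O(d\comp)$ time.

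One small detail: the snapped grid point can lie up to $\eps r$ outside the ball of radius $3r$ around the matched vertex, so you should enumerate grid points within radius $(3+\eps)r$, as the paper does.
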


\begin{proof}
We first compute a polygonal curve $\pi'\in \bX_{\ell}^d$ such that $\dist_{dF}(\pi,\pi')\leq 2r$ or determine that there is no curve in $\bX_{\ell}^d$ within distance $r$ from $\pi$, using \cite[Lemma 10]{FFK23}, in time $O(d\comp\log \comp)$. If there is no curve in $\bX_{\ell}^d$ within distance $r$ from $\pi$ then we return $\cS_{\pi,r,\eps}\gets \emptyset$. Otherwise, we proceed as follows.  
    Let $\tau = \langle q_1,\ldots, q_{\ell}\rangle$, $\pi'=\langle\tilde{x}_1,\ldots, \tilde{x}_{\ell}\rangle$ and $\mathcal{G}_{\eps r}$ be the regular grid in $\RR^d$ with cell width $\eps r d^{-1/2}$. We define  $\tilde{I}_j := \{p\in \RR^d\mid \|p-\tilde{x}_j\|_2 \leq (3+\eps)r\}\cap \mathcal{G}_{\eps r}$, for any $j \in [\ell]$.
    For each traversal $T$ of two polygonal curves of complexity $\ell$
    we compute a set of polygonal curves $\cS_T$ as follows: for each $j\in [\ell]$ let $i_{j}$ be the index of the first vertex matched to the vertex at index $j$ in the traversal, i.e., $i_{j}= \min\{i:~ (i,j)\in T \}$. We compute $\cS_T = \prod_{j=1}^{\ell}\tilde{I}_{i_j}$ and we output $\cS_{\pi,r,\eps} = \bigcup_{T} \cS_T$. 

    Each $\tilde{I}_j$ contains at most $O(1/\eps)^d$ points. Hence, for a fixed traversal $T$, there are $O(1/\eps)^{d\ell}$  combinations of vertices defining the polygonal curves included in $\cS_T$. By \cite[Lemma 4]{FFK23}, the number of different traversals is at most $4^{\ell}$. Hence, the running time of computing the set $\cS_{\pi,r,\varepsilon}$ is $O(1/\eps)^{d\ell} + O(d\comp\log \comp)$ and its size is upper bounded by $O(1/\eps)^{d\ell}$.
    
    To show correctness, we first observe that if $\dist_{dF}(\pi,\tau)\leq r$, then by the triangle inequality $\dist_{dF}(\pi',\tau) \leq \dist_{dF}(\pi,\tau)+\dist_{dF}(\pi,\pi')  \leq 3r$. Hence, there is an optimal traversal $T^{\ast}$ matching the vertices of $\tau$ with vertices of $\pi'$ with cost at most $3r$, which implies that there exists a polygonal curve in $\cS_{T^{\ast}}$ with vertices $q_1',\ldots,q_{\ell}'$ such that for any $j\in [\ell]$, $\|q_j'-q_j\|_2 \leq \eps r$. Therefore, $\dist_{dF}(\tau,\tau')\leq \eps r$. 

\end{proof}

We require the following definition of simplifications.
\begin{definition}[minimum-error $\ell$-simplification]
For a curve $x \in \bX^d_z$ a curve $\tilde{x} \in \bX^d_\ell$ is a minimum-error $\ell$-simplification of $x$ if for any curve $y \in \bX^d_\ell$ it holds that $\dist_{dF}(x,\tilde{x}) \leq \dist_{dF}(x,y)$.
\end{definition}
 The minimum-error $\ell$-simplification $\tilde x$ of some curve $x \in \bX^d_z$ can be computed in time $O(2^{O(d)}\ell \comp \log (\comp) \log (\comp/\ell) )$
\cite{bereg2008simplifying}.

Now we present a simple constant factor approximation algorithm, which was previously used and analyzed in \cite{DKS16} in a slightly different setting. We give a proof for the sake of completeness.

\begin{algorithm}
    \caption{$(k,\ell)$-MedianConstantApprox}
    \label{alg: kl-median-constant-approx}
    
    \KwIn{$\{x_1,\dots,x_n\} \in \bX^d_z$, $\ell \in \NN$, $k \in \NN$}

    Let $\tilde{P} \gets \{\tilde{x}_i\mid i\in[n]\}$, where $\tilde{x}_i$ is a
    minimum-error $\ell$-simplification of $x_i$

    Let $C$ be the solution computed by a constant factor approximation algorithm for $k$-median on $(\tilde{P},\dist_{dF})$

    \Return $\sum_{i = 1}^n \dist_{dF}(\tilde{x}_i,C) + \dist_{dF}(\tilde{x}_i,x_i)$
   
\end{algorithm}

\begin{lemma}\label{lem: kl-median_const_approx}
    Given $\{x_1,\dots,x_n\} \in \bX^d_z$, $\ell \in \NN$, $k \in \NN$. Algorithm \ref{alg: kl-median-constant-approx} computes in time $O(2^{O(d\ell)} d\ell ^2nz(\log(z) \log(z/\ell)+\log^9n))$ a constant factor approximation for the $(k,\ell)$-median problem.
\end{lemma}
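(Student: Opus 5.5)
We first prove that Algorithm \ref{alg: kl-median-constant-approx} achieves a constant approximation, and afterwards bound its running time. Let $C^*\subseteq \bX^d_\ell$ be an optimal $(k,\ell)$-median solution with cost $\mathrm{OPT}$. Because each $\tilde x_i$ is a minimum-error $\ell$-simplification and every center in $C^*$ has complexity $\ell$, we have $\dist_{dF}(x_i,\tilde x_i)\le \dist_{dF}(x_i,C^*)$. Summing over $i$ gives $\sum_i \dist_{dF}(x_i,\tilde x_i)\le \mathrm{OPT}$.

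Next we compare $\tilde P$ with the optimal solution. By the triangle inequality,
\[
\sum_i \dist_{dF}(\tilde x_i, C^*) \le \sum_i \bigl(\dist_{dF}(\tilde x_i,x_i)+\dist_{dF}(x_i,C^*)\bigr)\le 2\,\mathrm{OPT}.
\]
This bounds the cost of a solution whose centers are arbitrary curves in $\bX^d_\ell$. The metric $k$-median algorithm, however, restricts its centers to $\tilde P$, so we use the standard discretization argument. Replacing each center $c\in C^*$ by the point of $\tilde P$ closest to it loses at most a factor $2$. Hence the optimal discrete $k$-median cost on $(\tilde P,\dist_{dF})$ is at most $4\,\mathrm{OPT}$, and the computed $C$ has cost at most $4\alpha\,\mathrm{OPT}$, where $\alpha$ is the approximation factor of the subroutine.

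With this in hand we bound the returned value from both sides. It is an upper bound on the cost of $C$ for the original curves, since $\dist_{dF}(x_i,C)\le \dist_{dF}(x_i,\tilde x_i)+\dist_{dF}(\tilde x_i,C)$. It is also at most $(4\alpha+1)\mathrm{OPT}$. So the algorithm outputs a constant-factor estimate of $\mathrm{OPT}$, together with a solution $C$ of complexity $\ell$ achieving it.

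For the running time, computing the simplifications costs $n\cdot O(2^{O(d)}\ell z\log z\log(z/\ell))$ by \cite{bereg2008simplifying}. For the clustering step we would invoke a near-linear constant-factor $k$-median algorithm for doubling metrics, referring to \Cref{sec: constant-approx-algos}. By \Cref{prop:ddim}, the metric on $\tilde P$ has doubling dimension $O(d\ell)$, and the $\log^9 n$ term suggests a bound of the form $2^{O(d\ell)}n\log^{9}n$ distance evaluations. Each discrete Fréchet distance between two curves in $\bX^d_\ell$ costs $O(d\ell^2)$. The final sum requires $n$ more evaluations, with the distances $\dist_{dF}(\tilde x_i,x_i)$ already known from the simplification step. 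Combining these terms gives the stated bound.

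The main obstacle is pinning down which constant-factor subroutine is used and what its exact running time is in the doubling setting. The loss in the discretization argument is routine by comparison.
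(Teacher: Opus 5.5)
Your proposal is correct and follows essentially the same route as the paper's proof. It uses the same bound $\sum_i \dist_{dF}(x_i,\tilde x_i)\le \mathrm{OPT}$ from minimum-error simplification, the same triangle-inequality comparison with $C^*$, and the same running-time accounting: the $2^{O(\ddim)}n\log^9 n$ constant-factor $k$-median algorithm of \cite{Cohen-AddadFS21} recalled in \Cref{sec: constant-approx-algos}, applied with $\ddim=O(d\ell)$ by \Cref{prop:ddim} and $O(d\ell^2)$ time per distance evaluation; this also settles the subroutine question you raised at the end. The only difference is that you make explicit the factor-$2$ loss from restricting centers to $\tilde P$, which the paper silently absorbs into its approximation factor $\lambda$ (giving $2\lambda+1$ where you get $4\alpha+1$), so your version is slightly more careful on that point.
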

\begin{proof}
We start by showing that the computed value is indeed a constant approximation to the optimal cost.
Let $C^*$ be an optimal solution for the $(k,\ell)$-median problem with cost $\delta^*$ and let $c^*_i \in C^*$  be the closest center to $x_i$.
Additionally let $\lambda \in \RR$ denote the approximation factor of $C$.
By triangle inequality we get the following lower bound.
\begin{align*}
    \sum_{i = 1}^n \dist_{dF}(\tilde{x}_i,C) + \dist_{dF}(\tilde{x}_i,x_i) &\geq \sum_{i = 1}^n \dist_{dF}(x_i,C)\\
    &\geq \delta^*
\end{align*}
Next we can upper bound the returned value as follows, by utilizing the definition of a minimum-error $\ell$-simplification
\[\sum_{i = 1}^n \dist_{dF}(\tilde{x_i},C) + \dist_{dF}(\tilde{x_i},x_i)  \leq  \sum_{i = 1}^n \dist_{dF}(\tilde{x_i},C) + \delta^*.\]

Then, again by using triangle inequality and the definition of minimum-error $\ell$-simplification we get
\begin{align*}
\delta^* &= \sum_{i=1}^n \dist_{dF}(x_i,c^*_i)\\
&\geq \sum_{i=1}^n \dist_{dF}(\tilde{x}_i,c^*_i) - \dist_{dF}(\tilde{x}_i,x_i)\\
&\geq \sum_{i=1}^n \dist_{dF}(\tilde{x}_i,c^*_i) - \delta^*\\
&\geq \frac{1}{\lambda} \sum_{i=1}^n \dist_{dF}(\tilde{x}_i,C) - \delta^*,
\end{align*}
which yields $\sum_{i = 1}^n \dist_{dF}(\tilde{x_i},C) + \dist_{dF}(\tilde{x_i},x_i)  \leq (2\lambda+1)\delta^*$.

Next we analyze the running time of Algorithm \ref{alg: kl-median-constant-approx}. Calculating the set $\tilde{P}$ takes $O(n2^{O(d)}\ell \comp \log (\comp) \log (\comp/\ell) )$ time \cite{bereg2008simplifying}. Computing $C$ takes $O(d\ell^22^{O(d\ell)}n \log 9n)$ time by using  the constant approximation algorithm in \cite{Cohen-AddadFS21}, Proposition \ref{prop:ddim} and the fact that computing the discrete Fréchet distance between to curves in $\bX^d_\ell$ takes $O(d\ell^2)$ time using a straightforward dynamic programming algorithm \cite{agarwal2014computing}.
The total running time is therefore $O(2^{O(d\ell)} d\ell ^2nz(\log(z) \log(z/\ell)+\log^9n))$
\end{proof}

We now proceed by computing a set of candidate centers, i.e., a set of polygonal curves with complexity $\ell$, from which the $k$ centers can be chosen, while only sacrificing an arbitrarily small approximation factor from the cost of the optimal solution. 

\begin{lemma}
\label{lemma:mediancandidates}
    Given a set of $n$ polygonal curves $\Pi \subset \bX_\comp^d$ and $k,\ell \in \NN$, and $\eps \in (0,1)$, we can compute in time $O(2^{O(d\ell)} d\ell ^2nz(\log(z) \log(z/\ell)+\log^9n))  + O(n\log n)\cdot O(1/\eps)^{d\ell}$ a set $\cS \subset \bX_\ell^d$ of size $O( n \log n) \cdot O(1/\eps)^{d\ell}$ such that there exists a set $\cC \subset \cS$, $|\cC|=k$, that satisfies 
    \[
    \sum_{\pi\in \Pi} \min_{\tau\in \cC}\dist_{dF}(\pi,\tau) \leq (1+3\eps)\optkl,
    \]
    where $\optkl$ is the optimal $(k,\ell)$-median cost.  
\end{lemma}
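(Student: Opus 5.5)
The plan is to build candidate sets at logarithmically many distance scales around each input curve, using \Cref{lemma:candidate_single_scale}, with the range of scales fixed by the constant approximation of \Cref{lem: kl-median_const_approx}.

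\textbf{Step 1: fix the scales.} Run \Cref{alg: kl-median-constant-approx} to get a value $A$ with $\optkl \le A \le c\,\optkl$ for a constant $c$. This already accounts for the first term of the running time. Set $r_{\min} = \eps A/(cn)$, and consider the radii $r_j = 2^j r_{\min}$ for $j = 0,1,\dots,J$, where $J = \lceil \log_2(cn/\eps)\rceil + O(1)$. Then $r_J \ge A \ge \optkl$. We need $J = O(\log n)$; for fixed constant $\eps$ this holds, and otherwise the extra $\log(1/\eps)$ factor is absorbed into the $O(1/\eps)^{d\ell}$ term.

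\textbf{Step 2: build the candidates.} For every $\pi\in\Pi$ and every $j$, apply \Cref{lemma:candidate_single_scale} with radius $r_j$ and precision $\eps$, and let $\cS$ be the union of all the resulting sets $\cS_{\pi,r_j,\eps}$. There are $O(n\log n)$ calls. Each call produces $O(1/\eps)^{d\ell}$ curves, which gives the claimed size of $\cS$. Each call costs $O(1/\eps)^{d\ell}+O(dz\log z)$ time. The $O(dz\log z)$ part (the simplification computation from \cite{FFK23}) should be done once per curve and reused across scales, or charged to the first term of the running time.

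\textbf{Step 3: the snapping argument.} Let $C^*=\{c_1,\dots,c_k\}$ be an optimal solution, and for each $i$ let $\Pi_i$ be the cluster of $c_i$. If $\Pi_i$ is empty, drop $c_i$. Otherwise pick $\pi_i\in\Pi_i$ minimizing $\dist_{dF}(\pi_i,c_i)$, and set $\delta_i=\dist_{dF}(\pi_i,c_i)\le \optkl\le r_J$. Choose the smallest $j$ with $r_j\ge \delta_i$. Then $r_j\le \max(2\delta_i, r_{\min})$. By \Cref{lemma:candidate_single_scale} there is $\tau_i\in\cS$ with $\dist_{dF}(\tau_i,c_i)\le \eps r_j\le 2\eps\delta_i+\eps r_{\min}$. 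For every $\pi\in\Pi_i$, the triangle inequality gives $\dist_{dF}(\pi,\tau_i)\le \dist_{dF}(\pi,c_i)+2\eps\delta_i+\eps r_{\min}$, and $\delta_i\le \dist_{dF}(\pi,c_i)$ by the choice of $\pi_i$. Summing over all curves, the total cost is at most
\[
(1+2\eps)\optkl + n\eps r_{\min} \le (1+2\eps)\optkl + \eps^2\optkl \le (1+3\eps)\optkl .
\]
If fewer than $k$ curves $\tau_i$ were chosen, pad $\cC$ with arbitrary curves from $\cS$ to reach size $k$; this can only lower the cost.

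The main obstacle is the bookkeeping in Step 1. The scale range has to be pinned by the constant approximation so that the number of scales is $O(\log n)$. The additive error from the smallest scale also has to be small, which is why $r_{\min}$ is chosen proportional to $\eps A/n$; that makes this error $O(\eps^2)\optkl$.
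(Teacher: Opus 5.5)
Your proof is correct and is essentially the paper's argument. Both anchor geometric scales with the constant-factor value, apply \Cref{lemma:candidate_single_scale} to every input curve at every scale, and snap each optimal center to a candidate generated from a curve closest to it, with the smallest scale contributing an additive error and the larger scales a factor $(1+2\eps)$. The only blemish is the extra factor $\eps$ in $r_{\min}$: it is unnecessary, and it adds $\log(1/\eps)$ scales that are not actually absorbed into $O(1/\eps)^{d\ell}$ for fixed $d\ell$. The paper instead takes $r_{\min}=A/(cn)\le \optkl/n$, so $n\eps r_{\min}\le \eps\optkl$ still gives exactly $(1+3\eps)\optkl$ using only $\lceil\log_2(cn)\rceil+1$ scales.
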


\begin{proof}
    We first compute $\delta$ such that $\optkl \leq \delta \leq \alpha \cdot \optkl$ using Algorithm \ref{alg: kl-median-constant-approx}, where $\alpha$ is some constant. For each $ i \in \{0,\dots , \lceil  \log( \alpha  n)\rceil\}$, let $r_i = \frac{2^i\cdot \delta}{\alpha n}$. For each $\pi\in \Pi$, we compute $\cS_{\pi,r_i,\eps}$ using \Cref{lemma:candidate_single_scale}. We output $\cS=\bigcup_{\pi\in \Pi}\bigcup_{i=1}^{\lceil  \log( \alpha n)\rceil} \cS_{\pi,r_i,\eps}$.
The time needed to compute $\delta$ is in $O(2^{O(d\ell)} d\ell ^2nz(\log(z) \log(z/\ell)+\log^9n))$ \ref{lem: kl-median_const_approx}. 
    By \Cref{lemma:candidate_single_scale}, the running time to create $\cS$  is in $O(n\log n) \cdot O\left((1/\eps)^{d\ell} +d\comp\log \comp \right)$ since we consider $O(\log n)$ different parameters $r_i$, and its size is in $O(n\log n) \cdot O\left((1/\eps)^{d\ell} \right)$. 

    To show correctness, consider any polygonal curve $\tau\in \cC^{\ast}$, where $\cC^{\ast}$ is an optimal solution for $(k,\ell)$-median. For any $\pi\in \cP$ that has $\tau$ as its closest center from $\cC^*$ it holds that $\dist_{dF}(\pi,\tau)\leq \optkl$. 
    Now let $\pi_{\tau}$ be the polygonal curve in $\Pi$ which has the  smallest discrete Fr\'echet distance to $\tau$  (ties are broken arbitrarily)  and let $i^{\ast}$ be the smallest value $i$ such that 
    $\dist_{dF}(\pi_{\tau},\tau)\leq r_i$. By \Cref{lemma:candidate_single_scale}, $\cS_{\pi_{\tau},r_{i^\ast},\eps}$ contains a polygonal curve $\tau'\in \bX_{\ell}^d$ such that $\dist_{dF}(\tau,\tau')\leq \eps r_{i^{\ast}}$. 
    If $i^{\ast}=0$, then $\dist_{dF}(
    \tau,\tau') \leq \frac{\eps \optkl}{n}$ and by the triangle inequality, for any $\pi\in \Pi$, $\dist_{dF}(\pi,\tau')\leq \dist_{dF}(\pi,\tau)+ \frac{\eps \optkl}{n}$. 
    If $i^{\ast}>0$, then by the triangle inequality, for any $\pi\in \Pi$ that has $\tau$ as its closest center from $\cC^*$,
    \begin{align*}
        \dist_{dF}(\pi,\tau')&\leq \dist_{dF}(\pi,\tau)+\dist_{dF}(\tau,\tau')\\
        &\leq  \dist_{dF}(\pi,\tau)+ \eps \cdot r_{i^{\ast}}\\
        &\leq  \dist_{dF}(\pi,\tau)+ 2\eps \cdot \dist_{dF}(\pi_{\tau},\tau)\\
        & \leq (1+2\eps)\cdot  \dist_{dF}(\pi,\tau).
    \end{align*}
 Hence, for any $\tau\in \cC^{\ast}$ there is a $\tau' \in \cS$ such that for any $\pi\in \Pi$ that has $\tau$ as its closest center, $\dist_{dF}(\pi,\tau') \leq \max ( (1+2\eps)\cdot  \dist_{dF}(\pi,\tau) ,  \dist_{dF}(\pi,\tau)+ \frac{\eps \optkl}{n})$. 
 Now, let $\cC$ be a set containing one such $\tau'$ for each $\tau \in \cC^{\ast}$. The cost of this solution is
 \begin{align*}
     \sum_{\pi\in \Pi} \min_{\tau'\in \cC}\dist_{dF}(\pi,\tau') 
     &\leq 
     \sum_{\pi\in \Pi} \min_{\tau\in \cC^{\ast}} \max \left (  (1+2\eps)\cdot  \dist_{dF}(\pi,\tau) ,  \dist_{dF}(\pi,\tau)+ \frac{\eps \optkl}{n} \right )
     \\
     &\leq  \sum_{\pi\in \Pi} \min_{\tau\in \cC^{\ast}}   \left((1+2\eps)\cdot  \dist_{dF}(\pi,\tau) + \frac{\eps \optkl}{n}\right)
     \\
     &\leq  (1+2\eps)\cdot  \sum_{\pi\in \Pi} \min_{\tau\in \cC^{\ast}}    \dist_{dF}(\pi,\tau) + n\cdot \frac{\eps \optkl}{n}
     \\
     &\leq (1+3\eps)\optkl. 
 \end{align*}
\end{proof}
We can now prove our main result for the $(k,\ell)$-median problem. Note that the discrete Fréchet distance between two curves with ambient dimension $d$ of complexities $z$ and $\ell$ can be computed in $O(dz\ell)$ time \cite{agarwal2014computing}.

\thmklmedian *

\begin{proof}
    The result follows directly by using the set $\cS$ computed as in \Cref{lemma:mediancandidates}, as a set of center candidates in the algorithm of \Cref{thm:kmedian_bounded_centers}.
\end{proof}

In the next subsections, we focus on the case $d=1$ and show an alternative approach for solving the problem. For the case $d=1$ we develop a novel complexity-reduction technique that may be of independent interest. We give an additional  application of this technique in the context of coresets for $(k,\ell)$-median. 

\subsection{Complexity Reduction for Time Series}

In this section, we will establish a complexity-reduction technique for time series.  Time series of complexity $z\in \mathbb{N}$ are polygonal curves of $\bX_z^{1}$ and can be simply seen as $z$-dimensional real vectors, i.e., members of $\RR^z$. 
In a first step, we will show that one can quantize the entries of a time series from $\mathbb R^\comp$
to $O(\ell/\varepsilon)$ distinct values, while guaranteeing that the distance to any time series of complexity $\ell$ is preserved up to a factor of $(1\pm \varepsilon)$.

\begin{algorithm}
\caption{ReduceValueDomain}
\label{alg: ReduceValueDomain}

\KwIn{$x \in \RR^\comp$, $\ell \in \NN$, $\varepsilon \in \RR_{> 0}$}

 Let $\tilde{x}$ be a minimum-error $\ell$-simplification of $x$

 $\delta \gets \dist_{dF}(x,\tilde{x})$

 Obtain $x'$
from $x$ by rounding up each entry of $x$ to its next multiple of $\varepsilon\delta$ 

\Return $x'$

\end{algorithm}

\begin{lemma}\label{lem: reduce value domain}
Let $x \in \RR^\comp$, $\ell \in \NN$ and $1\ge\varepsilon>0$. Algorithm \ref{alg: ReduceValueDomain} computes in time $O(\ell \comp  \log(\comp)  \log(\comp/\ell))$ a time series $x' \in X^\comp$ with $X \subset \RR$ and $|X| \in O(\ell /\varepsilon)$ s.t. for every $y \in \RR^\ell$, \[ (1-\varepsilon) \dist_{dF}(x,y) \le \dist_{dF}(x',y) \le (1+ \varepsilon)  \dist_{dF}(x,y).\]
\end{lemma}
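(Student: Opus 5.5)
The plan has three parts: a one-coordinate perturbation bound, a lower bound on $\dist_{dF}(x,y)$ in terms of $\delta$, and a count of the distinct values of $x'$.

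\textbf{Perturbation bound.} Rounding each entry up to the next multiple of $\varepsilon\delta$ moves it by at most $\varepsilon\delta$, so $|x_i - x'_i| \le \varepsilon\delta$ for every $i$. Fix $y \in \RR^\ell$ and any traversal $T$ between $x$ and $y$. Each matched pair $(i,j)$ satisfies $\bigl||x'_i-y_j| - |x_i-y_j|\bigr| \le \varepsilon\delta$. Taking the maximum over $T$ and then the minimum over all traversals gives
\[
|\dist_{dF}(x',y) - \dist_{dF}(x,y)| \le \varepsilon\delta .
\]
This works because $x$ and $x'$ have the same length and so share the same set of traversals to $y$.

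\textbf{Lower bound via $\delta$.} Since $\tilde x$ is a minimum-error $\ell$-simplification of $x$ and $y \in \RR^\ell$, we have $\delta = \dist_{dF}(x,\tilde x) \le \dist_{dF}(x,y)$. Substituting into the perturbation bound yields $(1-\varepsilon)\dist_{dF}(x,y) \le \dist_{dF}(x',y) \le (1+\varepsilon)\dist_{dF}(x,y)$. If $\delta=0$, then no rounding occurs, $x'=x$, and the claim is trivial. In that case $x$ itself has at most $\ell$ distinct values, since it is at distance $0$ from a curve with $\ell$ vertices.

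\textbf{Counting distinct values.} This is the step that needs the most care. Every entry $x_i$ is matched in an optimal traversal to some vertex of $\tilde x$, so $x_i$ lies in one of the $\ell$ intervals $[\tilde x_j - \delta, \tilde x_j + \delta]$. Each interval has length $2\delta$ and therefore contains at most $2/\varepsilon + 2$ multiples of $\varepsilon\delta$. After rounding up, $x'_i$ lies in $[\tilde x_j-\delta, \tilde x_j+\delta+\varepsilon\delta]$, which still contains $O(1/\varepsilon)$ grid multiples. Summing over the $\ell$ intervals gives $|X| = O(\ell/\varepsilon)$.

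\textbf{Running time.} Computing $\tilde x$ takes $O(\ell\comp\log\comp\log(\comp/\ell))$ by \cite{bereg2008simplifying}, with $d=1$. Computing $\delta$ is then a discrete Fréchet computation between curves of complexity $\comp$ and $\ell$, which costs $O(\comp\ell)$. The rounding pass takes $O(\comp)$. The simplification step dominates, giving the stated bound.
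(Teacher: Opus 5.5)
Your proposal is correct and follows essentially the same route as the paper. The paper bounds $|\dist_{dF}(x,y)-\dist_{dF}(x',y)|$ by $\|x-x'\|_\infty \le \varepsilon\delta \le \varepsilon\dist_{dF}(x,y)$ via the triangle inequality for $\dist_{dF}$ (equivalent to your traversal-by-traversal perturbation), and counts distinct values using the intervals $[\tilde x_j-\delta,\tilde x_j+(1+\varepsilon)\delta]$, with your explicit handling of the degenerate case $\delta=0$ being a small addition the paper omits.
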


\begin{proof}
   The minimum-error $\ell$-simplification $\tilde x$ of $x$ can be computed in $O(\ell \comp \log (\comp) \log (\comp/\ell))$ time using an algorithm from 
\cite{bereg2008simplifying}
and the distance $\dist_{dF}(x,\tilde x)$ can be computed in $O(\ell \comp)$ time \cite{agarwal2014computing}.
 The remaining steps are linear in $\comp$ and hence the running time follows.
Now let $\tilde{x}$ be the computed minimum-error $\ell$-simplification of $x$ and let $\delta = \dist_{dF}(x,\tilde{x})$. 
    For $1 \leq j \leq \ell$ we define $S_j = [ \tilde{x}_j - \delta, \tilde{x}_j + (1+\varepsilon)\delta ]$. Then it holds for all $1 \leq i \leq \comp$ that there is $1 \leq j \leq \ell$ s.t. $x_i \in S_j$. It follows that the number of distinct values of $x'$
    is $O(\ell/\varepsilon)$. 
By the triangle inequality we have 
\[|\dist_{dF}(x,y) - \dist_{dF}(x',y)|\leq \dist_{dF}(x,x')\leq \Vert x-x'\Vert_\infty \leq \varepsilon \delta \leq \varepsilon\dist_{dF}(x,y)\] for every $y \in \RR^\ell$.
\end{proof}

In our complexity reduction we are interested in maintaining 
the discrete Fr\'echet distance of a time series $x\in \RR^\comp$
to every time series of length at most $\ell$. Thus, we can use the previous lemma to reduce the number of distinct values of every fixed time series to $O(\ell/\varepsilon)$. We will 
therefore focus in the remainder of this section on such time series and exploit this property for our dimension reduction.

For $x\in \RR^\comp$, $x= (x_1,\dots, x_\comp)$, define
$\vectorset(x) = \{x_i : 1\le i \le \comp\}$.

\begin{definition}{(traversal sectors)}
Let $x \in \RR^\comp$, $y \in \RR^\ell$ and $T$ be a traversal between $x$ and $y$.
For $j\in [\ell]$ 
we define $S^{(x,T)}_j := \{x_i ~|~ i \in [\comp] \text{ and } (i,j) \in T\}.$
We call the sequence $(S^{(x,T)}_1,\dots, S^{(x,T)}_\ell)$ the traversal sectors of $x$ and $T$.
\end{definition}

Furthermore,  $(S_1,\dots, S_\ell)$ are called  traversal sectors of $x$, if there exists a traversal $T$ with traversal sectors $(S_1,\dots, S_\ell)$. 
We observe that for a given 
traversal $T$ we get
\[\max_{(i,j)\in T} |x_i - y_j| = \max_{1 \leq i \leq \ell} \max_{a \in  S^{(x,T)}_i} |a - y_i| = \max_{1 \leq i \leq \ell} \max\{|\min(S^{(x,T)}_i) - y_i|, | \max (S^{(x,T)}_i) - y_i|\}.\]

Thus, to determine the discrete Fr\'echet distance between time series $x$ and $y$ it suffices to consider the minimum and maximum value in each traversal sector. This will be used in the following definition.

\begin{definition}[$\ell$-profile]
Let $\ell \in \NN$ and 
 $x\in \RR^\comp$.
For an arbitrary $y\in \RR^\ell$ and
a traversal $T$ between $x$ and $y$ we call the sequence 
\[ 
\big(\min (S^{(x,T)}_1),\max (S^{(x,T)}_1)\big), \dots, \big(\min (S^{(x,T)}_\ell), \max (S^{(x,T)}_\ell)\big)
\] the $\ell$-profile of $(x,T)$. 
\end{definition}

See Figure \ref{fig: example profile}
for an example of traversal sectors and $\ell$-profile.

\begin{figure}[h]
    \centering
    \includegraphics[width=0.48\linewidth]{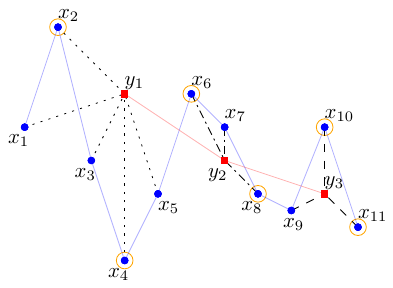}
    \caption{Depicted are two time series $x$ and $y$ and the matched vertices through some traversal $T$. The traversal sectors of $(x,T)$ are $S^{(x,T)}_1 = \{x_1,x_2,x_3,x_4,x_5\}, S^{(x,T)}_2 =\{x_6,x_7,x_8\}$ and $S^{(x,T)}_3 = \{x_9,x_{10},x_{11}\}$. The orange highlighted vertices are the extrema values located in each traversal sector, which define the profile, i.e. the sequence $((x_2,x_4),(x_6,x_8),(x_{10},x_{11}))$ is the $3$-profile of $(x,T)$. }
    \label{fig: example profile}
\end{figure}

Let $D_{(x,\ell)}$ denote the set of all $\ell$-profiles of a time series $x \in \RR^\comp$
over all traversals in $\mathcal{T}_{\comp,\ell}$. 

Since the information stored in an $\ell$-profile of some time series $x$ and a traversal $T$ preserves $\max_{(i,j)\in T} |x_i - y_j|$ for any $y \in \RR^\ell$ we can argue that two time series $x$ and $x'$ with $\vectorset(x) = \vectorset(x')$
that have the same set of $\ell$-profiles are interchangeable  w.r.t. the discrete Fréchet distance. This is stated in the following lemma.

\begin{restatable}{lemma}{lemmatchingprofiles}\label{lem: matching profiles}
Let $\comp,\comp',\ell \in \NN$.
Let $x\in \RR^{\comp}$ and $x'\in \RR^{\comp'}$ be two time series with the same set of $\ell$-profiles, i.e. 
$D_{(x,\ell)} = D_{(x',\ell)}$.
 Then for every $y \in \RR^\ell$ we have $\dist_{dF}(x,y) = \dist_{dF}(x',y).$
\end{restatable}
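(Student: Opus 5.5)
The plan is to write both distances as a minimum over traversals and then check that the two minima range over the same set of values. By the definition of the discrete Fréchet distance and the observation after the definition of traversal sectors, for every $y \in \RR^\ell$ we have
\[
\dist_{dF}(x,y) = \min_{T \in \cT_{\comp,\ell}} \max_{1 \le j \le \ell} \max\bigl\{|\min(S^{(x,T)}_j) - y_j|,\ |\max(S^{(x,T)}_j) - y_j|\bigr\}.
\]
The inner quantity depends on $T$ only through the $\ell$-profile of $(x,T)$. So the first step is to define, for any sequence of pairs $\pi = ((m_1,M_1),\dots,(m_\ell,M_\ell))$, the cost
\[
c_y(\pi) := \max_j \max\{|m_j - y_j|, |M_j - y_j|\}.
\]
This rewrites the distance as
\[
\dist_{dF}(x,y) = \min_{\pi \in D_{(x,\ell)}} c_y(\pi).
\]
The minimum over traversals becomes a minimum over the image set of the map $T \mapsto$ profile, and replacing a minimum over a domain by a minimum over its image is trivially valid. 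The set is nonempty since $\cT_{\comp,\ell} \neq \emptyset$.

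The second step is to apply the same rewriting to $x'$ with $\cT_{\comp',\ell}$, which gives
\[
\dist_{dF}(x',y) = \min_{\pi \in D_{(x',\ell)}} c_y(\pi).
\]
Since $D_{(x,\ell)} = D_{(x',\ell)}$ and $c_y$ is a fixed function independent of the time series, the two minima coincide and the lemma follows.

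The only point needing care is the identity $\max_{(i,j)\in T}|x_i - y_j| = \max_j \max_{a \in S_j}|a - y_j|$ and its reduction to the extremes of $S_j$. This is the observation already stated in the paper: every pair $(i,j) \in T$ contributes $x_i \in S^{(x,T)}_j$, and conversely every element of $S^{(x,T)}_j$ comes from such a pair. Each sector is nonempty, because a traversal visits every column index $j$, so $\min$ and $\max$ are well defined. For fixed $y_j$, the function $a \mapsto |a-y_j|$ is convex, so its maximum over a finite set of reals is attained at the minimum or maximum of the set. I do not expect a real obstacle here; the argument is essentially definitional.
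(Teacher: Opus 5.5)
Your proof is correct and is essentially the paper's argument: both rely on the traversal cost depending only on the $\ell$-profile, and both conclude from $D_{(x,\ell)} = D_{(x',\ell)}$. The paper transfers an optimal traversal $T$ for $x$ to a traversal $T'$ for $x'$ with the same profile and then invokes a symmetric argument, whereas your ``minimum over the image set'' formulation gives both inequalities at once.
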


\begin{proof}       
    Since $D_{(x,\ell)} = D_{(x',\ell)}$, for any traversal $T \in \mathcal{T}_{(\comp,\ell)}$ there is a traversal $T'= \in \mathcal{T}_{(\comp',\ell)}$  s.t. for all $i \in [\ell]$,  $\max(S^{(x,T)}_i) = \max(S^{(x',T')}_i)$ and $\min(S_i^{(x,T)}) = \min(S^{(x',T')}_i)$ and vice versa.

    Consider some arbitrary $y \in \RR^\ell$ and let $T\in \mathcal{T}_{\comp,\ell)}$ be s.t. $\max_{(i,j)\in T} |x_i - y_j| = \dist_{dF}(x,y)$.
Let $T'\in \mathcal{T}_{(\comp',\ell)}$ s.t. $(x',T')$ has the same $\ell$-profile as $(x,T)$. 

    Then we get
    \begin{align*}
    \dist_{dF}(x,y) &= \min_{(i,j) \in T} |x_i - y_j| \\
    &= \max_{i \in [\ell]} \max \{|\min(S^{(x,T)}_i) - y_i|, | \max(S^{(x,T)}_i) - y_i|\}\\
    &= \max_{i \in [\ell]} \max \{|\min(S^{(x',T')}_i) - y_i|, | \max(S^{(x',T')}_i) - y_i|\}\\
    &= \max_{(i,j) \in T'} |x'_i - y_j| \\
    &\ge \dist_{dF}(x',y).
    \end{align*}

 The other direction holds by a symmetric argument, which concludes the proof.
\end{proof}

In the following we show how to construct for any time series $x \in \RR^\comp$ and $\ell \in \NN$  a time series $x' \in \RR^{\comp'}$ s.t. $D_{(x,\ell)} = D_{(x',\ell)}$ and $\comp' \in O(2^{|\vectorset(x)|^{2\ell+2} })$. To do so we  introduce the concept of \emph{prefix profiles}. For some time series $x \in \RR^\comp$ and $t \in [\comp]$ we write $x(t):= (x_1,\dots,x_t)^T$ to denote the prefix of $x$ of complexity $t$. 

\begin{definition}[prefix profile]
Let $\ell \in \NN$,
 $x\in \RR^\comp$ and $t \in [\comp]$. For arbitrary $y\in \RR^\ell$ and some
traversal $T$ between $x$ and $y$ let  $r \in [\ell]$ be the largest value s.t. $S_r^{(x(t),T)} \neq \emptyset$. Then we call the sequence 
\[ 
\big(\min (S^{(x(t),T)}_1),\max (S^{(x(t),T)}_1)\big), \dots, \big(\min (S^{(x(t),T)}_r), \max (S^{(x(t),T)}_r)\big)
\] the \emph{prefix profile} of $(x(t),T)$ with length $r$. 
\end{definition}
Observe that for a time series $x \in \RR^\comp$ there are $O(|\vectorset(x)|^{2r})$ many possible prefix profiles with length $r$. This implies that the number of distinct prefix profiles over all lengths $r\in [\ell ]$ is 
\[ \sum^\ell_{r = 1} O(|\vectorset(x)|^{2r}) \in O(|\vectorset(x)|^{2\ell +2}).\] 
Now let $D_{x(t)}$ be the set of all prefix profiles for $(x(t),T)$ over all traversals $ T \in \mathcal{T}_{\comp,\ell}$ and let $D_x = \bigcup_{t=1}^\comp D_{x(t)} $. Then there is a unique function $\delta_x:2^{D_x} \times \vectorset(x) \rightarrow 2^{D_x}$ that defines how a set of prefix profiles can evolve to a potentially different set of prefix profiles by including a further vertex of $x$. To see this, consider some arbitrary prefix profile of length $r$ for $x(t)$ and a traversal $T$. Then either $(t+1,r)\in T$ and $x_{t+1} \in S_r^{(x(t+1),T)}$ or $(t+1,r+1) \in T$ and $S^{(x(t+1),T)}_{r+1} = \{x_{t+1}\}$. 
Both possibilities correspond to different traversals in $\mathcal{T}_{m,\ell}$ and therefore potentially different prefix profiles in $D_x$.

We are now ready to show the following lemma.

\begin{lemma}\label{lem: existence matching profiles} For every $x \in \RR^\comp$ and $\ell \in \NN$  there exists a time series $x' \in \vectorset(x)^{\comp'}$ with $\comp' \in O(2^{|\vectorset(x)|^{(2\ell + 2)}})$ s.t. 
$D_{(x,\ell)} = D_{(x',\ell)}$.
\end{lemma}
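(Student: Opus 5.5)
We treat the evolution of prefix-profile sets as a deterministic finite automaton and shorten $x$ by a pumping (cycle-removal) argument. Let $V=\vectorset(x)$ and let $\mathcal{D}$ be the set of all possible prefix profiles of length at most $\ell$ with entries in $V$. By the counting remark above, $|\mathcal{D}| \in O(|V|^{2\ell+2})$. For a prefix $w=(w_1,\dots,w_t)\in V^t$ let $\Sigma(w)\subseteq \mathcal{D}$ be the set of prefix profiles of $(w,T)$ over all traversals $T$. The state space is $2^{\mathcal{D}}$, of size $2^{O(|V|^{2\ell+2})}$.

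The first step is to check that the transition is well defined: $\Sigma(w\cdot v)$ depends only on $\Sigma(w)$ and the appended value $v\in V$. This is the function $\delta$ described before the lemma, which we now make independent of the particular series. A profile $((m_1,M_1),\dots,(m_r,M_r))$ extends in two ways. Either the new vertex joins sector $r$, giving $(m_r,M_r)$ replaced by $(\min(m_r,v),\max(M_r,v))$. Or, if $r<\ell$, it opens sector $r+1$ with the pair $(v,v)$. Every traversal prefix on $w\cdot v$ restricts to a traversal prefix on $w$, because the steps are monotone with increments in $\{0,1\}$. Conversely, each such extension of a traversal prefix on $w$ is valid. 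So $\Sigma(w\cdot v)=\delta(\Sigma(w),v)$, with $\Sigma((v))=\{((v,v))\}$.

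Next we relate the full $\ell$-profiles to the final state: $D_{(x,\ell)}$ is exactly the set of prefix profiles of $x$ of length $\ell$. A traversal must end at $(\comp,\ell)$, so all $\ell$ sectors are nonempty at the end. Conversely, any traversal prefix covering the whole of $x$ and ending in sector $\ell$ is a full traversal. Hence $D_{(x,\ell)}$ is a function of $\Sigma(x)$ alone. Consequently any $x'\in V^{\comp'}$ with $\Sigma(x')=\Sigma(x)$ satisfies $D_{(x',\ell)}=D_{(x,\ell)}$.

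Finally we pump. Consider the run $\Sigma(x(1)),\dots,\Sigma(x(\comp))$. Suppose $\comp$ exceeds the number of states. Then there are indices $s<t$ with $\Sigma(x(s))=\Sigma(x(t))$. Deleting $x_{s+1},\dots,x_t$ gives a shorter series whose run reaches the same final state, by determinism of $\delta$, and whose values remain in $V$. Repeating this yields $x'$ of length at most $|2^{\mathcal{D}}| = 2^{O(|V|^{2\ell+2})}$, which is the bound in the statement (absorbing constants in the exponent as the statement does).

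The main obstacle is the well-definedness step. We must verify carefully that the set of prefix profiles, rather than the actual traversals, carries enough information to determine future profile sets. We must also verify that sectors correspond to contiguous runs, so that only the last sector can grow. Both follow from the monotone staircase structure of traversals. We would also treat the empty prefix and the first vertex, which always lies in sector $1$, as the initial state.
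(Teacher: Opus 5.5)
This is correct and is essentially the paper's argument. The paper fixes a shortest $x'$ over $\vectorset(x)$ with $D_{(x',\ell)}=D_{(x,\ell)}$ and derives a contradiction from a repeated prefix-profile set $D_{x'(t_1)}=D_{x'(t_2)}$ via the transition map $\delta$, whereas you delete the segment between repeated states directly and also make explicit that $\delta$ does not depend on the particular series and that $D_{(x,\ell)}$ is read off from the final state.

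One small fix: the paper's traversals also allow $(0,1)$ steps, so a single vertex can lie in several consecutive sectors. Your one-step extensions must therefore allow appending further copies of $(v,v)$ after joining or opening a sector, and your initial state should contain $((v,v),\dots,(v,v))$ of every length up to $\ell$. With this change the set-level transition is still deterministic, so the pumping argument is unaffected; the paper's own description of $\delta$ makes the same simplification.
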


\begin{proof}
     Let $x' \in \vectorset(x)^{\comp'}$ be the time series with smallest complexity $\comp'$ s.t. $D_{(x,\ell)} = D_{(x',\ell)}$ and observe that the number of subsets of the prefix profiles of $x'$ is bounded by $|2^{D_{x'}}|\in O(2^{|\vectorset(x)|^{(2\ell +2)}}$). Assume there exist $1\leq t_1 < t_2 \leq \comp'$ with $D_{x'(t_1)} =D_{x'(t_2)}$, then it holds that $\delta_{x'}(D_{x'(t_1)},x'_{t_2+1}) = \delta_{x'}(D_{x'(t_2)},x'_{t_2+1})$. Recursively applying $\delta_{x'}$ with the remaining vertices of $x'$ implies that the time series $x'' = (x'_1,\dots x'_{t_1},x'_{t_2+1},\dots x'_{\comp'})^T$ has the same set of $\ell$-profiles
     as $x'$ and therefore $x$, i.e. $D_{(x'',\ell)} = D_{(x',\ell)} = D_{(x,\ell)}$. This is a contradiction to the assumption that $x'$ is the time series of smallest complexity with $D_{(x',\ell)} = D_{(x,\ell)}$. Therefore for all $1\leq t_1 < t_2\leq \comp'$ it holds that $D_{x'(t_1)} \neq D_{x'(t_2)}$ and since the number of distinct prefix profiles of $x'$ is $O(2^{|\vectorset(x)|^{2\ell + \ell}})$ we get that $\comp' \in O(2^{|\vectorset(x)|^{2\ell + \ell}})$.
    
\end{proof}

The following corollary results from combining Lemma \ref{lem: reduce value domain} and Lemma \ref{lem: existence matching profiles}.

\begin{corollary} 
For every $\varepsilon \in (0,1)$, $\ell \in \NN, \comp \in \NN$ and every time series $x\in \RR^\comp$ there exists a 
time series $x'
\in \RR^{\comp'}$ with $\comp' \in O(2^{O(\ell/\eps)^{(2\ell+2)}})$ such that for every $y\in \RR^\ell$ we have
\begin{equation*}
(1-\varepsilon) \dist_{dF}(x,y) \le \dist_{dF}(x',y) \le (1+\varepsilon) \dist_{dF}(x,y).
\end{equation*}
\end{corollary}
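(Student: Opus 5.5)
The corollary follows by composing the two preceding lemmas. Given $x \in \RR^\comp$, $\ell$ and $\varepsilon$, first apply \Cref{lem: reduce value domain} to $x$ to obtain a time series $x^{(1)} \in \RR^\comp$ with $|\vectorset(x^{(1)})| \in O(\ell/\varepsilon)$. By that lemma,
\[
(1-\varepsilon)\,\dist_{dF}(x,y) \le \dist_{dF}(x^{(1)},y) \le (1+\varepsilon)\,\dist_{dF}(x,y) \qquad \text{for all } y \in \RR^\ell .
\]

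Second, apply \Cref{lem: existence matching profiles} to $x^{(1)}$. This yields a time series $x' \in \vectorset(x^{(1)})^{\comp'}$ with
\[
\comp' \in O\bigl(2^{|\vectorset(x^{(1)})|^{2\ell+2}}\bigr)
\]
and $D_{(x',\ell)} = D_{(x^{(1)},\ell)}$. Substituting $|\vectorset(x^{(1)})| \in O(\ell/\varepsilon)$ gives $\comp' \in O(2^{O(\ell/\eps)^{(2\ell+2)}})$, which is the claimed bound.

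Third, since the two $\ell$-profile sets coincide, \Cref{lem: matching profiles} gives $\dist_{dF}(x',y) = \dist_{dF}(x^{(1)},y)$ for every $y \in \RR^\ell$. Chaining this equality with the two-sided inequality from the first step gives the desired bound
\[
(1-\varepsilon)\,\dist_{dF}(x,y) \le \dist_{dF}(x',y) \le (1+\varepsilon)\,\dist_{dF}(x,y).
\]

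The only step that needs care is the bookkeeping of the exponent in the second step. The proof of \Cref{lem: existence matching profiles} ends with a slightly inconsistent exponent ($2\ell+\ell$ rather than $2\ell+2$). I would take the statement as given, with exponent $2\ell+2$, and note that any exponent of the form $O(\ell)$ is absorbed into the $O(\ell/\eps)^{(2\ell+2)}$ form after the substitution. No additional error is incurred in the second step, because the profile equivalence is exact.
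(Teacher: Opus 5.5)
Your argument is correct and follows the paper's route exactly. The paper's proof of \Cref{theorem:dimreduction} uses the same chain: value-domain reduction via \Cref{lem: reduce value domain}, then \Cref{lem: existence matching profiles}, then \Cref{lem: matching profiles} to turn equal profile sets into equal distances.

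One small correction to your aside about the exponent. An exponent of $3\ell$ would \emph{not} be absorbed: for $\ell \ge 3$, $(\ell/\varepsilon)^{3\ell}$ is not $O(\ell/\varepsilon)^{2\ell+2}$ as $\varepsilon \to 0$. You do not need that claim, though. The paper's counting gives at most $O(s^{2\ell+2})$ prefix profiles over $s$ values, hence at most $2^{O(s^{2\ell+2})}$ distinct prefix-profile sets. This genuinely supports the exponent $2\ell+2$, so the $2\ell+\ell$ at the end of that proof is just a typo.
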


\subsubsection{Algorithm}

In this section we turn the previously discussed existential result into an algorithmic one.

Given some time series $x \in \RR^{\comp}$ the solution will be to iterate over all  increasing complexities $t$ and all time series
in $\vectorset(x)^t$ until we find one that has the same set of  $\ell$-profiles as $x$.
Our previous results guarantee that this algorithm terminates with 
 $t \in O(2^{|\vectorset(x)|^{(2\ell + 2)}})$. By Lemma \ref{lem: existence matching profiles} we have that $\dist_{dF}(x,q) = \dist_{dF}(y,q)$, for all $q \in \RR^{\ell}$. 
The challenge is to efficiently compute the set of $\ell$-profiles for a given time series, which we will discuss in the remaining part of the section.

To compute the set $D_{(x,\ell)}$, for some $x \in \RR^\comp$, we iterate over all sequences $p \in ({\vectorset(x)^2})^\ell$ of potential $\ell$-profiles and decide if there exists a traversal $T$ s.t. $p$ is an $\ell$-profile of $(x,T)$. To do so it is sufficient to decide the existence of the corresponding traversal sectors $S^{(x,T)}_{1},\dots, S^{(x,T)}_{\ell}$. 

 Consider a sequence $p = ((\min_1,\max_1),\dots,(\min_\ell,\max_\ell)) \in ({\vectorset(x)^2})^\ell$. Note that $\min_i$ and $\max_i$ are supposed to be the minimum and maximum element according to the considered $\ell$-profile. The objective is to decide if there exist traversal sectors
$S_1,\dots,S_\ell$ of $x$ that are consistent with $p$, i.e. for all $i \in [\ell]$, $\min_i,\max_i \in S_i$ and $S_i \subseteq [\min_i,\max_i]$. This is done in a recursive way and stated as dynamic program in the form of Algorithm \ref{alg: Partition of Profile}.

The idea is as follows. For some $t \in [\comp]$ and $h \in [\ell]$ we would like to know if there are  traversal sectors $S_1,\dots, S_h$ for $(x_1,\dots,x_t)$ that are compatible with $p$.
However, in order to set up a recursion, we also need to know whether the minimum and/or maximum value of $S_h$
has already appeared in $x_1,\dots, x_t$.
For this purpose, we introduce two Boolean variables $a$ and $b$.

Concretely, Algorithm~\ref{alg: Partition of Profile} is a dynamic programm that checks for the existence of partial solutions of the following form. 

\begin{definition}[compatible traversal sectors]
 Given a time series $x=(x_1,\dots,x_\comp) \in \RR^\comp$, sequence  $p=\big((\min_1,\max_1),\dots,(\min_\ell,\max_\ell)\big) \in (\vectorset(x)^2)^\ell$, $h \in [\ell]$, $t \in [\comp]$.
 Traversal sectors $S_1,\dots,S_h$ of $(x_1,\dots,x_t)$ are compatible with $p$ if they satisfy: 
  \begin{enumerate}
    \item $\min_i,\max_i \in S_i$ for $1\leq i\leq h-1$
    \item $S_i \subseteq [\min_i,\max_i]$ for $1\leq i \leq h$
    \end{enumerate}
\end{definition}

Note that the last sector $S_h$ is not required to fulfill the conditions of the profile, in that we don't require $\min_h$ and $\max_h$ to be contained in $S_h$. This is needed as they may not have appeared in the sequence at that time of the algorithm, yet.
Lemma~\ref{lem:partitionDP_invariant} below shows correctness of Algorithm~\ref{alg: Partition of Profile}. The crucial observation is that partial solutions can be combined as follows. 

\begin{observation}\label{obs:combination_ab}
For any $h \in [\ell], t \in [\comp]$, let $S_1,\dots,S_h$ of $(x_1,\dots,x_t)$ be compatible traversal sectors of  $(x_1,\dots,x_t)$ with $\min_h \in S_h$ and let $S'_1,\dots,S'_h$ be compatible traversal sectors of  $(x_1,\dots,x_t)$ with $\max_h \in S'_h$, then either $S_h \subseteq S'_h$ and therefore also $\min_h \in S'_h$, or $S'_h \subseteq S_h$ and therefore also $\max_h \in S_h$. This follows because the last sector in each case consist of the elements of a prefix of the same sequence.
\end{observation}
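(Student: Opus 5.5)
The statement to prove is Observation~\ref{obs:combination_ab}. The plan is to use that in any traversal the sectors of a prefix $(x_1,\dots,x_t)$ are consecutive index intervals. Concretely, if $S_1,\dots,S_h$ are traversal sectors of $(x_1,\dots,x_t)$, then each $S_i$ is the value set of a contiguous block $\{x_{s_i},\dots,x_{e_i}\}$. These blocks satisfy $s_1=1$, $e_h=t$, and each block starts where the previous one ended or one index later. This holds because a traversal moves monotonically in both indices with steps in $\{0,1\}$. In particular the last sector $S_h$ is the value set of a suffix block $(x_{s_h},\dots,x_t)$ of the prefix.

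First I would record this structural fact for both families: $S_h=\{x_{s},\dots,x_t\}$ and $S'_h=\{x_{s'},\dots,x_t\}$ for some start indices $s,s'\le t$. Both sets are suffixes of the same sequence $(x_1,\dots,x_t)$ ending at the same index $t$.

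Next I would compare $s$ and $s'$. If $s\ge s'$, then the index set of $S_h$ is contained in that of $S'_h$, so $S_h\subseteq S'_h$, and hence $\min_h\in S_h$ gives $\min_h\in S'_h$. If $s'\ge s$, then symmetrically $S'_h\subseteq S_h$, and $\max_h\in S'_h$ gives $\max_h\in S_h$. This covers all cases, so one of the two families has a last sector containing both $\min_h$ and $\max_h$.

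Compatibility (condition 2, $S_h\subseteq[\min_h,\max_h]$) is not needed for the dichotomy itself. It is preserved automatically, because we never alter the families; we only note that one of them has the stronger property.

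The only step needing care is the first one. I must justify that a sector is an interval of indices, and that the last sector of a prefix ends exactly at $t$. Interval-ness follows from properties (iii)–(v) of a traversal: the pairs $(i,h)$ in $T$ with fixed second coordinate $h$ have consecutive first coordinates, since $j$ is nondecreasing along $T$ and $i$ increases by at most one per step. The sector ends at $t$ because the restriction of the traversal to the prefix ends at row $t$ in column $h$ by definition of $h$ being the last nonempty sector.
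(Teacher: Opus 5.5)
Your argument is correct and is the paper's own reasoning. Both last sectors are value sets of contiguous index blocks ending at position $t$, so they are nested. The paper says each is ``a prefix of the same sequence'', read backwards from $x_t$; your word ``suffix'' is the more accurate one. You add the details the paper leaves implicit: sectors are index intervals by the monotonicity and step conditions of a traversal, and the last one ends at $t$.
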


\begin{algorithm}
\caption{DecideProfile}
\label{alg: Partition of Profile}
\KwIn {$(x_1,\dots,x_\comp)$, $((\min_1,\max_1),\dots,(\min_\ell,\max_\ell))$}

 $\feas \gets \{\text{False}\}^{(\ell+1)\times (\comp+1)}$, $\hmin \gets \{\text{False}\}^{(\ell+1)\times (\comp+1)}$, $\hmax \gets \{\text{False}\}^{(\ell+1)\times (\comp+1)}$
 
$\feas[0,0] \gets \text{True}$, $\hmin[0,0] \gets \text{True}$, $\hmax[0,0] \gets \text{True}$ \label{line: decideProfile_initTables}

\For{$h = 1 \textbf{ to } \ell $}{
    \For{$t = 1 \textbf{ to } \comp $}{
       \If{$x_t \in [\min_h,\max_h]$}{\label{line: decideProfile_if_inRange}
            \If{$\feas[h,t-1]$}{\label{line: decideProfile_if_newSector}
                $\feas[h,t] \gets \text{True}$
                  
                \If{$\hmin[h,t-1] \lor x_t = \min_h$}{
                    $\hmin[h,t] \gets \text{True}$
                }
                \If{$\hmax[h,t-1] \lor x_t = \max_h$}{ 
                    $\hmax[h,t] \gets \text{True}$
                }
            }
            \Else{ 
                \If{$\feas[h-1,t-1] \land \hmin[h-1,t-1] \land \hmax[h-1,t-1]$}{\label{line: decideProfile_if_oldSector}
                    $\feas[h,t] \gets \text{True}$
        
                    \If{$x_t = \min_h$}{ 
                        $\hmin[h,t] \gets \text{True}$
                    }
                    \If{$x_t = \max_h$}{
                        $\hmax[h,t] \gets \text{True}$
                    }
                }
            }
        }
    }
}                                   
\Return $\feas[\ell,\comp] \land \hmin[\ell,\comp] \land \hmax[\ell,\comp]$

\end{algorithm}

\begin{restatable}{lemma}{lemassignmentdp}\label{lem:partitionDP_invariant}
Given $x \in \RR^\comp$ and $p \in {(\vectorset(x)^2)}^\ell$.
Let $\feas, \hmin$ and $\hmax$ be the tables constructed during Algorithm \ref{alg: Partition of Profile} with input $(x,p)$.
For every $h \in [\ell], t \in [\comp]$, after the corresponding iteration of the for-loop, we have $\feas[h,t] = True$
if and only if there exist 
compatible traversal sectors $S_1,\dots,S_h$ of $(x_1,\dots,x_t)$. In addition, we have that $\hmin[h,t] = True$ (resp. $\hmax[h,t] = True$) if and only if there exists such a compatible solution with $\min_h \in S_h$ (resp. $\max_h \in S_h$).
\end{restatable}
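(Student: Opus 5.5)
I will prove \Cref{lem:partitionDP_invariant} by induction on $h$, and for fixed $h$ by induction on $t$, following the order in which Algorithm~\ref{alg: Partition of Profile} fills the tables. The base case is the initialization $\feas[0,0]=\hmin[0,0]=\hmax[0,0]=\text{True}$, interpreted as the empty prefix with zero sectors. For $h\ge 1$, $t=0$ there are no compatible sectors and all table entries stay False. The key structural fact is this: compatible traversal sectors $S_1,\dots,S_h$ of $(x_1,\dots,x_t)$ with $t\ge 1$ always have $x_t\in S_h$. Removing $x_t$ then leaves exactly one of two situations:
\begin{enumerate*}[label=(\roman*)]
\item $S_h\setminus\{x_t\}$ is still nonempty (the traversal step into $(t,h)$ came from $(t-1,h)$), so $S_1,\dots,S_{h-1},S_h'$ are compatible sectors of $(x_1,\dots,x_{t-1})$ with the same $h$; or
\item $S_h=\{x_t\}$ (the step came from $(t-1,h-1)$), so $S_1,\dots,S_{h-1}$ are sectors of $(x_1,\dots,x_{t-1})$ whose last sector now must satisfy condition 1, i.e.\ contain $\min_{h-1}$ and $\max_{h-1}$.
\end{enumerate*}
Both cases also require $x_t\in[\min_h,\max_h]$. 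The diagonal step $(t-1,h-1)\to(t,h)$ is exactly case (ii), so the traversal conditions are respected.

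For the ``if'' direction of $\feas$, I take compatible sectors and split into cases (i) and (ii). The induction hypothesis gives either $\feas[h,t-1]$, or $\feas[h-1,t-1]$ together with witnesses containing both $\min_{h-1}$ and $\max_{h-1}$. In the second case I need $\hmin[h-1,t-1]\land\hmax[h-1,t-1]$. The hypothesis only supplies these separately, but the actual witness contains both, so both flags are true by the induction hypothesis applied to that single witness. Either way the corresponding branch of the algorithm sets $\feas[h,t]$. For the ``only if'' direction, I construct the witness from the branch taken: append $x_t$ to the last sector in the first branch, or open a new singleton sector $\{x_t\}$ in the second branch. In the second branch I must exhibit a single witness at $(h-1,t-1)$ containing both $\min_{h-1}$ and $\max_{h-1}$, while the flags only guarantee two possibly different witnesses. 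This is exactly where Observation~\ref{obs:combination_ab} applies: the last sectors are suffixes of the same prefix, so one contains the other, and the larger one contains both extremes.

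The $\hmin/\hmax$ claims are the main obstacle, and the difficulty comes from the if/else structure of the algorithm. When $\feas[h,t-1]$ holds, the algorithm never consults the ``new sector'' option. So I must show that if some witness of type (ii) has $\min_h\in S_h=\{x_t\}$, then $x_t=\min_h$ and the first branch also sets $\hmin$; this holds because the first branch checks $x_t=\min_h$ directly. Conversely, a type (i) witness with $\min_h\in S_h$ has $\min_h$ either equal to $x_t$ or in $S_h'$, and in the latter case $\hmin[h,t-1]$ holds by induction. To show that the $\hmin$ set in the first branch is genuine, I again combine witnesses via Observation~\ref{obs:combination_ab}: the witness certifying $\hmin[h,t-1]$ has the largest last sector, so extending it by $x_t$ works. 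I expect to verify these branch interactions carefully, and the rest is routine bookkeeping.
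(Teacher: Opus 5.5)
Your plan has the same architecture as the paper's proof. Both use induction in the order the table is filled, an ``extend the last sector by $x_t$'' branch, an ``open the singleton sector $\{x_t\}$'' branch, and Observation~\ref{obs:combination_ab} to merge the two witnesses behind $\hmin[h-1,t-1]\land\hmax[h-1,t-1]$. You are more careful than the paper about the ``if'' direction and about the if/else interplay for the flags.

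The gap is in your key structural fact. You assert that the predecessor of $(t,h)$ in the traversal is either $(t-1,h)$ or $(t-1,h-1)$. The paper's definition of a traversal also allows steps with $i_{r+1}=i_r$ and $j_{r+1}=j_r+1$, so the predecessor can be $(t,h-1)$. In that case:
\begin{itemize}
\item $x_t$ lies in both $S_{h-1}$ and $S_h$;
\item the index block of $S_h$ is $\{t\}$;
\item $S_1,\dots,S_{h-1}$ are sectors of $(x_1,\dots,x_t)$, not of $(x_1,\dots,x_{t-1})$.
\end{itemize}
Your case (ii) tacitly equates ``$S_h=\{x_t\}$'' with ``the last step was diagonal''. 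No branch of Algorithm~\ref{alg: Partition of Profile} handles this third transition, so your induction for the ``if'' direction does not close.

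This cannot be repaired inside the proof, because the statement itself fails in that case. Take $x=(0,10)$, $\ell=2$, $p=((0,10),(0,10))$. The algorithm sets $\feas[1,1]=\hmin[1,1]=\text{True}$ and leaves $\hmax[1,1]$ False. At $(h,t)=(2,2)$ the first branch is blocked because $\feas[2,1]$ is False, and the second because $\hmax[1,1]$ is False, so $\feas[2,2]$ and $\hmax[2,2]$ stay False. Yet the traversal $(1,1),(2,1),(2,2)$ yields $S_1=\{0,10\}$, $S_2=\{10\}$, which are compatible with $p$ and have $\max_2\in S_2$.

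Appending $x_3=0$ shows the final output is wrong too. For $x=(0,10,0)$ the algorithm returns False, although $(1,1),(2,1),(2,2),(3,2)$ realizes $p$ as a $2$-profile, so \Cref{lem: check profile feasible} inherits the problem. The paper's proof has the same blind spot: it only checks that entries marked True have witnesses, and tacitly treats the two transitions as exhaustive.

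What your argument (and the paper's) actually establishes is the lemma for sectors coming from traversals in which every step advances the index into $x$. These are the value sets of $h$ consecutive, nonempty, disjoint blocks partitioning $\{1,\dots,t\}$. Under that restriction your two cases are exhaustive and the rest of your proposal is sound.

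This restriction is not innocuous downstream, since $\dist_{dF}$ itself may need the excluded steps. For $x=(0,10,10)$ and $y=(0,0,10)$ the distance is $0$, attained only via $(1,1),(1,2),(2,3),(3,3)$. For the statement as written to hold, the algorithm would need an extra transition from $(h-1,t)$, and your analysis a matching third case.

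Two minor points:
\begin{itemize}
\item Phrase case (i) in terms of index blocks rather than $S_h\setminus\{x_t\}$, since the value $x_t$ may occur elsewhere in the block.
\item To certify the first-branch $\hmin[h,t]$, you can simply extend the witness for $\hmin[h,t-1]$ by $x_t$; the Observation is not needed there.
\end{itemize}
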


\begin{proof}
We prove the lemma by induction on the iterations of the inner for-loop. For the base case consider the first iteration with $h=1$ and $t=1$. Assume $x_1 \in [\min_1,\max_1]$. In this case, the clause in Line \ref{line: decideProfile_if_inRange} evaluates to True, but the clause in Line \ref{line: decideProfile_if_newSector} evaluates to False. However, the clause in Line \ref{line: decideProfile_if_oldSector} evaluates to True, because these Booleans were set in Line \ref{line: decideProfile_initTables} to True to initialize the algorithm. In this case, $\feas[1,1]$ is set to True, which is correct since $S_1=\{x_1\}$ is a compatible traversal sector. Furthermore, $\hmin[1,1]$ and $\hmax[1,1]$ are set correctly. Otherwise, if $x_1 \notin [\min_1,\max_1]$, then there exists no compatible traversal sectors and $\feas[1,1],\hmin[1,1],$ as well as $\hmax[1,1]$ remain set to False.

For the induction step consider any $h,t \geq 1$ with $h >1$ or $t > 1$. Assume $x_t \in [\min_h,\max_h]$. 
If $\feas[h,t-1]$ is True, then by induction there exist compatible traversal sectors $S_1,\dots,S_h$ for $(x_1,\dots,x_{t-1})$. Therefore there exist compatible traversal sectors  of $(x_1,\dots,x_t)$ by adding $x_t$ to $S_h$. Furthermore, $\hmin[h,t]$ and $\hmax[h,t]$ are set correctly.

Otherwise, if $\feas[h,t-1]$ is False, then we check in Line \ref{line: decideProfile_if_oldSector}, if
 $\hmin[h-1,t-1]$ and $\hmax[h-1,t-1]$ are True. By induction and Observation \ref{obs:combination_ab} this is the case if and only if there exist compatible traversal sectors $S_1,\dots,S_{h-1}$ for $(x_1,\dots,x_{t-1})$ and it holds that $\min_{h-1} \in S_{h-1}$ as well as $\max_{h-1} \in S_{h-1}$.
 As such, there exist compatible traversal sectors $S_1,\dots,S_{h}$ with $S_h=\{x_t\}$ for $(x_1,\dots,x_{t})$. Furthermore, $\hmin[h,t]$ and $\hmax[h,t]$ are set correctly with respect to $S_h$. 

Now, assume $x_t \notin [\min_h,\max_h]$. In this case, there exists no compatible traversal sectors and $\feas[h,t],\hmin[h,t],$ as well as $\hmax[h,t]$ remain set to False.
\end{proof}

\begin{lemma} \label{lem: check profile feasible}
Given $x \in \RR^\comp$ and $p =\big( (min_1, max_1),\dots, (min_\ell, max_\ell)\big) \in {(\vectorset(x)^2)}^\ell$. Then Algorithm \ref{alg: Partition of Profile} with input $(x,p)$ takes  $O(\comp\ell)$ time and returns True iff  there exists a traversal $T$ s.t. $p$ is an  $\ell$-profile for $(x,T)$.
\end{lemma}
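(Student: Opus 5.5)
The running time is immediate: the algorithm fills three Boolean tables of size $(\ell+1)\times(\comp+1)$, and each of the $\ell\comp$ iterations does only $O(1)$ comparisons and assignments, so it runs in $O(\comp\ell)$ time. For correctness I would reduce to \Cref{lem:partitionDP_invariant}. Applied with $h=\ell$ and $t=\comp$, and combined with \Cref{obs:combination_ab}, it says the returned value $\feas[\ell,\comp]\land\hmin[\ell,\comp]\land\hmax[\ell,\comp]$ is True iff there are compatible traversal sectors $S_1,\dots,S_\ell$ of the whole series $x$ with $\min_\ell,\max_\ell\in S_\ell$.

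\Cref{obs:combination_ab} is needed here because the lemma only gives one compatible solution containing $\min_\ell$ and possibly another containing $\max_\ell$. The observation says the last sectors of the two are nested, since both are suffixes of the same prefix. So one of the two solutions contains both values in its last sector.

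It then remains to show that the following two conditions are equivalent.
\begin{enumerate*}[label=(\roman*)]
\item There exist traversal sectors $S_1,\dots,S_\ell$ of $x$ with $\min_i,\max_i\in S_i$ and $S_i\subseteq[\min_i,\max_i]$ for all $i\in[\ell]$.
\item There is a traversal $T$ such that $p$ is the $\ell$-profile of $(x,T)$.
\end{enumerate*}
For (ii)$\Rightarrow$(i), take the sectors $S_i=S^{(x,T)}_i$. Their minimum is $\min_i$ and their maximum is $\max_i$, which gives both conditions. For (i)$\Rightarrow$(ii), the sectors come from some traversal $T$, and $S_i=S^{(x,T)}_i$ by definition. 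The two conditions force $\min(S_i)=\min_i$ and $\max(S_i)=\max_i$, so the $\ell$-profile of $(x,T)$ is exactly $p$.

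The main subtlety I expect is the notion of traversal sectors used by the DP. In Algorithm~\ref{alg: Partition of Profile}, each sector is a contiguous block of $x$, and consecutive blocks are disjoint: a new sector starts with $x_t$ only when $\feas[h,t-1]$ is false. A general traversal, however, may share an index between consecutive sectors (diagonal versus horizontal steps), or repeat $x_i$ across several $j$. I would argue that any traversal's sectors can be converted into disjoint contiguous blocks without changing the min/max of any block. If the profile is realized, each $S_j$ is a contiguous interval of indices, and consecutive intervals overlap in at most one index. Removing a shared index from one side keeps feasibility whenever it is not needed as an extremum there. The delicate case is a single index $i$ shared by many sectors $j$ (a vertical run in $T$). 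There every such sector is $\{x_i\}$, so $\min_j=\max_j=x_i$. The DP handles this only if it allows reusing $x_t$; otherwise I would argue the invariant in \Cref{lem:partitionDP_invariant} is taken with respect to block partitions and check the equivalence of profiles directly. Verifying this correspondence between traversals and block partitions is the step I would spend the most care on.
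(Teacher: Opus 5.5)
Your main line is the paper's proof: the $O(\comp\ell)$ count, then \Cref{lem:partitionDP_invariant} at $(h,t)=(\ell,\comp)$ combined with \Cref{obs:combination_ab} to get one compatible family whose last sector contains both $\min_\ell$ and $\max_\ell$, then the identification of such families with traversals realizing $p$. The subtlety you raise at the end is real, though, and the step you propose for it fails. The sectors of an arbitrary traversal cannot in general be turned into disjoint consecutive blocks with the same minima and maxima, and the obstruction is not limited to vertical runs.

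Take $x=(5,7,6)$, $\ell=2$ and $T=(1,1),(2,1),(2,2),(3,2)$, which has a single $(0,1)$-step. Its sectors are $\{5,7\}$ and $\{7,6\}$, so $p=((5,7),(6,7))$ is an $\ell$-profile of $(x,T)$. The shared entry $x_2=7$ is the prescribed maximum of both sectors. The only two block partitions give $((5,5),(6,7))$ and $((5,7),(6,6))$. The algorithm indeed outputs False. $\feas[2,2]$ stays False because $\hmax[1,1]$ is False, so $\feas[2,3]$ is reached only from $(1,2)$, and then $\hmax[2,3]$ is False. Moreover, $\feas[h,t]$ can be True only if $h\le t$, so for $\ell>\comp$ the output is always False even though traversals exist.

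Hence the direction ``$p$ is the $\ell$-profile of some traversal $\Rightarrow$ True'' is false as stated, and no argument can close your step. The paper's proof has the same hole, only hidden. It applies \Cref{lem:partitionDP_invariant} to $S^{(x,T)}_1,\dots,S^{(x,T)}_\ell$, but that invariant holds only for families of disjoint consecutive blocks. For example, $\feas[2,1]$ is never True, although $(1,1),(1,2)$ yields compatible sectors $\{x_1\},\{x_1\}$ whenever $\min_1=\max_1=x_1\in[\min_2,\max_2]$.

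What your reduction does prove is the statement for traversals without $(0,1)$-steps. Restricting the statement is not an acceptable fix, because the discrete Fr\'echet distance genuinely needs such steps: $\dist_{dF}((0,10,10),(0,0,10))=0$, while the only traversal without $(0,1)$-steps has cost $10$. The natural repair is in the algorithm. In the else-branch, also permit sector $h$ to start by reusing $x_t$, i.e.\ also accept $\feas[h-1,t]\land\hmin[h-1,t]\land\hmax[h-1,t]$. This is available because row $h-1$ is filled before row $h$. With sectors read as consecutive index intervals in which neighbours share at most one index, the last sector is still a suffix and \Cref{obs:combination_ab} still applies. Your argument then goes through verbatim.
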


\begin{proof}
Let $\feas, \hmin$ and $\hmax$ be the tables constructed by Algorithm \ref{alg: Partition of Profile} and assume 
the algorithm returns True. Then, it must be that $\hmin[\ell,\comp]$ and $\hmax[\ell,\comp]$, as well as $\feas[\ell,\comp]$ are all set to True.
By Lemma \ref{lem:partitionDP_invariant} and Observation~\ref{obs:combination_ab},  there exist traversal sectors $S_1,\dots,S_\ell$ of $x$ that are compatible with $p$ and it holds that $\min_\ell \in S_\ell$ and $\min_\ell \in S_\ell$. By definition there exists a traversal  $T$ s.t. $S^{(x,T)}_i = S_i$, for $1\leq i \leq \ell$ implying that $p$ is an $\ell$-profile of $(x,T)$.

Next assume that there exists a traversal $T$ s.t. $p$ is an $\ell$-profile for $(x,T)$. Then  $S^{(x,T)}_1,\dots,S^{(x,T)}_\ell$ are traversal sectors for $x$ that are compatible with $p$ and it holds that $\min_\ell \in S^{(x,T)}$ and $\max_\ell \in S^{(x,T)}$. By Lemma \ref{lem:partitionDP_invariant}, 
$\hmin[\ell,\comp]$ and $\hmax[\ell,\comp]$, as well as $\feas[\ell,\comp]$ are correctly set to True.

The initialization of  $\feas$ takes $O(\ell \comp)$ time and the algorithm takes $O(\ell \comp)$ many iterations of the inner for-loop, which require constant time each.
\end{proof}

After establishing how to compute the set of $\ell$-profiles we are ready to state the complete procedure, which is given by Algorithm \ref{alg: ComplexityReduction} and analyzed in Theorem \ref{theorem:dimreduction}.

\begin{algorithm}[H]
\caption{ComplexityReduction}
\label{alg: ComplexityReduction}
\KwIn{$x \in \RR^\comp$, $\ell \in \NN$, $\varepsilon \in \RR_{\geq 0}$}

$\tilde{x}\gets \text{ReduceValueDomain}(x,\ell,\varepsilon)$

 Let $\comp'$ be smallest value s.t. it exists $x' \in \vectorset(\tilde{x})^{\comp'}$ with $D_{(x',\ell)} = D_{(\tilde{x},\ell)}$  

\Return $x'$
\end{algorithm}

\thmdimreduction*
\begin{proof}
    Let $\tilde{x}$ be the time series returned by ReduceValueDomain$(x,\ell,\varepsilon)$. Then by Lemma \ref{lem: reduce value domain} for all $y \in \RR^\ell$ 
    \[(1-\varepsilon) \dist_{dF}(x,y) \leq \dist_{dF}(\tilde{x},y) \leq (1+\varepsilon) \dist_{dF}(x,y)\]
    with  $|\vectorset(x')| \in O(\ell/\varepsilon)$.  
    By Lemma \ref{lem: existence matching profiles} there exists a time series $x' \in \vectorset(\tilde{x})^{\comp'}$, with $D_{(x',\ell)} = D_{(\tilde{x},\ell)}$
    which further implies by Lemma \ref{lem: matching profiles} that for arbitrary $y\in \RR^\ell$ we have $\dist_{dF}(x',y) = \dist_{dF}(\tilde{x},y)$. It follows that  \[(1-\varepsilon) \dist_{dF}(x,y) \leq \dist_{dF}(x',y) \leq (1+\varepsilon) \dist_{dF}(x,y).\]

    By \Cref{lem: reduce value domain}, the time needed to compute $\tilde{x}$ is in $O(\ell \comp\log^2 \comp)$.   
    Then, by \Cref{lem: check profile feasible}, computing $D_{(\tilde{x},\ell)}$ takes $O(\ell/{\eps})^{2\ell}\cdot \comp$ time by enumerating all $\vectorset(\tilde{x})^{2\ell}$ candidate $\ell$-profiles and checking if they are valid $\ell$-profiles for $\tilde{x}$ using \Cref{alg: Partition of Profile}. 
    To find the time series $x'$ with smallest complexity with its vertices in $\vectorset(\tilde{x})$ such that $D_{(x',\ell)} = D_{(\tilde{x},\ell)}$, we enumerate all vectors over 
    $\vectorset(\tilde{x})$ in increasing length until we find a vector with the same set of $\ell$-profiles as $\tilde{x}$. For each increasing value of $i=1,2,\ldots,z'$, for each vector $y\in \vectorset(\tilde{x})^i$ we compute its set of $\ell$-profiles, by enumerating all $\vectorset(\tilde{x})^{2\ell}$ candidate $\ell$-profiles and checking if they are valid $\ell$-profiles for $x'$ using \Cref{alg: Partition of Profile}.  By definition, we have $ \comp' \in O(2^{O(\ell/\eps)^{(2\ell+2)}})$. 
    The overall running time of this step is in $O({\comp'}^2)\cdot O(\ell/\eps)^{\comp'}\cdot O(\ell/\eps)^{2\ell} \subseteq O(\ell/\eps)^{2\comp'}$. 
\end{proof}

\subsection{Near Linear Time Approximation Scheme for $(k,\ell)$-Median with $d = 1$}

First, we reduce the complexity $\comp$ of the input time series to a complexity that only depends on $\ell,\eps$ using \Cref{alg: ComplexityReduction}, to obtain a set of clients $X$ with constant complexity. Then, 
we compute a set of candidate centers $Y$ using \Cref{lemma:mediancandidates}, which has a size near linear in $n$, and independent of $\comp$. 

Strictly speaking, the discrete Fr\'echet distance is a pseudo-metric as distance between distinct curves can be $0$. Therefore, one has to consider a metric space defined on the equivalence classes of curves with pairwise distance $0$. Each equivalence class has a corresponding representative time series and, for some given time series $x$ its representative can be computed by removing all duplicates of neighboring values in $x$. 

\begin{algorithm}
\caption{NLTAS \label{alg:ltas}}
\KwIn{$P\subset \RR^\comp$, $k,\ell \in \NN$, $\eps \in (0,1)$}

$X\gets $ for each $x\in P$, run \Cref{alg: ComplexityReduction} with input $x,\ell,\eps/16$ and store output in $X$

$Y\gets$ output of the algorithm of \Cref{lemma:mediancandidates} with input $X,k,\ell$ and $\eps/12$.

 $Sol \gets$ output of the algorithm of \Cref{thm:kmedian_bounded_centers} with input $
X,Y,k,\eps/4 $

\Return $Sol$

\end{algorithm}

\begin{theorem}
\label{theorem:final_formal}
Let $\eps\in (0,1/2]$ and $\ell \in \mathbb N$ be constants. Given a set $P$ of $n$ real-valued time series of complexity $\comp$, and parameter $k$ Algorithm \ref{alg:ltas} computes in time
$\tilde{O}( n\comp)$
and with success probability at least $1-\eps$ a $(1+\eps)$-approximation to the $(k,\ell)$-median problem under discrete Fréchet distance. 
\end{theorem}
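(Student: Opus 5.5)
The proof runs through the three stages of Algorithm~\ref{alg:ltas}, tracking the approximation factor and the running time of each stage separately.

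\emph{Approximation.} By Theorem~\ref{theorem:dimreduction} applied with parameter $\eps/16$, every $x\in P$ is replaced by some $x'\in X$ with $(1-\eps/16)\dist_{dF}(x,y)\le \dist_{dF}(x',y)\le(1+\eps/16)\dist_{dF}(x,y)$ for all $y\in\RR^\ell$. Summing over clients and taking minima over centers, every candidate center set $\cC\subset\RR^\ell$ with $|\cC|=k$ has its cost on $X$ within a factor $(1\pm\eps/16)$ of its cost on $P$. In particular $\optkl(X)\le(1+\eps/16)\optkl(P)$, and any $\cC$ that is $\beta$-approximate on $X$ is $\beta\frac{1+\eps/16}{1-\eps/16}$-approximate on $P$.

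Next, Lemma~\ref{lemma:mediancandidates} with parameter $\eps/12$ produces a candidate set $Y$ that contains a $k$-subset of cost at most $(1+\eps/4)\optkl(X)$ on $X$. The metric on $Y$, taken over equivalence classes, has doubling dimension $O(\ell)$ by Proposition~\ref{prop:ddim}. Hence Theorem~\ref{thm:kmedian_bounded_centers} applies to the instance $(X\cup Y,\dist_{dF})$, since it only requires $\ddim(Y)$ to be bounded. With parameter $\eps/4$ it returns, with constant probability, a solution of cost at most $(1+\eps/4)\min_{\cC\subseteq Y}\cost \le(1+\eps/4)^2\optkl(X)$ on $X$. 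Chaining these bounds gives a factor $(1+\eps/4)^2\frac{(1+\eps/16)^2}{1-\eps/16}\le 1+\eps$ for $\eps\le 1/2$; this is a routine numerical check.

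\emph{Running time.} Stage 1 costs $O(\comp\ell\log^2\comp)$ per curve plus a term $O(\ell/\eps)^{2\comp'}$, which is constant for constant $\ell,\eps$. This gives $\tilde O(n\comp)$ in total. Each output has constant complexity $\comp'$, so distances between $X$ and $Y$ cost $O(\comp'\ell)=O(1)$. Stage 2 is then $\tilde O(n)\cdot O(1/\eps)^{\ell}$, because the $\comp$ appearing in Lemma~\ref{lemma:mediancandidates} is now $\comp'$. This gives $|Y|=\tilde O(n)$. Stage 3 takes $2^{2^t}\tilde O(|X|+|Y|)=\tilde O(n)$ with $t=O(\ell\log(\ell/\eps))$ constant. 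Before running Stage 3, duplicate neighboring values are removed so that we work with equivalence-class representatives.

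\emph{Main obstacle.} The step that needs the most care is the success probability. The statement asks for probability at least $1-\eps$, while Theorem~\ref{thm:kmedian_bounded_centers} only guarantees a constant. I would boost it by running Stage 3 independently $O(\log(1/\eps))$ times and keeping the cheapest solution. Evaluating the cost of one solution on $X$ requires the distance from each client to its nearest center among $k$ centers, which is not linear when $k$ is large. To avoid this, I would compare runs using the cost estimate the dynamic program already returns (the value $\Gamma$ at the root), which is a $(1+O(\eps))$-accurate estimate of the true cost. Since all the remaining randomness comes from Stage 3, the repetitions give failure probability at most $\eps$ at only a constant-factor overhead.
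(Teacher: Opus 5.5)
Your approximation chain and running-time analysis match the paper's proof. The paper writes the total loss as $(1+\eps/4)^3\le 1+\eps$; your product also works, though it counts a harmless factor $1+\eps/16$ twice. The paper's proof never actually argues the success probability $1-\eps$: it implicitly conditions on \Cref{thm:kmedian_bounded_centers} succeeding, which is only guaranteed with constant probability. You are right to flag this, but your amplification step fails as stated.

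The root value $\Gamma$ of a run bounds the portal-respecting cost of $\widehat F$ on that run's modified instance $X'$, so $\Gamma\ge\median(X',\widehat F)$. It is within $1+O(\eps)$ of $\median(X,\widehat F)$ only on that run's good event from \Cref{lemma:kmedian_bounded_centers_new_instance}.

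Consider a failed run in which bad clients carrying a constant fraction of $\median(X,S)$ were snapped onto $S$. The optimum of the modified instance, and hence $\Gamma$, can then drop a constant factor below $\optkmed(X,Y)$, because any $F\supseteq S$ serves the snapped clients at cost $0$. Meanwhile the returned $\widehat F$ can be a constant factor from optimal on $X$. Picking the run with smallest $\Gamma$ is biased toward exactly such runs. The chance that some repetition is of this kind grows with the number of repetitions, so the failure probability need not decrease and can even increase.

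You need to select by a quantity that upper-bounds the true cost in every run. Two options work:
\begin{itemize}
\item Use $\Gamma+\sum_{x\in X}\dist(x,\phi(x))$, with $\phi$ the map of \Cref{lemma:kmedian_bounded_centers_new_instance}. It is deterministically at least $\median(X,\widehat F)$, since $\ccostport(x,F)\ge\dist(x,F)$ and moving a client changes its cost by at most the distance moved. On the good event it is at most $(1+O(\eps))\optkmed(X,Y)$.
\item More simply, estimate $\median(X,\widehat F)$ within $1+\eps$ directly. Build the structure of \Cref{lemma:ANN_doubling_data} on $\widehat F\subseteq Y$ (the data set is doubling, and queries from $X$ may be arbitrary) and query every client. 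This takes near-linear time regardless of $k$, so your concern about evaluating costs is unnecessary.
\end{itemize}
Since the cost relation between $X$ and $P$ from Stage~1 holds deterministically, comparing runs by their cost on $X$ suffices.
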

\begin{proof}
By \Cref{theorem:dimreduction}, \Cref{alg: ComplexityReduction} runs in $O(\comp \ell \log^2 \comp) + O(\ell/\eps)^{2 \comp'}$ time and outputs 
a time series $x' \in \RR^{\comp'}$, where $\comp' \in O(2^{O(\ell/\eps)^{(2\ell+2)}})$. Hence, computing $X\subset \RR^{\comp'}$ costs 
$O(n\comp \ell \log^2 \comp) + n\cdot O(\ell/\eps)^{2 \comp'}$ 
time. 
The algorithm of \Cref{lemma:mediancandidates} runs in time $O(2^{O(\ell)} \ell ^2n\comp'(\log(\comp') \log(\comp'/\ell)+\log^9n))  + O(n\log n)\cdot O(1/\eps)^{\ell}$ and outputs a set $Y \subset \RR^\ell$ of size $O( n \log n) \cdot O(1/\eps)^{\ell}$. 
Finally, we run the algorithm of \Cref{thm:kmedian_bounded_centers} on $X,Y$, where $|X\cup Y| \in  O(n\log n)\cdot O(1/\eps)^{\ell}$. 
By \Cref{prop:ddim}, the doubling dimension of the ambient space of $Y$ is in $O(\ell)$. By \Cref{thm:kmedian_bounded_centers} and assuming constant time distance evaluations, the running time of this step is 
$2^{2^t} \cdot \tilde{O}(n)$, where $t = O\left(\ell \log\frac{\ell}{\epsilon}\right).$
Since $\eps, \ell$ and $\comp'$ are constants, distances can be computed in constant time and furthermore we can simplify the running time to $\tilde{O}(n).$

To show correctness, first observe that by \Cref{theorem:dimreduction}, the cost of any solution for $P$ is preserved up to a factor of $(1\pm\frac{\eps}{16})$ after reducing the complexity of each $x\in P$ resulting in $X$. The optimal solution for $X$ corresponds to a solution for $P$ which is at most $\frac{1+\eps/16}{1-\eps/16}\leq 1+\eps/4$ from the optimal. 
Now by \Cref{lemma:mediancandidates}, there will be a solution consisting of medians from $Y$ which has cost at most $(1+\frac{\eps}{4})$ times that of the optimal solution for $X$. Therefore, the execution of the algorithm of \Cref{thm:kmedian_bounded_centers} on $X,Y$ will find a solution consisting of $k$ time series from $Y$ that has a cost of at most  $(1+\frac{\eps}{4})^2$ times that of the optimal solution for $X$. This solution has a cost of at most  $(1+\frac{\eps}{4})^3 \leq 1+\eps$ times that of the optimal solution for $P$. 
\end{proof}

\subsection{Coreset for $(k,\ell)$-Median with $d = 1$ }

In this section, we discuss an additional implication of our complexity reduction result in the context of constructing coresets for $(k,\ell)$-median. Given a set $P$ of $n$ time series, an $\eps$-coreset of $P$ for the $(k,\ell)$-median problem, is a weighted set $S\subseteq P$ such that for any $C\subset \RR^{\ell}$, $|C|=k$, 
\[(1-\eps)\sum_{x \in P}\min_{c \in C} \dist_{dF}(x,c)\leq \sum_{x \in S} w_x \cdot \min_{c \in C}  \dist_{dF}(x,c) \leq (1+\eps)\sum_{x \in P}\min_{c \in C} d_{dF}(x,c) ,\] 
where $w_x$ is the weight associated with $x$. 

A recent result of Cohen-Addad et al.~\cite[Corollary 7.2]{Cohen-Addad2025Coreset} implies coresets for the $(k,\ell)$-median problem under the discrete Fr\'echet distance of size $\tilde{O}(\eps^{-2} k \ell \log(\comp) )$.  By combining this result with \Cref{theorem:dimreduction}, we obtain coresets of size $\tilde{O}(\eps^{-2}k \ell \log (\comp'))$, i.e., completely independent of the size of the input since $\comp'$ is a function of $\ell,\eps$. The result is formally stated as follows.
\begin{corollary}
        Let $\varepsilon \in (0,1)$ and $k,\ell \in \NN$. For any set $P$ of time series there exists an $\eps$-coreset for the $(k,\ell)$-median problem of size $\tilde{O}(\eps^{-2}k \ell \log (\comp'))$, where $\comp' \in  O(2^{O(\ell/\eps)^{(2\ell+2)}})$. 
\end{corollary}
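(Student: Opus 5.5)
The corollary should follow by composing the coreset construction of Cohen-Addad et al.~\cite[Corollary 7.2]{Cohen-Addad2025Coreset} with the complexity reduction of \Cref{theorem:dimreduction}. The plan is to build the coreset on the reduced inputs and then argue that it also serves as a coreset for the original inputs.

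First, given $P$ and $\varepsilon$, we apply \Cref{alg: ComplexityReduction} with parameter $\varepsilon' = \varepsilon/4$ to every $x\in P$. This produces reduced time series $x'\in\RR^{\comp'}$ with $\comp'\in O(2^{O(\ell/\eps)^{(2\ell+2)}})$. By \Cref{theorem:dimreduction}, for every $c\in\RR^\ell$ we have $(1-\varepsilon')\dist_{dF}(x,c)\le \dist_{dF}(x',c)\le(1+\varepsilon')\dist_{dF}(x,c)$. Taking the minimum over $c\in C$ preserves both inequalities, so for every center set $C$ with $|C|=k$:
\[
(1-\varepsilon')\cost_P(C)\le \cost_{P'}(C)\le (1+\varepsilon')\cost_P(C),
\]
where $P'$ is the multiset of reduced curves and $\cost$ denotes the $(k,\ell)$-median objective. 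We write this cost expression out explicitly rather than use a new macro.

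Second, we invoke \cite[Corollary 7.2]{Cohen-Addad2025Coreset} on $P'$ with accuracy $\varepsilon'$. This yields a weighted $S'\subseteq P'$ of size $\tilde O(\varepsilon'^{-2}k\ell\log \comp') = \tilde O(\varepsilon^{-2}k\ell\log\comp')$ that $(1\pm\varepsilon')$-approximates $\cost_{P'}(C)$ for every $C$. Chaining the two approximations gives a factor $(1\pm\varepsilon')^2\subseteq[1-\varepsilon,1+\varepsilon]$ for $\varepsilon'=\varepsilon/4$ and $\varepsilon<1$.

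The main obstacle is that the coreset definition requires $S\subseteq P$, while $S'$ consists of reduced curves. We resolve this by mapping each $x'\in S'$ back to an original preimage $x\in P$ and keeping its weight. We then have $\sum_{x\in S} w_x \min_{c\in C}\dist_{dF}(x,c)$, and each term is within a factor $(1\pm\varepsilon')^{-1}$ of the corresponding reduced term. This contributes one more factor, for a total of $(1\pm\varepsilon')^{3}$ up to inversion, which is still within $1\pm\varepsilon$ when $\varepsilon'=\varepsilon/8$. If several originals map to the same reduced curve (duplicates in the multiset $P'$), we treat $P'$ as indexed by $P$, so the preimage map is well defined. Alternatively, if a subset of the reduced inputs is acceptable as the coreset, this last step can be skipped entirely.
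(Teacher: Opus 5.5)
Your proposal is correct and takes the same route as the paper, which simply combines \cite[Corollary 7.2]{Cohen-Addad2025Coreset}, applied to the reduced curves, with \Cref{theorem:dimreduction}. Your explicit error chaining and the map-back to preimages in $P$ (needed to meet the definition $S\subseteq P$) fill in details the paper leaves implicit; just fix $\varepsilon'=\varepsilon/8$ from the start instead of switching from $\varepsilon/4$ midway.
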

\addcontentsline{toc}{section}{References}
\bibliographystyle{alphaurl}
\begin{small}	
    \bibliography{ref}
\end{small}

\begin{appendices}
    \crefalias{section}{appendix}
    \section{Constant Approximation Algorithms}
\label{sec: constant-approx-algos}

It has been shown in \cite{Cohen-AddadFS21} that one can obtain near linear time algorithms to compute a constant approximation to the $k$-median problem as well as the facility location problem when the input points come from a doubling space. In detail, the running time for the $k$-median problem is $O(2^{O(\ddim)}n \log^9 n)$ (where $n=|X\cup Y|)$ and combines the spanner construction (with parameter $\epsilon=1/2$ from \cite{Har-PeledM06})  with a graph-based algorithm by Thorup \cite{thorup2005}. 
For the facility location problem one can obtain a $O(n \log n)$ time $2^{O(\ddim)}$-approximation algorithm, where $n = |X\cup Y|$ is the size of the instance~\cite[Section A.3]{Cohen-AddadFS21}. 

We would like to extend their result to our setting when only one of the two sets has bounded doubling dimension. Assume first that this is the set $Y$ of candidate centers. In this case, we first compute for every $x\in X$ its $2$-approximate nearest neighbor in $Y$ using the algorithm in \Cref{lemma:ANN_doubling_data} which runs in time $O(2^{O(\ddim)} \cdot n \log \Delta)$. Then we replace every client by its approximate nearest neighbor and solve the resulting problem using one of the algorithms described above. We claim that the resulting solution is also a constant factor approximation for the original problem.

Indeed, moving a client by a distance $D$ while maintaining the assignment to the same cluster center/facility will change the objective function by a value of at most $D$. Since the distance of $x\in X$
to its closest candidate facility is a lower bound on its connection cost, this implies that our construction changes the cost of any fixed solution by at most $2\opt$.  Thus, the cost of the optimal solution will become at most $3\opt$ and the solution of a $c$-approximation algorithm
on the new instance will have cost at most $3c \opt$. Going back to the original data set changes the cost of this solution by at most $2\opt$, so it is a $(3c+2)$-approximation. 
The above argumentation holds for both the $k$-median and the facility location problem.

It remains to deal with the case that $X$ has bounded doubling dimension and $Y$ not.
In this case, we define the following metric embedding as in \cite{Har-PeledK13}.
For $y\in Y$ let $\bar y$ denote a $2$-approximate nearest neighbor of $y$ in $X$, otherwise if $y \in X$ define $\bar{y} = y$.
We define a new metric space $(X\cup Y, \dist')$ where 
$\dist'(x,y) = |\dist(x,\bar{x}) - \dist(y,\bar{y})| + \dist(\bar{x},\bar{y})$ for $x,y\in X\cup Y$. Then, by Lemma 2.4 in \cite{Har-PeledK13} and the fact that $\dist(y,\bar{y}) \leq 2 \dist(y,X)$ for all $y \in X\cup Y$ we get the following Lemma.
\begin{lemma} \label{lem: constant_approx_embedding} For the metric space $(X \cup Y, \dist')$ the following holds.
    \begin{enumerate}
        \item For $x,y \in X $, $\dist'(x,y) = \dist(x,y)$
        \item For $x \in X$ and $y \in Y$, $\dist(x,y) \leq \dist'(x,y) \leq 6 \dist(x,y)$
        \item The doubling dimension is $O(\ddim(X))$
    \end{enumerate}
\end{lemma}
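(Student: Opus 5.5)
Write $\bar z$ for the $2$-approximate nearest neighbor of $z$ in $X$, so $\bar z = z$ for $z \in X$ and $\dist(z,\bar z) \le 2\dist(z,X)$ in general. We verify the three items of \Cref{lem: constant_approx_embedding} directly from the definition of $\dist'$, using only the triangle inequality and the doubling property of $X$.

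\textbf{Item 1.} For $x,y \in X$ we have $\bar x = x$ and $\bar y = y$. Hence $\dist'(x,y) = |0-0| + \dist(x,y) = \dist(x,y)$.

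\textbf{Item 2, lower bound.} Take $x \in X$ and $y \in Y$, so $\bar x = x$ and
\[
\dist'(x,y) = \dist(y,\bar y) + \dist(x,\bar y).
\]
By the triangle inequality $\dist(x,y) \le \dist(x,\bar y) + \dist(\bar y,y)$, which is exactly $\dist'(x,y)$.

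\textbf{Item 2, upper bound.} First, $\dist(y,\bar y) \le 2\dist(y,X) \le 2\dist(x,y)$ because $x \in X$. Second,
\[
\dist(x,\bar y) \le \dist(x,y) + \dist(y,\bar y) \le 3\dist(x,y).
\]
Adding the two bounds gives $\dist'(x,y) \le 5\dist(x,y) \le 6\dist(x,y)$.

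\textbf{$\dist'$ is a metric.} Symmetry is clear. For the triangle inequality, $|a-c| \le |a-b| + |b-c|$ handles the first term and $\dist(\bar x,\bar z) \le \dist(\bar x,\bar y) + \dist(\bar y,\bar z)$ handles the second. Zero distance between distinct points only occurs in degenerate ties, which we identify as is standard.

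\textbf{Item 3, setting up the embedding.} This is the main step. The map $z \mapsto (\bar z, \dist(z,\bar z))$ is an isometry from $(X\cup Y,\dist')$ into the product $X \times \RR_{\ge 0}$ equipped with the $\ell_1$ product metric $\dist(\cdot,\cdot) + |\cdot - \cdot|$. The real line has doubling dimension $1$. The $\ell_1$ product of two metrics has doubling dimension at most the sum of their dimensions, up to a constant: a $\dist'$-ball of radius $r$ lies inside the product of a radius-$r$ ball in $X$ and an interval of length $2r$. We cover the $X$-ball by $2^{O(\ddim)}$ balls of radius $r/4$ (two halvings) and the interval by $O(1)$ intervals of length $r/2$. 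Each resulting product cell is contained in a $\dist'$-ball of radius $r/2$.

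\textbf{Item 3, restricting to the image.} A subset of a doubling space has doubling dimension at most twice that of the ambient space, since covering balls may be recentered at points of the subset by doubling the radius once more. Applying this to the image of $X\cup Y$ gives doubling dimension $O(\ddim(X))$.

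\textbf{Possible obstacle.} The only point needing care is the recentering inside the image in Item 3; the factor-$2$ subset argument resolves it, as in \cite[Lemma 2.4]{Har-PeledK13}.
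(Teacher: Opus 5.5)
Your proof is correct and follows essentially the paper's route. The paper just cites \cite[Lemma 2.4]{Har-PeledK13} together with $\dist(y,\bar y)\le 2\dist(y,X)$, and your isometric embedding $z\mapsto(\bar z,\dist(z,\bar z))$ into the $\ell_1$-product $X\times\RR$, followed by the factor-$2$ subset bound, is essentially that cited argument written out. Your direct estimate even gives the constant $5$ instead of $6$.

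One small wording point: $\dist'(z,w)=0$ with $z\neq w$ happens whenever $\bar z=\bar w$ and $\dist(z,\bar z)=\dist(w,\bar w)$, which is not really a ``degenerate tie.'' This is harmless, though, since $\dist'$ is then simply a pseudometric and your covering argument is unaffected.
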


Computing $\bar{y}$ for some $y \in Y$ can be done in $O(2^{O(\ddim(X))}n\log \Delta)$ time using the data structure of Lemma $\ref{lemma:ANN_doubling_data}$.

We have designed an embedding of a metric space into a doubling metric in such a way that pairwise distances between $X$ and $Y$ and within $X$ are maintained upto a constant factor. Since only these distances appear in the objective function, we obtain that any constant factor solution w.r.t. $(X\cup Y, \dist')$
is also a constant factor approximation in $(X\cup Y, \dist)$.
Since we can evaluate $\dist'(p,q)$ for arbitrary $p,q\in X\cup Y$ in constant time per query (assuming constant time access to $\dist(p,q)$) we can use the previously known algorithms and get the same running times.

\section{Removing Dependency on Aspect Ratio}\label{section: remove_aspect_ratio}

In the following we will show how to transform our input instance in near linear time to some instance that has aspect ratio poly-logarithmic in $n$ and $m$, while only incurring a factor $(1+\eps)$ on the pairwise distances. Our construction is independent of the case wether $X$ or $Y$ is doubling. We will formulate it with respect to $X$ being doubling but the construction for the other case is symmetric.

We start by stating a data-structure to compute an $O(n)$-ANN in $X$ for a point $y$ that has construction and query time independent of the aspect ratio. We will further utilize this to compute a $O(n)$-approximation for the facility location or $k$-median problem, which will be useful for our instance transformation. 

We show that the data-structure of \cite[Lemma 4.2]{Har-PeledM06} works for partially doubling metrics. 
\begin{lemma}\label{lem: linear-ann}
    Let $(X \cup Y,\dist)$ be a metric space with $|X| = n$ and $\ddim(X) \leq \ddim$. Then there is an algorithm that builds in $2^{O(\ddim)}O(n\log n)$ time a data-structure, that given a query point $y\in Y$, returns a $2n$-ANN of $y \in X$. The query time is $2^{O(\ddim)}O(\log n)$.
\end{lemma}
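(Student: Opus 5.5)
The plan is to reuse the Har-Peled--Mendel construction of \cite[Lemma 4.2]{Har-PeledM06}, built only on $X$, and to show that its query guarantee survives when the query point lies outside the doubling set. First, I would build in $2^{O(\ddim)}O(n\log n)$ time the structure of \cite{Har-PeledM06} on $(X,\dist)$. That construction needs only distances within $X$ and $\ddim(X)\le\ddim$, so it applies unchanged. It gives a spanner-like net-tree (or compressed quadtree-like hierarchy) together with a linear-size approximate nearest-neighbor structure whose query cost is $2^{O(\ddim)}O(\log n)$, independent of the aspect ratio. The structure is a hierarchy of cluster centers: a query descends levels, keeps at each level the $2^{O(\ddim)}$ candidate centers within a constant factor of the current best distance, and stops once a cluster's radius falls below the current best distance divided by $n$.

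Second, I would check that each step of the query only evaluates $\dist(y,u)$ for $u\in X$ and uses the triangle inequality in $X\cup Y$. Packing is only applied to net points of $X$ inside balls centered at points of $X$, never at $y$. Where the original proof packs points around the query $q$, I would instead pack around $\proj{X}{y}$ or around the current best candidate $u^*\in X$. Any $u\in X$ with $\dist(y,u)\le c\,\dist(y,u^*)$ satisfies $\dist(u,u^*)\le(1+c)\dist(y,u^*)$, so the candidate set at each level is a packing inside a ball around a point of $X$. Its size is therefore $2^{O(\ddim)}$ by \Cref{lemma:packing}.

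Third, for the approximation factor $2n$, the argument is that the hierarchy has depth $O(\log n)$ after compression. A cluster abandoned at some level contains points within its radius of the kept center, so the true nearest neighbor $x^*$ lies in some surviving cluster whose center $u$ satisfies $\dist(y,u)\le \dist(y,x^*)+\mathrm{rad}$. The stopping rule, when the radius is at most $\dist(y,u^*)/n$, accumulates an error that is at most a factor $(1+O(1/n))$ per step. Alternatively, in the Har-Peled--Mendel style where a cluster of size $s$ has radius at most $s$ times its inner separation, summing gives the $2n$ bound.

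The main obstacle is this third step. It requires pinning down which variant of \cite[Lemma 4.2]{Har-PeledM06} is used (a ring-separator / compressed net-tree) and confirming that its $O(n)$ factor argument uses no packing around the query. I would first try to rewrite it with every packing ball recentered at $u^*\in X$, losing only constant factors that are absorbed into $2n$.
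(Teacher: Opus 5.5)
Your top-level observation matches the paper. You build the Har-Peled--Mendel structure on $X$ alone and note that a query only evaluates $\dist(y,u)$ for $u\in X$ plus the triangle inequality in $X\cup Y$. The problem is that the structure you describe and analyze is not the one that yields this lemma, and your third step fails as stated.

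A net-tree descent that keeps $2^{O(\ddim)}$ candidates per level and stops once the radius drops below $\dist(y,u^*)/n$ costs time proportional to the number of levels traversed. For an uncompressed net-tree that is $\Theta(\log\Delta)$. A compressed net-tree does not have depth $O(\log n)$: for $X=\{2^i : i\in[n]\}$ on the line it is a path of length $n$. Aspect-ratio independence is the whole point here, since the lemma feeds \Cref{lem: transform_to_bounded_aspect_ratio}, so ``depth $O(\log n)$ after compression'' cannot simply be asserted. The ``$(1+O(1/n))$ error per step'' and ``radius at most $s$ times inner separation'' accounting does not amount to an argument for the factor $2n$.

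The missing idea is the \emph{ring-separator tree}. It is a binary tree on $X$ whose node $v$ stores $x_v\in X$ and $r_v$ with two properties:
\begin{itemize}
\item $B_X(x_v,r_v)$ contains between a $1/(2\ddim^3)$ and a $1-1/(2\ddim^3)$ fraction of the node's points, which gives depth $O(\ddim^3\log n)$;
\item the annulus $B_X(x_v,(1+\frac{1}{2n})r_v)\setminus B_X(x_v,(1-\frac{1}{2n})r_v)$ is empty.
\end{itemize}
The doubling dimension of $X$ is used only to construct this tree. The query goes left iff $\dist(y,x_v)\le r_v$ and returns the best center visited.

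Let $w$ be the node where the search path leaves the true nearest neighbor $y^*$. There are two cases.
\begin{itemize}
\item If $\dist(y,x_w)\le r_w$ and $y^*$ lies outside the ball, the empty annulus gives $\dist(y,y^*)\ge r_w/(2n)$, hence $\dist(y,x_w)\le 2n\,\dist(y,y^*)$.
\item If $\dist(y,x_w)=r_w+t$ with $t>0$ and $y^*$ lies inside the ball, then $\dist(y,y^*)\ge t+r_w/(2n)$, and $(r_w+t)/(t+r_w/(2n))\le 2n$.
\end{itemize}
No packing argument around any point is needed in the query, so your re-centering of packing balls at $u^*$ is beside the point. What you need instead are the balance and empty-ring properties, which give both the $O(\log n)$ depth and the $2n$ factor.
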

The following proof is identical to that of Lemma 4.2 from \cite{Har-PeledM06}. We give our own formulation to keep the paper self-contained.
\begin{proof}

By applying the data-structure of Lemma 4.2 in \cite{Har-PeledM06} on $X$ we get a binary search tree $S$, in which each vertex of the tree $v$ is associated with a point $x_v \in X$ and some radius $r \in \RR_{\geq 0}$. Each vertex $v$ in $S$ satisfies that
\[\frac{n}{2\ddim^3} \leq |B_X(x_v,r_v)|\leq (1-\frac{1}{2\ddim^3})n\] and $B_X(x_v,(1+\frac{1}{2n}r_v)) \setminus B_X(x_v,(1-\frac{1}{2n}r_v)) = \emptyset$. The left subtree of $v$ is associated with points in $B_X(x_v,r_v)$ and the right subtree with points in $X \setminus B_X(x_v,r_v)$. The depth of $S$ is $O(\ddim^3 \log n)$.

Given some $y \in Y$ we can find a $2n$-ANN in $X$ in time $O(\ddim^3 \log n)$ as follows. Let $u$ be the root of $S$, then if $\dist(y,x_u) \leq r_u$ recurse of the left subtree and otherwise on the right. Return the nearest point to $y$ among all $x_v$, where $v$ is a vertex in $S$ that was traversed this way.

Let $y^*$ be the nearest neighbor of $y$ in $X$ and let $y'$ be the output of the described procedure. Let $w$ be the lowest common ancestor of the vertices in $S$ that correspond to $y$ and $y'$. Assume that $y'$ is contained in the left subtree and $y^*$ in the right subtree of $w$. Then,  by the properties of the tree  $\dist(y,x_w) \leq r_w$ and $\dist(y,y^*)\geq r_w/2n$ implying that $x_w$ is already a $2n$-ANN.
Now consider the other case where $y^*$ is contained in the left subtree and $y'$ in the right subtree of $w$. Then $\dist(y,y') \leq \dist(y,x_w)$ and by the property of the tree it holds $\dist(y,y^*) \geq r_w/2n + (\dist(x_w,y) - r_w)$ and the ratio of these two terms is at most $2n$ and $x_w$ is therefore a $2n$-ANN of $y$ in $X$.
\end{proof}

Now let $(X\cup Y,\dist'')$ be the metric spaced introduced in section \ref{sec: constant-approx-algos} with the exception that $\bar{x}$ is an $O(n)$-ANN for $x \in Y$.
Then Lemma \ref{lem: constant_approx_embedding} implies the following Corollary.

\begin{corollary}

 \label{cor: linear_approx_embedding} For the metric space $(X \cup Y, \dist'')$ the following holds.
    \begin{enumerate}
        \item For $x,y \in X $, $\dist''(x,y) = \dist(x,y)$
        \item For $x \in X$ and $y \in Y$, $\dist(x,y) \leq \dist''(x,y) \leq O(n) \dist(x,y)$
        \item The doubling dimension is $O(\ddim(X))$
 \end{enumerate}
\end{corollary}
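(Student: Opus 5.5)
The statement is \Cref{cor: linear_approx_embedding}. The plan is to rerun the argument behind \Cref{lem: constant_approx_embedding}, replacing the approximation factor $2$ of $\bar y$ by $\lambda := 2n$ from \Cref{lem: linear-ann}, and to track how $\lambda$ enters each of the three items.

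\textbf{Item 1.} For $x,y\in X$ we have $\bar x = x$ and $\bar y = y$, so $\dist(x,\bar x)=\dist(y,\bar y)=0$. Hence $\dist''(x,y)=0+\dist(x,y)=\dist(x,y)$ directly from the definition.

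\textbf{Item 2, lower bound.} Fix $x\in X$ and $y\in Y$. Then $\bar x = x$ and
\[
\dist''(x,y)=\dist(y,\bar y)+\dist(x,\bar y).
\]
By the triangle inequality this is at least $\dist(x,y)$.

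\textbf{Item 2, upper bound.} Since $x\in X$ and $\bar y$ is a $\lambda$-ANN of $y$ in $X$, we have $\dist(y,\bar y)\le \lambda\,\dist(y,X)\le \lambda\,\dist(x,y)$. The triangle inequality then gives $\dist(x,\bar y)\le \dist(x,y)+\dist(y,\bar y)\le (1+\lambda)\dist(x,y)$. Adding the two bounds yields $\dist''(x,y)\le (1+2\lambda)\dist(x,y)=O(n)\,\dist(x,y)$. This is the only place where the factor $n$ enters.

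\textbf{Item 3.} First I would check that $\dist''$ is a (pseudo)metric on all of $X\cup Y$. It is the sum of the pseudometric $|\dist(\cdot,\bar\cdot)-\dist(\cdot,\bar\cdot)|$ on $\RR$ and the pullback of $\dist$ along the map $p\mapsto \bar p\in X$. For the doubling bound, I would cite Lemma 2.4 of \cite{Har-PeledK13}, as \Cref{lem: constant_approx_embedding} does. Its proof does not use the quality of the nearest-neighbour choice: the space embeds isometrically into $(X,\dist)\times(\RR,|\cdot|)$ with the $\ell_1$ product metric, via $p\mapsto(\bar p,\dist(p,\bar p))$. Both factors are doubling, with dimensions $\ddim(X)$ and $1$. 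An $\ell_1$-product of doubling spaces is doubling with dimension $O(\ddim(X)+1)$. Subsets of doubling spaces have doubling dimension at most twice that of the ambient space. Together these give $O(\ddim(X))$.

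\textbf{Main obstacle.} The only delicate point is item 3: I must make sure the cited lemma's doubling bound is independent of the approximation factor of $\bar y$. The explicit product embedding above settles this. The remaining steps are triangle-inequality bookkeeping.
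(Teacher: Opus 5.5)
Your proposal is correct and matches the paper's argument. The paper gets the corollary by rerunning Lemma~\ref{lem: constant_approx_embedding} (i.e., Lemma~2.4 of \cite{Har-PeledK13}) with the $2$-ANN replaced by the $2n$-ANN of Lemma~\ref{lem: linear-ann}, and your triangle-inequality bound $\dist''(x,y)\le(1+2\lambda)\dist(x,y)$ does exactly this. Your explicit isometric embedding $p\mapsto(\bar p,\dist(p,\bar p))$ into $(X,\dist)\times(\RR,|\cdot|)$ with the $\ell_1$ metric adds something: it shows directly that item~3 does not depend on the ANN factor, which the paper leaves implicit. Strictly, it gives $O(\ddim(X)+1)$, which equals $O(\ddim(X))$ under the usual convention $\ddim(X)\ge 1$.
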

The embedding can be computed in $O(2^{O(\ddim(X))}n \log n)$ time using the data structure of Lemma \ref{lem: linear-ann}.
One can now combine the embedding to metric $(X \cup Y,\dist'')$ and constant factor algorithms for $k$-median or facility location problem \cite{Cohen-AddadFS21} to get an $O(n)$-approximation for the respective clustering problem in near linear time.

Next we give the transformation to an input instance that has bounded aspect ratio. The idea is to first apply the metric embedding of Lemma \ref{cor: linear_approx_embedding} to $(X\cup Y,\dist)$ to get the metric space $(X\cup Y,\dist'')$. 
and then apply the transformation given by Lemma 32 in \cite{Cohen-AddadFS21}.

\begin{lemma}\label{lem: transform_to_bounded_aspect_ratio}
    Let $P$ denote the facility location problem or the $k$-Median problem.
    Given $(X\cup Y,\dist)$ with $\ddim(X) \leq \ddim$ and $|X \cup Y| = n$, $\eps >0$. Then one can compute in time $\tilde{O}(n+m)$ a set of instances $(X_1\cup Y_1 ,\dist_1),\dots,  (X_\ell\cup Y_\ell ,\dist_\ell)$ s.t.

    \begin{enumerate}
        \item $X= \bigcup_{i=1}^\ell X_i$ and $Y = \bigcup_{i=1}^\ell Y_i$
        \item For $i \in [\ell]$, the instance $(X_i\cup Y_i,\dist_i)$ has aspect ratio $O(n^6/\varepsilon)$
        \item Let $(X\cup Y,\bar{\dist})$ be the metric where $\bar{\dist}(x,y) = \dist_i(x,y)$ for $x,y\in X_i\cup Y_i$ and  $\bar{\dist} = \infty$ otherwise. Let $\opt$ be the optimal value for problem $P$. Then it holds
        \begin{itemize}
            \item There exists a solution on $(X\cup Y,\bar{\dist})$ with cost $(1+\eps/n) \opt$
            \item Every solution on $(X\cup Y,\bar{\dist})$ for problem $P$ with cost $A$ induces a solution with cost at most $A + \eps \opt /n$ for $(X \cup Y,\dist)$
        \end{itemize}
    \end{enumerate}
\end{lemma}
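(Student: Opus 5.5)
We first apply the embedding of \Cref{cor: linear_approx_embedding} to replace $(X\cup Y,\dist)$ by the doubling metric $(X\cup Y,\dist'')$. This takes $2^{O(\ddim)}O(n\log n)$ time via \Cref{lem: linear-ann}. In $\dist''$, distances inside $X$ are unchanged and $X$--$Y$ distances are distorted by at most a factor $O(n)$. Combining this with the near-linear constant-factor algorithms of \cite{Cohen-AddadFS21}, as described after \Cref{cor: linear_approx_embedding}, we obtain a value $A_0$ with $\opt\le A_0\le O(n)\cdot\opt$ for problem $P$ in the original metric. The embedding is used only to obtain $A_0$ and to run the clustering step below in a space where the tools of \cite{Cohen-AddadFS21} apply; all final distances $\dist_i$ are restrictions of the original $\dist$.

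Next we follow \cite[Lemma 32]{Cohen-AddadFS21}. Set the two thresholds
\[
r_{\min}=\frac{\eps A_0}{n^3}, \qquad r_{\max}=n^2 A_0 .
\]
We connect two points whenever their $\dist''$-distance is at most $r_{\max}/n$. We then take connected components, computed approximately in near-linear time using a spanner or net-tree on the doubling metric $\dist''$. Each component becomes one instance $(X_i\cup Y_i,\dist_i)$, which gives item 1.

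Within a component, all pairwise $\dist$-distances are at most $r_{\max}$ by chaining along at most $n$ edges, and $\dist\le\dist''$. Below $r_{\min}$ we snap distances: points of the same component are merged via an $r_{\min}$-net, or equivalently every distance is raised to $\max(\dist,r_{\min})$, which is still a metric. This gives aspect ratio at most $r_{\max}/r_{\min}=O(n^5/\eps)$, within the claimed $O(n^6/\eps)$ bound, which is item 2.

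For item 3, we first argue that no optimal solution uses a cross-component pair. A client $x$ and a facility $y$ in different components satisfy $\dist''(x,y)>r_{\max}/n$, hence $\dist(x,y)\ge \dist''(x,y)/O(n) > nA_0/O(n)\ge \opt$ once the constants are chosen. Such a pair therefore cannot appear in any solution of cost at most $\opt$. For $k$-median we additionally distribute the $k$ centers across components, which is fine since the optimum's centers already partition that way.

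The snapping changes each client's cost by at most $r_{\min}$, so the total change is at most $n\cdot\eps A_0/n^3\le \eps\opt/n$ after adjusting constants. This yields both bullet points: the optimum survives with cost $(1+\eps/n)\opt$, and any solution's cost transfers back with additive error $\eps\opt/n$. The running time is dominated by the embedding and the approximation step, which is $\tilde O(n+m)$.

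The main obstacle is doing the component computation in near-linear time while only $X$ is doubling. We intend to handle this by working entirely in $\dist''$, which is doubling, and attaching each $y\in Y$ through its ANN $\bar y\in X$. The threshold test for $y$ then reduces to one on $\dist(y,\bar y)$ plus the structure on $X$.
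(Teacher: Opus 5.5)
Your proposal is correct and follows essentially the paper's route: pass to the doubling pseudometric $\dist''$, run the construction of \cite[Lemma 32]{Cohen-AddadFS21} there (threshold components plus raising small distances), and partition the original instance accordingly. You make the thresholds explicit and calibrate them against the $O(n)$-approximate value $A_0$, which in fact yields aspect ratio $O(n^5/\eps)$. The only slip is that $\dist\le\dist''$ holds only for pairs with an endpoint in $X$: for $y,y'\in Y$, $\dist''(y,y')$ can be $0$ while $\dist(y,y')$ is huge. So bound $Y$--$Y$ distances by routing through a client of the component, and handle client-free components separately, e.g.\ by truncating their distances, since no client--facility distance is involved there.
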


\begin{proof}
    We first compute the metric embedding $(X\cup Y,\dist'')$, which preserves all distances $\dist(x,y)$ with $x \in X$ and $y \in X\cup Y$ within factor $O(n)$. 
    We will further equip the embedding $(X \cup Y,\dist'')$ with the following mechanism. Let $\gamma$ be the cost of an constant approximation for a problem $P$.
    Whenever we evaluate a distance $\dist''(x,y) < \eps \gamma/n^3$ we set the distance $\dist''(x,y) = \eps \gamma/n^3$. By doing so we ensure that the smallest distance that is queried on $(X\cup Y,\dist'')$ is at least $\eps \gamma/n^3$. Note that this satisfies the triangle inequality. Doing so induces an additive error of $O(\eps \opt /n^3)$.
    We then apply the construction given in \cite{Cohen-AddadFS21} on $(X \cup Y, \dist'')$ and partition $(X \cup Y,\dist)$ accordingly. By using  $\dist''$ we incur an additional factor $O(n)$ to the aspect ratio.
\end{proof}

The following Lemma corresponds to Lemma 33 in \cite{Cohen-AddadFS21} and the proof is analogous. It will allow us to trade the logarithmic dependency on $\Delta$ for additional factors that are only poly-logarithmic in the size of the input instance.

\begin{lemma}\label{lem:merge_instances}
    Let $P$ denote the facility location problem or the $k$-Median problem. Given a set of instances $(X \cup Y,\dist)$ and $(X_1\cup Y_1 ,\dist_1),\dots,  (X_\ell\cup Y_\ell ,\dist_\ell)$ as in Lemma \ref{lem: transform_to_bounded_aspect_ratio} and an algorithm with running time $O(n_i(\log n_i)^c f(\Delta))$ to solve $P$ on instances with $n_i$ points and aspect ratio $\Delta$. Then there exists an algorithm that has running time $O(n(\log n)^{c+2} f(O(n^6/\eps)))$ to solve $P$ on $(X \cup Y,\dist)$.
\end{lemma}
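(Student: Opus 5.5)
The approach is to reduce to the single-instance algorithm by solving each sub-instance $(X_i\cup Y_i,\dist_i)$ separately and then combining the results, as in Lemma 33 of \cite{Cohen-AddadFS21}. The sub-instances are computed by \Cref{lem: transform_to_bounded_aspect_ratio}. Because distances between different parts are infinite in $\bar{\dist}$, a feasible solution on $(X\cup Y,\bar{\dist})$ of finite cost assigns every client of $X_i$ to a facility of $Y_i$. Therefore $\bar{\dist}$-costs decompose as a sum over the parts.

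For facility location the plan is straightforward. First, run the given algorithm on every part, at cost $O(n_i(\log n_i)^c f(O(n^6/\eps)))$ each. Second, take the union $\widehat F=\bigcup_i \widehat F_i$. Its $\bar{\dist}$-cost is $\sum_i \floc(X_i,\widehat F_i)\le(1+\eps)\sum_i \opt_i$. By the first bullet of \Cref{lem: transform_to_bounded_aspect_ratio}, $\sum_i\opt_i\le(1+\eps/n)\opt$. By the second bullet, the induced solution on $(X\cup Y,\dist)$ costs at most this plus $\eps\opt/n$. Since $\sum_i n_i=O(n)$ up to the duplication incurred by the construction (at most a $\log n$ factor in \cite{Cohen-AddadFS21}), the running time fits the bound.

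For $k$-median the difficulty is that the budget $k$ must be split among the parts. For each part $i$ and each discretized cost threshold $\Gamma$, I would compute the minimum number of centers $k_i(\Gamma)$ needed to achieve cost at most $\Gamma$ on that part. The thresholds $\Gamma$ are powers of $(1+\eps/\log n)$ in a range of polynomial ratio determined by the $O(n)$-approximation $\gamma$. Alternatively, I would run the base algorithm for each $k'\le\min(k,|X_i|)$, but that is too slow. Instead I would exploit the DP structure: the algorithms of \Cref{sec:alg_kmedian_bounded_centers,sec:alg_kmedian_bounded_clients} already output, at the root, the minimum number of facilities for every $\Gamma$. So a single run per part yields the whole profile $k_i(\cdot)$ over $O(\log n/\eps)$ values of $\Gamma$.

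A top-level knapsack-style combination then picks thresholds $\Gamma_i$ minimizing $\sum_i\Gamma_i$ subject to $\sum_i k_i(\Gamma_i)\le k$. I would carry this out as a binary merge tree over the parts, with rounding of partial sums. The depth of the tree is $\log\ell$ and each merge loses a $(1+\eps/\log n)$ factor, which together with the $\log$-many thresholds accounts for the additional $(\log n)^2$ factor.

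The main obstacle is the $k$-median combination step. The base algorithm must expose the cost-versus-centers profile rather than a single solution, and the error from rounding partial sums must not compound beyond $1+O(\eps)$ across merge levels. I would handle the latter with the same rescaling argument as \Cref{lemma:discretization_is_ok}.
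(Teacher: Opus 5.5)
\textbf{Assessment.} The paper gives no argument beyond saying that the proof is analogous to Lemma~33 of \cite{Cohen-AddadFS21}. Your outline is that argument: solve the parts independently, take the union for facility location, and for $k$-median merge the per-part root tables (minimum number of centers per cost threshold) along a tree of logarithmic depth. The gap is that your correctness argument runs part by part, and the base algorithms do not support that.

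First, the algorithms the lemma is applied to succeed only with constant probability. With up to $n+m$ parts, the event that every $\widehat F_i$ is a $(1+\eps)$-approximation can have exponentially small probability. So the inequality $\sum_i \floc(X_i,\widehat F_i)\le(1+\eps)\sum_i\opt_i$ is not justified.

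Second, for $k$-median the root table of part $i$ is not ``the minimum number of centers achieving cost at most $\Gamma$ on that part''. It is that quantity for the random portal-respecting cost of the modified clients $X_i'$. The only reason this table is not too pessimistic is the structural lemma. That lemma builds $F$ from the \emph{global} optimum $F^*$ and a constant approximation $S$ with $|S|=|F^*|=k$. It controls only the total $|F|\le k$, through one Markov bound on the bad facilities of all of $S$ (\Cref{lemma:kmedian_bounded_centers_good_solution_construction}). Inside a single part, $|F\cap Y_i|$ can exceed $|F^*\cap Y_i|$. Worse, if part $i$ used its own $k$-center approximation $S_i$, the range $[\median(X_i,S_i)/n_i,\,2\median(X_i,S_i)]$ need not even contain the threshold $\median(X_i,F^*\cap Y_i)$.

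\textbf{How to close it.} Treat all parts as a single run.
\begin{enumerate}
\item Compute one constant approximation $S$ and use $S\cap Y_i$ in part $i$ to define bad points and move clients.
\item Use one grid of powers of $(1+\epsilon')$ for all parts. Its lower end must be small enough that rounding up the parts of negligible cost adds at most $\eps\opt$ in total.
\item Regard the disjoint union of the per-part decompositions as one decomposition of $(X\cup Y,\bar{\dist})$.
\end{enumerate}
The structural lemma then applies once. With constant probability it gives $F$ with $|F|\le k$ whose portal-respecting costs $c_i$ on the parts satisfy $\sum_i c_i\le(1+\eps)\opt$. By \Cref{lemma:G_bounded_by_g} (or its analogue for low-dimensional clients), applied inside each part, the root table at threshold $(1+O(\eps))c_i$ is at most $|F\cap Y_i|$. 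Hence your merge has a feasible selection of total cost $(1+O(\eps))\opt$.

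Conversely, every selected entry is realized by an actual solution. Its true cost exceeds its threshold by at most the movement error of the bad clients, and this error is controlled by one global Markov argument.

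For facility location the same global view suffices. Alternatively, boost each part to failure probability $1/(n+m)^2$ with $O(\log n)$ repetitions, keeping the output of smallest (approximately evaluated) cost; this stays within the $(\log n)^{c+2}$ budget.
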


\section{Missing Proofs in \Cref{sec:kmedian_bounded_centers}}

\subsection{Proof of \Cref{lemma:kmedian_bounded_centers_good_solution_construction}}
\label{appendix:proof_of_lemma_kmedian_bounded_centers_construction}

\lemmakmedianboundedcentersgoodsolutionconstruction*

\begin{proof}[Proof of \Cref{lemma:kmedian_bounded_centers_good_solution_construction}]

    We first state how to construct $F$.
    Let $F^* \subseteq Y$ be the optimal $k$-median solution.
    Wlog, assume $|F^*| = |S| = k$, since otherwise we can add arbitrary facilities to them.
    For $f \in F^*$, recall that $\proj{S}{f}$ is the point in $S$ closest to $f$.
    For $s \in S$, denote $S^{-1}_{F^*}(s) = \{f \in F^* \colon \proj{S}{f} = s\}$.
    Denote $S_0 = \{s \in S \colon |S^{-1}_{F^*}(s)| = 0\}$, 
    $S_1 = \{s \in S \colon |S^{-1}_{F^*}(s)| = 1\}$
    and $S_{\geq 2} = \{s \in S \colon |S^{-1}_{F^*}(s)| \geq 2\}$.
    Denote $S_{\geq 1} := S_1 \cup S_{\geq 2}$.
    For $s \in S_{\geq 1}$, denote $f_s := \argmin_{f \in S^{-1}_{F^*}(s)} \dist(f, s)$, i.e., $f_s$ is the closest point to $s$ among $S^{-1}_{F^*}(s)$, breaking ties arbitrarily.
    Denote $F_S := \{f \in F^* \colon \exists s \in S, \text{ s.t. } f = f_s\}$.

    Following~\cite{Cohen-AddadFS21}, the center set $F$ is constructed as follows:
    \begin{itemize}
        \item \textbf{Step 1:} Among the centers in $F^* \setminus F_S$, remove from $F^*$ a subset of size $1000 c \epsilon |F^* \setminus F_S|$ that yields the minimum cost increase after removal, where $c$ is the same constant in \Cref{lemma:badly_cut}.\footnote{This step requires us to assume $\epsilon \leq 1/(1000c)$. Since $1000 c$ is a constant, there is no loss of generality to make such assumption.}
        The resulting center set is denoted as $\overline{F}^*$.
        Let $F = \overline{F}^*$.
        \item \textbf{Step 2:} For $s \in S_0$, if $s$ is a bad center, then add $s$ to $F$.
        \item \textbf{Step 3:} For $s \in S_{\geq 1}$, if $s$ is a bad center, then add $s$ to $F$, and remove $f_s$ from $F$.
    \end{itemize}
    Formally, $F := \overline{F}^* \setminus \{f_s \colon s \in \Bad_S \cap S_{\geq 1}\}  \cup \Bad_S$.
    We note that $F$ contains all bad facilities in $S$.
    This proves property (a).

    \paragraph{Property (b).}
    We first show that $\dist(x, \overline{F}^*) \leq 3 \dist(x, F^*) + 2 \dist(x, S)$.
    For $x \in X \cup Y$, if $\proj{F^*}{x}$ is not removed, then $\dist(x, \overline{F}^*) = \dist(x, F^*)$.
    If $\proj{F^*}{x}$ is removed, then $\proj{F^*}{x} \notin F_S$.
    Denote $s = \proj{S}{\proj{F^*}{x}}$. 
    Recall that $f_s$ is the closest point to $s$ in $S_{F^*}^{-1}(s)$, and $f_s \in \overline{F}^*$.
    We have 
    \begin{align*}
        \dist(x, \overline{F}^*) 
        &\leq \dist(x, f_s) &&\text{Since $f_s \in \overline{F}^*$}\\
        &\leq \dist(x, \proj{F^*}{x}) + \dist(\proj{F^*}{x}, s) + \dist(s, f_s) &&\text{By triangle inequality}\\ 
        &\leq \dist(x, F^*) + 2 \dist(\proj{F^*}{x}, s) &&\text{By definition of $f_s$}\\
        &= \dist(x, F^*) + 2 \dist(\proj{F^*}{x}, S) && \text{By definition of $s$}\\ 
        &\leq 3 \dist(x, F^*) + 2 \dist(x, S) &&\text{By triangle inequality}.
    \end{align*}
    Hence, 
    \begin{align}
        \dist(x, \overline{F}^*) \leq \begin{cases}
            \dist(x, F^*), & \text{if } \proj{F^*}{x} \in \overline{F}^*; \\
            3 \dist(x, F^*) + 2 \dist(x, S). &\text{if } \proj{F^*}{x} \notin \overline{F}^*.
        \end{cases}
        \label{eqn:d_x_Fbar}
    \end{align}

    Next, we bound $\dist(x, F)$.
    For $x \in X \cup Y$, if $\proj{\overline{F}^*}{x} \in F$, then $\dist(x, F) \leq \dist(x, \overline{F}^*)$.
    If $\proj{\overline{F}^*}{x} \notin F$, then there exists $s \in \Bad_S \cap S_{\geq 1}$, such that $\proj{\overline{F}^*}{x} = f_s$.
    Moreover, $s \in F$.
    Therefore, 
    \begin{align*}
        \dist(x, F) \leq \dist(x, s)
        \leq \dist(x, f_s) + \dist(f_s, s) 
        = \dist(x, \overline{F}^*) + \dist(\proj{\overline{F}^*}{x}, S)
        \leq 2 \dist(x, \overline{F}^*) + \dist(x, S).
    \end{align*}
    Hence, 
    \begin{align}
        \dist(x, F) \leq \begin{cases}
            \dist(x, \overline{F}^*), &\text{if } \proj{\overline{F}^*}{x} \in F; \\
            2 \dist(x, \overline{F}^*) + \dist(x, S), &\text{if } \proj{\overline{F}^*}{x} \notin F.
        \end{cases}
        \label{eqn:d_x_F}
    \end{align}

    Combining \eqref{eqn:d_x_F} with \eqref{eqn:d_x_Fbar}, we have $\dist(x, F) \leq 6 \dist(x, F^*) + 5 \dist(x, S)$.

    \paragraph{Property (c): size of $F$.}
    By step 2, every bad center in $S_0$ increases the size of $F$ by $1$.
    By step 3, bad centers in $S_{\geq 1}$ do not increase the size of $F$.
    Therefore, 
    \[
    |F| = |F^*| - 1000 c \epsilon |F^* \setminus F_S| + |\Bad_S \cap S_0|.
    \]

    By \Cref{lemma:badly_cut}, 
    \begin{align*}
        \E[|\Bad_S \cap S_0|] = \sum_{s \in S_0} \Pr[s \in \Bad_S]
        \leq c \epsilon \cdot |S_0|.
    \end{align*}
    By Markov's inequality, with probability $0.999$, 
    $|\Bad_S \cap S_0| \leq 1000c \epsilon |S_0|$.
    Observe that $|S_0| = |F^* \setminus F_S|$.
    Therefore, with probability $0.999$, 
    \[
    |F| \leq |F^*| - 1000 c \epsilon |F^* \setminus F_S| + 1000 c \epsilon |S_0| 
    = |F^*|
    \leq k.
    \]
   
    \paragraph{Property (c): cost of $F$.}
    We first show that $\median(X, \overline{F}^*) \leq (1 + \epsilon) \optkmed(X, Y)$ by an averaging argument:
    Consider removing from $F^*$ a \emph{random} subset of size $\epsilon |F^* \setminus F_S|$ and denote the resulting (random) set to be $\widetilde{F}^*$.
    We can obtain a similar bound as \eqref{eqn:d_x_Fbar}.
    \begin{align*}
        \dist(x, \widetilde{F}^*) \leq \begin{cases}
            \dist(x, F^*), & \text{if } \proj{F^*}{x} \in \widetilde{F}^*; \\
            3 \dist(x, F^*) + 2 \dist(x, S). &\text{if } \proj{F^*}{x} \notin \widetilde{F}^*.
        \end{cases}
    \end{align*}
    Moreover, $\proj{F^*}{x} \notin \widetilde{F}^*$ happens with probability at most $\epsilon$.
    Therefore, 
    \begin{align*}
        \E[\median(X, \widetilde{F}^*)]
        &= \sum_{x \in X} \E[\dist(x, \widetilde{F}^*)] 
        \leq \sum_{x \in X} \dist(x, F^*) + O(\epsilon) \cdot (3 \dist(x, F^*) + 2 \dist(x, S)) \\
        &= (1 + O(\epsilon)) \optkmed(X, Y) + O(\epsilon) \median(X, S) \\
        &\leq (1 + O(\epsilon)) \optkmed(X, Y).
    \end{align*}
    Since $\overline{F}^*$ minimizes the cost increase after removal, we have 
    \begin{align*}
        \median(X, \overline{F}^*) \leq \E[\median(X, \widetilde{F}^*)] \leq (1 + O(\epsilon)) \optkmed(X, Y).
    \end{align*}

    We next show that $\median(X, F) \leq (1 + O(\epsilon)) \median(X, F^*)$ with probability $0.999$.
    Note that in \eqref{eqn:d_x_F}, 
    Moreover, $\proj{\overline{F}^*}{x} \notin F$ only if $\proj{S}{\proj{\overline{F}^*}{x}} \in \Bad_S$, which happens with probability $c \epsilon$ by \Cref{lemma:badly_cut}.
    Therefore, 
    \begin{align*}
        \E[\median(X, F)] 
        &= \sum_{x \in X} \E[\dist(x, F)] \\ 
        &\leq \sum_{x \in X} \left[
            \dist(x, \overline{F}^*) + O(\epsilon) (2 \dist(x, \overline{F}^*) + \dist(x, S))
        \right]\\
        &= (1 + O(\epsilon)) \median(X, \overline{F}^*)
        + O(\epsilon) \median(X, S) \\
        &\leq (1 + O(\epsilon)) \optkmed(X, Y).
    \end{align*}
    Applying Markov's inequality over (non-negative) random variable $\median(X, F) - \optkmed(X, Y)$, we have 
    $\median(X, F) \leq (1 + O(\epsilon)) \optkmed(X, Y)$ holds with probability $0.999$.
    Rescaling $\epsilon$ concludes the proof.
\end{proof}

\subsection{Proof of \Cref{lemma:discretization_is_ok}: Cost of the Returned Solution}
\label{appendix:proof_of_lemma:discretization_is_ok}

\lemmadiscretizationisok*

Consider an entry $(C, \veca_C, \vecb_C, \Gamma_C)$ in the DP table.
Let $g(C, \veca_C, \vecb_C, \Gamma_C)$ be the value computed by the dynamic program (\Cref{eqn:compute_h_base,eqn:compute_h,eqn:compute_g_C_accelerated}).
Let $G(C, \veca_C, \vecb_C, \Gamma_C)$ be the true value of that entry, i.e., the
minimum number of facilities required to be placed in $C$, such that the revealed cost inside $C$ is at most $\Gamma_C$, which is formally defined in \eqref{eqn:kmedian_DP_table_entry_value}.
Recall that $G(C, \veca_C, \vecb_C, \Gamma_C)$ can be different from $g(C, \veca_C, \vecb_C, \Gamma_C)$, mainly because of the discretization of $\Gamma_C$ into powers of $(1 + \epsilon')$, where $\epsilon' = \frac{\epsilon}{2^{O(\ddim)} \log \Delta}$.
We show in the following lemma that $G$ can be two-sided bounded by $g$.

\begin{lemma}\label{lemma:G_bounded_by_g}
    Let $0 \leq \ell \leq L$ and $C \in \decom_\ell$ be a level $\ell$ cluster.
    Then for every configuration $\veca_C, \vecb_C$, and $\Gamma_C \in [\median(X, S)/n, 2 \median(X, S)]$ being powers of $(1 + \epsilon')$, it holds
    \begin{align*}
        g(C, \veca_C, \vecb_C, \alpha_\ell \Gamma_C) \leq G(C, \veca_C, \vecb_C, \Gamma_C) \leq g(C, \veca_C, \vecb_C, \Gamma_C),
    \end{align*}
    for $\alpha_\ell = (1 + \epsilon')^{\ell \cdot 2^{O(\ddim)}}$.
\end{lemma}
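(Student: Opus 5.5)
We argue by induction on the level $\ell$, proving both inequalities at once for every cluster $C \in \decom_\ell$, every configuration $(\veca_C,\vecb_C)$, and every discretized $\Gamma_C$. Both $g$ and $G$ are minimum facility counts, so they are non-increasing in the budget $\Gamma$. The computed value $g$ minimizes over a restricted family of ways to split the budget among children (powers of $(1+\epsilon')$), and we will compare that family with the unrestricted one. For the base case $\ell=0$ the cluster is a leaf $\{y\}$. The revealed cost $w(y)\min\{a,b\}$ is compared directly with $\Gamma$, with no splitting, so $g=G$ and both inequalities hold with $\alpha_0=1$.

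\textbf{Upper bound $G\le g$.} Suppose $g(C,\veca_C,\vecb_C,\Gamma_C)=K<\infty$. Unrolling \eqref{eqn:compute_g_C_accelerated} and \eqref{eqn:compute_h} gives consistent child configurations $(\veca_D,\vecb_D)$ and discretized budgets $\Gamma_D$ with $\sum_D \Gamma_D\le\Gamma'_C$ and $\sum_D g(D,\veca_D,\vecb_D,\Gamma_D)=K$. By induction, $G(D,\veca_D,\vecb_D,\Gamma_D)\le g(D,\veca_D,\vecb_D,\Gamma_D)$, so each child $D$ has a facility set $F_D$ consistent with its configuration, with revealed cost at most $\Gamma_D$ and $|F_D\cap D| \le g(D,\cdot)$. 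Take $F=\bigcup_D F_D$. Consistency criteria \ref{it:criteria1} and \ref{it:criteria2} show that $F$ realizes $(\veca_C,\vecb_C)$. The newly revealed cost in $C$ is exactly $\Gamma_C-\Gamma'_C$, so the total revealed cost is at most $\Gamma_C$. Hence $G\le K$.

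\textbf{Lower bound $g(\alpha_\ell\Gamma_C)\le G(\Gamma_C)$.} Let $F$ attain $G(C,\veca_C,\vecb_C,\Gamma_C)$. It induces child configurations and true child costs $\gamma_D$ with $\sum_D\gamma_D\le\Gamma'_C$. Round each $\gamma_D$ up to the next power of $(1+\epsilon')$, then multiply by $\alpha_{\ell-1}$; call the result $\Gamma_D$. Both operations keep $\Gamma_D$ a power of $(1+\epsilon')$ as long as $\alpha_{\ell-1}$ is an integer power of $(1+\epsilon')$, which the definition ensures. By induction and monotonicity,
\[
g(D,\veca_D,\vecb_D,\Gamma_D)\le G(D,\veca_D,\vecb_D,\lceil\gamma_D\rceil)\le G(D,\veca_D,\vecb_D,\gamma_D)\le |F\cap D|,
\]
where $\lceil\cdot\rceil$ denotes rounding up to the next power. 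Summing gives $\sum_D\Gamma_D\le(1+\epsilon')\alpha_{\ell-1}\Gamma'_C$.

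The auxiliary DP subtracts budgets one child at a time and rounds each residual. With at most $2^{O(\ddim)}$ children, this costs at most a further factor $(1+\epsilon')^{2^{O(\ddim)}}$. Absorbing it, the split is feasible for budget $\alpha_\ell\Gamma_C$ with $\alpha_\ell=\alpha_{\ell-1}(1+\epsilon')^{2^{O(\ddim)}}$, and therefore $g(C,\cdot,\alpha_\ell\Gamma_C)\le |F\cap C| = G$.

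Values that fall below $\median(X,S)/n$ are clamped to that endpoint; this loses only an additive $\epsilon'$ fraction, which is absorbed later.

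The main obstacle is making the auxiliary DP's residual arithmetic $\Gamma-\Gamma_{i+1}$ well defined and monotone on the power grid. I would define residuals to be rounded down to the grid and check that each rounding loses only one factor $(1+\epsilon')$ per child, so that the per-level loss stays $(1+\epsilon')^{2^{O(\ddim)}}$. I expect that to be the delicate bookkeeping. Once the lemma holds, $\alpha_L=(1+\epsilon')^{L2^{O(\ddim)}}=1+O(\epsilon)$ because $L=O(\log\Delta)$, which gives \Cref{lemma:discretization_is_ok}.
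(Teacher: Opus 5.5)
Your proposal is essentially the paper's proof. It uses induction on $\ell$ and proves the upper bound $G\le g$ by unrolling the DP and combining the child solutions. It proves the lower bound by rounding each child's true cost up to the grid, applying the hypothesis at $\alpha_{\ell-1}$ times that value, and paying one factor $(1+\epsilon')$ per child for the auxiliary DP.

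The residual bookkeeping you defer is done in the paper by choosing grid budgets bottom-up over the children $D_1,\dots,D_\tau$ of $C$. It sets $\Psi_{\tau-1}=\Gamma_{D_\tau}$, and takes $\Psi_i$ to be $\Psi_{i+1}+\Gamma_{D_{i+1}}$ rounded up to the next power of $(1+\epsilon')$. Then every residual $\Psi_i-\Gamma_{D_{i+1}}\ge\Psi_{i+1}$ dominates a grid point, so rounding residuals down costs nothing. Unrolling gives $\Psi_0\le(1+\epsilon')^{\tau-1}\sum_{i}\Gamma_{D_i}$, which is exactly the invariant your top-down round-down scheme needs.
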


We use \Cref{lemma:G_bounded_by_g} to prove \Cref{lemma:discretization_is_ok}.

\begin{proof}[Proof of \Cref{lemma:discretization_is_ok}]
    Applying \Cref{lemma:G_bounded_by_g} to $X' \in \decom_L$ and noting $\alpha_{L} = (1 + \epsilon')^{2^{O(\ddim)} \log \Delta} \leq 1 + \epsilon$,
    we have that for every configuration $\veca, \vecb$ and $\Gamma \in [\median(X, S)/n, 2 \median(X, S)]$, 
    \[g(X', \veca, \vecb, (1 + \epsilon) \Gamma) \leq G(X', \veca, \vecb, \Gamma) \leq g(X', \veca, \vecb, \Gamma).\]

    Recall our algorithm outputs the smallest $\Gamma$, such that there exists $\veca, \vecb$ with $g(X', \veca, \vecb, \Gamma) \leq k$, and $\widehat{F}$ is the solution corresponding to this $\Gamma$.
    Therefore, $G(X', \veca, \vecb, \Gamma) \leq k$.
    By the definition of $G$, this implies the optimal $k$-median value of $X'$ (under portal-respecting distance) is at most $\Gamma$.

    By the minimality of $\Gamma$, for every $\veca, \vecb$, we have $g(X', \veca, \vecb, \Gamma/(1 + \epsilon')) > k$.
    Then $G(X', \veca, \vecb, \frac{\Gamma}{(1 + \epsilon')(1 + \epsilon)}) \geq g(X', \veca, \vecb, \Gamma/(1 + \epsilon')) > k$.
    This implies the optimal $k$-median value of $X'$ (under portal-respecting distance) is greater than $\frac{\Gamma}{(1 + \epsilon')(1 + \epsilon)}$.

    We thus conclude that 
    \begin{align*}
        \sum_{x \in X'} \ccostport(x, \widehat{F}) 
        \leq \Gamma
        < (1 + \epsilon')(1 + \epsilon) \min_{F \subseteq Y, |F| \leq k} \sum_{x \in X'} \ccostport(x, F)
    \end{align*}

    Rescaling $\epsilon$ concludes the proof.
\end{proof}

Finally, we prove \Cref{lemma:G_bounded_by_g}.

\begin{proof}[Proof of \Cref{lemma:G_bounded_by_g}]
    We prove the lemma by induction on $\ell$.
    Throughout, denote by $\tau := |\child(C)|$ for simplicity.
    
    When $\ell = 0$, it corresponds to the base cases of the DP, where the computation of $g$ is accurate.
    Hence, 
    \[
    g(C, \veca_C, \vecb_C, \Gamma_C) = G(C, \veca_C, \vecb_C, \Gamma_C) \geq G(C, \veca_C, \vecb_C, \alpha_0 \Gamma_C) = g(C, \veca_C, \vecb_C, \alpha_0 \Gamma_C).
    \]

    Assume the inequality holds for level $\ell - 1$, and consider any level $\ell$ cluster $C$.
    By the dynamic program (\Cref{eqn:compute_h,eqn:compute_h_base,eqn:compute_g_C_accelerated}), 
    there exists a sequence of values $\Gamma_1, \Gamma_2, \dots, \Gamma_{\tau}$ which are powers of $(1 + \epsilon')$ and satisfy $\sum_i \Gamma_i \leq \Gamma_C'$, together with a sequence of configurations $\veca_1, \vecb_1, \dots, \veca_{\tau}, \vecb_{\tau}$, such that
    \begin{equation}\label{eqn:g_C_w_sum}
        g(C, \veca_C, \vecb_C, \Gamma_C) = \sum_{i = 1}^{\tau} g(D_i, \veca_i, \vecb_i, \Gamma_i).
    \end{equation}
    Since configurations $\{\veca_i, \vecb_i\}$ are consistent with $\veca_C, \vecb_C$, and $\sum_i \Gamma_i \leq \Gamma_C'$, we have 
    \begin{align*}
        G(C, \veca_C, \vecb_C, \Gamma_C) &\leq \sum_{i = 1}^{\tau} G(D_i, \veca_i, \vecb_i, \Gamma_i) && \text{By definition of $G$} \\
        & \leq \sum_{i = 1}^{\tau} g(D_i, \veca_i, \vecb_i, \Gamma_i) && \text{Induction hypothesis} \\
        &= g(C, \veca_C, \vecb_C, \Gamma_C) &&\text{By \eqref{eqn:g_C_w_sum}}.
    \end{align*}

    For the other direction, by the definition of $G$, there exists a sequence of values $\Phi_1, \Phi_2, \dots, \Phi_{\tau}$ which satisfy $\sum_i \Phi_i \leq \Gamma_C'$, together with a sequence of configurations $\veca_1, \vecb_1, \dots, \veca_{\tau}, \vecb_{\tau}$, such that
    \begin{equation}\label{eqn:G_C_w_sum}
        G(C, \veca_C, \vecb_C, \Gamma_C) = \sum_{i = 1}^{\tau} G(D_i, \veca_i, \vecb_i, \Phi_i).
    \end{equation}
    Note that $\Phi_i$'s are no longer necessarily powers of $(1 + \epsilon')$.
    Let $R$ be the operator that rounds every value $v$ to the smallest power of $(1 + \epsilon')$ greater than $v$, i.e., $R(v) = (1 + \epsilon')^{\lceil \log_{(1 + \epsilon')} v \rceil}$.
    Define $\Gamma_i := R(\Phi_i)$ for every $i \in [\tau]$.
    We have
    \begin{align}
        G(C, \veca_C, \vecb_C, \Gamma_C) &= \sum_{i = 1}^{\tau} G(D_i, \veca_i, \vecb_i, \Phi_i) &&\text{By \eqref{eqn:G_C_w_sum}} \notag \\
        & \geq \sum_{i = 1}^{\tau} G(D_i, \veca_i, \vecb_i, \Gamma_i) &&\text{Since $\Gamma_i \geq \Phi_i$} \notag \\
        & \geq \sum_{i = 1}^{\tau} g(D_i, \veca_i, \vecb_i, \alpha_{\ell - 1} \Gamma_i) &&\text{Induction hypothesis}. \label{eqn:G_c_w_prime_sum}
    \end{align}
    Denote $\Gamma_i' := \alpha_{\ell - 1} \Gamma_i$ for short.
    Consider the following sequence of values, which are powers of $(1 + \epsilon')$.
    \begin{align*}
        &\Psi_{\tau - 1} = \Gamma_\tau', \\
        &\Psi_i = R(\Psi_{i + 1} + \Gamma_{i + 1}'), \quad 0 \leq i \leq \tau - 2.
    \end{align*}
    Note that $\{\veca_i, \vecb_i\}$ are consistent with $\veca_C, \vecb_C$, and that $\Psi_{i + 1} + \Gamma_{i + 1}' \leq \Psi_i$, we thus have
    \begin{align*}
        & h(C, D_{\tau - 1}, \veca_C, \vecb_C, \dots, \veca_{\tau - 1}, \vecb_{\tau - 1}, \Psi_{\tau - 1})
        \leq g(D_\tau, \veca_\tau, \vecb_\tau, \Gamma_\tau');\\
        &h(C, D_i, \veca_C, \vecb_C, \dots, \veca_i, \vecb_i, \Psi_i)
        \leq h(C, D_{i + 1}, \veca_C, \vecb_C, \dots, \veca_{i + 1}, \vecb_{i + 1}, \Psi_{i + 1}) \\
        & \hspace{15em} + g(D_{i + 1}, \veca_{i + 1}, \vecb_{i + 1}, \Gamma_{i + 1}'), \qquad 1 \leq i \leq \tau - 2; \\
        &g(C, \veca_C, \vecb_C, \Psi_0 + (\Gamma_C - \Gamma_C'))
        \leq h(C, D_1, \veca_C, \vecb_C, \veca_1, \vecb_1, \Psi_1)
        + g(D_1, \veca_1, \vecb_1, \Gamma_1').
    \end{align*}
    Summing over $0 \leq i \leq \tau - 1$, we have
    \begin{align}
        g(C, \veca_C, \vecb_C, \Psi_0 + (\Gamma_C - \Gamma_C'))
        &\leq \sum_{i = 1}^{\tau} g(D_i, \veca_i, \vecb_i, \Gamma_i') \notag\\
        &\leq G(C, \veca_C, \vecb_C, \Gamma_C) && \text{By \eqref{eqn:G_c_w_prime_sum}.}
        \label{eqn:lb_G_by_g}
    \end{align}
    Finally, note that 
    \begin{align*}
        \Psi_0 &= R(\Psi_1 + \Gamma_1') \\
        &\leq (1 + \epsilon') \Psi_1 + (1 + \epsilon') \Gamma_1' &&\text{By the definition of $R$}\\
        &\leq (1 + \epsilon')^2 \Psi_2 + (1 + \epsilon')^2 \Gamma_2' + (1 + \epsilon') \Gamma_1' \\
        &\leq \dots \\
        & \leq (1 + \epsilon')^{\tau - 1} \Psi_{\tau - 1} + \sum_{i = 1}^{\tau - 1} (1 + \epsilon')^i \Gamma_i' \\
        & \leq (1 + \epsilon)^{\tau - 1} \sum_{i = 1}^{\tau} \Gamma_i' \\
        &= (1 + \epsilon')^{\tau - 1} \alpha_{\ell - 1} \sum_{i = 1}^{\tau} \Gamma_i && \text{Recall $\Gamma_i' = \alpha_{\ell - 1} \Gamma_i$}\\
        & \leq (1 + \epsilon')^{\tau} \alpha_{\ell - 1} \sum_{i = 1}^{\tau} \Phi_i &&\text{Since $\Gamma_i \leq (1 + \epsilon') \Phi_i$}\\
        & \leq (1 + \epsilon')^{\tau} \alpha_{\ell - 1} \Gamma_C' &&\text{Since } \sum_i \Phi_i \leq \Gamma_C'\\
        & \leq (1 + \epsilon')^{\tau} \alpha_{\ell - 1} \Gamma_C - (\Gamma_C - \Gamma_C').
    \end{align*}
    We thus have 
    \[\Psi_0 + (\Gamma_C - \Gamma_C') \leq (1 + \epsilon')^{\tau} \alpha_{\ell - 1}\Gamma_C \leq \alpha_\ell \Gamma_C. \] 
    Combining with \eqref{eqn:lb_G_by_g}, we have $G(C, \veca_C, \vecb_C, \Gamma_C) \geq g(C, \veca_C, \vecb_C, \alpha_\ell \Gamma_C)$, completing the proof.
\end{proof}

\section{Missing Proofs in \Cref{sec:kmedian_bounded_clients}}

\subsection{Proof of \Cref{lemma:kmedian_bounded_clients_good_solution_construction}}
\label{appendix:proof_of_kmedian_bounded_clients_construction}

\lemmakmedianboundedclientsgoodsolutionconstruction*

The proof is essentially the same as \Cref{lemma:kmedian_bounded_centers_good_solution_construction}, so we only provide a sketch here.

\begin{proof}[Proof of \Cref{lemma:fl_bounded_clients_good_solution} (sketch)]
    Let $F^* \subseteq Y$ be the optimal $k$-median solution.
    Wlog, assume $|F^*| = |S| = k$, since otherwise we can add arbitrary facilities to them.
    For $f \in F^*$, recall that $\proj{S}{f}$ is the point in $S$ closest to $f$.
    For $s \in S$, denote $S^{-1}_{F^*}(s) = \{f \in F^* \colon \proj{S}{f} = s\}$.
    Denote $S_0 = \{s \in S \colon |S^{-1}_{F^*}(s)| = 0\}$, 
    $S_1 = \{s \in S \colon |S^{-1}_{F^*}(s)| = 1\}$
    and $S_{\geq 2} = \{s \in S \colon |S^{-1}_{F^*}(s)| \geq 2\}$.
    Denote $S_{\geq 1} := S_1 \cup S_{\geq 2}$.
    For $s \in S_{\geq 1}$, denote $f_s := \argmin_{f \in S^{-1}_{F^*}(s)} \dist(f, s)$, i.e., $f_s$ is the closest point to $s$ among $S^{-1}_{F^*}(s)$, breaking ties arbitrarily.
    Denote $F_S := \{f \in F^* \colon \exists s \in S, \text{ s.t. } f = f_s\}$.

    Following~\cite{Cohen-AddadFS21}, the center set $F$ is constructed as follows:
    \begin{itemize}
        \item \textbf{Step 1:} Among the centers in $F^* \setminus F_S$, remove from $F^*$ a subset of size $1000 c \epsilon |F^* \setminus F_S|$ that yields the minimum cost increase after removal, where $c$ is the same constant in \Cref{lemma:badly_cut_ambient}.\footnote{This step requires us to assume $\epsilon \leq 1/(1000c)$. Since $1000 c$ is a constant, there is no loss of generality to make such assumption.}
        The resulting center set is denoted as $\overline{F}^*$.
        Let $F = \overline{F}^*$.
        \item \textbf{Step 2:} For $s \in S_0$, if $s$ is a bad center, then add $s$ to $F$.
        \item \textbf{Step 3:} For $s \in S_{\geq 1}$, if $s$ is a bad center, then add $s$ to $F$, and remove $f_s$ from $F$.
    \end{itemize}
    Formally, $F := \overline{F}^* \setminus \{f_s \colon s \in \Bad_S \cap S_{\geq 1}\}  \cup \Bad_S$.

    The proofs of properties (a), (b) and (c) are the same as \Cref{lemma:kmedian_bounded_centers_good_solution_construction}.
\end{proof}

\subsection{Proof of \Cref{lemma:kmedian_bounded_clients_discretization_is_ok}: Cost of the Returned Solution}
\label{appendix:proof_of_kmedianboundedclientsdiscretizationisok}

\lemmakmedianboundedclientsdiscretizationisok*

Following \Cref{appendix:proof_of_lemma:discretization_is_ok}, for every entry $(C, \veca_C, \vecb_C, \Gamma_C)$ in the DP table,
Let $g(C, \veca_C, \vecb_C, \Gamma_C)$ be the value computed by the dynamic program \eqref{eqn:kmedian_bounded_clients_g} and \eqref{eqn:kmedian_bounded_clients_h}.
Let $G(C, \veca_C, \vecb_C, \Gamma_C)$ be the true value of that entry, i.e., the
minimum number of facilities required to be placed in $C$, such that the cost inside $C$ is at most $\Gamma_C$, which is formally defined in \eqref{eqn:kmedian_bounded_clients_main_DP}.
$G(C, \veca_C, \vecb_C, \Gamma_C)$ can be different from $g(C, \veca_C, \vecb_C, \Gamma_C)$, mainly because of the discretization of $\Gamma_C$ into powers of $(1 + \epsilon')$, where $\epsilon' = \frac{\epsilon}{2^{O(\ddim)} \log \Delta}$.
The following lemma is an analogy to \Cref{lemma:G_bounded_by_g}, which 
claims that $G$ can be two-sided bounded by $g$.

\begin{lemma}\label{lemma:G_bounded_by_g_clients}
    Let $0 \leq \ell \leq L$ and $C \in \decom_\ell$ be a level $\ell$ cluster.
    Then for every configuration $\veca_C, \vecb_C$, and $\Gamma_C \in [\median(X, S)/n, 2 \median(X, S)]$ being powers of $(1 + \epsilon')$, it holds
    \begin{align*}
        g(C, \veca_C, \vecb_C, \alpha_\ell \Gamma_C) \leq G(C, \veca_C, \vecb_C, \Gamma_C) \leq g(C, \veca_C, \vecb_C, \Gamma_C),
    \end{align*}
    for $\alpha_\ell = (1 + \epsilon')^{\ell \cdot 2^{O(\ddim)}}$.
\end{lemma}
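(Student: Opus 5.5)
We prove \Cref{lemma:G_bounded_by_g_clients} by induction on the level $\ell$, following the structure of the proof of \Cref{lemma:G_bounded_by_g}. The new feature is the handling of ornaments. A non-ornament cluster $C$ has only $|\Nchild(C)| \le 2^{O(\ddim)}$ non-ornament children, and only these carry a $\Gamma$-budget. Ornament children contribute no connection cost, since they are candidate facilities not in $X'$; they contribute only facility counts. Those counts are computed exactly by the set cover step at the bottom of the auxiliary DP $h$.

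For the base case $\ell=0$, the leaves $\{x\}$ with $x\in X'$ are evaluated exactly, so $g=G$ there. Monotonicity of $G$ in $\Gamma$ then gives $g(C,\veca_C,\vecb_C,\alpha_0\Gamma_C)\le G(C,\veca_C,\vecb_C,\Gamma_C)$. The same holds for any non-ornament cluster whose children are all ornaments.

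For the upper bound $G\le g$ at level $\ell$, we unroll \eqref{eqn:kmedian_bounded_clients_g} and \eqref{eqn:kmedian_bounded_clients_h}. This yields configurations $\veca_i,\vecb_i$ and budgets $\Gamma_i$ (powers of $1+\epsilon'$) for $D_1,\dots,D_\tau$, $\tau=|\Nchild(C)|$, with $\sum_i\Gamma_i\le\Gamma_C$. Each combination passes the consistency checks, and $g(C,\veca_C,\vecb_C,\Gamma_C)=\sum_i g(D_i,\veca_i,\vecb_i,\Gamma_i)+\mathrm{SC}$, where $\mathrm{SC}$ is the optimal set cover value over $\Schild(C)$. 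To convert this into a feasible $F$ for the true problem, take solutions for the children witnessing $G(D_i,\cdot,\Gamma_i)\le g(D_i,\cdot,\Gamma_i)$ (induction) and add the ornaments chosen by the set cover. The consistency checks (Steps 1--2) and the covering of $U$ ensure that every portal distance encoded in $\veca_C,\vecb_C$ and the children's configurations is realized. The total cost is at most $\sum_i\Gamma_i\le\Gamma_C$, because ornaments add no client cost. Hence $G\le g$.

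For the lower bound, take an optimal true solution realizing $G(C,\veca_C,\vecb_C,\Gamma_C)$. It induces child configurations, real budgets $\Phi_i$ with $\sum\Phi_i\le\Gamma_C$, and a set $F_O$ of opened ornaments. We must check that $F_O$ covers the universe $U$ built in Steps 1--2. A portal lands in $U$ only when no non-ornament child and no parent portal explains its value, so in the true solution it must be served directly by an ornament in $F_O$. Therefore $|F_O|\ge\mathrm{SC}$, and
\[
G(C,\veca_C,\vecb_C,\Gamma_C)\ge \sum_i G(D_i,\veca_i,\vecb_i,\Phi_i)+\mathrm{SC}.
\]
From here we repeat the rounding argument of \Cref{lemma:G_bounded_by_g} verbatim. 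Set $\Gamma_i=R(\Phi_i)$ and apply induction to get $g(D_i,\cdot,\alpha_{\ell-1}\Gamma_i)$. Then build the chained rounded budgets $\Psi_i$, which is valid because the base entry of $h$ does not depend on $\Gamma$. The chain gives $\Psi_0\le(1+\epsilon')^{\tau}\alpha_{\ell-1}\Gamma_C\le\alpha_\ell\Gamma_C$, using $\tau\le 2^{O(\ddim)}$.

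The main obstacle is the set-cover lower bound: showing that every portal placed into $U$ is necessarily served by an ornament in any true consistent solution. This requires interpreting configurations exactly, including the discretized values and the equality tests in Steps 1--2. I would handle it by defining consistency for the true value $G$ through exactly these same local criteria, so that the claim holds by definition.
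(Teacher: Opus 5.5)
Your proof is correct and follows the paper's argument. Both proceed by induction on $\ell$. The upper bound comes from unrolling \eqref{eqn:kmedian_bounded_clients_g}--\eqref{eqn:kmedian_bounded_clients_h} with an ornament count that does not depend on the budgets. The lower bound comes from rounding the true budgets $\Phi_i$ up and chaining the $\Psi_i$ down from $\Psi_\tau=0$.

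Two of your choices are more careful than the written proof. The explicit set-cover inequality $|F_O|\ge \mathrm{SC}$ replaces the paper's equality $h(\cdot,\Psi_\tau)=K$. Taking $\tau=|\Nchild(C)|$ instead of $|\child(C)|$ is what actually makes the $(1+\epsilon')^{\tau}$ factor absorbable into $\alpha_\ell$.
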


We use \Cref{lemma:G_bounded_by_g_clients} to prove \Cref{lemma:discretization_is_ok}.

\begin{proof}[Proof of \Cref{lemma:discretization_is_ok}]
    Applying \Cref{lemma:G_bounded_by_g_clients} to $X' \in \decom_L$ and noting $\alpha_{L} = (1 + \epsilon')^{2^{O(\ddim)} \log \Delta} \leq 1 + \epsilon$,
    we have that for every configuration $\veca, \vecb$ and $\Gamma \in [\median(X, S)/n, 2 \median(X, S)]$, 
    \[g(X', \veca, \vecb, (1 + \epsilon) \Gamma) \leq G(X', \veca, \vecb, \Gamma) \leq g(X', \veca, \vecb, \Gamma).\]

    Recall our algorithm outputs the smallest $\Gamma$, such that there exists $\veca, \vecb$ with $g(X', \veca, \vecb, \Gamma) \leq k$, and $\widehat{F}$ is the solution corresponding to this $\Gamma$.
    Therefore, $G(X', \veca, \vecb, \Gamma) \leq k$.
    By the definition of $G$, this implies the optimal $k$-median value of $X'$ (under portal-respecting distance) is at most $\Gamma$.

    By the minimality of $\Gamma$, for every $\veca, \vecb$, we have $g(X', \veca, \vecb, \Gamma/(1 + \epsilon')) > k$.
    Then $G(X', \veca, \vecb, \frac{\Gamma}{(1 + \epsilon')(1 + \epsilon)}) \geq g(X', \veca, \vecb, \Gamma/(1 + \epsilon')) > k$.
    This implies the optimal $k$-median value of $X'$ (under portal-respecting distance) is greater than $\frac{\Gamma}{(1 + \epsilon')(1 + \epsilon)}$.

    We thus conclude that 
    \begin{align*}
        \sum_{x \in X'} \dportP(x, \widehat{F}) 
        \leq \Gamma
        < (1 + \epsilon')(1 + \epsilon) \min_{F \subseteq Y, |F| \leq k} \sum_{x \in X'} \dportP(x, F)
    \end{align*}

    Rescaling $\epsilon$ concludes the proof.
\end{proof}

Next, we prove \Cref{lemma:G_bounded_by_g_clients}.
The proof is almost the same as \Cref{lemma:G_bounded_by_g}, except now we need to deal with ornaments in $\Schild(C)$.

\begin{proof}[Proof of \Cref{lemma:G_bounded_by_g_clients}]
    We prove the lemma by induction on $\ell$.
    Throughout, denote by $\tau := |\child(C)|$ for simplicity.
    
    When $\ell = 0$, it corresponds to the base cases of the DP, where the computation of $g$ is accurate.
    Hence, 
    \[
    g(C, \veca_C, \vecb_C, \Gamma_C) = G(C, \veca_C, \vecb_C, \Gamma_C) \geq G(C, \veca_C, \vecb_C, \alpha_0 \Gamma_C) = g(C, \veca_C, \vecb_C, \alpha_0 \Gamma_C).
    \]

    Assume the inequality holds for level $\ell - 1$, and consider any level $\ell$ cluster $C$.
    By the dynamic program \eqref{eqn:kmedian_bounded_clients_g} and \eqref{eqn:kmedian_bounded_clients_h}, 
    there exists a sequence of values $\Gamma_1, \Gamma_2, \dots, \Gamma_{\tau}$ which are powers of $(1 + \epsilon')$ and satisfy $\sum_i \Gamma_i \leq \Gamma_C$, a sequence of configurations $\veca_1, \vecb_1, \dots, \veca_{\tau}, \vecb_{\tau}$, 
    and an integer $K = K(\veca_C, \vecb_C, \veca_1, \vecb_1, \dots, \veca_\tau, \vecb_\tau)$ which corresponds to the number of facilities in $\Schild(C)$, such that
    \begin{equation}\label{eqn:g_C_w_sum_clients}
        g(C, \veca_C, \vecb_C, \Gamma_C) = \sum_{i = 1}^{\tau} g(D_i, \veca_i, \vecb_i, \Gamma_i) + K.
    \end{equation}
    It is important to note that $K$ only depends on the configurations $\veca_C, \vecb_C, \veca_1, \vecb_1, \dots, \veca_\tau, \vecb_\tau$ and does not depend on $\Gamma_C$ and $\{\Gamma_i\}$.

    Since configurations $\{\veca_i, \vecb_i\}$ and $K$ are consistent with $\veca_C, \vecb_C$, and $\sum_i \Gamma_i \leq \Gamma_C$, we have 
    \begin{align*}
        G(C, \veca_C, \vecb_C, \Gamma_C) &\leq \sum_{i = 1}^{\tau} G(D_i, \veca_i, \vecb_i, \Gamma_i) + K&& \text{By definition of $G$} \\
        & \leq \sum_{i = 1}^{\tau} g(D_i, \veca_i, \vecb_i, \Gamma_i) + K&& \text{Induction hypothesis} \\
        &= g(C, \veca_C, \vecb_C, \Gamma_C) &&\text{By \eqref{eqn:g_C_w_sum_clients}}.
    \end{align*}

    For the other direction, by the definition of $G$, there exists a sequence of values $\Phi_1, \Phi_2, \dots, \Phi_{\tau}$ which satisfy $\sum_i \Phi_i \leq \Gamma_C$, 
    a sequence of configurations $\veca_1, \vecb_1, \dots, \veca_{\tau}, \vecb_{\tau}$, 
    and an integer $K = K(\veca_C, \vecb_C, \veca_1, \vecb_1, \dots, \veca_\tau, \vecb_\tau)$ which corresponds to the number of facilities in $\Schild(C)$, 
    such that
    \begin{equation}\label{eqn:G_C_w_sum_clients}
        G(C, \veca_C, \vecb_C, \Gamma_C) = \sum_{i = 1}^{\tau} G(D_i, \veca_i, \vecb_i, \Phi_i) + K.
    \end{equation}
    Note that $\Phi_i$'s are no longer necessarily powers of $(1 + \epsilon')$.
    Let $R$ be the operator that rounds every value $v$ to the smallest power of $(1 + \epsilon')$ greater than $v$, i.e., $R(v) = (1 + \epsilon')^{\lceil \log_{(1 + \epsilon')} v \rceil}$.
    Define $\Gamma_i := R(\Phi_i)$ for every $i \in [\tau]$.
    We have
    \begin{align}
        G(C, \veca_C, \vecb_C, \Gamma_C) &= \sum_{i = 1}^{\tau} G(D_i, \veca_i, \vecb_i, \Phi_i) + K &&\text{By \eqref{eqn:g_C_w_sum_clients}} \notag \\
        & \geq \sum_{i = 1}^{\tau} G(D_i, \veca_i, \vecb_i, \Gamma_i) + K &&\text{Since $\Gamma_i \geq \Phi_i$} \notag \\
        & \geq \sum_{i = 1}^{\tau} g(D_i, \veca_i, \vecb_i, \alpha_{\ell - 1} \Gamma_i) + K &&\text{Induction hypothesis}. \label{eqn:G_c_w_prime_sum_clients}
    \end{align}
    Denote $\Gamma_i' := \alpha_{\ell - 1} \Gamma_i$ for short.
    Consider the following sequence of values, which are powers of $(1 + \epsilon')$.
    \begin{align*}
        &\Psi_{\tau} = 0, \\
        &\Psi_i = R(\Psi_{i + 1} + \Gamma_{i + 1}'), \quad 0 \leq i \leq \tau - 1.
    \end{align*}
    Note that $\{\veca_i, \vecb_i\}$ and $K$ are consistent with $\veca_C, \vecb_C$, and that $\Psi_{i + 1} + \Gamma_{i + 1}' \leq \Psi_i$, we thus have
    \begin{align*}
        & h(C, D_{\tau}, \veca_C, \vecb_C, \dots, \veca_{\tau}, \vecb_{\tau}, \Psi_{\tau}) = K; \\
        &h(C, D_i, \veca_C, \vecb_C, \dots, \veca_i, \vecb_i, \Psi_i)
        \leq h(C, D_{i + 1}, \veca_C, \vecb_C, \dots, \veca_{i + 1}, \vecb_{i + 1}, \Psi_{i + 1}) \\
        & \hspace{15em} + g(D_{i + 1}, \veca_{i + 1}, \vecb_{i + 1}, \Gamma_{i + 1}'), \qquad 1 \leq i \leq \tau - 1; \\
        &g(C, \veca_C, \vecb_C, \Psi_0)
        \leq h(C, D_1, \veca_C, \vecb_C, \veca_1, \vecb_1, \Psi_1)
        + g(D_1, \veca_1, \vecb_1, \Gamma_1').
    \end{align*}
    Summing over $0 \leq i \leq \tau$, we have
    \begin{align}
        g(C, \veca_C, \vecb_C, \Psi_0)
        &\leq \sum_{i = 1}^{\tau} g(D_i, \veca_i, \vecb_i, \Gamma_i') + K \notag\\
        &\leq G(C, \veca_C, \vecb_C, \Gamma_C) && \text{By \eqref{eqn:G_c_w_prime_sum_clients}.}
        \label{eqn:lb_G_by_g_clients}
    \end{align}
    Finally, note that 
    \begin{align*}
        \Psi_0 &= R(\Psi_1 + \Gamma_1') \\
        &\leq (1 + \epsilon') \Psi_1 + (1 + \epsilon') \Gamma_1' &&\text{By the definition of $R$}\\
        &\leq (1 + \epsilon')^2 \Psi_2 + (1 + \epsilon')^2 \Gamma_2' + (1 + \epsilon') \Gamma_1' \\
        &\leq \dots \\
        & \leq (1 + \epsilon')^{\tau - 1} \Psi_{\tau - 1} + \sum_{i = 1}^{\tau - 1} (1 + \epsilon')^i \Gamma_i' \\
        & \leq (1 + \epsilon)^{\tau - 1} \sum_{i = 1}^{\tau} \Gamma_i' \\
        &= (1 + \epsilon')^{\tau - 1} \alpha_{\ell - 1} \sum_{i = 1}^{\tau} \Gamma_i && \text{Recall $\Gamma_i' = \alpha_{\ell - 1} \Gamma_i$}\\
        & \leq (1 + \epsilon')^{\tau} \alpha_{\ell - 1} \sum_{i = 1}^{\tau} \Phi_i &&\text{Since $\Gamma_i \leq (1 + \epsilon') \Phi_i$}\\
        & \leq (1 + \epsilon')^{\tau} \alpha_{\ell - 1} \Gamma_C &&\text{Since } \sum_i \Phi_i \leq \Gamma_C.
    \end{align*}
    We thus have 
    \[\Psi_0 \leq (1 + \epsilon')^{\tau} \alpha_{\ell - 1}\Gamma_C \leq \alpha_\ell \Gamma_C. \] 
    Combining with \eqref{eqn:lb_G_by_g_clients}, we have $G(C, \veca_C, \vecb_C, \Gamma_C) \geq g(C, \veca_C, \vecb_C, \alpha_\ell \Gamma_C)$, completing the proof.
\end{proof}

\end{appendices}

\end{document}